\keywords{
    propositional quantifier,
    modal logic over trees,
    satisfiability problem,
    tower-hardness,
    computational tree logic CTL}
\definecolor{ao(english)}{rgb}{0.0, 0.5, 0.0}
\tikzset{
diagonal fill/.style 2 args={fill=#2, path picture={
\fill[#1, sharp corners] (path picture bounding box.south west) -|
                         (path picture bounding box.north east) -- cycle;}},
reversed diagonal fill/.style 2 args={fill=#2, path picture={
\fill[#1, sharp corners] (path picture bounding box.north west) |- 
                         (path picture bounding box.south east) -- cycle;}}
}
 \def\desclabel#1#2{\begingroup
    \def\@currentlabel{#1}%
    #1\label{#2}\endgroup
 }
\theoremstyle{plain}
\newcommand{\set}[1]{\{ #1 \}}
\newcommand{\pair}[2]{(#1,#2)}
\newcommand{\triple}[3]{\tup{#1,#2,#3}}
\newcommand{\Nat}{\ensuremath{\mathbb{N}}}
\newcommand{\bottom}{\perp}
\newcommand{\card}[1]{{\sf card}(#1)}
\newcommand{\aset}{X}
\newcommand{\asetbis}{Y}
\newcommand{\avarprop}{p}
\newcommand{\avarpropbis}{q}
\newcommand{\avarpropter}{r}
\newcommand{\ahvarprop}{\mathsf{h}}
\newcommand{\avvarprop}{\mathsf{v}}
\newcommand{\aformula}{\phi} 
\newcommand{\aformulabis}{\psi} 
\newcommand {\length}[1] {\ensuremath{|#1|}}
\newcommand{\egdef}{\deff}
\newcommand{\equivdef}{\stackrel{\mbox{\begin{tiny}def\end{tiny}}}{\equivaut}} 
\newcommand{\equivaut}{\;\Leftrightarrow\;}
\newcommand{\amap}{\mathfrak{f}}
\newcommand {\pspace} {\textsc{PSpace}\xspace}
\newcommand {\nexpspace} {\textsc{NExpSpace}\xspace}
\newcommand {\expspace} {\textsc{ExpSpace}\xspace}
\newcommand {\nexptime} {\NExpTime}
\newcommand {\tower} {\textsc{Tower}\xspace}
\newcommand {\aexppol} {\textsc{AExp}$_{\textsc{pol}}$\xspace}
\newenvironment {lemma*} {\noindent {\bf Lemma} \em} {\rm}
\newcommand{\atranslation}{t}
\newcommand{\defstyle}[1]{{\emph{#1}}}
\newcommand{\cut}[1]{}
\newcommand{\interval}[2]{[#1,#2]}
\mathchardef\mhyphen="2D 
\def\first(#1){\mathtt{first}(#1)}
\def\last(#1){\mathtt{last}(#1)}
\newcommand{\@transb}[2]{t_{#1}\!\left(#2\right)}
\newcommand{\@transnob}[1]{t\!\left(#1\right)}
\newcommand{\trans}{\@ifstar{\@transb}{\@transnob}}
 \newcounter{theoreme}[section]
\newcommand{\alogic}{\mathfrak{L}}
\newcommand{\tup}[1]{\langle #1 \rangle}
\newcommand{\PVAR}{\AP}
\algrenewcommand\algorithmicrequire{\textbf{In:}}
\algrenewcommand\algorithmicensure{\textbf{Out:}}
\definecolor{Gray}{gray}{0.8}
\newcommand{\CTL}{\ensuremath{\mathsf{CTL}}\xspace}
\newcommand{\LTL}{\ensuremath{\mathsf{LTL}}\xspace}
\newcommand{\ATL}{\ensuremath{\mathsf{ATL}}\xspace}
\newcommand{\CTLStar}{\ensuremath{\mathsf{CTL}^{*}}\xspace}
\newcommand{\anominal}{x}
\newcommand{\anominalbis}{y}
\newcommand{\anode}{v}
\newcommand{\anodebis}{u}
\newcommand{\atreemodel}{\str{T}}
\newcommand{\Iff}{\Leftrightarrow} 
\newcommand{\synnumber}[1]{\mathtt{nb}(#1)}
\newcommand{\semnumber}[2]{\mathfrak{nb}_{#1}(#2)}
\newcommand{\phitype}[1]{{\rm type}(#1)}
\newcommand{\phifirst}[1]{{\rm first}(#1)}
\newcommand{\philast}[1]{{\rm last}(#1)}
\newcommand{\phiunique}[1]{{\rm uniq}(#1)}
\newcommand{\phipopulate}[1]{{\rm compl}(#1)}
\newcommand{\eqk}[3]{\synnumber{#2}  =_{#1} \synnumber{#3}}
\newcommand{\succk}[3]{\synnumber{#3}  =_{#1} \synnumber{#2}+1}
\newcommand{\gk}[3]{\synnumber{#2}  <_{#1} \synnumber{#3}}
\newcommand{\anothereqk}[1]{\mathtt{nb} \ =_{#1} \ \tow(#1,n)}
\newcommand{\lsr}[2]{{\rm LSR}_{#2}(#1)}
\newcommand{\lsrone}[2]{{\rm LSR}^1_{#2}(#1)}
\newcommand{\lsrtwo}[2]{{\rm LSR}^2_{#2}(#1)}
\newcommand{\lsrthree}[2]{{\rm LSR}^3_{#2}(#1)}
\newcommand{\aroot}{\varepsilon}
\newcommand{\apath}{\pi}
\newcommand{\EX}{\mathbf{EX}}
\newcommand{\EXOne}{\mathbf{EX}_{=1}}
\newcommand{\AX}{\mathbf{AX}}
\newcommand{\E}{\mathbf{E}}
\newcommand{\A}{\mathbf{A}}
\newcommand{\U}{\mathbf{U}}
\newcommand{\F}{\mathbf{F}}
\newcommand{\X}{\mathbf{X}}
\newcommand{\EF}{\mathbf{EF}}
\newcommand{\AG}{\mathbf{AG}}
\newcommand{\AF}{\mathbf{AF}}
\newcommand{\at}{@}
\newcommand{\att}[2]{\at_{#1}^{#2}}
\newcommand{\bindd}[2]{\mathtt{nom}(#1,#2)}
\newcommand{\distinctbindd}[2]{\mathtt{diff{\mbox{-}}nom}(#1,#2)}
\newcommand{\tow}{\mathfrak{t}}
\newcommand{\val}[1]{\mathit{val}}
\newcommand{\layer}[1]{\mathit{layer}_{#1}}
\newcommand{\kripkeK}{\mathcal{K}}
\newcommand{\K}{\ensuremath{\mathsf{K}}\xspace}
\newcommand{\KD}{\ensuremath{\mathsf{KD}}\xspace}
\newcommand{\Sfour}{\ensuremath{\mathsf{S4}}\xspace}
\newcommand{\Kfour}{\ensuremath{\mathsf{K4}}\xspace}
\newcommand{\Sfive}{\ensuremath{\mathsf{S5}}\xspace}
\newcommand{\GL}{\ensuremath{\mathsf{GL}}\xspace}
\newcommand{\QKt}{\ensuremath{\mathsf{QK}^t}\xspace}
\newcommand{\QKDt}{\ensuremath{\mathsf{QKD}^t}\xspace}
\newcommand{\QSfourt}{\ensuremath{\mathsf{QS4}^t}\xspace}
\newcommand{\QKfourt}{\ensuremath{\mathsf{QK4}^t}\xspace}
\newcommand{\QGLt}{\ensuremath{\mathsf{QGL}^t}\xspace}
\newcommand{\QCTL}{\ensuremath{\mathsf{QCTL}}\xspace}
\newcommand{\QCTLEX}[1]{\ensuremath{\mathsf{QCTL}_{\X}}\xspace}
\newcommand{\QCTLEF}[1]{\ensuremath{\mathsf{QCTL}_{\F}}\xspace}
\newcommand{\QCTLEFplus}[1]{\ensuremath{\mathsf{QCTL}_{\X\F}}\xspace}
\newcommand{\sQCTL}[1]{\ensuremath{\mathsf{QCTL}^s_{#1}}\xspace}
\newcommand{\QLTL}{\ensuremath{\mathsf{QLTL}}\xspace}
\newcommand{\tQCTL}[1]{\ensuremath{\mathsf{QCTL}^t_{#1}}\xspace}
\newcommand{\gtQCTL}[1]{\ensuremath{\mathsf{QCTL}^{gt}_{#1}}\xspace}
\newcommand{\tQCTLEX}[1]{\ensuremath{\QCTL^t_{\X #1}}}
\newcommand{\tQCTLEF}[1]{\ensuremath{\QCTL^t_{\F #1}}\xspace}
\newcommand{\tQCTLEFplus}[1]{\ensuremath{\QCTL^t_{\X\F #1}}\xspace}
\newcommand{\gtQCTLEFplus}[1]{\ensuremath{\QCTL^{gt}_{\X \F}}\xspace}
\newcommand{\ftQCTL}[1]{\ensuremath{\QCTL^{ft}_{#1}}}
\newcommand{\ftQCTLEX}[1]{\ensuremath{\QCTL^{ft}_{\X}}}
\newcommand{\ftQCTLEFplus}[1]{\ensuremath{\QCTL^{ft}_{\X \F}}}
\newcommand{\ExpSpace}{\textsc{ExpSpace}}
\newcommand{\NExpTime}{\textsc{NExpTime}}
\newcommand{\kNExpTime}{k\textsc{-NExpTime}}
\newcommand{\TOWER}{\textsc{Tower}}%
\newcommand{\BigOh}[1]{\mathcal{O}(#1)}
\newcommand{\str}[1]{{\mathfrak{#1}}}
\newcommand{\AP}{\mathit{AP}} 
\newcommand{\md}[1]{\mathit{md}(#1)}
\newcommand{\deff}{\stackrel{\text{def}}{=}}
\let\cH\undefined
\let\cT\undefined
\let\cV\undefined
\let\cP\undefined
\newcommand{\cH}{\mathcal{H}}
\newcommand{\cT}{\mathcal{T}}
\newcommand{\cTzero}{\cT_0}
\newcommand{\cTacc}{\cT_{\textit{acc}}}
\newcommand{\cTmulti}{\cT_{\textit{multi}}}
\newcommand{\cPmulti}{\mathtt{AMTP}}
\newcommand{\cV}{\mathcal{V}}
\newcommand{\cP}{\mathcal{P}}
\newcommand{\restr}{\!\!\restriction\!\!}
\newcommand{\arestrictedtreemodel}[1]{\atreemodel \restr_{#1}}
\newcommand{\atiling}{\tau}
\newcommand{\LongVersionOnly}[1]{}
\newcommand{\satproblem}[1]{{\rm SAT(}#1{\rm)}}
\renewcommand{\pair}[2]{(#1,#2)}
\colorlet{jaune}{yellow!80!green}
\colorlet{vert}{green!45!black}
\colorlet{bleu}{blue!70!black}
\colorlet{rouge}{red!80!black}
\tikzstyle{minirond}=[draw,circle,minimum height=2mm,inner sep=0pt]
\tikzstyle{ptrond}=[draw,circle,minimum height=2mm]
\tikzstyle{medrond}=[draw,circle,minimum height=5mm]
\tikzstyle{rond}=[draw,circle,minimum height=7mm]
\tikzstyle{carre}=[draw,minimum width=6mm,minimum height=6mm]
\tikzstyle{medcarre}=[draw,minimum width=4mm,minimum height=4mm]
\tikzstyle{ptcarre}=[draw,minimum width=2.8mm,minimum height=2.8mm]
\tikzstyle{minicarre}=[draw,minimum width=1.5mm,minimum height=1.5mm,inner sep=0pt]
\tikzstyle{rouge}=[draw=red,fill=red!20!white]
\tikzstyle{vert}=[draw=green!80!black,fill=green!80!black!20!white]
\tikzstyle{jaune}=[draw=yellow!60!red,fill=yellow!60!red!30!white]
\tikzstyle{bleu}=[draw=blue,fill=blue!40!white]
\tikzstyle{gris}=[draw=black!80!white,fill=black!40!white]
\tikzstyle{rougef}=[draw=red,fill=red!60!white]
\tikzstyle{vertf}=[draw=green!80!black,fill=green!80!black!60!white]
\tikzstyle{jaunef}=[draw=yellow!80!black,fill=yellow!80!black!60!white]
\tikzstyle{bleuf}=[draw=blue,fill=blue!70!white]
\tikzstyle{rjaune}=[style=rond,style=jaune]
\tikzstyle{rbleu}=[style=rond,style=bleu]
\tikzstyle{rvert}=[style=rond,style=vert]
\tikzstyle{rrouge}=[style=rond,style=rouge]
\tikzstyle{rgris}=[style=rond,style=gris]
\tikzstyle{cjaune}=[style=carre,style=jaune]
\tikzstyle{cbleu}=[style=carre,style=bleu]
\tikzstyle{cvert}=[style=carre,style=vert]
\tikzstyle{crouge}=[style=carre,style=rouge]
\tikzstyle{cgris}=[style=carre,style=gris]
\tikzstyle{rjaunef}=[style=rond,style=jaunef]
\tikzstyle{rbleuf}=[style=rond,style=bleuf]
\tikzstyle{rvertf}=[style=rond,style=vertf]
\tikzstyle{rrougef}=[style=rond,style=rougef]
\def\myref#1{\footnotetext{\leavevmode\hskip-12mm\fontsize{7pt}{8pt}\selectfont
  \color{red!80!blue!50!black}\hbox to\linewidth{#1\hfill}}}
\begin{document}
\title[Why Does Prop.~Quantification Make Logics on Trees Robustly Hard?]{Why Does
Propositional Quantification Make Modal and Temporal Logics on Trees Robustly Hard?}

\author[B.~Bednarczyk]{bartosz Bednarczyk\lmcsorcid{0000-0002-8267-7554}}[a]  
\address{Computational Logic Group, TU Dresden \& Institute of Computer Science, University of Wroc\l{}aw} 
\email{bartosz.bednarczyk@cs.uni.wroc.pl}  

\author[S.~Demri]{St\'ephane Demri\lmcsorcid{0000-0002-3493-2610}}[b]    
\address{Universit{\'e} Paris-Saclay, ENS Paris-Saclay, CNRS, LMF, 91190, Gif-sur-Yvette, France} 
\email{demri@lsv.fr}  

\begin{abstract}
Adding propositional quantification to the modal logics \K, $\mathsf{T}$ or \Sfour is known to lead to undecidability  
but~\CTL with propositional quantification under the tree semantics (\tQCTL{}) admits a non-elementary 
\tower-complete satisfiability problem. 
We investigate the complexity of strict fragments of \tQCTL{} as well as of the modal logic \K with propositional 
quantification under the tree semantics. 
More specifically, we show that \tQCTL{} restricted to the temporal operator~$\EX$ is already \tower-hard, 
which is unexpected as $\EX$ can only enforce local properties. 
When \tQCTL{} restricted to $\EX$ is interpreted on $N$-bounded trees for 
some~$N \geq 2$, we prove that the satisfiability problem is \aexppol-complete; 
\aexppol-hardness is established by reduction from a recently introduced
tiling problem, instrumental for studying the model-checking problem for interval temporal logics. 
As consequences of our proof method, 
we prove \tower-hardness of \tQCTL{} restricted to  $\EF$ or to $\EX\EF$ and  
of the well-known modal logics such as \K, \KD, \GL, 
\Kfour and \Sfour with propositional quantification under a semantics based on classes of trees. 
\end{abstract}

\maketitle

\section{Introduction}
\label{section-introduction}


\paragraph{Propositional quantification in modal and temporal logics}

 A natural way to design logics that dynamically
update their models, consists in adding propositional quantification as done for example
to define QBF from SAT. 
Propositional quantification is a very powerful feature to update models but this may have consequences
in terms of computability.
In the realm of modal logics~\cite{blackburn_rijke_venema_2001}, the paper~\cite{Bull69} remains a quite early work adding 
propositional quantification. 
The undecidability of the propositional modal logic 
\K (resp. $\mathsf{T}$, \Kfour and \Sfour) augmented with propositional
quantification is first established in~\cite{Fine70}, and this is done thanks to a reduction
from  the second-order arithmetic. By contrast, the decidability of second-order versions of the
modal logic \Sfive is first proved in~\cite[Chapter 3]{Fine69} (see also~\cite{Fine70,Kaplan70}) 
but two \Sfive modalities and propositional quantification already lead to 
undecidability~\cite{Antonelli&Thomason02,Kaminski&Tiomkin96}.

Many subsequent works have dealt with second-order modal logics, 
see \eg~\cite{Kaminski&Tiomkin96,Kremer97,French01,tenCate06}, but in the realm of 
temporal logics, \LTL with propositional quantification (written \QLTL) is 
introduced in Sistla's PhD thesis~\cite{Sistla83} 
(see also~\cite{Sistla&Vardi&Wolper87}) and non-elementarity of the satisfiability
problem is a consequence of~\cite{Meyer73}. 
So, the \QLTL satisfiability problem is decidable but with high complexity. 
By contrast, \CTL with propositional 
quantification (written \QCTL) already admits an undecidable satisfiability problem
by~\cite{Fine70} (as \CTL captures the modal logic \K) 
but its variant under the tree semantics (written \tQCTL{})  admits a non-elementary \tower-complete satisfiability 
problem~\cite{LaroussinieM14,DavidLM16} (the complexity class \tower is introduced in~\cite{Schmitz16}). 
Having a tree semantics  means that the formulae of 
\tQCTL{} are interpreted on computation trees obtained from the unfolding of 
finite (total) Kripke structures, which allows us to regain decidability (see a 
similar approach in~\cite{Zach04} with a quantified version of the modal logic \Sfour). 
This is a major observation from~\cite{LaroussinieM14}, 
partly motivated by the design of decision procedures for \ATL with strategy 
contexts~\cite{Laroussinie&Markey15}. The tree semantics, as far as 
satisfiability is concerned, amounts to considering Kripke structures
that are finite-branching trees in which all the maximal branches are infinite.
This is an elegant way to regain decidability. 
More generally, decidability in the presence of propositional quantification can be regained when
tree-like models are involved, see \eg~\cite{Artemov&Beklemishev93,Baaz&Ciabattoni&Zach00,Zach04,LaroussinieM14}, 
essentially by taking advantage of Rabin's Theorem~\cite{Rabin69}. 

The modal logic \K with propositional quantification from~\cite{Fine70} is 
interpreted under the structure semantics, as classified in~\cite{LaroussinieM14}, 
but many variants of propositional quantification exist in the literature 
(see \eg~\cite{Patthaketal02,French03,Bozzelli&vanDitmarsch&Pinchinat15} 
and~\cite{Belardinelli&vanderHoek16,Belardinelli&vanderHoek&Kuijer18}
in the context of epistemic reasoning). Sometimes, propositional quantification
is syntactically restricted  in the temporal language, but its inclusion is 
motivated by a gain of expressive power while preserving the decidability of the
reasoning tasks. By way of example, in~\cite{Pinchinat&Riedweg03}, an extension
of the modal $\mu$-calculus with partial propositional quantification is 
introduced to perform control synthesis, whereas an extension for model-checking 
computer systems is also presented in~\cite{Kupferman99}. 

Interestingly enough, propositional quantification can sometimes have a more hidden
presence. For instance, hybrid  logics with the down-arrow operator $\downarrow_{\anominal}$, 
see e.g~\cite{Areces&Blackburn&Marx00,Areces&Blackburn&Marx01,Marx02,Weber07,Kernberger19}, 
can be understood as a form of propositional quantification
since $\downarrow_{\anominal} \aformula$ enforces that the propositional variable 
$\anominal$ holds true only at the current world (before evaluating the formula $\aformula$). 
In such logics, the companion formula $@_{\anominal} \aformulabis$ 
expresses that 
 the unique world satisfying $\anominal$ also satisfied $\aformulabis$; $@_{\anominal}$ is a powerful 
operator to navigate in the structure~\cite{Marx02} (obviously, it is related 
to the universal modality, see \eg~\cite{Goranko&Passy92}).


\paragraph{Our motivations} 

As recalled above, the modal logic \K augmented with propositional quantification 
is undecidable~\cite{Fine70} and a fortiori, undecidability holds for fragments 
of \CTL with propositional quantification. Actually, these results hold under 
the structure semantics but \tQCTL{} (tree semantics) admits \tower-complete 
satisfiability and model-checking problems~\cite{LaroussinieM14}. As \tQCTL{} 
can express that every tree node has exactly one child, \tower-hardness for the 
satisfiability problem for \tQCTL{}, is a corollary of the \tower-hardness of 
the satisfiability problem for \QLTL, 
see \eg~\cite{Meyer73,Sistla&Vardi&Wolper87,Demri&Goranko&Lange16}.

Given the central position of the modal logic \K, surprisingly, the complexity 
of the satisfiability problem for \K with propositional quantification under the 
tree semantics has never been investigated 
 (closely related to \tQCTLEX{} -- \ie 
 \tQCTL{} restricted to $\EX$--  as $\EX$ corresponds to $\Diamond$ in \K but with total models).
\tQCTLEX{} is a 
natural and modest fragment of~\tQCTL{} and we aim at 
characterising its 
complexity. Furthermore, the model-checking problem for 
\tQCTL{} is \tower-hard even with input Kripke structures having at most two 
successor worlds per world and \tower-hardness of \QLTL holds with linear 
structures of length~$\omega$, see \eg~\cite{Sistla&Vardi&Wolper87}. 
Thus, it is worth understanding what happens with the satisfiability problem 
for \tQCTLEX{} when the tree models are 
$N$-bounded 
(\ie each node has at most $N$ children) for some fixed $N \geq 2$.


\paragraph{Our contributions} 

Given $\tQCTL{}$, the extension of \CTL with propositional quantification under 
the tree semantics  (\ie the models are  finite-branching trees where all the 
maximal branches are infinite), let $\tQCTL{\leq N}$ be its variant in which the 
models are $N$-bounded, for some~$N \geq 2$. We~write~$\tQCTLEX{}$ and 
$\tQCTLEX{,\leq N}$, to denote respectively the restriction to the operator $\EX$, 
and $\tQCTLEF{}$ to denote the restriction of $\tQCTL{}$ to $\EF$.

\begin{itemize}
\item 
We first present to the reader the toolkit of local nominals and explain the basic ideas behind the hardness results of this paper by proving, for all $N \geq 2$, that the satisfiability problem for $\tQCTLEX{,\leq N}$ is \aexppol-complete (Section~\ref{section-bounded-degree}).
\aexppol is the class of problems solvable by 
exponential-time alternating Turing machines with a polynomial number of alternations.
By using a small model property and
the complexity of model-checking (with upper bound \aexppol), \aexppol-easiness is established. 
As far as \aexppol-hardness is concerned, 
the alternating multi-tiling problem introduced in~\cite{Bozzellietal17}
is instrumental to establish that the model-checking
problem for the interval temporal logic
$B \bar{E}$ with regular expressions is \aexppol-hard.
As a corollary, we get that the modal logic \K with propositional quantification
interpreted on finite trees of branching degree bounded by some fixed $N \geq 2$ is also
\aexppol-complete.

\item More generally and despite the modest scope of~$\EX$, the satisfiability problem 
for $\tQCTLEX{}$ is shown to be \tower-hard (Theorem~\ref{theorem:tQCTLEX-tower}), by 
a uniform reduction from $\kNExpTime$-complete tiling problems (uniformity is 
with respect to $k$). The corresponding upper bound is known 
from~\cite{LaroussinieM14} and it is worth noting that all the \tower upper bound 
results presented in this paper are based on translations into the satisfiability 
problem for \tQCTL, sometimes via intermediate decision problems, which eventually 
uses Rabin's Theorem~\cite{Rabin69} in some essential way.
The hardness proof is one of the main results of the paper and amounts first to 
showing that one can enforce that a node has a number of children equal to some 
tower of exponentials of height $k$ with a formula of size exponential in $k$. 
By contrast, checking the satisfiability status of $\CTLStar$ formulae,
requires only to consider tree models with branching degree bounded by the size 
of the formula, see \eg~\cite{Emerson&Sistla84,Demri&Goranko&Lange16}.
Once this complex construction enforcing a  very high number of children is 
performed, the reduction from the tiling problems can be done with the help of 
other properties on the number of children. Hence, even though \QCTLEX{} under 
the structure semantics is undecidable~\cite{Fine70} and the variant of \tQCTLEX{} 
under the tree semantics ($\tQCTL{}$) is decidable by~\cite{LaroussinieM14},
the problem admits a high complexity despite the local range of $\EX$.
The \tower lower bound for \tQCTL{} crucially depends on the availability of very wide trees,
which contrasts with the \aexppol upper bound for $\tQCTLEX{,\leq N}$ for which trees are $N$-bounded.

\item By adapting our proof method, we show that \tQCTLEF{} and \tQCTLEFplus{} 
(a variant of \tQCTLEF{} with the unique operator~$\EX\EF$) are \tower-hard too
(consult Section~\ref{section-qctlef}).\\

\item As $\EX$, $\EF$ and $\EX\EF$ correspond to the modality~$\Diamond$ in several 
modal logics (\eg $\EX\EF$ corresponds to~$\Diamond$ in transitive frames), we 
are able to establish \tower-completeness for standard modal logics with 
propositional quantification when interpreted on tree-like structures (see 
Section~\ref{section-modal-logics}). For instance, as the provability logic~\GL 
(after G\"odel and L\"ob) is complete for the class of finite transitive trees, 
\ie the accessibility relation is the transitive closure of the child-relation 
in the tree, see \eg~\cite{blackburn_rijke_venema_2001}, we also investigate
the satisfiability problem under the finite tree semantics. 
We show that \ftQCTLEX{} and \ftQCTLEFplus{} (`$ft$' stands for 'finite tree 
semantics') are \tower-complete too. The satisfiability problem for~\K 
(resp. \GL) with propositional quantification under the finite (resp. transitive) 
tree semantics is shown to be \tower-complete. 
Similar results are shown for \KD, \Kfour and \Sfour with propositional quantification
but interpreted on appropriate tree-like Kripke structures. 
See Section~\ref{section-modal-logics} for the details. 
\end{itemize}

\noindent
This work is a revised and complete version of our conference paper~\cite{Bednarczyk&Demri19}. 
To keep the main body of the paper short we delegate all the proofs to the technical appendix.


\section{Preliminaries}
\label{section-preliminaries}
\subsection{Kripke structures and computation trees}
Below, we recall standard definitions about Kripke structures.
Let $\AP = \set{\avarprop, \avarpropbis, \anominal, \anominalbis, \ldots}$ 
be a countably infinite set of propositional variables.
A \defstyle{Kripke structure} $\kripkeK$ is a triple $\triple{W}{R}{l}$, 
where $W$ is a set of \defstyle{worlds}, $R \subseteq W \times W$ is a  transition relation 
and $l : W \rightarrow 2^{\AP}$ is a labelling function.
A Kripke structure $\kripkeK$ is \defstyle{total} whenever 
for all $w \in W$, there is $w' \in W$ such that~$\pair{w}{w'} \in R$. 
Totality is a standard property for defining classes of models for temporal logics
such as \CTL. In the sequel, by a `Kripke structure' we mean a structure according 
to the above definition, otherwise when arbitrary (or total, finite, etc.)
Kripke structures need to be considered, we  explicitly specify which classes of
structures we have in mind. For instance, a tree model (resp. finite tree model) $\atreemodel =
\triple{V}{E}{l}$  is an arbitrary Kripke structure where $\pair{V}{E}$ is a (resp. finite) rooted tree.
Standard definitions about trees are omitted herein. Such structures play an important role in the paper,
as the tree semantics involves specific tree models (those obtained as computation trees of 
Kripke structures). 

Given a Kripke structure $\kripkeK = \triple{W}{R}{l}$ and a world $w \in W$, a \defstyle{finite path}
$\apath$ from $w$ is a finite sequence $w_0, \ldots, w_n$ such that 
$w_0 = w$ and  for all
$i \in \interval{0}{n-1}$, we have $\pair{w_i}{w_{i+1}} \in R$. 
An~\defstyle{infinite path} from $w$ is an infinite sequence $w_0, \ldots, w_n, \ldots$
such that $w_0 = w$ and  for all $i \geq 0$, we have  $\pair{w_i}{w_{i+1}} \in R$. 
With~$\Pi_{\kripkeK, w}$ we denote the set of all finite paths starting from a world 
$w \in W$ on a Kripke structure $\kripkeK$. 

For a Kripke structure $\kripkeK = \triple{W}{R}{l}$ and~$w \in W$, the 
\defstyle{computation tree unfolding $\kripkeK$ from $w$}
 is the tree model $\atreemodel_{\kripkeK,w} = \triple{V}{E}{l'}$ such that the following conditions are satisfied:
\begin{enumerate}[label=(\alph*)]
\item $V \egdef \Pi_{\kripkeK, w}$,
\item $\apath E \apath'$ $\equivdef$  $\apath = w_0, \ldots, w_n$, $\apath' = \apath, w_{n+1}$ for some $w_{n+1} \in W$ and $\pair{w_n}{w_{n+1}} \in R$,
\item for all $\apath = w_0, \ldots, w_n \in V$, we have $l'(\apath) \egdef l(w_n)$. 
\end{enumerate}
Thus, when $\kripkeK$ is finite and total, $\atreemodel_{\kripkeK,w}$ 
is a finite-branching tree model in which all the maximal branches
are infinite and it is also a total Kripke structure. Below,
unless otherwise stated, the tree semantics involves such finite-branching trees with infinite
maximal branches.
In the sequel, tree-like Kripke structures are denoted by $\atreemodel = \triple{V}{E}{l}$
in order to emphasize the tree-like nature of such models. 
 
\subsection{The logics  \texorpdfstring{\sQCTL{}}{sQCTL}, \texorpdfstring{\tQCTL{}}{tQCTL},
\texorpdfstring{\ftQCTL{}}{ftQCTL} and \texorpdfstring{\gtQCTL{}}{gtQCTL}}

In this section, we define the logics \texorpdfstring{\sQCTL{}}{sQCTL}, \texorpdfstring{\tQCTL{}}{tQCTL},
\texorpdfstring{\ftQCTL{}}{ftQCTL} and \texorpdfstring{\gtQCTL{}}{gtQCTL} 
whose formulae are interpreted over different classes of Kripke structures. 
The formulae of \QCTL-like logics are defined from the grammar below 
by extending the set of formulae from the computation tree logic 
\CTL~\cite{ClarkeE81} with propositional quantification:
\[
\aformula ::= 
\avarprop \mid  \neg \aformula \mid \aformula \wedge \aformula \mid
\EX \aformula \mid \E (\aformula \U \aformula) \mid \A (\aformula \U \aformula)
\mid \exists{\avarprop} \  \aformula,
\]
where $\avarprop \in \AP$. We use the standard abbreviations 
$\vee, \rightarrow, \leftrightarrow, \bottom, \top$, as well as other  
operators like $\AX$, $\EF$, $\AG$ and $\AF$:
\begin{itemize}
\item  $\AX(\aformula) \egdef \neg \EX(\neg \aformula)$,
\item $\EF(\aformula) \egdef \E (\top \U \aformula)$,
$\AG(\aformula) \egdef \neg \EF(\neg \aformula)$
and~$\AF(\aformula) \egdef \A (\top \U \aformula)$.
\item The universal quantifier~$\forall{\avarprop} \  \aformula$ is used as 
$\neg \exists{\avarprop} \  \neg \aformula$.
\end{itemize}
%
We denote by $\length{\aformula}$ the \defstyle{length} of the formula~$\aformula$ 
measured in a standard way, \ie as the number of symbols used to write 
$\aformula$. The \defstyle{modal/temporal depth} of a formula $\aformula$, 
written $\md{\aformula}$, is the maximal number of
nested temporal operators in $\aformula$. 
We stress that  $\md{\aformula}$ is linear in $\length{\aformula}$.

By restricting the set of allowed temporal operators in \QCTL  to the only temporal operator~$\EX$ (resp. $\EF$) we obtain the logic \QCTLEX{} (resp.~\QCTLEF{}). 
Note that $\AF$ does not occur in \QCTLEF{} but~$\AX$ (resp. $\AG$) is allowed in \QCTLEX{} (resp. in \QCTLEF{}) as it is the dual operator of $\EX$ (resp.~$\EF$). 
The main object of study in the paper is the logic 
\QCTLEX{} under the tree semantics (below, written~\tQCTLEX{}). Moreover, we write \QCTLEFplus{}
to denote the restriction of \QCTL to the (combined) temporal operator~$\EX \EF$, which provides a strict version of the future-time temporal operator $\EF$.
In the rest of the paper, we refer to \QCTL (or to some of its fragments) to denote
a set of formulae and the notation~$\QCTL^{\bullet}$ with a superscript `$\bullet$' 
refers to a logic based on \QCTL (or on some of its fragments) under a specific semantics and for
which the symbol `$\bullet$' is just a reminder of the semantics. 

\cut{ 
\subsection{The logics  \texorpdfstring{\sQCTL{}}{sQCTL}, \texorpdfstring{\tQCTL{}}{tQCTL},
\texorpdfstring{\ftQCTL{}}{ftQCTL} and \texorpdfstring{\gtQCTL{}}{gtQCTL}}

The propositional quantifiers can be interpreted in various ways, see \eg~\cite{Kupferman99, French01, Patthaketal02, Pinchinat&Riedweg03,Bozzelli&vanDitmarsch&Pinchinat15}. 
Below, we provide a satisfaction relation $\models$ 
defined on total Kripke structures, providing 
the so-called \defstyle{structure semantics} for \QCTL (and leading to the logic  \sQCTL{}). However, 
the version of \QCTL with formulae interpreted 
on  computation trees obtained from the unfoldings of  finite total Kripke structures 
(\ie in that case the satisfaction relation 
operates on tree-like structures), 
is known as $\QCTL$ under the
\defstyle{tree semantics}  (written \tQCTL{}) and is extensively studied in~\cite{LaroussinieM14}. 
To define formally the semantics for propositional quantifiers,
we introduce the notion of $\aset$-equivalence between two Kripke structures, with $\aset$ being a set of propositional variables. 
Intuitively, two Kripke structures are $\aset$-equivalent, whenever the only differences (if any) between them are  restricted
to the interpretation of the propositional variables not in $\aset$. 
Formally, we say that two Kripke structures $\kripkeK = \triple{W}{R}{l}$ and 
$\kripkeK' = \triple{W'}{R'}{l'}$ are \defstyle{$\aset$-equivalent} 
(written $\kripkeK \approx_{\aset} \kripkeK'$), 
iff $W = W'$, $R = R'$ and $l(w) \cap \aset = l'(w) \cap \aset$ for all $w \in W$.  

Given a Kripke structure $\kripkeK = \triple{W}{R}{l}$, 
a world~$w \in W$ and a formula
$\aformula$ in \QCTL, the satisfaction relation
$\kripkeK, w \models \aformula$  is defined as follows:
\[
\begin{array}{lcl}
\kripkeK, w \models \avarprop &\text{iff}&  \avarprop \in l(w)\\
\kripkeK, w \models \neg \aformula  &\text{iff}& \kripkeK, w \not\models \aformula\\
\kripkeK, w  \models \aformula \wedge \aformulabis &\text{iff}&  \kripkeK, w \models \aformula \; \text{and} \; 
\kripkeK, w \models_s \aformulabis \\
\kripkeK, w \models \EX \aformula &\text{iff}&  {\rm there \ is} \  w' \;
\text{such that} \;\pair{w}{w'} \in R \; \text{and} \; \kripkeK, w' \models \aformula \\
\kripkeK, w \models \E (\aformula \U \aformulabis) &\text{iff}& {\rm there \ is \ a \ path}  \  w_0, \ldots, w_n \;
\text{such that} \; w_0 = w, \\
&& \ \kripkeK, w_n \models \aformulabis \; \text{and} \; {\rm for \ all} \ i \in \interval{0}{n-1},  \ {\rm we \ have} \   \kripkeK, w_i \models \aformula \\
\kripkeK, w \models \A (\aformula \U \aformulabis) &\text{iff}& 
{\rm for \ all \ infinite \ paths} \ w_0, \ldots, w_n, \ldots \; \text{such that} \; w_0 = w, {\rm there \ is} \\
&&  j \geq 0 \; \text{such that} \; \kripkeK, w_j \models \aformulabis \;  \text{and} \ {\rm for \ all} \ i \in \interval{0}{j-1},  \kripkeK, w_i \models \aformula \\
\kripkeK, w \models \exists{\avarprop} \; \aformula &\text{iff}& {\rm there \ is} \  \kripkeK'
\; {\rm such \ that}  \; \kripkeK \approx_{\AP \setminus \{\avarprop\}} \kripkeK' \; 
{\rm and} \; \kripkeK', w \models \aformula 
\end{array}
\]

Note that $\kripkeK, w \models \EF \aformula$ iff there is $w' \in R^*(w)$  (where $R^*$ is the reflexive transitive closure of $R$) 
such that $\kripkeK, w' \models \aformula$. Stating that there is a unique successor of $w$ satisfying 
the propositional variable $\anominal$ can be expressed by the formula 
$\EX \ \anominal \wedge \neg (\exists \ \avarprop \
\EX(\anominal \wedge \avarprop) \wedge \EX(\anominal \wedge \neg \avarprop))$, where $\avarprop$ is 
distinct from $\anominal$.

The satisfiability problem for the  logic \sQCTL{} (under the \defstyle{structure semantics}) takes as input a 
formula $\aformula$ in \QCTL and asks whether
there is a finite and total Kripke structure $\kripkeK$ and a world~$w$ such that $\kripkeK, w \models \aformula$. 

The \defstyle{tree semantics} introduced in~\cite{LaroussinieM14} is obtained by considering as only admissible models the computation trees of 
finite and total Kripke structures. 
As noted in~\cite[Remark 5.7]{LaroussinieM14}, an equivalent formulation can be provided: 
the satisfiability problem for the logic \tQCTL{} (under the \defstyle{tree semantics}) 
takes as input a formula in \QCTL and asks whether
there is a finite-branching tree model $\atreemodel$ in which all the maximal branches are infinite 
such that $\atreemodel, \aroot \models \aformula$
and $\aroot$ is the root of $\atreemodel$. 
The equivalence is mainly due to the fact that \tQCTL{} can be translated into Monadic Second-Order Logic (MSO) 
over tree models with arbitrary finite branching (getting decidability by Rabin's Theorem~\cite{Rabin69}).
Indeed, as MSO over tree models with arbitrary finite branching is 
decidable by  Rabin's Theorem~\cite{Rabin69}, the satisfiability problem for \tQCTL{} is decidable
too, by a standard translation internalising the tree semantics for \tQCTL{}. 
As explained in~\cite[Section 6.3]{Thomas97}, the existence of a tree model implies also the existence of
a {\em regular tree model}, that can be precisely originated by a finite Kripke structure.
Hence, \tQCTL{} has the finite model property. So, not only can every finite and total Kripke structure be unfolded
as a finite-branching tree in which all the maximal branches are infinite (a direct consequence of the
definition for computation trees) but the existence of a tree model with the above-mentioned properties
and  satisfying  $\aformula$ 
implies the existence of the computation tree of a finite and total Kripke structure satisfying  $\aformula$. Hence, satisfiability in
the computation tree of  a finite Kripke structure  is equivalent to satisfiability in  a finite-branching tree 
in which all the maximal branches are infinite, and in the sequel, we shall operate with the latter definition. 

We write \satproblem{$\alogic$} to denote the satisfiability problem for the logic $\alogic$. 
The distinction between the tree semantics and the structure semantics is crucial 
and affects the computational properties of the satisfiability problems. 

\begin{proposition}
(I) 
\satproblem{$\sQCTL{}$} (under the structure semantics) 
is undecidable~\cite[Theorem 5.1]{LaroussinieM14}.
(II) 
\satproblem{$\tQCTL{}$} (under the tree semantics)
is decidable and \tower-complete~\cite[Theorem 5.3]{LaroussinieM14}.
(III) The satisfiability problem for \tQCTL{} is \tower-hard even if restricted to $\omega$-sequences~\cite{Meyer73,Sistla&Vardi&Wolper87}.  
\end{proposition}

All our forthcoming \tower upper bound results are based on translations into the satisfiability problem for
\tQCTL{}, sometimes via intermediate decision problems, which eventually invokes Rabin's Theorem~\cite{Rabin69} in some essential way.
This is not  surprising, as considering tree-like models and propositional quantification naturally leads to
invoking the decidability of S$\omega$S~\cite{Rabin69} or its linear version,  the second-order logic of one successor S1S~\cite{Buchi60}.
Decidability of other logics with propositional quantification is shown that way 
in~\cite{Artemov&Beklemishev93,Baaz&Ciabattoni&Zach00,Zach04}.
Apart from \CTL-like logics (see \eg~\cite{LaroussinieM14} for a wealth of bibliographical references),
other logics with propositional quantification have been shown to be decidable that way. 
Besides, in~\cite{Artemov&Beklemishev93}  a fragment of \GL  is shown to be decidable by translation into 
the weak monadic second-order logic of one  successor WS1S~\cite{Buchi60}, and 
a version of G\"odel logic $\mathsf{LC}$ with propositional quantification is shown to be decidable by
translation into S1S~\cite{Buchi60,Baaz&Ciabattoni&Zach00}, see also~\cite{Zach04} solving partially an open problem from~\cite[\S 9]{Kremer97bis}. 

Let us recapitulate the different versions of quantified \CTL we have seen so far.
\begin{itemize}
\item \sQCTL{} is interpreted over finite and total Kripke structures and 
      \satproblem{\sQCTL{}} is undecidable, see \eg~\cite{LaroussinieM14}.
\item \tQCTL{} is interpreted over finite-branching trees in which all the maximal branches
are infinite and \satproblem{\tQCTL{}} is \tower-complete, see \eg~\cite{LaroussinieM14}.
\end{itemize}
Let us introduce two additional versions that are closely related to modal logics with propositional
quantification on tree-like models introduced in the forthcoming Section~\ref{section-collection}. 
\begin{itemize}
\item \ftQCTL{} is interpreted over finite trees, \satproblem{\ftQCTL{}} can be shown to be in \tower by a
logspace reduction into \satproblem{\tQCTL{}} and its restriction to $\EX$ will be shown to admit
a \tower-hard satisfiability problem. 
\item \gtQCTL{} is interpreted over finite-branching trees (maximal branches may be finite),
\satproblem{\gtQCTL{}}  can be shown to be in \tower by logspace reduction into \satproblem{\tQCTL{}}
and its restriction to $\EX \EF$ will be shown to admit
a \tower-hard satisfiability problem. 
\end{itemize}
}



To define formally the semantics for propositional quantifiers,
we introduce the notion of $\aset$-equivalence between two Kripke structures, with $\aset$ being a set of propositional variables. 
Intuitively, two Kripke structures are $\aset$-equivalent, whenever the only differences (if any) between them are  restricted
to the interpretation of the propositional variables not in $\aset$. 
Formally, we say that two Kripke structures $\kripkeK = \triple{W}{R}{l}$ and 
$\kripkeK' = \triple{W'}{R'}{l'}$ are \defstyle{$\aset$-equivalent} 
(written $\kripkeK \approx_{\aset} \kripkeK'$), 
iff $W = W'$, $R = R'$ and $l(w) \cap \aset = l'(w) \cap \aset$ for all $w \in W$.  

Given a Kripke structure $\kripkeK = \triple{W}{R}{l}$, 
a world~$w \in W$ and a formula
$\aformula$ in \QCTL, the satisfaction relation
$\kripkeK, w \models \aformula$  is defined as follows:
\[
\begin{array}{lcl}
\kripkeK, w \models \avarprop &\text{iff}&  \avarprop \in l(w)\\
\kripkeK, w \models \neg \aformula  &\text{iff}& \kripkeK, w \not\models \aformula\\
\kripkeK, w  \models \aformula \wedge \aformulabis &\text{iff}&  \kripkeK, w \models \aformula \; \text{and} \; 
\kripkeK, w \models_s \aformulabis \\
\kripkeK, w \models \EX \aformula &\text{iff}&  {\rm there \ is} \  w' \;
\text{such that} \;\pair{w}{w'} \in R \; \text{and} \; \kripkeK, w' \models \aformula \\
\kripkeK, w \models \E (\aformula \U \aformulabis) &\text{iff}& {\rm there \ is \ a \ path}  \  w_0, \ldots, w_n \;
\text{such that} \; w_0 = w, \\
&& \ \kripkeK, w_n \models \aformulabis \; \text{and} \; {\rm for \ all} \ i \in \interval{0}{n-1},  \ {\rm we \ have} \   \kripkeK, w_i \models \aformula \\
\kripkeK, w \models \A (\aformula \U \aformulabis) &\text{iff}& 
{\rm for \ all \ infinite \ paths} \ w_0, \ldots, w_n, \ldots \; \text{such that} \; w_0 = w, {\rm there \ is} \\
&&  j \geq 0 \; \text{such that} \; \kripkeK, w_j \models \aformulabis \;  \text{and} \ {\rm for \ all} \ i \in \interval{0}{j-1},  \kripkeK, w_i \models \aformula \\
\kripkeK, w \models \exists{\avarprop} \; \aformula &\text{iff}& {\rm there \ is} \  \kripkeK'
\; {\rm such \ that}  \; \kripkeK \approx_{\AP \setminus \{\avarprop\}} \kripkeK' \; 
{\rm and} \; \kripkeK', w \models \aformula 
\end{array}
\]

Note that $\kripkeK, w \models \EF \aformula$ iff there is $w' \in R^*(w)$  (where $R^*$ is the reflexive transitive closure of $R$) 
such that $\kripkeK, w' \models \aformula$. Stating that there is a unique successor of $w$ satisfying 
the propositional variable $\anominal$ can be expressed by the formula 
$\EX \ \anominal \wedge \neg (\exists \ \avarprop \
\EX(\anominal \wedge \avarprop) \wedge \EX(\anominal \wedge \neg \avarprop))$, where $\avarprop$ is 
distinct from $\anominal$.

The satisfiability problem for the  logic \sQCTL{} (under the \defstyle{structure semantics}) takes as input a 
formula $\aformula$ in \QCTL and asks whether
there is a finite and total Kripke structure $\kripkeK$ and a world~$w$ such that $\kripkeK, w \models \aformula$. 

The \defstyle{tree semantics} introduced in~\cite{LaroussinieM14} is obtained by considering as only admissible models the computation trees of 
finite and total Kripke structures. 
As noted in~\cite[Remark 5.7]{LaroussinieM14}, an equivalent formulation can be provided: 
the satisfiability problem for the logic \tQCTL{} (under the \defstyle{tree semantics}) 
takes as input a formula in \QCTL and asks whether
there is a finite-branching tree model $\atreemodel$ in which all the maximal branches are infinite 
such that $\atreemodel, \aroot \models \aformula$
and $\aroot$ is the root of $\atreemodel$. 
This is the definition we adopt along the paper.

We write \satproblem{$\alogic$} to denote the satisfiability problem for the logic $\alogic$. 
The distinction between the tree semantics and the structure semantics is crucial 
and affects the computational properties of the satisfiability problems. 

\begin{prop}
(I) 
\satproblem{$\sQCTL{}$} (under the structure semantics) 
is undecidable~\cite[Theorem 5.1]{LaroussinieM14}.
(II) 
\satproblem{$\tQCTL{}$} (under the tree semantics)
is decidable and \tower-complete~\cite[Theorem 5.3]{LaroussinieM14}.
(III) The satisfiability problem for \tQCTL{} is \tower-hard even if restricted to $\omega$-sequences~\cite{Meyer73,Sistla&Vardi&Wolper87}.  
\end{prop}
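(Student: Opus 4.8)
All three items belong to the established literature; the plan is to recall the arguments and show how they fit together. For \textbf{(I)}, I would reduce from a standard undecidable problem, exploiting that over \emph{finite total} Kripke structures a quantified proposition plays the role of a monadic second-order quantifier over worlds: this is the mechanism behind Fine's encoding of (a fragment of) second-order arithmetic into the modal logic \K with propositional quantification~\cite{Fine70}, where quantified propositions label an unbounded collection of pairwise distinguishable ``cells'' inside a finite structure and $\EX$ alone suffices both to wire up the intended adjacency between cells and to propagate the arithmetic / transition constraints. Since \CTL subsumes \K over total structures ($\EX$ acting as $\Diamond$, totality guaranteeing seriality), undecidability transfers to \sQCTL{}, as carried out in~\cite[Theorem~5.1]{LaroussinieM14}.

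For \textbf{(II)}, the upper bound is automata-theoretic: I would translate a \tQCTL{} formula into an alternating parity tree automaton over finitely branching trees, treating $\exists p\,\aformula$ by projecting the automaton for $\aformula$ onto the $p$-free alphabet; each projection must be preceded by a nondeterminization step costing an exponential in the state space, so a formula with $k$ nested quantifier blocks yields an automaton of size a tower of exponentials of height $\BigOh{k}$, whose non-emptiness is then testable in time elementary in that size --- hence satisfiability is in \tower. (Equivalently, one translates into \MSO{} over $\omega$-branching trees and appeals to Rabin's theorem~\cite{Rabin69} for bare decidability, the \tower bound being recovered from the non-elementary \MSO{}-to-automaton translation; either route also uses that a satisfiable formula admits a regular tree model, hence one arising by unfolding a finite Kripke structure.) The matching lower bound is inherited from~\textbf{(III)}: \tQCTL{} pins down single $\omega$-sequences, since totality is built into the semantics and $\AG\,(\neg\exists p\,(\EX p \wedge \EX\neg p))$ forbids a node from having two successors of different type, and on such models \tQCTL{} and \QLTL{} coincide.

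For \textbf{(III)}, I would recall the classical non-elementary lower bound for \QLTL{} over $\omega$-words, due in essence to Meyer~\cite{Meyer73}: the tower function $\tow(k,n)$ is captured by a family of formulae of size polynomial in $k+n$ using $k$ nested blocks of propositional quantifiers, each level being responsible for one exponential in the length of the binary counters addressing the tape of a $k$-fold-exponential-time Turing machine (dually, one reduces from the non-elementary theories WS1S / S1S); composing with the $\omega$-sequence embedding from the previous paragraph gives \tower-hardness already over linear models, with detailed expositions in~\cite{Sistla&Vardi&Wolper87,Demri&Goranko&Lange16}. The delicate point throughout is the bookkeeping in~\textbf{(II)}: one must check that the number of nested automaton projections is bounded by the quantifier-\emph{nesting depth} (not the length) of the formula and that $k$-fold-exponential tree automata over finitely branching trees have an elementary non-emptiness test, so the whole procedure stays inside \tower; symmetrically, for the lower bounds one verifies that the $\omega$-sequence gadget and the counter encodings stay within the advertised fragments.
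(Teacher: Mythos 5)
The paper states this proposition purely as a recall of known results, citing \cite{LaroussinieM14} for (I)--(II) and \cite{Meyer73,Sistla&Vardi&Wolper87} for (III), and offers no proof of its own. Your sketch is a faithful reconstruction of the standard arguments behind those citations and matches the paper's own framing exactly --- the embedding of \K into \CTL for (I), the \MSO{}/Rabin route with the regular-tree-model property and the one-exponential-per-projection accounting for (II), and the single-successor gadget reducing \QLTL over $\omega$-words to \tQCTL{} for (III) --- so it is consistent with the paper, which simply defers to the literature here.
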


All our forthcoming \tower upper bound results are based on translations into the satisfiability problem for
\tQCTL{}, sometimes via intermediate decision problems, which eventually invokes Rabin's Theorem~\cite{Rabin69} in some essential way.
This is not  surprising, as considering tree-like models and propositional quantification naturally leads to
invoking the decidability of S$\omega$S~\cite{Rabin69} or its linear version,  the second-order logic of one successor S1S~\cite{Buchi60}.

Let us recapitulate the different versions of quantified \CTL we have seen so far.
\begin{itemize}
\item \sQCTL{} is interpreted over finite and total Kripke structures and 
      \satproblem{\sQCTL{}} is undecidable, see \eg~\cite{LaroussinieM14}.
\item \tQCTL{} is interpreted over finite-branching trees in which all the maximal branches
are infinite and \satproblem{\tQCTL{}} is \tower-complete, see \eg~\cite{LaroussinieM14}.
\end{itemize}
Let us introduce two additional versions that are closely related to modal logics with propositional
quantification on tree-like models introduced in the forthcoming Section~\ref{section-collection}. 
\begin{itemize}
\item \ftQCTL{} is interpreted over finite trees, \satproblem{\ftQCTL{}} can be shown to be in \tower by a
logspace reduction into \satproblem{\tQCTL{}} and its restriction to $\EX$ will be shown to admit
a \tower-hard satisfiability problem (see Section~\ref{section-collection}). 
\item \gtQCTL{} is interpreted over finite-branching trees (maximal branches may be finite),
\satproblem{\gtQCTL{}}  can be shown to be in \tower by logspace reduction into \satproblem{\tQCTL{}}
and its restriction to $\EX \EF$ will be shown to admit
a \tower-hard satisfiability problem (see Section~\ref{section-collection}). 
\end{itemize}

As a side remark, the equivalence between the two formulations of \tQCTL{}, 
is mainly due to the fact that \tQCTL{} can be translated into Monadic Second-Order Logic (MSO) 
over tree models with arbitrary finite branching (getting decidability by Rabin's Theorem~\cite{Rabin69}).
Indeed, as MSO over tree models with arbitrary finite branching is 
decidable by  Rabin's Theorem~\cite{Rabin69}, the satisfiability problem for \tQCTL{} is decidable
too, by a standard translation internalising the tree semantics for \tQCTL{}. 
As explained in~\cite[Section 6.3]{Thomas97}, the existence of a tree model implies also the existence of
a {\em regular tree model}, that can be precisely originated by a finite Kripke structure.
Hence, \tQCTL{} has the finite model property. So, not only can every finite and total Kripke structure be unfolded
as a finite-branching tree in which all the maximal branches are infinite (a direct consequence of the
definition for computation trees) but the existence of a tree model with the above-mentioned properties
and  satisfying  $\aformula$ 
implies the existence of the computation tree of a finite and total Kripke structure satisfying  $\aformula$. Hence, satisfiability in
the computation tree of  a finite Kripke structure  is equivalent to satisfiability in  a finite-branching tree 
in which all the maximal branches are infinite, and in the sequel, we shall operate with the latter definition. 

Apart from \CTL-like logics (see \eg~\cite{LaroussinieM14} for a wealth of bibliographical references),
other logics with propositional quantification have been shown to be decidable 
by translation into S$\omega$S, see e.g.~\cite{Artemov&Beklemishev93,Baaz&Ciabattoni&Zach00,Zach04}. 
Besides, in~\cite{Artemov&Beklemishev93}  a fragment of \GL with propositional quantification
 is shown to be decidable by translation into 
the weak monadic second-order logic of one  successor WS1S~\cite{Buchi60}, and 
a version of G\"odel logic $\mathsf{LC}$ with propositional quantification is shown to be decidable by
translation into S1S~\cite{Buchi60,Baaz&Ciabattoni&Zach00}, see also~\cite{Zach04} solving partially an open problem from~\cite[\S 9]{Kremer97bis}.

\subsection{Complexity classes and tiling problems}
\label{sec:complexity}
In this section, we introduce tiling problems that are mainly used in
Sections~\ref{section-aexppol-hardness} and~\ref{section-uniform-reduction}.

Let $\tow : \Nat \times \Nat \to \Nat$ be a tetration function defined for integers $n,k \geq 0$, 
inductively as $\tow(0,n) {=} n$ and~$\tow(k{+}1,n) {=} 2^{\tow(k,n)}$. 
Intuitively the function $\tow$ defines the tower of exponentials of  
height~$k$, \eg we have $\tow(1, n) = 2^n$, $\tow(2, n) = 2^{2^n}$, 
and so on.
By $\kNExpTime$ we denote the class of all problems decidable with a
nondeterministic Turing machines of working time in $\BigOh{\tow(k,p(n))}$ for some polynomial $p(\cdot)$,
on each input of length $n$. We define $\tower$ as the class of all problems 
of time complexity bounded by a tower of exponentials, whose height is an 
elementary function~\cite{Schmitz16}. Thus, to show $\tower$-hardness (using elementary reductions~\cite{Schmitz16}), 
it is sufficient to prove $\kNExpTime$-hardness for all~$k$ using uniform 
reductions~\cite[Section 3.1]{Schmitz16}. It is worth recalling that \tower-hardness
is defined with the class of elementary reductions (i.e. those with time-complexity 
bounded by a tower of exponentials of fixed height)~\cite{Schmitz16}. Building a reduction from 
instances of $\kNExpTime$-hard problems with time-complexity $\amap(n,k)$ where the size $n$ of the input
leads to an elementary reduction when  $\amap(n,k)$ itself is elementary in $n$ and $k$.
This is  what we mean by a uniform reduction in $k$ to establish \tower-hardness. 

For proving hardness results, we make extensive use of tiling problems, see \eg~\cite{vanEmdeBoas97}.
%
\begin{defi} \label{def:tilings}
The \defstyle{tiling problem $\mathtt{Tiling}_k$} takes as inputs a triple $\triple{\cT}{\cH}{\cV}$
and $c \in \cT^{n}$ for some~$n \geq 1$ such that $\cT$ is a finite set of \defstyle{tile types},
$\cH \subseteq \cT \times \cT$ (resp. $\cV \subseteq \cT \times \cT$)
      represents the  \defstyle{horizontal (resp. vertical)  matching relation},
and  $c = t_0,t_1, \ldots, t_{n-1} \in \cT^{n}$ 
is the \defstyle{initial condition}. 
A \defstyle{solution} for the instance $\triple{\cT}{\cH}{\cV}, c$ is a mapping 
$\atiling : \interval{0}{\tow(k,n)-1} \times \interval{0}{\tow(k,n)-1} \to \cT$ 
such~that:
\begin{description}
\itemsep 0 cm 
\item[\desclabel{(init)}{tiling:init}] For all $i \in \interval{0}{n-1}$, $\atiling(0,i) = t_i$.
\item[\desclabel{(hori)}{tiling:hori}]  For all $i, j \in \interval{0}{\tow(k,n)-1}$,
              $i < \tow(k,n)-1$ implies $(\atiling(i,j), \atiling(i {+} 1, j)) \in \cH$.
\item[\desclabel{(verti)}{tiling:verti}] For all $i, j \in \interval{0}{\tow(k,n)-1}$,
              $j < \tow(k,n)-1$ implies $(\atiling(i,j),  \atiling(i, j {+} 1)) \in \cV$. 
\end{description}
A mapping $\atiling$ satisfying~\ref{tiling:hori} and~\ref{tiling:verti} is called a \defstyle{tiling}.
\cut{
A $k$-\emph{tiling problem} $\cP$ is a tuple $(\cT, \cH, \cV)$, 
where $\cT$ is a finite set of tile types and 
$\cH, \cV \subseteq \cT \times \cT$ represent the horizontal and
vertical matching conditions. Let $\cP$ be a tiling problem and let
$c = t_0,t_1, \ldots, t_{n-1} \in \cT^{n}$ be an initial condition.
A mapping $\tau : \{ 0,1, \ldots, \tow(k,n)-1 \} \times 
\{ 0,1, \ldots, \tow(k,n)-1 \} \rightarrow \cT$ is a \emph{solution} for 
$\cP$ and $c$ if and only if, for all $i,j$ s.t., $i+1,j+1 \leq \tow(k,n)-1$, 
the following holds $(\tau(i,j), \tau(i {+} 1, j)) \in \cH, (\tau(i,j), 
\tau(i, j {+} 1)) \in \cV$ and~$\tau(0,i) = t_i$ for all $i < n$.}
\end{defi}
\noindent The problem of checking if an instance of $\mathtt{Tiling}_k$
has a solution (note that $k$ does not appear in the instance and it governs
the size of the grid) is  $\kNExpTime$-complete~\cite{vanEmdeBoas97}.

Given $N \geq 2$,  let us consider the satisfiability problem for \tQCTLEX{, \leq N} in which the structures
are tree models where all the maximal branches are infinite but each node has at most $N$ children (and at least one child). 
To characterise the complexity of \satproblem{\tQCTLEX{, \leq N}}, we consider the
 complexity class \aexppol that
consists of all problems decidable by an alternating Turing machine (ATM)~\cite{ChandraKS81} working in 
exponential-time and using only polynomially many alternations~\cite{Bozzellietal17,Molinari19}. We stress here that 
allowing an unbounded number of alternations would give us the class $\ExpSpace$, and classes similar to \aexppol
have been considered in~\cite{Berman80}, typically  STA($\mathfrak{f}(n)$,$\mathfrak{g}(n)$,$\mathfrak{h}(n)$), 
where $\mathfrak{f}(n)$ refers to the restriction on the Space, 
$\mathfrak{g}(n)$ refers to the restriction on the Time,
and $\mathfrak{h}(n)$ refers to the restriction on the number of Alternations. 
Consequently,  \aexppol is the union of classes STA($\cdot$,$2^{\mathfrak{g}(n)}$,$\mathfrak{h}(n)$)
with polynomials $\mathfrak{g}(n)$, $\mathfrak{h}(n)$.
The complexity of several logical problems has been captured by the class \aexppol, see \eg~\cite{Ferrante&Rackoff75,Bozzelli&vanDitmarsch&Pinchinat15,Bozzellietal17}.  

For proving \aexppol-hardness, we use an elegant modification of 
$\mathtt{Tiling}_1$, introduced in~\cite{Bozzellietal17,Molinari19}.
The extension amounts to considering a stack of $n$ tilings,
with a matching relation between two consecutive tile types on
the same position of the grid, and quantifications over the tile types on 
the first row (initial conditions). Details follow below. 

\begin{defi}
The  \defstyle{alternating multi-tiling problem} $\cPmulti$ takes as 
inputs an even number $n$ (in unary), $\triple{\cT}{\cH}{\cV}$ (as for defining $\mathtt{Tiling}_1$), 
$\cTzero \subseteq \cT$, 
$\cTacc \subseteq \cT$ and  $\cTmulti \subseteq \cT \times \cT$.
Given an initial condition $c= (w_1, \ldots, w_n) \in (\cTzero^{2^n})^{n}$, 
a \defstyle{solution} for $c$ is a  multi-tiling $(\atiling_1, \ldots, \atiling_n)$ on the grid~$\interval{0}{2^n-1} \times \interval{0}{2^n-1}$ such that: 
\begin{description}
\itemsep 0 cm 
\item[\desclabel{(m-init)}{amtp:m-init}] For all $\alpha \in \interval{1}{n}$, for all $j \in \interval{0}{2^n-1}$, $\atiling_{\alpha}(0,j) = w_{\alpha}(j)$
(\ie the first row of $\atiling_{\alpha}$ is $w_{\alpha}$).
\item[\desclabel{(m-tiling)}{amtp:m-tiling}] For  $\alpha \in \interval{1}{n}$, $\atiling_{\alpha}$ satisfies  \ref{tiling:hori} and \ref{tiling:verti}.
\item[\desclabel{(m-multi)}{amtp:m-multi}]  For  $\alpha \in \interval{1}{n-1}$,  
                     for all $i,j \in  \interval{0}{2^n-1}$, $\pair{\atiling_{\alpha}(i,j)}{\atiling_{\alpha+1}(i,j)} \in \cTmulti$.
\item[\desclabel{(m-accept)}{amtp:m-accept}] For some $j \in \interval{0}{2^n-1}$, $\atiling_{n}(2^n-1,j) \in \cTacc$. 
\end{description}
An instance $\mathcal{I}$ for $\cPmulti$ made of 
 $n$, $\triple{\cT}{\cH}{\cV}$, $\cTzero,\cTacc \subseteq \cT$, 
$\cTmulti \subseteq \cT \times \cT$ is positive iff
for all~$w_1 \in \cTzero^{2^n}$, there is $w_2 \in \cTzero^{2^n}$ such that $\ldots$ 
for all $w_{n-1} \in \cTzero^{2^n}$, there is $w_{n} \in \cTzero^{2^n}$ such that
there is a solution  $(\atiling_1, \ldots, \atiling_n)$ 
for $(w_1, \ldots, w_n)$. 
Note that this sequence involves $n{-}1$ quantifier alternations.
\end{defi}

\noindent The alternating multi-tiling problem is shown to be \aexppol-complete in~\cite{Bozzellietal17,Molinari19}.


\section{What happens when trees are bounded?}\label{section-bounded-degree}

In this section, we study the satisfiability problem for \tQCTLEX{,\leq N} with $N \geq 1$, 
\ie{} \tQCTLEX{} over trees, where the degree of each node is bounded by a fixed
natural number $N \geq 1$. 
As~we have already mentioned in the introduction, the goal of this section is two-fold.
First, we would like to make the reader familiar with our proof techniques applied to a simplified scenario.
Second, our results show that to get $\TOWER$-hardness of \tQCTLEX{} we must focus on trees of arbitrarily large branching.


\subsection{A toolkit for introducing local nominals}\label{sec:local-nominals}
Below, we introduce formulae to simulate partially the use of nominals from hybrid modal logics~\cite{Areces&Blackburn&Marx01}. 
A nominal $\anominal$ is usually understood as a propositional variable true  at exactly one  world of the model (a global property). 
In \tQCTLEX{}, such a property cannot be enforced but it can be done with respect to nodes at a bounded depth from the evaluation node, whence the adjective `local' for the nominals. 
The use of local nominals is essential in all our hardness proofs, as it allows us to simulate first-order quantification on a given set of nodes of bounded depth.

Proofs of all lemmas from this section are rather straightforward and are shown by careful inspection of the semantics. Hence, we delegate them to Appendix~\ref{appendix:proof-of:lemma:nominal}--\ref{appendix:proof-of:lemma-atoflengthd}.

\begin{defi}
Given a tree model $\atreemodel$ and a node $\anode$, we say that the propositional variable $\anominal$ is a \defstyle{nominal for the depth $k \geq 1$ from $\anode$}
iff there is $\anode' \in V$ with $\anode E^{k} \anode'$ such that $\atreemodel, \anode' \models \anominal$, and for all worlds~$\anode'' \neq \anode'$ s.t.
$\anode E^{k} \anode''$, we have $\atreemodel, \anode'' \not \models \anominal$ ($E^k$ is the $k$-fold composition of~$E$).  
\end{defi}

The formula $\bindd{\anominal}{k}$ defined as 
$\EX^k \anominal \wedge \neg \exists \ \avarprop \ (\EX^k(\anominal \wedge \avarprop) \wedge \EX^k(\anominal \wedge \neg \avarprop))$,
where $\avarprop$ is distinct from $\anominal$, states that $\anominal$ is a nominal for the depth $k$ ($\EX^k$ denotes $k$ copies of $\EX$).

\begin{lem}\label{lemma:nominal}
$\anominal$ is a nominal for the depth $k \geq 1$ from $\anode$ iff $\atreemodel, \anode \models \bindd{\anominal}{k}$.
\end{lem}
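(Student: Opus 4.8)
The plan is to unfold the two abbreviations in $\bindd{\anominal}{k}$ and match each conjunct to a clause in the definition of ``nominal for the depth $k$''. Recall $\bindd{\anominal}{k} = \EX^k \anominal \wedge \neg \exists \avarprop\, (\EX^k(\anominal \wedge \avarprop) \wedge \EX^k(\anominal \wedge \neg \avarprop))$. First I would observe that $\atreemodel, \anode \models \EX^k \anominal$ holds iff there is a node $\anode'$ with $\anode E^k \anode'$ and $\atreemodel, \anode' \models \anominal$; this is an easy induction on $k$ using the semantics of $\EX$, and it captures the ``existence'' half of the definition (there is at least one $\anominal$-node at depth $k$).

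Next I would analyse the second conjunct, $\neg \exists \avarprop\, (\EX^k(\anominal \wedge \avarprop) \wedge \EX^k(\anominal \wedge \neg \avarprop))$. The key claim is that $\atreemodel, \anode \models \exists \avarprop\, (\EX^k(\anominal \wedge \avarprop) \wedge \EX^k(\anominal \wedge \neg \avarprop))$ iff there exist two \emph{distinct} nodes $\anode', \anode''$ at depth $k$ from $\anode$, both satisfying $\anominal$. For the left-to-right direction, if such an $\avarprop$-variant $\atreemodel'$ exists, then $\EX^k(\anominal \wedge \avarprop)$ gives a depth-$k$ node $\anode'$ with $\anominal, \avarprop$, and $\EX^k(\anominal \wedge \neg\avarprop)$ gives a depth-$k$ node $\anode''$ with $\anominal$ but not $\avarprop$; since $\anode'$ has $\avarprop$ and $\anode''$ does not, $\anode' \neq \anode''$. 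Conversely, given two distinct depth-$k$ nodes $\anode', \anode''$ both satisfying $\anominal$, I would choose $\atreemodel'$ to be the $\{\avarprop\}$-variant of $\atreemodel$ making $\avarprop$ true exactly at $\anode'$ (this uses that $\avarprop \neq \anominal$, so the truth of $\anominal$ is unchanged, and that $\avarprop$ does not otherwise occur); then $\anode'$ witnesses $\EX^k(\anominal \wedge \avarprop)$ and $\anode''$ witnesses $\EX^k(\anominal \wedge \neg\avarprop)$.

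Putting the two pieces together: $\atreemodel, \anode \models \bindd{\anominal}{k}$ iff there is at least one depth-$k$ node satisfying $\anominal$ \emph{and} there do not exist two distinct such nodes, which is precisely the statement that there is a unique depth-$k$ node $\anode'$ with $\atreemodel, \anode' \models \anominal$ and all other depth-$k$ nodes falsify $\anominal$ --- i.e., $\anominal$ is a nominal for the depth $k$ from $\anode$. I expect the only mildly delicate point to be the converse direction of the second-conjunct analysis, namely checking that the $\{\avarprop\}$-variant can indeed be built (it can, since the tree semantics quantifies over all relabellings of $\avarprop$ on the same tree, and every node --- in particular $\anode'$ --- is reachable, so the relabelling is realisable); everything else is a routine unwinding of the semantics together with the induction on $k$ for the meaning of $\EX^k$.
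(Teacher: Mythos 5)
Your proof is correct and follows essentially the same route as the paper's: existence from the first conjunct $\EX^k\anominal$, and uniqueness by showing the existential subformula holds iff two distinct $\anominal$-nodes exist at depth $k$ (the paper argues this by contradiction, you additionally spell out the construction of the $\{\avarprop\}$-variant for the converse, which the paper leaves implicit). No gaps.
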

%

\noindent Let us next define $\att{\anominal}{k} \aformula$ as the formula $\EX^{k}(\anominal \wedge \aformula)$
(usually assuming that $\bindd{\anominal}{k}$ holds). 

\begin{lem}\label{lemma:at}
Assuming that $\anominal$ is a nominal for the depth $k \geq 0$ from $\anode$ such that $\anode E^k \anode'$ and $\atreemodel, \anode' \models \anominal$, 
we have $\atreemodel, \anode \models \att{\anominal}{k} \aformula$ iff $\atreemodel, \anode' \models \aformula$. 
\end{lem}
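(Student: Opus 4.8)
The plan is to prove Lemma~\ref{lemma:at} by a direct unfolding of the semantics of $\EX^{k}$, using Lemma~\ref{lemma:nominal} to convert the hypothesis ``$\anominal$ is a nominal for the depth $k$ from $\anode$'' into the syntactic fact $\atreemodel, \anode \models \bindd{\anominal}{k}$, and then exploiting the uniqueness part of that property. First I would treat the degenerate case $k=0$ separately: there $\EX^{0}\aformulabis$ is just $\aformulabis$, being a nominal for depth $0$ from $\anode$ means $\atreemodel, \anode \models \anominal$, and the unique $\anode'$ with $\anode E^{0}\anode'$ is $\anode$ itself, so $\att{\anominal}{0}\aformula = (\anominal \wedge \aformula)$ holds at $\anode$ iff $\aformula$ holds at $\anode = \anode'$; the hypothesis $\atreemodel,\anode\models\anominal$ makes the conjunct $\anominal$ free.

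For $k \geq 1$, the key observation is that, by definition of the tree semantics, $\atreemodel, \anode \models \EX^{k}(\anominal \wedge \aformula)$ holds iff there exists a node $\anode''$ with $\anode E^{k} \anode''$ such that $\atreemodel, \anode'' \models \anominal \wedge \aformula$. The forward direction is then immediate: from such a witness $\anode''$ we get $\atreemodel,\anode''\models\anominal$ and $\anode E^k \anode''$, so by the uniqueness clause in the definition of a nominal for the depth $k$ (equivalently, by the semantics of $\bindd{\anominal}{k}$ via Lemma~\ref{lemma:nominal}) we must have $\anode'' = \anode'$, whence $\atreemodel, \anode' \models \aformula$. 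Conversely, if $\atreemodel, \anode' \models \aformula$, then since by hypothesis $\anode E^{k}\anode'$ and $\atreemodel, \anode' \models \anominal$, the node $\anode'$ itself witnesses $\atreemodel, \anode \models \EX^{k}(\anominal \wedge \aformula) = \att{\anominal}{k}\aformula$.

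I do not expect any real obstacle here: the lemma is essentially a bookkeeping statement about how the $k$-fold $\EX$ modality interacts with the uniqueness guarantee already packaged in Lemma~\ref{lemma:nominal}. The only point requiring a little care is making explicit that $\EX^{k}$ quantifies existentially over nodes reachable by exactly $k$ steps of $E$ (so that ``$\anode E^{k}\anode''$'' is the right relation to invoke uniqueness for), and noting that the hypothesis already furnishes a node $\anode'$ with $\anode E^{k}\anode'$ and $\atreemodel,\anode'\models\anominal$, so in the backward direction no new witness needs to be produced. One should also remark that the statement as used elsewhere in the paper tacitly assumes $\bindd{\anominal}{k}$ holds at $\anode$; since for $k\geq 1$ that is exactly the content of the nominal hypothesis by Lemma~\ref{lemma:nominal}, and for $k=0$ it is the atomic fact $\atreemodel,\anode\models\anominal$, nothing further is needed.
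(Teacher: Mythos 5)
Your proof is correct and follows essentially the same route as the paper's: unfold the semantics of $\EX^{k}(\anominal \wedge \aformula)$, use the uniqueness clause of the nominal hypothesis to identify the witness with $\anode'$ in the forward direction, and take $\anode'$ itself as the witness in the backward direction. Your separate handling of $k=0$ is a harmless extra precaution the paper does not bother with.
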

%
Given $d \geq 1$ and propositional variables $\anominal_1$, \ldots, $\anominal_d$ (that play the role of nominals), 
we often write~$\att{\anominal_1, \ldots, \anominal_d}{} \aformula$ to denote 
the formula $\att{\anominal_1}{1} \att{\anominal_2}{1} \cdots \att{\anominal_d}{1} \aformula$
(usually assuming that  $\bindd{\anominal_1}{1}$ holds and for all $i \in \interval{2}{d}$, 
$\att{\anominal_1}{1} \att{\anominal_2}{1} \cdots \att{\anominal_{i-1}}{1} \bindd{\anominal_i}{1}$ holds true). 
We also use $\att{\bar{\anominal}}{} \aformula$ instead of $\att{\anominal_1, \ldots, \anominal_d}{} \aformula$ 
when $\bar{\anominal}$ is understood as $\anominal_1, \ldots, \anominal_d$. 
Given a node $\anode_0$ such that 
\[
\atreemodel, \anode_0 \models \bindd{\anominal_1}{1} \wedge  
\bigwedge_{i \in \interval{2}{d}} \att{\anominal_1}{1} \att{\anominal_2}{1} \cdots \att{\anominal_{i-1}}{1} \bindd{\anominal_i}{1},
\]
we write $\anode_1, \ldots, \anode_d$ to denote the unique sequence of nodes such that
for all $i \in \interval{1}{d}$, we have both~$\anode_{i-1} E \anode_i$ and $\atreemodel, \anode_i \models \anominal_i$. 
The existence and uniqueness of the nodes $\anode_1, \ldots, \anode_d$ 
follow from Lemma~\ref{lemma:nominal} and Lemma~\ref{lemma:at}. 
Here is another useful lemma justifying the use of the introduced abbreviations. 

\begin{lem}\label{lemma:atoflengthd}
Assume that $\atreemodel, \anode_0 \models \bindd{\anominal_1}{1} \wedge  \bigwedge_{i \in \interval{2}{d}}
\att{\anominal_1}{1} \att{\anominal_2}{1} \cdots \att{\anominal_{i-1}}{1} \bindd{\anominal_i}{1}$
and the sequence $\anode_1, \ldots, \anode_d$ is associated to $\bar{\anominal} = \anominal_1, \ldots \anominal_d$.
Then, $\atreemodel, \anode_0 \models \att{\bar{\anominal}}{} \aformula$ iff $\atreemodel, \anode_d \models \aformula$.
\end{lem}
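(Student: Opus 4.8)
The plan is to prove Lemma~\ref{lemma:atoflengthd} by induction on $d \geq 1$, unfolding the abbreviation $\att{\bar{\anominal}}{} \aformula = \att{\anominal_1}{1} \att{\anominal_2}{1} \cdots \att{\anominal_d}{1} \aformula$ one layer at a time and invoking Lemma~\ref{lemma:at} at each step. The key point is that the hypothesis
\[
\atreemodel, \anode_0 \models \bindd{\anominal_1}{1} \wedge \bigwedge_{i \in \interval{2}{d}} \att{\anominal_1}{1} \att{\anominal_2}{1} \cdots \att{\anominal_{i-1}}{1} \bindd{\anominal_i}{1}
\]
is exactly what licenses each application of Lemma~\ref{lemma:at}: it guarantees that $\anominal_i$ is a nominal for the depth $1$ from the node $\anode_{i-1}$ reached after $i-1$ steps along the $\anominal$-witnesses.

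In the base case $d = 1$, the statement reads $\atreemodel, \anode_0 \models \att{\anominal_1}{1} \aformula$ iff $\atreemodel, \anode_1 \models \aformula$, which is immediate from Lemma~\ref{lemma:at} applied with $k = 1$: the hypothesis provides $\atreemodel, \anode_0 \models \bindd{\anominal_1}{1}$, so by Lemma~\ref{lemma:nominal} $\anominal_1$ is a nominal for the depth $1$ from $\anode_0$, and $\anode_1$ is by definition the unique $E$-successor of $\anode_0$ satisfying $\anominal_1$. For the inductive step, assume the claim for sequences of length $d-1$. Applying Lemma~\ref{lemma:at} once at $\anode_0$ with $k=1$ and the subformula $\att{\anominal_2}{1} \cdots \att{\anominal_d}{1} \aformula$, we get $\atreemodel, \anode_0 \models \att{\bar{\anominal}}{} \aformula$ iff $\atreemodel, \anode_1 \models \att{\anominal_2}{1} \cdots \att{\anominal_d}{1} \aformula$. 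It then remains to check that $\anode_1$ together with the tail $\anominal_2, \ldots, \anominal_d$ satisfies the induction hypothesis, i.e.\ that $\atreemodel, \anode_1 \models \bindd{\anominal_2}{1} \wedge \bigwedge_{i \in \interval{3}{d}} \att{\anominal_2}{1} \cdots \att{\anominal_{i-1}}{1} \bindd{\anominal_i}{1}$; this follows from the original hypothesis at $\anode_0$ by stripping the leading $\att{\anominal_1}{1}$ from each conjunct and using Lemma~\ref{lemma:at} (with $k=1$, nominal $\anominal_1$, node $\anode_1$) to transfer truth from $\anode_0$ to $\anode_1$. Moreover the sequence of $\anominal$-witnesses associated to $\anode_1$ and $\anominal_2, \ldots, \anominal_d$ is exactly $\anode_2, \ldots, \anode_d$, by the uniqueness clause in the definition preceding the lemma. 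Hence the induction hypothesis yields $\atreemodel, \anode_1 \models \att{\anominal_2}{1} \cdots \att{\anominal_d}{1} \aformula$ iff $\atreemodel, \anode_d \models \aformula$, and composing the two equivalences completes the step.

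The only mildly delicate point — and the one I would write out carefully — is the bookkeeping in the inductive step: verifying that the ``shifted'' hypothesis genuinely holds at $\anode_1$, and that the nodes named $\anode_2, \ldots, \anode_d$ relative to $\anode_0$ coincide with those named relative to $\anode_1$. Both are routine consequences of Lemma~\ref{lemma:at} and the uniqueness of $\anominal_i$-successors guaranteed by Lemma~\ref{lemma:nominal}, but they are exactly the place where one could get the indices wrong. Everything else is a direct unfolding of definitions and semantics, so no real obstacle is expected.
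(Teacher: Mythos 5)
Your proof is correct and follows essentially the same strategy as the paper's: induction on $d$, with Lemma~\ref{lemma:at} discharging one layer of $\att{\cdot}{1}$ per step. The only (cosmetic) difference is the direction of the peeling: you strip the \emph{first} nominal and apply the induction hypothesis to the shifted tail $\anominal_2,\ldots,\anominal_d$ at $\anode_1$, which forces the extra bookkeeping you rightly flag (transferring the hypothesis from $\anode_0$ to $\anode_1$ and matching the witness nodes), whereas the paper strips the \emph{last} nominal, applying the induction hypothesis to the prefix $\anominal_1,\ldots,\anominal_{d-1}$ with the formula $\att{\anominal_d}{1}\aformula$ --- the required hypothesis there is literally a subconjunction of the one given at $\anode_0$, so no shifting is needed, followed by a single final application of Lemma~\ref{lemma:at} at $\anode_{d-1}$.
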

%
%
Let $\distinctbindd{\anominal_1, \ldots, \anominal_{\alpha}}{k}$ be an abbreviation of
$\bigwedge_{i \in \interval{1}{\alpha}} \bindd{\anominal_i}{k} \wedge
\bigwedge_{j \in \interval{1}{\alpha}} \bigwedge_{i < j} \neg \att{\anominal_i}{k} \anominal_j$.
It allows us to name $\alpha$ distinct nodes at the depth $k$. Hence, the respective nodes
interpreting the nominals~$\anominal_1$, \dots, $\anominal_{\alpha}$ are pairwise distinct.
It is summarised with the lemma below, which proof is a slight variant the proof of Lemma~\ref{lemma:nominal}.

\begin{lem} 
\label{lemma:distinct-nominals}
Given a tree model $\atreemodel$ and a node $\anode$, we have 
$\atreemodel, \anode \models \distinctbindd{\anominal_1, \ldots, \anominal_{\alpha}}{k}$ iff
there are~$\alpha$ distinct nodes $\anode_1,\ldots, \anode_{\alpha}$ 
such that for all $i \in \interval{1}{\alpha}$, 
$\anominal_i$ is a nominal for the depth $k \geq 1$ from $\anode$.
\end{lem}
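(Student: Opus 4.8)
The plan is to split the formula $\distinctbindd{\anominal_1, \ldots, \anominal_{\alpha}}{k}$ into its two blocks of conjuncts and read off the meaning of each block from Lemma~\ref{lemma:nominal} and Lemma~\ref{lemma:at}, exactly in the spirit of the remark that this is a slight variant of the proof of Lemma~\ref{lemma:nominal}. First I would handle the block $\bigwedge_{i \in \interval{1}{\alpha}} \bindd{\anominal_i}{k}$: by Lemma~\ref{lemma:nominal}, $\atreemodel, \anode \models \bindd{\anominal_i}{k}$ holds precisely when $\anominal_i$ is a nominal for the depth $k \geq 1$ from $\anode$, so satisfying the whole block is equivalent to each $\anominal_i$ being such a nominal. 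Under this assumption, for each $i$ there is a unique node $\anode_i$ with $\anode E^k \anode_i$ and $\atreemodel, \anode_i \models \anominal_i$; these nodes are exactly the witnesses $\anode_1, \ldots, \anode_{\alpha}$ mentioned in the statement.

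Next, conditioned on the first block holding, I would analyse $\bigwedge_{j \in \interval{1}{\alpha}} \bigwedge_{i < j} \neg \att{\anominal_i}{k} \anominal_j$. For fixed $i < j$, since $\anominal_i$ is already known to be a nominal for the depth $k$ from $\anode$ with witness $\anode_i$, Lemma~\ref{lemma:at} applied with the nominal $\anominal_i$, the node $\anode_i$, and the formula $\anominal_j$ gives $\atreemodel, \anode \models \att{\anominal_i}{k} \anominal_j$ iff $\atreemodel, \anode_i \models \anominal_j$, hence $\atreemodel, \anode \models \neg \att{\anominal_i}{k} \anominal_j$ iff $\atreemodel, \anode_i \not\models \anominal_j$. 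I then claim that $\atreemodel, \anode_i \not\models \anominal_j$ is equivalent to $\anode_i \neq \anode_j$: if $\anode_i = \anode_j$ then $\atreemodel, \anode_i \models \anominal_j$ because $\atreemodel, \anode_j \models \anominal_j$; conversely, if $\anode_i \neq \anode_j$, then since $\anode_j$ is the unique depth-$k$ descendant of $\anode$ satisfying $\anominal_j$ and $\anode_i$ is a depth-$k$ descendant of $\anode$ distinct from $\anode_j$, we get $\atreemodel, \anode_i \not\models \anominal_j$. Conjoining over all pairs $i < j$, the second block holds iff $\anode_1, \ldots, \anode_{\alpha}$ are pairwise distinct.

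Putting the two blocks together yields: $\atreemodel, \anode \models \distinctbindd{\anominal_1, \ldots, \anominal_{\alpha}}{k}$ iff each $\anominal_i$ is a nominal for the depth $k$ from $\anode$ and the associated witnesses are pairwise distinct, which is precisely the right-hand side of the stated equivalence. For the right-to-left direction one simply reverses this reasoning: the assumption that each $\anominal_i$ is a nominal for the depth $k$ gives the first block via Lemma~\ref{lemma:nominal}, and the assumed distinctness of the witnesses re-derives each negated conjunct $\neg \att{\anominal_i}{k} \anominal_j$ via the equivalence established above.

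I do not anticipate a real obstacle: the only point that needs care is to verify the hypothesis of Lemma~\ref{lemma:at} before invoking it, namely that $\anominal_i$ is already a nominal for the depth $k$ from $\anode$; this is why the argument is organised so that the block $\bigwedge_{i} \bindd{\anominal_i}{k}$ is treated first and is available as a standing assumption when the second block is analysed. Everything else is a direct unwinding of the definition of $\distinctbindd{\cdot}{k}$ together with the two preceding lemmas.
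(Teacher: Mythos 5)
Your proof is correct and matches what the paper intends: the paper gives no separate proof of this lemma, remarking only that it is ``a slight variant of the proof of Lemma~\ref{lemma:nominal}'', and your decomposition into the two conjunct blocks — handling $\bigwedge_i \bindd{\anominal_i}{k}$ via Lemma~\ref{lemma:nominal} and then $\neg \att{\anominal_i}{k}\anominal_j$ via Lemma~\ref{lemma:at} together with the uniqueness of the witness of $\anominal_j$ — is exactly the routine unwinding being alluded to. The care you take to establish the nominal hypothesis before invoking Lemma~\ref{lemma:at} is the right (and only) delicate point.
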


Let us illustrate the use of local nominals and $\distinctbindd{\anominal_1, \ldots, \anominal_{\alpha}}{k}$ 
to specify that a node has at most $2^n$ children, 
with a formula of polynomial size in $n$. 
This is exactly the type of properties that can be expressed in graded modal logics~\cite{Fine72b,FattorosiBarnaba&DeCaro85,Tobies00}.
Given a finite set $\aset$ of propositional variables,
we design a formula stating that no pair of distinct children
agree on all propositional variables from $\aset$, as done in~\cite{DavidLM16}.
It is given below:
\[
\mathrm{Uni}(\aset) \ \egdef \  
\forall \anominal, \anominalbis \ 
\distinctbindd{\anominal,\anominalbis}{1}
\rightarrow
\neg
\left(
\bigwedge_{\avarprop \in \aset}  (\att{\anominal}{1} \avarprop  \leftrightarrow \att{\anominalbis}{1} \avarprop)
\right)
\]
Thus, the formula $\Diamond_{\leq 2^n} \top$ from graded modal logics can be expressed in \tQCTLEX{}
with 
\[
\Diamond_{\leq 2^n} \top \ \egdef \  \exists \ \avarprop_0, \ldots, \avarprop_{n-1} \ \mathrm{Uni}(\set{\avarprop_0, \ldots, \avarprop_{n-1}}).
\]

In Section~\ref{section-tower-hardness}, we 
show how to succinctly express hyperexponential bounds.


\subsection{Beyond the \texorpdfstring{\expspace}{ExpSpace} upper bound: \texorpdfstring{\aexppol}{AExpPol}}\label{sec:upper_bd} 

In order to solve \satproblem{\tQCTLEX{,\leq N}}, little is needed if the \expspace upper bound is aimed. 
Indeed, given a formula $\aformula$ in \tQCTLEX{,\leq N}, it is clear that for an $N$-bounded tree model $\atreemodel$ satisfying $\aformula$
at its root node $\aroot$, it is irrelevant what happens at nodes of depth strictly more than $\md{\aformula}$. 
Hence, the formula $\aformula$ is satisfiable iff there is a finite $N$-bounded tree structure $\atreemodel$ with all the branches of length 
exactly $\md{\aformula}$ satisfying $\aformula$ at its root $\aroot$ (as the branches of tree models are infinite, 
we need to consider branches of length exactly $\md{\aformula}$). 
Thus, $\atreemodel$ has at most $\length{\aformula} N^{\length{\aformula}}$ nodes.
To get an algorithm working in \nexpspace, guess such an exponential-size finite tree structure,
and perform model-checking on it with an algorithm inherently in \pspace (as
model-checking finite  structures with MSO is \pspace-complete~\cite{Stockmeyer74,Vardi82} 
and $\aformula$ can be translated
to MSO in the standard way), which leads to \nexpspace. 
By Savitch's Theorem~\cite{Savitch70}, we get the \expspace upper bound.

This  bound is not completely satisfactory as it does not use much of \tQCTLEX{,\leq N} and more importantly, 
Section~\ref{section-aexppol-hardness} proves \aexppol-hardness of  \satproblem{\tQCTLEX{,\leq N}} as long as~$N \geq 2$. 
Hence, the goal of this section is to establish an \aexppol upper bound. 
The tight upper bound for \satproblem{\tQCTLEX{,\leq N}} relies on the  following ingredients.
\begin{itemize}[left=4mm]
  \item[(i)] Every formula $\aformula$ of \tQCTLEX{} is logically equivalent to a \tQCTLEX{} formula $\aformula'$ 
  in prenex normal form  (PNF) such that $\aformula'$ can be computed in polynomial-time in $\length{\aformula}$. 
  Formulae in PNF are of the form $\mathcal{Q}_1 \ \avarprop_1 \ \cdots \mathcal{Q}_{\beta} \ \avarprop_{\beta}\ 
  \aformulabis$ where $\set{\mathcal{Q}_1, \ldots, \mathcal{Q}_{\beta}} \subseteq \set{\exists,\forall}$ and 
  $\aformulabis$ is quantifier-free. 
  \item[(ii)] Existence of an $N$-bounded tree model for  $\aformula$ is equivalent to the existence
  of an $N$-bounded finite tree structure such that all branches are of length $\md{\aformula}$.
  Then, we simply guess a finite tree of a small (exponential) size with the help of the \defstyle{shallow model property} - 
  such a tree will be later unravelled to become an infinite tree model.

  \item[(iii)] Checking whether $\atreemodel, \aroot \models \mathcal{Q}_1 \ \avarprop_1 \ \cdots \mathcal{Q}_{\beta} \ \avarprop_{\beta} \
  \aformulabis$ (involving an $N$-bounded finite tree with branches of length $\md{\aformulabis}$ and the input formula in PNF)
  can be done with an alternating Turing machine in time $\mathcal{O}((\length{\aformulabis}+\beta) \length{\atreemodel})$ 
  and with at most $\beta$ alternations. 
\end{itemize}
\noindent
To establish (i), we cannot rely directly on PNF for $\QCTL$ from~\cite[Prop. 3.1]{LaroussinieM14} as the translation
in~\cite[Prop. 3.1]{LaroussinieM14} involves temporal operators beyond the language of \tQCTLEX. 
\begin{lem}\label{lemma:pnf}
For every formula $\aformula$ in \tQCTLEX{}, one can compute in polynomial-time in
$\length{\aformula}$ a logically equivalent formula in PNF
$\mathcal{Q}_1 \ \avarprop_1 \ \cdots \mathcal{Q}_{\beta} \ \avarprop_{\beta}\ \aformulabis$
with $\beta \leq \length{\aformula}$.
\end{lem}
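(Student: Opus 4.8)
The plan is to show that the propositional quantifiers can be pulled to the front of the formula by the standard prenexing rules, taking care that every rewriting step stays inside the fragment \tQCTLEX{} (i.e.\ uses only $\EX$ and its dual $\AX$, plus Boolean connectives and propositional quantifiers) and that the overall size growth is polynomial. First I would recall the classical quantifier-shifting equivalences that are sound in \QCTL under any of its semantics: $(\exists \avarprop\, \aformula) \wedge \aformulabis \equiv \exists \avarprop\, (\aformula \wedge \aformulabis)$ and dually for $\forall$ and $\vee$, provided $\avarprop$ does not occur free in $\aformulabis$ (which is always achievable by bound renaming); $\neg \exists \avarprop\, \aformula \equiv \forall \avarprop\, \neg \aformula$; and, crucially, $\EX(\exists \avarprop\, \aformula) \equiv \exists \avarprop\, \EX \aformula$ together with $\AX(\forall \avarprop\, \aformula) \equiv \forall \avarprop\, \AX \aformula$. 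The last two are the only point where the shape of \tQCTLEX{} matters: one must check that $\EX$ commutes with $\exists$ in the tree semantics. This holds because an $\aset$-variant of a tree model leaves the set of worlds and the edge relation untouched, so choosing a successor and then re-interpreting $\avarprop$ can be swapped with re-interpreting $\avarprop$ globally and then choosing a successor (the successor chosen witnesses the same formula because $\avarprop$ does not occur outside the scope of this $\EX$ after renaming). The key observation making this clean is that $\EX$ is existential in both the "successor" sense and in interacting with the existential propositional quantifier, and symmetrically $\AX$ with $\forall$; there is no analogue of the problematic $\E(\cdot\U\cdot)$ operators that forced~\cite[Prop.~3.1]{LaroussinieM14} to leave the language of \tQCTLEX{}.

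Concretely, I would first push all negations inward to obtain a negation normal form in which negation applies only to propositional variables, using $\neg \EX \aformula \equiv \AX \neg \aformula$, $\neg \AX \aformula \equiv \EX \neg \aformula$, the De Morgan laws, and $\neg \exists \avarprop\, \aformula \equiv \forall \avarprop\, \neg \aformula$, $\neg \forall \avarprop\, \aformula \equiv \exists \avarprop\, \neg \aformula$; this step is size-preserving up to a constant factor and stays in \tQCTLEX{} since $\AX$ is available as the dual of $\EX$. Then, proceeding by structural induction on the resulting formula, I would bring each immediate subformula into PNF and then apply the shifting rules above to merge the prefixes: for a conjunction or disjunction of two PNF subformulae $\mathcal{Q}_1\avarprop_1\cdots\aformulabis_1$ and $\mathcal{Q}'_1\avarpropbis_1\cdots\aformulabis_2$, first rename the bound variables apart (renaming is harmless since we may reuse names from a fixed pool, or simply index them $1,\dots,\beta$ as in the statement), then interleave the two prefixes in any order and take the connective of the matrices; for $\EX\aformula$ with $\aformula = \mathcal{Q}_1\avarprop_1\cdots\aformulabis$ in PNF, rewrite to $\mathcal{Q}_1\avarprop_1\cdots\EX\aformulabis$, and dually for $\AX$.

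For the complexity bound I would argue that each propositional quantifier occurrence in the original $\aformula$ contributes exactly one quantifier to the final prefix, so $\beta$ is at most the number of quantifier occurrences in $\aformula$, hence $\beta \leq \length{\aformula}$; the matrix $\aformulabis$ is a Boolean combination of the (quantifier-free) innermost subformulae with $\EX$/$\AX$ distributed over them, and since distributing $\EX$ past a quantifier does not duplicate anything, the matrix has size $\BigOh{\length{\aformula}}$. The renaming bookkeeping is done in linear time, and the entire rewriting is a single bottom-up pass, so the whole transformation runs in polynomial time (in fact linear) in $\length{\aformula}$. The main obstacle — and the only place requiring genuine care rather than routine bookkeeping — is verifying the soundness of $\EX \exists \avarprop\, \aformula \equiv \exists \avarprop\, \EX \aformula$ in the tree semantics, since a naive reading might worry that the global re-interpretation of $\avarprop$ could spoil parts of the model outside the chosen successor's subtree; this is dispelled exactly because, after negation normal form and bound renaming, $\avarprop$ has no free occurrence outside the scope of this $\EX$, so the global re-interpretation is irrelevant everywhere except along the witnessing successor, where it does precisely what the local witness needs.
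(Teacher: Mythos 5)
There is a genuine gap: your list of modality--quantifier commutations is incomplete, and the case you omit is precisely the one carrying all the difficulty. You claim that, after negation normal form, the only commutations needed are $\EX\,\exists\avarprop\,\aformula \leftrightarrow \exists\avarprop\,\EX\aformula$ and its dual $\AX\,\forall\avarprop\,\aformula \leftrightarrow \forall\avarprop\,\AX\aformula$, and you argue soundness only for the first (easy) one. But NNF normalises negations, not the polarity of quantifiers relative to the modality above them: for instance $\EX\,\neg\exists\avarprop\,\aformula$ becomes $\EX\,\forall\avarprop\,\neg\aformula$ in NNF, so the bottom-up prenexing pass inevitably encounters $\EX\,\forall\avarprop\,\aformulabis$ (equivalently $\AX\,\exists\avarprop\,\aformulabis$), and dualising does not help, since $\AX\,\exists\avarprop\,\aformulabis \leftrightarrow \exists\avarprop\,\AX\aformulabis$ is exactly the dual of the same problem. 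Your proposal never establishes $\EX\,\forall\avarprop\,\aformulabis \leftrightarrow \forall\avarprop\,\EX\,\aformulabis$, and your "key observation" that $\EX$ pairs with $\exists$ and $\AX$ with $\forall$ is a red herring: NNF does not align quantifier polarity with the surrounding modality.

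This missing equivalence is where the tree semantics is genuinely essential, and it is where the paper's proof spends essentially all of its effort. The direction $\forall\avarprop\,\EX\aformulabis \Rightarrow \EX\,\forall\avarprop\,\aformulabis$ is a $\forall\exists$-to-$\exists\forall$ swap: from "for every relabelling of $\avarprop$ some child satisfies $\aformulabis$" one must produce a single child that survives \emph{all} relabellings. Arguing ad absurdum, if every child $\anode'$ admitted a relabelling defeating it, one amalgamates these relabellings (restricted to the respective subtrees) into a single $\set{\avarprop}$-variant that defeats all children simultaneously; this amalgamation is legitimate only because the subtrees rooted at distinct children are pairwise disjoint and the truth of $\aformulabis$ at $\anode'$ depends only on the subtree of $\anode'$. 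This is not the routine observation your closing paragraph suggests (on non-tree Kripke structures the equivalence can fail). The rest of your proposal --- NNF, renaming bound variables apart, merging prefixes over $\wedge$ and $\vee$, and the linear bounds on $\beta$ and on the matrix --- is fine, but without this case the lemma is not proved.
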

\begin{proof}
On tree models, the following formulae are tautologies, 
assuming that $\avarprop$ does not occur in $\aformulabis'$ (where $\mathcal{Q}$ is either $\exists$ or $\forall$):
\[
\EX \ \mathcal{Q} \ \avarprop \ \aformulabis \leftrightarrow 
\mathcal{Q} \ \avarprop \ \EX \aformulabis \ \ \ \
(\mathcal{Q} \ \avarprop \ \aformulabis) \wedge \aformulabis' \leftrightarrow \mathcal{Q} \ \avarprop 
\ (\aformulabis \wedge \aformulabis') 
\ \ \ \
\neg \exists  \ \avarprop \ \aformulabis \leftrightarrow \forall \ \avarprop \ \neg \aformulabis \ \ \ \ 
\neg \forall  \ \avarprop \ \aformulabis \leftrightarrow \exists \ \avarprop \ \neg \aformulabis
\]
Hence, by employing the above formulae and rewriting the input, we conclude the lemma. 
For a more detailed explanation consult Apppendix~\ref{appendix:proof-of:lemma-pnf} 
\end{proof}


Now we proceed with the second property. Let us be a bit more precise.
Given a tree model $\atreemodel$, we write $\arestrictedtreemodel{n}$ to denote
its subtree obtained by taking only nodes on the depth at most $n$ from the root. 
A \defstyle{completion} of a finite tree $\atreemodel'$ of maximal depth~$n$ is an infinite tree $\atreemodel$ 
(finite-branching and all the maximal branches are infinite)  such that $\atreemodel' = \arestrictedtreemodel{n}$.
By the \defstyle{naive completion} of $\atreemodel'$ of maximal depth $n$, we refer to 
the unique completion achieved by replacing each node $\anode$ at depth $n$ from 
$\atreemodel$ by an infinite chain of copies of itself.

A \defstyle{shallow model property} states that what matters 
for a $\tQCTLEX{}$ formula $\aformula$ in its infinite tree model is a relatively 
small finite part, with paths bounded by the modal depth of~$\aformula$. 

\begin{lem}[Shallow Model Property] \label{lemma:completion}
Let $\atreemodel,\aroot$ be a model for the \tQCTLEX{}--formula $\aformula$. 
Then, any completion of~$\arestrictedtreemodel{\md{\aformula}},\aroot$ is also a model for~$\aformula$.
\end{lem}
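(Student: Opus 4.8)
**The plan is to prove a more general statement by induction on the structure of the formula, relativized to a fixed node, and then specialize to the root.** The key obstacle here is that $\tQCTLEX{}$ formulae contain propositional quantifiers, so a naive structural induction on $\aformula$ breaks: when we unfold $\exists \avarprop\ \aformulabis$ we obtain a witnessing model $\atreemodelbis$ that is $(\AP\setminus\{\avarprop\})$-equivalent to $\atreemodel$, and we need to transfer this witness across the completion operation. So the statement I would actually prove is the following stronger claim, by induction on $\md{\aformula}$ (equivalently on the nesting of temporal operators):

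\emph{Claim.} Let $\atreemodel=\triple{V}{E}{l}$ be a tree model, let $\anode\in V$, and let $\atreemodel'$ be any completion of $\arestrictedtreemodel{\md{\aformula}}$ that agrees with $\atreemodel$ on the nodes of depth $\le \md{\aformula}$ from the root. For every node $\anode''$ at depth $d$ from the root with $d + \md{\aformula} \le \md{\aformula_{\text{top}}}$ \dots ---actually the cleanest phrasing is local: if $\anode''$ is a node such that the subtree of $\atreemodel$ rooted at $\anode''$ truncated to depth $\md{\aformula}$ coincides with the corresponding truncation in $\atreemodelbis$, then $\atreemodel,\anode''\models\aformula$ iff $\atreemodelbis,\anode''\models\aformula$.

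**First I would set up the induction carefully.** The base case is $\aformula$ atomic, i.e.\ $\aformula = \avarprop$: here $\md{\aformula}=0$, truncation to depth $0$ fixes the label of $\anode''$ itself, and the equivalence is immediate from $l(\anode'')=l'(\anode'')$. For the inductive step, the Boolean cases ($\neg, \wedge$) are routine since $\md{\neg\aformulabis}=\md{\aformulabis}$ and $\md{\aformulabis\wedge\aformulater}=\max(\md{\aformulabis},\md{\aformulater})$, so the same truncation suffices for the subformulae. For $\aformula = \EX\aformulabis$: we have $\md{\aformula}=\md{\aformulabis}+1$; any $E$-successor $\anode_1$ of $\anode''$ in $\atreemodel$ is also an $E$-successor in $\atreemodelbis$ and conversely (the edge relation is unchanged by completion as long as we are strictly below depth $\md{\aformula}$, and at exactly depth $\md{\aformula}-1$ all successors exist because branches are infinite in both models), and the subtree at $\anode_1$ truncated to depth $\md{\aformulabis}$ is preserved; so we apply the induction hypothesis to $\aformulabis$ at $\anode_1$. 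The quantifier case $\aformula = \exists\avarprop\ \aformulabis$ (with $\md{\aformula}=\md{\aformulabis}$) is where the real content lies: given $\atreemodel,\anode''\models\exists\avarprop\ \aformulabis$, pick a witness $\atreemodelbis$ with $\atreemodel\approx_{\AP\setminus\{\avarprop\}}\atreemodelbis$ and $\atreemodelbis,\anode''\models\aformulabis$; we then build a completion-compatible variant of $\atreemodelbis$ by re-interpreting $\avarprop$ on $\atreemodelbis'$ exactly as $\atreemodelbis$ does on nodes of depth $\le\md{\aformulabis}$ and arbitrarily (say, following the naive completion) below, keeping all other propositions as in $\atreemodel'$; one checks this re-interpreted tree is a legitimate completion and $(\AP\setminus\{\avarprop\})$-equivalent to $\atreemodel'$, and apply the induction hypothesis. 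The converse direction is symmetric.

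**The main obstacle, as flagged, is the quantifier case**, specifically making sure the bookkeeping is airtight: the witness $\atreemodelbis$ for $\exists\avarprop$ differs from $\atreemodel$ only on $\avarprop$, hence its truncation to depth $\md{\aformulabis}=\md{\aformula}$ differs from $\arestrictedtreemodel{\md{\aformula}}$ only on $\avarprop$; we want a completion of \emph{that} truncation which agrees with $\atreemodelbis$ on the top part and which is $(\AP\setminus\{\avarprop\})$-equivalent to the given completion $\atreemodel'$. Since completions only add behaviour strictly below depth $\md{\aformula}$ and the induction hypothesis only constrains the top $\md{\aformulabis}$ levels from $\anode''$, there is enough freedom: simply copy the $\avarprop$-values of $\atreemodelbis'$ from the old completion's shape below, or use the naive completion, whichever is notationally cleaner. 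Once the Claim is established, the lemma follows by instantiating $\atreemodelbis$ as an arbitrary completion of $\arestrictedtreemodel{\md{\aformula}}$, $\anode''$ as the root $\aroot$, and noting that by definition of completion the top $\md{\aformula}$ levels coincide, so $\atreemodel,\aroot\models\aformula$ gives $\atreemodelbis,\aroot\models\aformula$. I would note in passing that the naive completion already shows one direction of a model-existence equivalence; the Shallow Model Property as stated is the stronger "any completion works" version, which is exactly what the induction above delivers.
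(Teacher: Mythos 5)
Your proof is correct and follows essentially the same route the paper intends: the paper's proof is a one-line appeal to the standard depth-bounded induction for modal logic (citing Blackburn--de Rijke--Venema), which is exactly the relativized structural induction you carry out, with the modal-depth bookkeeping for $\EX$ and agreement of truncations. The one place where the cited ``standard'' argument genuinely needs adaptation is the propositional-quantifier case, and you identify and resolve it correctly: a $\avarprop$-variant witness of $\atreemodel$ transfers to a $\avarprop$-variant of the completion because re-labelling $\avarprop$ commutes with truncation and with completing below depth $\md{\aformula}$.
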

\begin{proof}
The construction is standard and goes in exactly the same way as in the 
literature, \eg~\cite[Theorem~2.3]{blackburn_rijke_venema_2001}. 
\end{proof}

Since we are interested in the satisfiability problem over $N$-bounded trees, 
the overall size of a structure $\arestrictedtreemodel{\md{\aformula}}$ 
can be easily bounded. A simple estimation can be obtained by counting the number of 
nodes with a certain distance from the root, namely:
\[
\length{\arestrictedtreemodel{\md{\aformula}}} \leq 1 + N + N^2 + \ldots + 
N^{\md{\aformula}} < \length{\aformula} \cdot N^{\length{\aformula}}
\]

As a direct consequence of the above estimation and Lemma~\ref{lemma:completion}, 
we obtain:
\begin{lem} \label{lemma:small_model}
For any formula $\aformula$, $\aformula$ is satisfiable for \tQCTLEX{,\leq N}
iff $\aformula$ is satisfiable in a finite $N$-bounded tree structure of size bounded by 
$\length{\aformula} N^{\length{\aformula}}$ and each branch is of length $\md{\aformula}$. 
\end{lem}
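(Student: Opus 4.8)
The plan is to obtain both directions from a single \emph{locality} observation --- essentially the \tQCTLEX{} content already underlying Lemma~\ref{lemma:completion}. Fix $m \in \Nat$. For every tree model $\atreemodel$ (finite-branching, all maximal branches infinite), every node $\anode$ of $\atreemodel$ at depth $d$ from the root, and every formula $\aformula$ with $d + \md{\aformula} \leq m$, the claim is that $\atreemodel, \anode \models \aformula$ iff $\arestrictedtreemodel{m}, \anode \models \aformula$, where $\arestrictedtreemodel{m}$ is read directly as a Kripke structure (so $\EX$ is simply false at its depth-$m$ leaves). I would prove this by induction on $\aformula$: the Boolean cases are immediate; for $\EX\aformulabis$ at a node $\anode$ of depth $d$, the hypothesis forces $d < m$, so every child of $\anode$ --- in particular any witness --- lies at depth $d{+}1 \leq m$ and survives in $\arestrictedtreemodel{m}$, while $(d{+}1) + \md{\aformulabis} \leq m$ lets the induction hypothesis apply there; for $\exists\avarprop\,\aformulabis$, noting $\md{\exists\avarprop\,\aformulabis} = \md{\aformulabis}$, one uses that the restriction to depth $m$ of an $(\AP\setminus\{\avarprop\})$-variant of $\atreemodel$ is an $(\AP\setminus\{\avarprop\})$-variant of $\arestrictedtreemodel{m}$, and conversely every $(\AP\setminus\{\avarprop\})$-variant of $\arestrictedtreemodel{m}$ arises this way (extend it by keeping $\atreemodel$'s labelling at depths $> m$); so taking a variant commutes with truncation.

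Granting this, ``$\Rightarrow$'' runs as follows. If $\atreemodel, \aroot \models \aformula$ with $\atreemodel$ an $N$-bounded tree model, set $\atreemodel' := \arestrictedtreemodel{\md{\aformula}}$. Since all maximal branches of $\atreemodel$ are infinite, every node of $\atreemodel'$ of depth $< \md{\aformula}$ keeps a child, hence all branches of $\atreemodel'$ have length exactly $\md{\aformula}$, and $\atreemodel'$ inherits the $N$-bound; the locality observation with $m = \md{\aformula}$ and $d = 0$ gives $\atreemodel', \aroot \models \aformula$, and $\length{\atreemodel'} \leq \sum_{i=0}^{\md{\aformula}} N^{i} < \length{\aformula}\,N^{\length{\aformula}}$ is exactly the estimate displayed just before the statement (using $\md{\aformula} < \length{\aformula}$). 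For ``$\Leftarrow$'', let $\atreemodel'$ be a finite $N$-bounded tree structure all of whose branches have length $\md{\aformula}$, with $\atreemodel', \aroot \models \aformula$, and let $\atreemodel$ be its \emph{naive completion} (each depth-$\md{\aformula}$ leaf replaced by an infinite chain of copies of itself). Then $\atreemodel$ is an admissible model for the tree semantics --- it is $N$-bounded, since the appended nodes have a single child, and all its maximal branches are infinite --- and $\arestrictedtreemodel{\md{\aformula}} = \atreemodel'$; a second application of the locality observation yields $\atreemodel, \aroot \models \aformula$, i.e.\ $\aformula$ is satisfiable for \tQCTLEX{,\leq N}.

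The only genuinely delicate step is the propositional-quantifier case of the locality induction: one must verify that ``agreement up to depth $m$ implies agreement on formulae of modal depth $\leq m$'' is preserved under second-order quantification over propositions. It is, because a variant structure keeps the same underlying frame, so truncation to depth $m$ commutes with passing to variants; this is precisely the substance behind the ``standard construction'' cited for Lemma~\ref{lemma:completion}, and everything else above is routine bookkeeping of depths together with the elementary arithmetic of the size bound.
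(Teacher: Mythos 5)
Your proposal is correct and follows essentially the same route as the paper: the paper derives the lemma directly from the size estimate together with the Shallow Model Property (Lemma~\ref{lemma:completion}), whose ``standard'' proof is exactly the depth-bounded locality induction you spell out, including the key observation that truncation to depth $m$ commutes with passing to $(\AP\setminus\{\avarprop\})$-variants because variants preserve the frame. Your two directions (truncate for ``$\Rightarrow$'', naive completion for ``$\Leftarrow$'') and the arithmetic of the bound match the paper's argument.
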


In order to establish (iii), the details are omitted but  
we apply the naive model-checking algorithm for MSO with an ATM:
existential (resp. universal) quantification $\exists \avarprop$ (resp. $\forall \avarprop$) requires time $\mathcal{O}(\length{\atreemodel})$ and the machine enters a sequence of existential (resp. universal) states.
The quantifier-free formula $\aformulabis$ is evaluated as a first-order formula
by the standard translation for modal logic. Note also that checking $\atreemodel', \anode \models \aformulabis$
can be done in polynomial time in $\length{\aformulabis} + \length{\atreemodel'}$ (see~\cite{ClarkeE81,Schnoebelen03}).
By combining (i)-(iii) we establish an improved upper bound.

\begin{thm} 
\label{thm:aexppol-upper}
For any $N \geq 1$, the satisfiability problem for \tQCTLEX{,\leq N} is in \aexppol.
\end{thm}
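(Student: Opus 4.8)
The plan is to assemble the three ingredients (i)--(iii) isolated above into a single alternating procedure. First I would normalise the input: given $\aformula$ in \tQCTLEX{,\leq N}, Lemma~\ref{lemma:pnf} produces in polynomial time an equivalent PNF formula $\aformula' = \mathcal{Q}_1 \ \avarprop_1 \cdots \mathcal{Q}_{\beta} \ \avarprop_{\beta} \ \aformulabis$ with $\aformulabis$ quantifier-free, $\beta \leq \length{\aformula}$, $\length{\aformula'} = \mathcal{O}(\mathrm{poly}(\length{\aformula}))$ and $\md{\aformula'} = \md{\aformula}$. Since $N$ is a fixed constant, Lemma~\ref{lemma:small_model} then tells us that $\aformula'$ is satisfiable for \tQCTLEX{,\leq N} iff it holds at the root of some finite $N$-bounded tree structure $\atreemodel$ all of whose maximal branches have length exactly $\md{\aformula'}$ and whose size is at most $\length{\aformula'} \cdot N^{\length{\aformula'}} = 2^{\mathcal{O}(\length{\aformula})}$, i.e.\ of at most exponential size.

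Next I would exhibit an alternating Turing machine $\ammachine$ for this problem. On input $\aformula$, $\ammachine$ proceeds as follows: (a)~it computes $\aformula'$; (b)~it \emph{existentially} guesses, node by node, an $N$-bounded finite tree of depth $\md{\aformula'}$ all of whose maximal branches have that length, together with a labelling of its nodes by the (finitely many) free propositional variables of $\aformula'$ --- this writes down an object of exponential size while staying in existential states; (c)~it processes the quantifier prefix from the outside in, so that for $\mathcal{Q}_i = \exists$ it enters existential states and guesses the truth value of $\avarprop_i$ at each of the (exponentially many) nodes, whereas for $\mathcal{Q}_i = \forall$ it enters universal states and branches over all such assignments --- each such block takes time $\mathcal{O}(\length{\atreemodel})$ along every computation path and causes at most one switch between existential and universal mode; (d)~finally it deterministically model-checks the quantifier-free $\CTLEX$ formula $\aformulabis$ at the root of the now fully-labelled $\atreemodel$, which by standard \CTL model checking~\cite{ClarkeE81,Schnoebelen03} runs in time polynomial in $\length{\aformulabis} + \length{\atreemodel}$. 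Because all branches of $\atreemodel$ have length $\md{\aformula'} \geq \md{\aformulabis}$ and every internal node has a child, the evaluation of $\aformulabis$ at the root never needs to look beyond the leaves, so this finite structure faithfully reflects every completion of it in the sense of Lemma~\ref{lemma:completion}; hence step (d) answers correctly.

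It then remains to read off the resources. Along any computation path, steps (a)--(d) together run in time $\mathcal{O}((\length{\aformulabis}+\beta)\cdot\length{\atreemodel}) + \mathrm{poly}(\length{\aformulabis}+\length{\atreemodel})$, which is $2^{\mathcal{O}(\length{\aformula})}$ because $\length{\atreemodel}$ is exponential while $\beta$ and $\length{\aformulabis}$ are polynomial in $\length{\aformula}$; and the number of alternations is at most $\beta + \mathcal{O}(1) \leq \length{\aformula} + \mathcal{O}(1)$, hence polynomial. Thus $\ammachine$ decides \satproblem{\tQCTLEX{,\leq N}} within STA$(\cdot, 2^{\mathcal{O}(\length{\aformula})}, \mathcal{O}(\length{\aformula}))$, which is contained in \aexppol. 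The only genuinely delicate point is the alternation bookkeeping: one must keep the model-guessing phase (b) purely existential and merge it with the first existential block of the quantifier prefix, and group consecutive like-quantifiers, so that no spurious alternations are incurred; and one must use the \emph{size} bound of Lemma~\ref{lemma:small_model}, not merely the bound on branch length, since it is this exponential bound on $\length{\atreemodel}$ that keeps step (d) within exponential time.
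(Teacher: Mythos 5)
Your proposal is correct and follows essentially the same route as the paper: PNF normalisation via Lemma~\ref{lemma:pnf}, the exponential-size shallow model from Lemma~\ref{lemma:small_model}, and an alternating evaluation of the quantifier prefix followed by deterministic model checking of the quantifier-free matrix, yielding exponential time with at most $\beta+\mathcal{O}(1)$ alternations. The extra bookkeeping you flag (merging the existential guess of the tree with the first existential block, grouping like quantifiers, and using the size bound rather than only the depth bound) is exactly the care the paper's sketch leaves implicit.
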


\noindent When $N=1$, the upper bound can be improved as the number of alternations is linear and the size of the finite witness ``tree'' is  polynomial in $\length{\aformula}$,
and therefore the whole procedure can be implemented with a polynomial-time 
alternating Turing machine (thus in $\pspace$~\cite{ChandraKS81}). 
The matching lower bound is inherited from quantified propositional logic QBF, 
see \eg~\cite{Meyer&Stockmeyer73}.
\begin{cor} \label{corollary:qctl1bounded}
The satisfiability problem for \tQCTLEX{,\leq 1} is \pspace-complete.
\end{cor}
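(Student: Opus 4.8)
The plan is to specialise, to the case $N=1$, the ingredients behind Theorem~\ref{thm:aexppol-upper} for the upper bound, and to reduce from quantified Boolean formulae for the lower bound. \emph{Upper bound.} When $N=1$, an $N$-bounded tree model in which all maximal branches are infinite is just an $\omega$-sequence, so the finite witness $\arestrictedtreemodel{\md{\aformula}}$ furnished by Lemma~\ref{lemma:small_model} is a single path with $\md{\aformula}+1 \leq \length{\aformula}+1$ nodes --- of size \emph{linear}, not exponential, in $\length{\aformula}$. On input $\aformula$ I would: first compute by Lemma~\ref{lemma:pnf} an equivalent PNF formula $\mathcal{Q}_1\avarprop_1\cdots\mathcal{Q}_\beta\avarprop_\beta\ \aformulabis$ with $\beta\leq\length{\aformula}$; then existentially guess, in one block, a labelling of the path $\anode_0 E \anode_1 E \cdots E \anode_{\md{\aformula}}$ over the propositional variables occurring in $\aformula$ (polynomially many bits); then process the quantifier prefix, branching existentially on each $\mathcal{Q}_i=\exists$ and universally on each $\mathcal{Q}_i=\forall$ and in either case re-guessing the restriction to $\avarprop_i$ of the labelling along the path (again polynomially many bits); and finally evaluate the quantifier-free $\aformulabis$ at $\anode_0$ by the standard polynomial-time model-checking algorithm for modal logic on finite structures (cf.~\cite{ClarkeE81,Schnoebelen03}). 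Correctness follows from Lemma~\ref{lemma:small_model} (equivalence with satisfiability in such a path), the shallow model property (Lemma~\ref{lemma:completion}), and the semantics of the propositional quantifiers. The whole computation runs in time polynomial in $\length{\aformula}$, so it is carried out by a polynomial-time alternating Turing machine; since alternating polynomial time coincides with \pspace~\cite{ChandraKS81}, membership in \pspace follows (the number of alternations, at most $\beta+1$, plays no role here, in contrast with Theorem~\ref{thm:aexppol-upper}).

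\emph{Lower bound.} I would reduce from the \pspace-complete truth problem for QBF~\cite{Meyer&Stockmeyer73}. Given $\Phi=\mathcal{Q}_1 x_1\cdots\mathcal{Q}_m x_m\ \varphi$ with $\varphi$ a Boolean formula over $x_1,\dots,x_m$, set $t(\Phi) \egdef \mathcal{Q}_1 x_1\cdots\mathcal{Q}_m x_m\ \varphi$, now reading the $x_i$ as propositional variables and $\varphi$ as a temporal-operator-free (hence trivially $N$-bounded-tree-legal) \tQCTLEX{} formula. Since $\varphi$ has no modalities, its value at the root $\aroot$ of any $\omega$-sequence depends only on the labelling of $\aroot$; unfolding the propositional quantifiers of $t(\Phi)$ against this observation shows that, for every $\omega$-sequence $\atreemodel$, we have $\atreemodel,\aroot\models t(\Phi)$ iff $\Phi$ evaluates to true. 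Hence $t(\Phi)$ is satisfiable for \tQCTLEX{,\leq 1} iff $\Phi$ is a true QBF, and $t$ is plainly logspace-computable; this yields \pspace-hardness and, together with the upper bound, the claimed \pspace-completeness.

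\emph{Main point of care.} There is no genuine obstacle; the one place to be careful is the bookkeeping in the upper bound --- namely that, once the PNF prefix ($\beta\leq\length{\aformula}$ quantifiers) and the path length ($\md{\aformula}$, linear) are fixed, \emph{every} subsequent step, including the concluding modal-logic model check, stays within polynomial time on the alternating machine, so that it is the running time and not the space that is being controlled and the appeal to the equality of alternating polynomial time and \pspace is legitimate. Once this is verified the corollary is immediate.
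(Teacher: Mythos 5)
Your proof is correct and follows essentially the same route as the paper: the upper bound specialises the \aexppol{} algorithm of Theorem~\ref{thm:aexppol-upper} to the case where the shallow witness is a single path of length $\md{\aformula}$, so the alternating machine runs in polynomial time and \cite{ChandraKS81} gives \pspace{}; the lower bound is the direct embedding of QBF that the paper invokes via~\cite{Meyer&Stockmeyer73}. Nothing to add.
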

 

\subsection{A reduction from \texorpdfstring{$\cPmulti$}{AMTP} (with fixed  \texorpdfstring{$N \geq 2$}{N at least 2})}\label{section-aexppol-hardness}

Let $N \geq 2$ and let us consider the satisfiability problem for \tQCTLEX{, \leq N} in which the structures
are tree models where all the maximal branches are infinite and each node has at most $N$ children 
(and at least one child). 
In order to show that the problem is \aexppol-hard, we define below a reduction from the alternating multi-line
tiling problem $\cPmulti$  presented in Section~\ref{section-preliminaries} and introduced in~\cite{Bozzellietal17}.

To define a grid  $\interval{0}{2^n-1} \times  \interval{0}{2^n-1}$, a major part in the solution
of an instance of $\cPmulti$, we specify a tree such that every node at a distance less than $2n$ from the 
root $\aroot$ has exactly two children, implying that there are exactly $2^{2n}$ nodes at a distance $2n$ from $\aroot$.
Moreover, each node at a distance~$2n$ encodes a position $(\mathfrak{H},\mathfrak{V})$ 
in  $\interval{0}{2^n-1} \times  \interval{0}{2^n-1}$, by making the propositional variables~$\ahvarprop_{n-1}, \ldots, \ahvarprop_{0}$ and  $\avvarprop_{n-1}, \ldots, \avvarprop_{0}$ to be responsible, respectively, for
the horizontal and vertical axes.
The $i$-th bit of $\mathfrak{H}$ (resp. $\mathfrak{V}$) is taken care of by the truth value of $\ahvarprop_i$ (resp. $\avvarprop_i$)
and by convention 
the least significant bit is encoded by $\ahvarprop_0$ (resp. $\avvarprop_0$). 

The forthcoming formula $\mathrm{grid}(2n)$ is dedicated to encoding such a grid.
\begin{multline*}
\mathrm{grid}(2n) \ \egdef \ \left( \bigwedge_{i \in \interval{0}{2n-1}} \AX^i \ \EX_{=2} \top)  \right) \wedge\\
\forall \anominal, \anominalbis \ \distinctbindd{\anominal, \anominalbis}{2n} \rightarrow
\left( \bigvee_{j \in \interval{0}{n-1}} \neg (\att{\anominal}{2n} \ahvarprop_j \leftrightarrow \att{\anominalbis}{2n} \ahvarprop_j)
\vee \neg (\att{\anominal}{2n} \avvarprop_j \leftrightarrow \att{\anominalbis}{2n} \avvarprop_j)
\right),
\end{multline*}
where $\EX_{=2} \top \egdef \exists \ \anominal_1,  \anominal_2 \
\distinctbindd{\anominal_1, \anominal_2}{1} \wedge \AX (\anominal_1 \vee  \anominal_2)$
 states that there are exactly two children.
Moreover, $\AX^0 \aformulabis \egdef \aformulabis$ and
$\AX^{i+1} \aformulabis \egdef \AX \AX^i \aformulabis$. 

Note
that the upper part of $\mathrm{grid}(2n)$ enforces that there are exactly $2^{2n}$ descendants at a distance~$2n$ from the root, while the lower part imposes that any two such descendants differ by at least one  propositional variable from $\ahvarprop_{n-1}, \ldots, \ahvarprop_{0}, \avvarprop_{n-1}, \ldots, \avvarprop_{0}$.
Hence, the full grid $\interval{0}{2^n-1} \times  \interval{0}{2^n-1}$  is encoded with $\mathrm{grid}(2n)$. 
The correctness of $\mathrm{grid}(2n)$ follows from Lemma~\ref{lemma:at} and Lemma~\ref{lemma:distinct-nominals}.
\begin{cor}\label{corr:grid-2n-correctness}
$\atreemodel, \anode \models \mathrm{grid}(2n)$ iff $\arestrictedtreemodel{2n}$ is a binary tree in which there are exactly $2^{2n}$ nodes $\anode'$ satisfying~$E^{2n}(\anode, \anode')$ and each of such distinct
nodes $\anode', \anode''$ is labelled by a different subset of atomic propositions from~$\{ \avvarprop_0, \avvarprop_1, \ldots, \avvarprop_{n-1}, \ahvarprop_0, \ahvarprop_1, \ldots, \ahvarprop_{n-1} \}$.
\end{cor}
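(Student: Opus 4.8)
The plan is to decompose $\mathrm{grid}(2n)$ into its two top-level conjuncts, characterise each semantically using the toolkit of local nominals, and then read off both directions of the stated equivalence. Throughout I rely on Lemmas~\ref{lemma:nominal},~\ref{lemma:at} and~\ref{lemma:distinct-nominals}.

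First I would treat the conjunct $\bigwedge_{i \in \interval{0}{2n-1}} \AX^i\ \EX_{=2}\top$. The crucial sub-step is that, for any node $\anode'$, we have $\atreemodel,\anode' \models \EX_{=2}\top$ iff $\anode'$ has exactly two children: a witnessing assignment for the existential, constrained by $\distinctbindd{\anominal_1,\anominal_2}{1}$, makes $\anominal_1,\anominal_2$ nominals at depth~$1$ denoting two \emph{distinct} children (by Lemma~\ref{lemma:distinct-nominals}), while $\AX(\anominal_1 \vee \anominal_2)$ forces every child of $\anode'$ to be one of these two; conversely, if $\anode'$ has exactly two children, relabelling them with fresh variables $\anominal_1,\anominal_2$ satisfies the existential. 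Consequently $\AX^i\ \EX_{=2}\top$ holds at $\anode$ iff every node at distance exactly $i$ from $\anode$ has exactly two children, and taking the conjunction over $i \in \interval{0}{2n-1}$ says precisely that every node at distance strictly less than $2n$ from $\anode$ has exactly two children. A trivial induction on depth then shows that $\arestrictedtreemodel{2n}$ is the complete binary tree of depth $2n$, and hence that there are exactly $2^{2n}$ nodes $\anode'$ with $E^{2n}(\anode,\anode')$.

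Next I would treat the conjunct $\forall \anominal,\anominalbis\ \distinctbindd{\anominal,\anominalbis}{2n} \rightarrow \bigl(\ldots\bigr)$. By Lemma~\ref{lemma:distinct-nominals}, for a given assignment to $\anominal$ and $\anominalbis$ the antecedent $\distinctbindd{\anominal,\anominalbis}{2n}$ holds iff $\anominal$ and $\anominalbis$ are nominals for depth $2n$ from $\anode$ denoting two distinct nodes $\anode',\anode''$; and, conversely, for every pair of distinct nodes at depth $2n$ there is such an assignment, in which the interpretation of the unquantified variables $\ahvarprop_0,\dots,\ahvarprop_{n-1},\avvarprop_0,\dots,\avvarprop_{n-1}$ is unchanged. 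Under this antecedent, Lemma~\ref{lemma:at} lets me rewrite each $\att{\anominal}{2n}\ahvarprop_j$ as the statement $\atreemodel,\anode' \models \ahvarprop_j$ (and likewise for $\anominalbis$ and for the $\avvarprop_j$), so the consequent says exactly that $\anode'$ and $\anode''$ disagree on at least one variable among $\ahvarprop_0,\dots,\ahvarprop_{n-1},\avvarprop_0,\dots,\avvarprop_{n-1}$. Thus this conjunct holds at $\anode$ iff any two distinct nodes at depth $2n$ carry distinct restrictions of the labelling to those $2n$ propositions. Combining this with the conclusion of the previous paragraph yields both directions of the corollary; moreover, since there are exactly $2^{2n}$ nodes at depth $2n$ and exactly $2^{2n}$ subsets of a $2n$-element set, injectivity of this labelling is automatically surjectivity, so every position of $\interval{0}{2^n-1} \times \interval{0}{2^n-1}$ is encoded exactly once.

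The only delicate point is the bookkeeping around the propositional quantifiers: I must check that the rewriting via Lemma~\ref{lemma:at} is legitimate — it requires $\anominal$ (resp.\ $\anominalbis$) to be a nominal at depth $2n$, which is precisely what $\distinctbindd{\anominal,\anominalbis}{2n}$ supplies — and that quantifying over assignments to $\anominal,\anominalbis$ genuinely ranges over all ordered pairs of distinct nodes at depth $2n$ while leaving the $\ahvarprop_j$'s and $\avvarprop_j$'s untouched (they are not among the quantified variables). Once these are verified, everything else is a routine unwinding of the semantics, and the correctness claim follows from Lemmas~\ref{lemma:at} and~\ref{lemma:distinct-nominals} as already indicated in the text.
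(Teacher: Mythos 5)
Your proof is correct and takes essentially the same route as the paper: the paper likewise splits $\mathrm{grid}(2n)$ into its two conjuncts, characterises $\EX_{=2}\top$ as ``exactly two children'' and reads the second conjunct as pairwise disagreement of depth-$2n$ nodes on the $2n$ designated variables via Lemmas~\ref{lemma:nominal},~\ref{lemma:at} and~\ref{lemma:distinct-nominals}. Your added counting remark (injectivity over $2^{2n}$ labels forces surjectivity) matches the paper's subsequent observation that the full grid is thereby encoded.
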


Let $\triple{\cT}{\cH}{\cV}$ be a triple from an instance of  $\cPmulti$ and let $j \in \Nat$. 
Each tile type  $t \in \cT$ will be represented by a fresh propositional variable $t^j$. 
Hence, $\set{t^j: t \in \cT}$ (written below~$\cT^j$) is a set of propositional variables used to provide a tile type for each position of the grid $\interval{0}{2^n-1} \times  \interval{0}{2^n-1}$, 
while the superscript `$j$' is handy to remember that this concerns the $j$-th tiling
(as several tilings are involved in $\cPmulti$ instances).  

We first define the formulae $\aformula_{\rm cov}^j$,  $\aformula_{\cH}^j$ and  $\aformula_{\cV}^j$
whose conjunction states that every position of the grid has a unique tile type in $\cT^j$, 
and the horizontal and vertical matching conditions are satisfied. 
Hence, we have a valid tiling of the grid made from $\cT^j$.

The formula $\aformula_{\rm cov}^j$ expresses that every position of the grid
has a unique tile type:
\[
\aformula_{\rm cov}^j \egdef 
\forall \anominal \ \bindd{\anominal}{2n}
\rightarrow \att{\anominal}{2n} \left( 
\bigvee_{t \in \cT} t^j \wedge \bigwedge_{t \neq t' \in \cT} \neg (t^j \wedge t'^j)
\right). 
\]
For the horizontal matching constraints, we need to express when two nodes at a distance $2n$ interpreted
respectively by $\anominal$ and $\anominalbis$ and representing respectively the position $(\mathfrak{H},\mathfrak{V})$ and $(\mathfrak{H}', \mathfrak{V}')$, satisfy
$\mathfrak{V} = \mathfrak{V}'$ and $\mathfrak{H}' = \mathfrak{H}+1$. 
The formula ${\rm HN}(\anominal,\anominalbis)$ (`HN' for 'horizontal neighbours') 
does the job using a standard arithmetical reasoning on binary numbers.

The intuition
is that we treat $\ahvarprop_i$ propositions as bits and to verify that the number encoded on~$\anominalbis$ is equal to the number encoded on $\anominal$ plus $1$, 
we need to (i) find an index $i$ on which the $i$-th bit is switched on for $\anominalbis$ but switched off for $\anominal$, (ii) check that all bits on more significant positions after $i$ for $\anominal$ and $\anominalbis$ are equal and
(iii) ensure that all less significant bits are switched on for $\anominal$ while switched off for $\anominalbis$. This idea is formalised as follows:
\begin{multline*}
{\rm HN}(\anominal,\anominalbis) \egdef
\left(  
\bigwedge_{\alpha \in \interval{0}{n-1}} \att{\anominal}{2n} \avvarprop_{\alpha} \leftrightarrow \att{\anominalbis}{2n} 
\avvarprop_\alpha
\right)
\wedge 
\bigvee_{i \in \interval{0}{n-1}} \Bigg(
\att{\anominal}{2n} \neg \ahvarprop_i \wedge
\att{\anominalbis}{2n} \ahvarprop_i \; \wedge\\
\wedge \; \bigwedge_{\alpha \in \interval{0}{i-1}} \left( \att{\anominal}{2n} \ahvarprop_{\alpha} \wedge  \att{\anominalbis}{2n} \neg \ahvarprop_{\alpha} \right)
\wedge
(\bigwedge_{\alpha \in \interval{i+1}{n}} \left( \att{\anominal}{2n} \ahvarprop_{\alpha} \leftrightarrow  \att{\anominalbis}{2n} \ahvarprop_{\alpha}) \right)
\Bigg).
\end{multline*}
Employing ${\rm HN}(\anominal,\anominalbis)$ we can provide a formula $\aformula_{\cH}^j$ that encodes horizontal matching constraints:
\[
\aformula_{\cH}^j \egdef 
\forall \anominal,  \anominalbis \ 
\left(
\bindd{\anominal}{2n} \wedge
\bindd{\anominalbis}{2n} \wedge {\rm HN}(\anominal,\anominalbis)
\right)
\rightarrow
\bigvee_{\pair{t}{t'} \in \mathcal{H}}  \left( \att{\anominal}{2n} \ t^j \wedge \att{\anominalbis}{2n} \ t'^j \right).
\]
Let ${\rm VN}(\anominal,\anominalbis)$ (where `VN' stands for `vertical neighbours') 
be the formula obtained from  ${\rm HN}(\anominal,\anominalbis)$ by replacing each occurrence of $\ahvarprop_{\alpha}$ (resp.  $\avvarprop_{\alpha}$) by $\avvarprop_{\alpha}$ (resp.  $\ahvarprop_{\alpha}$). 

The following formula $\aformula_{\cV}^j$ encodes the vertical matching constraints:
\[
\aformula_{\cV}^j \egdef 
\forall \anominal,  \anominalbis \ 
\left(
\bindd{\anominal}{2n} \wedge
\bindd{\anominalbis}{2n} \wedge {\rm VN}(\anominal,\anominalbis)
\right)
\rightarrow
\bigvee_{\pair{t}{t'} \in \mathcal{V}}  \left( \att{\anominal}{2n} \ t^j \wedge \att{\anominalbis}{2n} \ t'^j \right).
\]
To state that a root satisfying ${\rm grid}(2n)$ encodes a tiling with 
respect to $\cT^j$, we consider the formula 
\[
\aformula_{\rm tiling}^j  \egdef 
\aformula_{\rm cov}^j \wedge \aformula_{\cH}^j  \wedge \aformula_{\cV}^j.
\] 

\begin{lem}\label{lemma:tiling-ok}
Assume that $\atreemodel, \anode \models \mathrm{grid}(2n)$ holds. Then:
\begin{itemize}
    \item If $\atreemodel, \anode \models \aformula_{\rm tiling}^j$ then  $\tau : \interval{0}{2^n-1} \times \interval{0}{2^n-1} \to \cT^j$, defined as $\tau(x,y) = t^j$ for $\atreemodel, \anode_{\pair{x}{y}} \models t^j$, where $\anode_{\pair{x}{y}}$ is the unique encoding 
of the position $\pair{x}{y}$ satisfying~$E^{2n}(\anode, \anode_{\pair{x}{y}})$, is a tiling. 
    \item If $\tau : \interval{0}{2^n-1} \times \interval{0}{2^n-1} \to \cT^j$ is a tiling, then there exists a tree $\atreemodel', \anode'$ satisfying $\aformula_{\rm tiling}^j \wedge \mathrm{grid}(2n)$ and being a $\cT^j$-variant of $\atreemodel$.
\end{itemize}
\end{lem}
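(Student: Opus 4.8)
The plan is to prove the two implications separately, in both cases using Corollary~\ref{corr:grid-2n-correctness} to identify the depth-$2n$ nodes of a model of $\mathrm{grid}(2n)$ with the cells of $\interval{0}{2^n-1} \times \interval{0}{2^n-1}$, and the arithmetical reading of ${\rm HN}$ and ${\rm VN}$ to detect horizontal and vertical adjacency of cells. Throughout, I would reduce every statement about the abbreviations $\bindd{\cdot}{2n}$, $\att{\cdot}{2n}$ and $\distinctbindd{\cdot}{2n}$ to their semantic content via Lemma~\ref{lemma:nominal}, Lemma~\ref{lemma:at} and Lemma~\ref{lemma:distinct-nominals}. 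Write $\anode_{\pair{x}{y}}$ for the unique node with $E^{2n}(\anode,\anode_{\pair{x}{y}})$ whose $\ahvarprop_i$-labels spell the binary expansion of $x$ and whose $\avvarprop_i$-labels spell that of $y$; its existence and uniqueness are exactly Corollary~\ref{corr:grid-2n-correctness}.

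For the first implication I would assume $\atreemodel, \anode \models \mathrm{grid}(2n) \wedge \aformula_{\rm tiling}^j$. From $\aformula_{\rm cov}^j$, instantiated with a valuation making some fresh variable a nominal for depth $2n$ at $\anode_{\pair{x}{y}}$, Lemma~\ref{lemma:at} gives that exactly one $t^j$ with $t \in \cT$ holds at $\anode_{\pair{x}{y}}$; hence $\tau(x,y) = t$ is well defined. For condition~\ref{tiling:hori}, fix $x < 2^n-1$ and $y$ and pass to the variant of $\atreemodel$ in which $\anominal$ holds exactly at $\anode_{\pair{x}{y}}$ and $\anominalbis$ holds exactly at $\anode_{\pair{x+1}{y}}$ (this leaves the tree, the $\ahvarprop_i,\avvarprop_i$ labels and the $\cT^j$ labels untouched). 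Then $\bindd{\anominal}{2n}$ and $\bindd{\anominalbis}{2n}$ hold by Lemma~\ref{lemma:nominal}, and a routine binary-increment check shows this variant satisfies ${\rm HN}(\anominal,\anominalbis)$ at $\anode$; instantiating the universally quantified $\aformula_{\cH}^j$ with this variant and applying Lemma~\ref{lemma:at} yields $\pair{t}{t'} \in \cH$ with $t^j$ true at $\anode_{\pair{x}{y}}$ and $t'^j$ true at $\anode_{\pair{x+1}{y}}$, i.e. $\pair{\tau(x,y)}{\tau(x+1,y)} \in \cH$. The symmetric argument with ${\rm VN}$ and $\aformula_{\cV}^j$ gives~\ref{tiling:verti}, so $\tau$ is a tiling.

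For the second implication I would start from a tiling $\tau$ and from $\atreemodel, \anode \models \mathrm{grid}(2n)$, and let $\atreemodel'$ be the structure obtained from $\atreemodel$ by changing only the interpretation of the variables in $\cT^j$: at each $\anode_{\pair{x}{y}}$ make exactly $\tau(x,y)^j$ true and every other $t'^j$ false, and make every variable of $\cT^j$ false at all remaining nodes. Then $\atreemodel'$ is a $\cT^j$-variant of $\atreemodel$, and since $\mathrm{grid}(2n)$ mentions no variable of $\cT^j$ it still holds at $\anode$. The formula $\aformula_{\rm cov}^j$ holds because any nominal for depth $2n$ points, by Corollary~\ref{corr:grid-2n-correctness}, to some $\anode_{\pair{x}{y}}$, where exactly $\tau(x,y)^j$ is true (Lemma~\ref{lemma:at}). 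For $\aformula_{\cH}^j$, I would take an arbitrary variant assigning $\anominal,\anominalbis$ with $\bindd{\anominal}{2n} \wedge \bindd{\anominalbis}{2n} \wedge {\rm HN}(\anominal,\anominalbis)$; correctness of ${\rm HN}$ forces the named nodes to be $\anode_{\pair{x}{y}}$ and $\anode_{\pair{x+1}{y}}$ for some $x < 2^n-1$, and since $\tau$ is a tiling $\pair{\tau(x,y)}{\tau(x+1,y)} \in \cH$, so the matching disjunct of $\aformula_{\cH}^j$ is satisfied; ${\rm VN}$ and $\aformula_{\cV}^j$ are handled the same way. Hence $\atreemodel', \anode \models \aformula_{\rm tiling}^j \wedge \mathrm{grid}(2n)$, and one takes $\anode' \egdef \anode$.

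The only step that is more than bookkeeping is the correctness of ${\rm HN}(\anominal,\anominalbis)$ (and, symmetrically, of ${\rm VN}$): in any model of $\mathrm{grid}(2n)$ with $\anominal,\anominalbis$ nominals for depth $2n$ pointing to cells $\pair{x}{y}$ and $\pair{x'}{y'}$, the formula must hold exactly when $y = y'$ and $x' = x+1$ with $x < 2^n-1$. This is the standard binary-counter increment argument: the disjunct indexed by $i$ says precisely that $i$ is the lowest position at which a carry occurs ($\ahvarprop_i$ flips $0 \mapsto 1$, every lower $\ahvarprop$-bit flips $1 \mapsto 0$, every higher $\ahvarprop$-bit is unchanged), the leading conjunction forces the vertical coordinates to agree, and demanding that some such $i$ exists rules out $x = 2^n-1$. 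I expect the write-up of this arithmetical equivalence to be the main, though entirely routine, obstacle; everything else follows by unfolding Lemma~\ref{lemma:nominal} and Lemma~\ref{lemma:at} together with the observation that both $\cT^j$-variants and variants changing only $\anominal,\anominalbis$ preserve the grid and all other labels.
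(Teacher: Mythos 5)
Your proposal is correct and follows essentially the same route as the paper's own proof: both directions reduce to Corollary~\ref{corr:grid-2n-correctness} plus the nominal lemmas, with the first item obtained by instantiating the universally quantified $\aformula_{\cH}^j,\aformula_{\cV}^j$ at variants naming two adjacent cells, and the second by relabelling the depth-$2n$ nodes with $\tau$ and checking each conjunct. Your explicit remark that the correctness of ${\rm HN}$/${\rm VN}$ is needed as an exact characterisation (both soundness and completeness of the adjacency test) is a slightly more careful articulation of what the paper leaves implicit, but it is the same argument.
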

\begin{proof}
By careful inspection of the semantics and of presented formulae, \cf Appendix~\ref{appendix:proof-of:tiling-ok}.
\end{proof}

In order to 
encode an instance of $\cPmulti$, there are still  properties that need to be expressed. 
Let us assume that the root node $\aroot$ satisfies ${\rm grid}(2n)$.
\begin{itemize}
\itemsep 0 cm 
\item Given  
the set of initial tile types 
$\cTzero \subseteq \cT$, 
let us express that
for each position of the first row of the grid, exactly one tile type in  $\cTzero^j$
holds.
\[
\aformula_{\rm init}^j \egdef
\forall\anominal \ 
\left( 
\bindd{\anominal}{2n} \wedge \att{\anominal}{2n}(\bigwedge_{\alpha \in \interval{0}{n-1}} \neg \ahvarprop_{\alpha})
\right)
\rightarrow
\att{\anominal}{2n} \left( \bigvee_{t \in \cTzero} t^j \wedge \bigwedge_{t \neq t' \in \cTzero} \neg (t^j \wedge t'^j) \right)
\]

\item Assuming  that $\aroot$  satisfies  $\aformula_{\rm tiling}^j \wedge \aformula_{\rm init}^{j'}$,
we express that
for  each position of the first row of the grid, the tile type in $\cTzero^j$
coincides with the tile type in $\cTzero^{j'}$ (corresponding to \ref{amtp:m-init}):
\[
\aformula_{\rm coinci}^{j,j'} \egdef
\forall \anominal \left( 
(\bindd{\anominal}{2n} \wedge \att{\anominal}{2n}(\bigwedge_{\alpha \in \interval{0}{n-1}} \neg \ahvarprop_{\alpha}) \right)
\rightarrow 
\att{\anominal}{2n} \left( \bigvee_{t \in \cTzero} (t^j \wedge t^{j'}) \right). 
\]
\item Given the set of accepting tile types $\cTacc \subseteq \cT$ and assuming that $\aroot$ satisfies  $\aformula_{\rm tiling}^j$, we state that
there is a position on the last row with a tile type in $\cTacc$ (satisfying \ref{amtp:m-accept}):
\[
\aformula_{\rm acc}^j \egdef
\exists \anominal \ 
\bindd{\anominal}{2n} \wedge \att{\anominal}{2n} \left(  
\left( \bigwedge_{\alpha \in \interval{0}{n-1}} \ahvarprop_{\alpha} \right)
\wedge \bigvee_{t \in \cTacc} t^j \right). 
\]
\item Given the multi-matching tiling relation $\cTmulti \subseteq \cT \times \cT$, 
and assuming  $\aroot$ satisfies $\aformula_{\rm tiling}^j \wedge \aformula_{\rm tiling}^{j+1}$, 
on every position, the tile type in $\cT^j$ and the tile type in $\cT^{j+1}$ 
are in the relation $\cTmulti$ (fulfilling the requirements of \ref{amtp:m-multi}):
\[
\aformula_{\rm multi}^j \egdef
\forall \anominal \ 
\bindd{\anominal}{2n} \rightarrow
\att{\anominal}{2n} \left( \bigvee_{\pair{t}{t'} \in \cTmulti} ( t^j  \wedge t'^{j+1}) \right).
\]
\end{itemize}
It is time to wrap up. 
Given a finite set of propositional variables $\aset = \set{\avarpropter_1, \ldots, \avarpropter_{\beta}}$,
we write $\exists \aset \ \aformulabis$ to denote 
the formula $\exists \avarpropter_1 \ \exists \avarpropter_2 \
\cdots \exists \avarpropter_{\beta} \ \aformulabis$.  $\forall \aset \ \aformulabis$ is defined similarly. 
Given an instance $\mathcal{I}$ of  $\cPmulti$ made of $n$, 
$\triple{\cT}{\cH}{\cV}$, $\cTzero$, $\cTacc$, $\cTmulti$, let us define
the formula 
$\aformula_{\mathcal{I}}$ below:%
\begin{multline*}
\aformula_{\mathcal{I}} \egdef 
{\rm grid}(2n) \wedge 
\forall \cTzero^1 \ 
\exists \cTzero^2 \ 
\forall \cTzero^3 \  
\ldots 
\forall \cTzero^{n-1} \  
\exists \cTzero^n
\bigwedge_{j \in \interval{1}{n}} \aformula_{\rm init}^{j}
\rightarrow 
\Bigg(
\exists \set{t^j: t \in \cT, j \in \interval{n+1}{2n}} \\ 
\left(\bigwedge_{j \in \interval{n+1}{2n}} \aformula_{\rm tiling}^{j} \wedge \aformula_{\rm coinci}^{j,(j-n)} \right)
\wedge 
\left( \bigwedge_{j \in \interval{n+1}{2n-1}} \aformula_{\rm multi}^{j} \right)
\wedge
\aformula_{\rm acc}^{2n}
\Bigg).
\end{multline*}

Now, we can state the correctness of the reduction.
\begin{lem} \label{lemma:aexppol-hardness}
$\mathcal{I}$ is a positive instance of $\cPmulti$ iff 
$\aformula_{\mathcal{I}}$ is satisfiable in \tQCTLEX{, \leq N}.
\end{lem}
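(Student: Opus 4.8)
The plan is to prove the two implications of the equivalence separately, building the correspondence between an accepting strategy for the $\cPmulti$ game and a satisfying tree model of $\aformula_{\mathcal{I}}$ level by level along the quantifier prefix. First I would set up the correspondence between the propositional-variable valuations and the tiling data: given any tree model $\atreemodel, \aroot$ satisfying $\aformula_{\mathcal{I}}$, the conjunct $\mathrm{grid}(2n)$ forces $\arestrictedtreemodel{2n}$ to be a full binary tree with exactly $2^{2n}$ nodes at depth $2n$, each carrying a distinct valuation of $\ahvarprop_0,\dots,\ahvarprop_{n-1},\avvarprop_0,\dots,\avvarprop_{n-1}$ (Corollary~\ref{corr:grid-2n-correctness}); so for every depth-$2n$ node $\anode$ we get a well-defined position $(\mathfrak{H}(\anode),\mathfrak{V}(\anode)) \in \interval{0}{2^n-1}^2$ read off in binary, and conversely every position is realised by a unique such node $\anode_{\pair{x}{y}}$. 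Crucially, the subformulae $\bindd{\anominal}{2n}$, $\distinctbindd{\cdot}{2n}$, $\att{\cdot}{2n}$ behave as advertised in Lemmas~\ref{lemma:nominal}--\ref{lemma:distinct-nominals} because $\mathrm{grid}(2n)$ guarantees the relevant nodes are uniquely named; and ${\rm HN}(\anominal,\anominalbis)$ (resp. ${\rm VN}$) holds exactly when the positions named by $\anominal,\anominalbis$ are horizontal (resp. vertical) neighbours, by the standard binary-increment argument sketched in the text.

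For the direction ``$\mathcal{I}$ positive $\Rightarrow$ $\aformula_{\mathcal{I}}$ satisfiable'', I would fix a winning strategy for the $\exists$-player in the $\cPmulti$ game and construct a model on the fixed binary tree of depth $2n$ (completed by naive completion below depth $2n$, appealing to Lemma~\ref{lemma:completion} so the infinite-branch requirement is met). Reading the formula's quantifier prefix $\forall \cTzero^1 \exists \cTzero^2 \cdots \exists \cTzero^n$ as a game against the environment: whenever an odd-indexed block $\cTzero^{2m-1}$ is chosen by the $\forall$-player, it picks out a word $w_{2m-1} \in \cTzero^{2^n}$ (encoding which tile of $\cTzero$ sits at each position of the first row for the $(2m-1)$-st tiling, read via the nodes $\anode$ with $\att{\anominal}{2n}(\bigwedge_\alpha \neg \ahvarprop_\alpha)$); I respond with the strategy's $w_{2m}$. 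Once all $n$ initial words are fixed, if the antecedent $\bigwedge_{j}\aformula_{\rm init}^{j}$ fails the implication is vacuously true, and otherwise the strategy yields a solution $(\atiling_1,\dots,\atiling_n)$; I then set the truth of each $t^j$ for $j \in \interval{1}{n}$ at $\anode_{\pair{x}{y}}$ to encode $\atiling_j(x,y)$, and — this is the one subtlety — I re-use the \emph{same} tiling for layers $n{+}1,\dots,2n$, defining $t^{n+r}$ at $\anode_{\pair{x}{y}}$ to encode $\atiling_r(x,y)$, so that $\aformula_{\rm coinci}^{n+r,r}$ holds trivially and only the ``primed'' copies $\cT^{n+1},\dots,\cT^{2n}$ need be existentially witnessed inside the implication. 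By Lemma~\ref{lemma:tiling-ok} each $\aformula_{\rm tiling}^{n+r}$ holds, \ref{amtp:m-multi} gives $\aformula_{\rm multi}^{n+r}$, and \ref{amtp:m-accept} gives $\aformula_{\rm acc}^{2n}$.

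For the converse, ``$\aformula_{\mathcal{I}}$ satisfiable $\Rightarrow$ $\mathcal{I}$ positive'', I would take a model $\atreemodel,\aroot$ and extract the winning strategy: given any $w_1 \in \cTzero^{2^n}$ chosen by the $\forall$-player, use the truth of $\exists \cTzero^2$ in $\atreemodel$ under the valuation encoding $w_1$ (which I can install since changing the $\cT^1$-variables produces a $\aset$-equivalent model — here I should note the subtlety that the $\exists \cTzero^1$ is actually $\forall$, so I am quantifying over \emph{all} resettings of $\cTzero^1$, consistent with the semantics of $\forall$) to obtain the witness $w_2$; iterate through the prefix. Whatever the $\forall$-player does, the satisfied implication gives, under the chosen initial words, a valuation of the primed tile-variables with $\bigwedge_{j \in \interval{n+1}{2n}} (\aformula_{\rm tiling}^{j} \wedge \aformula_{\rm coinci}^{j,j-n}) \wedge \bigwedge_{j}\aformula_{\rm multi}^{j} \wedge \aformula_{\rm acc}^{2n}$; Lemma~\ref{lemma:tiling-ok} converts each $\aformula_{\rm tiling}^{n+r}$ into a genuine tiling $\atiling_r$, $\aformula_{\rm coinci}$ forces its first row to equal $w_r$ (establishing \ref{amtp:m-init}), $\aformula_{\rm multi}^{n+r}$ gives \ref{amtp:m-multi}, and $\aformula_{\rm acc}^{2n}$ gives \ref{amtp:m-accept}, so $(\atiling_1,\dots,\atiling_n)$ is a solution for $(w_1,\dots,w_n)$. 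The main obstacle I anticipate is bookkeeping the alternation correctly: matching the $\forall/\exists$ pattern in the formula's prefix to the $\cPmulti$ alternation (which starts with $\forall w_1$), handling the vacuous case of the implication cleanly, and making sure that when we ``reset'' a propositional variable to encode an environment move we stay within the same underlying tree (which is fine, since $\approx_{\AP\setminus\{p\}}$ keeps $W,R$ fixed and $\mathrm{grid}(2n)$ only constrains $\ahvarprop,\avvarprop$, not the tile variables). Everything else is a routine check against the semantics, deferred to the appendix.
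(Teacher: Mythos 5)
Your proposal is correct and follows essentially the same route as the paper's proof: first pin down the semantic content of $\mathrm{grid}(2n)$, $\aformula_{\rm cov}^j$, $\aformula_{\cH}^j$, $\aformula_{\cV}^j$, $\aformula_{\rm init}^j$, $\aformula_{\rm coinci}^{j,j'}$, $\aformula_{\rm multi}^j$, $\aformula_{\rm acc}^j$ via the nominal lemmas and Lemma~\ref{lemma:tiling-ok}, then match the quantifier prefix of $\aformula_{\mathcal{I}}$ to the alternation in $\cPmulti$ (the paper phrases this last step as a chain of equivalences over $\aset$-equivalent model variants rather than as an explicit strategy/game, but it is the same argument). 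One small slip in the forward direction: you cannot ``set'' the truth of $t^j$ for odd $j\in\interval{1}{n}$, since those blocks are universally quantified and chosen by the adversary --- this is harmless because only their first rows matter (through the antecedent $\bigwedge_j\aformula_{\rm init}^j$, with non-word valuations discharged vacuously) and your subsequent sentences correctly place the encoding of the tilings $\atiling_1,\dots,\atiling_n$ on the existentially quantified copies $\cT^{n+1},\dots,\cT^{2n}$, with $\aformula_{\rm coinci}^{n+r,r}$ tying their first rows back to the prefix-chosen words.
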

\begin{proof}
The proof is a bit tedious but has no serious difficulties, as all the conditions
for being a solution of $\mathcal{I}$ can be easily expressed, as soon as the grid  $\interval{0}{2^n-1} \times  \interval{0}{2^n-1}$
is encoded. Moreover, the quantifications involved in $\cPmulti$ are straightforwardly taken care of
in  \tQCTLEX{, \leq N} thanks to the presence of propositional quantification. 
Consult Appendix~\ref{appendix:proof-of:proof-lemma-aexppol-hardness}.
\end{proof}

The above lemma leads us to one of the main results of the paper.
\begin{thm} 
\label{theorem:aexppol-hardness}
For all $N \geq 2$, the satisfiability problem for \tQCTLEX{, \leq N} is \aexppol-hard.
\end{thm}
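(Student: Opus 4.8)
The plan is to obtain the theorem directly from the reduction already constructed. Lemma~\ref{lemma:aexppol-hardness} states precisely that $\mathcal{I}$ is a positive instance of $\cPmulti$ iff $\aformula_{\mathcal{I}}$ is satisfiable in \tQCTLEX{,\leq N}, so the restriction of the admissible models to $N$-bounded trees is already accounted for by that lemma. Since $\cPmulti$ is \aexppol-hard~\cite{Bozzellietal17,Molinari19}, all that remains is to check that $\mathcal{I} \mapsto \aformula_{\mathcal{I}}$ is a legitimate reduction for \aexppol-hardness, \ie that it is computable in logarithmic space (hence in polynomial time) and that $\aformula_{\mathcal{I}}$ genuinely belongs to the fragment \tQCTLEX{}.

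For the resource bound I would estimate $\length{\aformula_{\mathcal{I}}}$ by inspecting the building blocks. The subformula $\mathrm{grid}(2n)$ has size polynomial in $n$: its only non-constant ingredients are the prefixes $\AX^i$ with $i \leq 2n{-}1$ and the depth-$2n$ nominal macros $\bindd{\cdot}{2n}$, $\distinctbindd{\cdot}{2n}$, $\att{\anominal}{2n}$, each of which unfolds to $\Theta(n)$ copies of $\EX$ (none of these macros duplicates its formula argument, so nesting is harmless) and each of which occurs only polynomially often. Likewise each of $\aformula_{\rm cov}^j$, $\aformula_{\cH}^j$, $\aformula_{\cV}^j$, $\aformula_{\rm init}^j$, $\aformula_{\rm coinci}^{j,j'}$, $\aformula_{\rm acc}^j$, $\aformula_{\rm multi}^j$, and hence $\aformula_{\rm tiling}^j$, has size polynomial in $n$ and in $\length{\cT}, \length{\cH}, \length{\cV}, \length{\cTmulti}$; there are $O(n)$ such conjuncts and $O(n)$ quantifier blocks ($\forall\cTzero^j$, $\exists\cTzero^j$, $\exists\set{t^j : t\in\cT}$, each binding $O(n\length{\cT})$ fresh propositional variables). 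As $n$ is given in unary, $\length{\aformula_{\mathcal{I}}}$ is therefore polynomial in $\length{\mathcal{I}}$, and the construction, being syntax-directed, runs in logarithmic space. For membership in the fragment, it suffices to observe that across all of these subformulae the only modal connective that appears is $\EX$ (through $\EX^k$, through the nominal macros, and through $\EX_{=2}\top$) together with its dual $\AX$; no $\E(\cdot\,\U\,\cdot)$ or $\A(\cdot\,\U\,\cdot)$ is ever used, so $\aformula_{\mathcal{I}}$ is a \tQCTLEX{} formula. Combining these observations with Lemma~\ref{lemma:aexppol-hardness} and the \aexppol-hardness of $\cPmulti$ yields the claim.

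I do not expect a genuine obstacle here: the real content of the reduction is entirely inside Lemma~\ref{lemma:aexppol-hardness} (whose proof is deferred to the appendix), and the only point in the present argument requiring a moment's care is the size estimate — one must confirm that the repeated use of the depth-$2n$ nominal macros causes merely a polynomial blow-up, which holds precisely because each such macro expands to linearly (not exponentially) many $\EX$'s and there are only polynomially many occurrences.
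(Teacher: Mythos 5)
Your proposal is correct and follows the same route as the paper: Theorem~\ref{theorem:aexppol-hardness} is obtained directly from Lemma~\ref{lemma:aexppol-hardness} together with the \aexppol-hardness of $\cPmulti$ from~\cite{Bozzellietal17,Molinari19}. Your explicit verification that the reduction is polynomial-size (in particular that the depth-$2n$ nominal macros expand to only linearly many $\EX$'s and do not duplicate their formula arguments) and that $\aformula_{\mathcal{I}}$ stays inside the $\EX$-fragment merely makes precise what the paper leaves implicit.
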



\section{\tower-hardness of the satisfiability problem \texorpdfstring{\tQCTLEX{}}{tQCTLEX} }\label{section-tower-hardness}

We are back to the (general) satisfiability problem for \tQCTLEX{}, \ie with no further restrictions on the number of children per node.

\subsection{Overview of the method}

In order to show \tower-hardness, we shall reduce the $k$-\nexptime-complete tiling problem $\mathtt{Tiling}_k$
introduced in Section~\ref{sec:complexity} to \satproblem{\tQCTLEX{}} and this should be
done in a uniform way so that \tower-hardness can be concluded (see the discussion in~\cite[Section 3.1.2]{Schmitz16} and
in Section~\ref{sec:complexity}).
Hence, we need to encode concisely a grid $\tow(k,n) \times \tow(k,n)$ and to do so, the main task consists in enforcing
that a node has $\tow(k,n)$ children, using a formula of elementary size in $k+n$ (bounded by a tower of exponentials of fixed height). Actually, our method produces
a formula of exponential size in $k+n$. Of course, this is not  the end of the story as we need to encode 
the grid $\tow(k,n) \times \tow(k,n)$ and to express on it constraints about the tiling $\atiling$. 
First, let us explain how to enforce that a node has exactly $\tow(k,n)$ children,
by partly taking  advantage of the proof technique of local nominals (see Section~\ref{sec:local-nominals}).

We recall that $\tow(1,n) {=} 2^n$ and $\tow(k+1,n) {=} 2^{\tow(k,n)}$ for $k > 0$.
For the forthcoming subsections, we assume that $n$ is fixed.
Below, we classify the nodes of a tree model by their \defstyle{type} (a value in $\Nat$)
such that any node is of type 0, and if a node is of type $k > 0$, then it has exactly
$\tow(k,n)$ children and all the children are of type $k-1$. To be more precise, a node may have
two types (as one of them is always zero). 
So, a node of type 1 has exactly $2^n$ children,  a node of type 2 has exactly $2^{2^n}$ children, etc.
Therefore if a node is of type $k>0$, then the value $k$ is unique. 
Additional conditions apply for being of type $k > 0$ but we can already notice
that a node $\anode$ of type $k$ implicitly defines a balanced subtree of depth $k$ with root $\anode$. 
 \begin{center}
    \begin{tikzpicture}
      
      \begin{scope}[scale=.73,line cap=round]
      

        \draw (1,0) node[minirond, jaune] (AA) {};
        \draw (-2,-2) node[minirond, jaune] (AB1) {};
        \draw (-1,-2) node[minirond, bleu] (AB2) {};
        \draw (0,-2) node[minirond, vert] (AB3) {};
        \draw (0.75,-2) node[text=black] (AC1) {};
        \draw (1.5,-2) node[text=black] (AC2) {$\ldots$};
        \draw (2.25,-2) node[text=black] (AC3) {};
        \draw (3,-2) node[minirond, rouge] (AB4) {};
        \draw (4,-2) node[minirond, vert] (AB5) {};
        \draw (5,-2) node[minirond, bleu] (AB6) {};
        
        \draw (-2,-2.5) node[text=black] (AD1) {\tiny{0}};
        \draw (-1,-2.5) node[text=black] (AD2) {\tiny{1}};
        \draw (0,-2.5) node[text=black] (AD3) {\tiny{2}};
        \draw (3,-2.5) node[text=black] (AD4) {};
        \draw (4,-2.5) node[text=black] (AD5) {};
        \draw (5.5,-2.5) node[text=black] (AD6) {\tiny{$\tow(k{-}1,n){-}1$}};

        \draw (-2.7,-1.52) node[text=black] (AE1) {\tiny{\text{type} $k{-}1$}};
        \draw (5.5,-1.52) node[text=black] (AE2) {\tiny{\text{type} $k{-}1$}};

        \foreach \i/\j in {AA/AB1, AA/AB2, AA/AB3, AA/AB4, AA/AB5, AA/AB6, AA/AC1, AA/AC3}
                 {\draw[-latex'] (\i) -- (\j);}
                 

        \draw (10+1,0) node[minirond, jaune] (BA) {};
        \draw (10+-2,-2) node[minirond, jaune] (BB1) {};
        \draw (10+-1,-2) node[minirond, bleu] (BB2) {};
        \draw (10+0,-2) node[minirond, vert] (BB3) {};
        \draw (10+0.75,-2) node[text=black] (BC1) {};
        \draw (10+1.5,-2) node[text=black] (BC2) {$\ldots$};
        \draw (10+2.25,-2) node[text=black] (BC3) {};
        \draw (10+3,-2) node[minirond, rouge] (BB4) {};
        \draw (10+4,-2) node[minirond, vert] (BB5) {};
        \draw (10+5,-2) node[minirond, bleu] (BB6) {};
        
        \draw (10+-2,-2.5) node[text=black] (BD1) {\tiny{0}};
        \draw (10+-1,-2.5) node[text=black] (BD2) {\tiny{1}};
        \draw (10+0,-2.5) node[text=black] (BD3) {\tiny{2}};
        \draw (10+3,-2.5) node[text=black] (BD4) {};
        \draw (10+4,-2.5) node[text=black] (BD5) {};
        \draw (10+5.5,-2.5) node[text=black] (BD6) {\tiny{$\tow(k{-}1,n){-}1$}};

        \draw (10+-2.25,-1.4) node[text=black] (BE1) {\tiny{\text{type} $k{-}1$}};
        \draw (10+5,-1.52) node[text=black] (BE2) {\tiny{\text{type} $k{-}1$}};

        \foreach \i/\j in {BA/BB1, BA/BB2, BA/BB3, BA/BB4, BA/BB5, BA/BB6, BA/BC1, BA/BC3}
                 {\draw[-latex'] (\i) -- (\j);}

             
      \draw (6.5,2) node[minirond, jaune] (C) {};
      \draw (6.5, 2.5) node[text=black] (AE2) {\tiny{\text{type} $k{+}1$}};
      
      \draw (3,0) node[text=black] (X) {};
      
      \draw (4.75,0) node[text=black] (XX) {$\ldots$};

      \draw (6,0) node[text=black] (Y) {};
      
      \draw (7.5,0) node[text=black] (YY) {$\ldots$};
      
      \draw (9,0) node[text=black] (Z) {};

      \foreach \i/\j in {C/AA, C/BA, C/X, C/Y, C/Z} {\draw[-latex'] (\i) -- (\j);}   
                 
      \end{scope}
      \path[use as bounding box]  (0,0);

    \end{tikzpicture}  
  \end{center}
In order to enforce that a node is of type $k  \geq 1$ (this is a trivial property for $k=0$), 
and therefore has exactly $\tow(k,n)$ children, 
with each node $\anode$ of type $k \geq 0$ is associated a \defstyle{number} in $\interval{0}{\tow(k+1,n)-1}$.
Such a number is written $\semnumber{\atreemodel}{\anode}$. 
The subscript `$\atreemodel$' may be omitted when the context is clear.
Similarly, when 
 $\atreemodel, \anode_0 \models \bindd{\anominal_1}{1} \wedge 
\bigwedge_{i \in \interval{2}{d}}
\att{\anominal_1}{1} \att{\anominal_2}{1} \cdots \att{\anominal_{i-1}}{1} \bindd{\anominal_i}{1}$
and
the nodes  $\anode_1, \ldots, \anode_d$ are associated with $\anominal_1, \ldots \anominal_d$,
we write $\semnumber{\atreemodel}{\anominal_i}$ instead of $\semnumber{\atreemodel}{\anode_i}$ 
for all $i \in \interval{1}{d}$ 
(in general, by a slight abuse of notation, we may refer to a node
by its local nominal when it exists).

When the type of the node $\anode$ is zero, its number is defined as the unique $m$ 
such that the number represented by the truth values of  
$\avarprop_{n-1}, \avarprop_{n-2}, \ldots \avarprop_0$ is equal to $m$.
As usual, the propositional variable~$\avarprop_i$ is responsible for the $i$th bit of 
the number $m$ and by convention, the least significant bit is encoded by the truth value of $\avarprop_0$.
We illustrate the encoding below:
  \begin{center}
    \begin{tikzpicture}
      
      \begin{scope}[line cap=round]

        \draw (1,0.25) node[text=black] (XXX) {\tiny{type $1$}};
        \draw (1,0) node[minirond, jaune] (A) {};
        \draw (-2,-2) node[minirond, jaune] (B1) {};
        \draw (-1,-2) node[minirond, bleu] (B2) {};
        \draw (0,-2) node[minirond, vert] (B3) {};
        \draw (0.75,-2) node[text=black] (C1) {};
        \draw (1.5,-2) node[text=black] (C2) {$\ldots$};
        \draw (2.25,-2) node[text=black] (C3) {};
        \draw (3,-2) node[minirond, rouge] (B4) {};
        \draw (4,-2) node[minirond, vert] (B5) {};
        \draw (5,-2) node[minirond, bleu] (B6) {};
        
        \draw (-2,-2.25) node[text=black] (D1) {\tiny{0}};
        \draw (-1,-2.25) node[text=black] (D2) {\tiny{1}};
        \draw (0,-2.25) node[text=black] (D3) {\tiny{2}};
        \draw (3,-2.25) node[text=black] (D4) {};
        \draw (4,-2.25) node[text=black] (D5) {};
        \draw (5.5,-2.25) node[text=black] (D6) {\tiny{$2^n{-}1$}};

        \draw (-2.8,-1.7) node[text=black] (E1) {\tiny{$\neg p_{n-1} \wedge \ldots \wedge \neg p_{0}$}};
        \draw (5.85,-1.7) node[text=black] (E2) {\tiny{$p_{n-1} \wedge \ldots \wedge p_0$}};

        \foreach \i/\j in {A/B1, A/B2, A/B3, A/B4, A/B5, A/B6, A/C1,  A/C3}
                 {\draw[-latex'] (\i) -- (\j);}
      \end{scope}
      \path[use as bounding box]  (0,0);

    \end{tikzpicture}
  \end{center}

Otherwise, when the type of $\anode$ is equal to some $k > 0$, its number is
represented by the binary encoding of
the propositional variable $\val{k-1}$ on its children assuming that there are $\tow(k,n)$ children
whose respective (bit) numbers span all over $\interval{0}{\tow(k,n)-1}$ and therefore all the children are implicitly ordered. 
This principle makes sense conceptually but it remains to express it in \tQCTLEX{}, similarly
to the \tower-hardness proof in~\cite{Prattetal19} for the fluted fragment in which counters
with high values have to be enforced within a restricted language (see also~\cite{Stockmeyer74}). 
That is why, in Table~\ref{table-auxiliary-formulae},  
we present a list of formulae to be defined. 
All of them are interpreted on a node~$\anode_0$ of type $k \geq 0$, $1 \leq d \leq k$.

\begin{table}[h]
    \begin{tabular}{|c|c|}
        \hline
         Formulae to be defined & Intuitive meaning \\ \hline
        $\phitype{k}$ & $\anode_0$ is of type $k$  \\ \hline
        $\phifirst{k}$ &  $\semnumber{}{\anode_0} = 0$ \\ \hline
        $\philast{k}$ &  $\semnumber{}{\anode_0} = \tow(k+1,n)-1$ \\ \hline
        $\phiunique{k}$ &  $\forall \anode, \anode' \ ((\anode_0 E \anode) \ \& \  (\anode_0 E \anode') \ \& \ 
        \anode \neq \anode') \ \rightarrow $ \\
        & $\semnumber{}{\anode} \neq \semnumber{}{\anode'}$ \\ \hline
        $\phipopulate{k}$ &   $\forall \anode \ ((\anode_0 E \anode) \ \& \ \semnumber{}{\anode} < \tow(k,n)-1)
                              \ \rightarrow$\\
       & $\exists \ \anode'  \  (\anode_0 E \anode') \ \wedge \ \semnumber{}{\anode'} = \semnumber{}{\anode} + 1$ \\ \hline
        $\eqk{k}{\bar{\anominal}}{\bar{\anominalbis}}$ & 
        $\semnumber{}{\anode_d} = \semnumber{}{\anode_d'}$\\ \hline
        $\succk{k}{\bar{\anominal}}{\bar{\anominalbis}}$  & 
         $\semnumber{}{\anode_d'} = 1 + \semnumber{}{\anode_d}$\\ \hline
        $\gk{k}{\bar{\anominal}}{\bar{\anominalbis}}$ & 
        $\semnumber{}{\anode_d} < \semnumber{}{\anode_d'}$\\ \hline
    \end{tabular}
\caption{Family of auxiliary formulae.}
\label{table-auxiliary-formulae}
\end{table}
In the last 3 lines of the table, the subscript
 `$k$' below `$=$' and `$<$' allows us
to remember that the formula is evaluated at a node of type $k$; as $\bar{\anominal}$ and  $\bar{\anominalbis}$
are of length $d$, the number comparison is done on nodes of type $k-d$ and the numbers can take 
values in $\interval{0}{\tow(k-d+1,n)-1}$.  Though most of the intuitive meanings are straightforward,
let us notice that $\phiunique{k}$ is  intended to express that two distinct children of $\anode_0$ have
distinct numbers. Similarly, $\phipopulate{k}$ is intended to express that 
if a child of $\anode_0$ has a number $n$ strictly less than 
$\tow(k,n)-1$, then $\anode_0$ has also another child with number equal to $n+1$
(`compl' in $\phipopulate{k}$ stands for `complete').

In what follows we will also employ $\widehat{\bar{\anominal},\bar{\anominalbis}}$ to denote the formula
\[
\bindd{\anominal_1}{1} \wedge \bindd{\anominalbis_1}{1} \wedge \bigwedge_{i \in \interval{2}{d}}  
(\att{\anominal_1}{1} \att{\anominal_2}{1} \cdots \att{\anominal_{i-1}}{1} \bindd{\anominal_i}{1}
\wedge 
\att{\anominalbis_1}{1} \att{\anominalbis_2}{1} \cdots \att{\anominalbis_{i-1}}{1} \bindd{\anominalbis_i}{1}).
\]
Assuming that the nodes $\anode_1, \ldots, \anode_d$ are associated with $\bar{\anominal} = \anominal_1, \ldots, \anominal_d$
(resp.  $\anode_1', \ldots, \anode_d'$ are associated with $\bar{\anominalbis} = \anominalbis_1, \ldots, \anominalbis_d$),
$\anode_0, \anode_1, \ldots, \anode_d$ and $\anode_0, \anode_1', \ldots, \anode_d'$ can be understood
as two branches rooted at $\anode_0$ ending at the node $\anode_d$ and at the node $\anode_d'$ respectively.
The formula $\widehat{\bar{\anominal},\bar{\anominalbis}}$ uses subformulae introduced in Section~\ref{sec:local-nominals}
and the wide hat symbol in $\widehat{\bar{\anominal},\bar{\anominalbis}}$ above $\bar{\anominal}$ and  $\bar{\anominalbis}$ 
is a graphical reminder of these two branches. 
By contrast, the specific  formula $\widehat{\bar{\anominal},\bar{\anominal}}$ states the existence of a single branch 
with nodes $\anode_0, \ldots, \anode_d$.

In order to define $\phitype{k}$ ($k \geq 1$), we specify that every child is of type $k-1$,
there is a child with number equal to zero, and if a child has number $m < \tow(k,n)-1$, then 
there is a child with number equal to $m+1$. 
Moreover, two distinct children have distinct numbers in $\interval{0}{\tow(k,n)-1}$.
Satisfying these conditions  guarantees that
the number for the children span all over $\interval{0}{\tow(k,n)-1}$. 
The formula $\phitype{k}$ (for $k \geq 1$) is defined as
\[ \phitype{k} \deff  \AX(\phitype{k-1}) \wedge \EX(\phifirst{k-1}) \wedge \phiunique{k} \wedge \phipopulate{k}. \]
Note that the above formula is intended to be built 
over the propositional variables $\avarprop_0, \ldots, \avarprop_{n-1}, \val{}$ (only).  

Let us first explain how we proceed to define all the mentioned formulae.
For successive values~$N \in \Nat$, we define inductively the formulae:
\begin{itemize}
\item $\phitype{N}$, $\phifirst{N}$ and $\philast{N}$, 
\item $\eqk{k}{\anominal_1, \ldots, \anominal_d}{\anominalbis_1, \ldots, \anominalbis_d}$,
\item $\gk{k}{\anominal_1, \ldots, \anominal_d}{\anominalbis_1, \ldots, \anominalbis_d}$,
\item $\succk{k}{\anominal_1, \ldots, \anominal_d}{\anominalbis_1, \ldots, \anominalbis_d}$
\end{itemize}
for all $k-d = N-1$. 
For $N = 0$, only the formulae  $\phitype{0}$, $\phifirst{0}$ and $\philast{0}$ make sense.
The case  $N = 1$ is not yet an instance of the general case. 
We first treat the cases for $N \in \set{0,1}$ and then we proceed with the general case $N \geq 2$.

\subsection{Formulae for types zero and one}
\label{section-Nzeroone}
When $k= 0$ (thus $k = N = 0$),  only the intended properties for the formulae
 $\phitype{0}$, $\phifirst{0}$ and $\philast{0}$ are meaningful.
Let us define them, in the simplest way.
\cut{
\begin{itemize}
  \item $\phitype{0} \egdef \top$
  \item $\phifirst{0} \egdef \neg \avarprop_{n-1} \wedge \cdots \wedge \neg \avarprop_{0}$ and
  \item $\philast{0} \egdef \avarprop_{n-1} \wedge \cdots \wedge \avarprop_{0}$. 
\end{itemize}
}
$$
\phitype{0} \egdef \top \ \ \ \ \ \ 
\phifirst{0} \egdef \neg \avarprop_{n-1} \wedge \cdots \wedge \neg \avarprop_{0} \ \ \ \ \ \ 
\philast{0} \egdef \avarprop_{n-1} \wedge \cdots \wedge \avarprop_{0}.
$$

It can be rapidly checked that $\atreemodel, \anode \models \phitype{0}$ iff $\anode$ is of type 0.
Moreover, assuming that $\anode$ is understood as a node of type $0$, we have  $\atreemodel, \anode \models \phifirst{0}$ iff $\semnumber{\atreemodel}{\anode} = 0$ and  
$\atreemodel, \anode \models \philast{0}$ iff  $\semnumber{\atreemodel}{\anode} = \tow(1,n)-1 = 2^n-1$. 

We next focus on the case when $N = 1$ and we define the formulae
$\phitype{k}$, $\phifirst{k}$ and $\philast{k}$
with $k {=} 1$ (\ie when $k {=} N {=} 1$) as well as $\gk{k}{\anominal_1, \ldots, \anominal_d}{\anominalbis_1, \ldots, \anominalbis_d}$
and $\succk{k}{\anominal_1, \ldots, \anominal_d}{\anominalbis_1, \ldots, \anominalbis_d}$
with $k-d = 0$ (that is when $k{-}d = N{-}1$  with $N{=}1$). 
We stress that $k$ and~$d$ can be arbitrarily large as long as $k = d$.

To start with the formula $\eqk{k}{\anominal_1, \ldots, \anominal_d}{\anominalbis_1, \ldots, \anominalbis_d}$, 
it can be easily defined in terms of $\gk{k}{\anominal_1, \ldots, \anominal_d}{\anominalbis_1, \ldots, \anominalbis_d}$ as:
\[\neg \left( \gk{k}{\anominal_1, \ldots, \anominal_d}{\anominalbis_1, \ldots, \anominalbis_d} \right)  \wedge \neg \left( \gk{k}{\anominalbis_1, \ldots, \anominalbis_d}{\anominal_1, \ldots, \anominal_d} \right). \]

Second, we turn our attention to the formula  $\phitype{1}$.
It states that there is a child with number equal to zero, if a child has number $m < 2^n-1$, then there is a child with number equal to $m+1$, all the children are of type $0$, 
and  two distinct children have distinct numbers in $\interval{0}{2^n-1}$.
Remember that the number of each child (of type 0) is computed from the propositional variables in $\set{\avarprop_{n-1}, \ldots, \avarprop_0}$.
The arithmetical reasoning between children, leading to the fact that there are exactly $2^n$ children whose numbers
span all over $\interval{0}{2^n-1}$ takes advantage of standard arithmetical properties on numbers encoded by $n$ bits. 
Here is the formula $\phitype{1}$: 
\[ \phitype{1} \egdef \AX(\phitype{0}) \wedge \EX(\phifirst{0}) \wedge \phiunique{1} \wedge \phipopulate{1}. \]

It remains to specify what exactly the formulae $\phiunique{1}$ and $\phipopulate{1}$ are. 
In order to define $\phiunique{1}$, responsible for enforcing the uniqueness among the children's numbering,
we simply state that there are no two distinct children (of type $0$) having the same number:
\[\forall{\anominal,\anominalbis} \ \distinctbindd{\anominal,\anominalbis}{1} \rightarrow 
\neg (\eqk{1}{\anominal}{\anominalbis}).\]
Note that the formula $\distinctbindd{\anominal,\anominalbis}{1}$ guarantees that we pick two distinct children
and the nominals $\anominal$ and $\anominalbis$ allow us to access them (and check the values of the
propositional variables in~$\set{\avarprop_{n-1}, \ldots, \avarprop_0}$). Consult Lemma~\ref{lemma:distinct-nominals} for the correctness.

The formula $\phipopulate{1}$ below states that for each child $\anode$ (of type $0$) that is not the last one (\ie does not have the highest possible number among all other nodes),  
there is also a child $\anode'$ (also of type~$0$)  such that $\semnumber{\atreemodel}{\anode'} = \semnumber{\atreemodel}{\anode} +1$. 
Here is the formula $\phipopulate{1}$: 
\[
\forall{\anominal} \ (\bindd{\anominal}{1} \wedge \att{\anominal}{1}(\neg \philast{0})) \rightarrow
\exists{\anominalbis} \ \bindd{\anominalbis}{1} \wedge \succk{1}{\anominal}{\anominalbis}. \]

Finally, it remains to define the formulae $\eqk{1}{\anominal}{\anominalbis}$ and $\succk{1}{\anominal}{\anominalbis}$
used respectively in $\phiunique{1}$ and in $\phipopulate{1}$. 
Below, we treat the more general situation with $k = d$ ($k$ is not necessarily equal to 1), and 
$\eqk{1}{\anominal}{\anominalbis}$ and $\succk{1}{\anominal}{\anominalbis}$ are  specific instances with $k=d=1$.
Let assume that $\bar{\anominal} = \anominal_1, \ldots, \anominal_k$ and $\bar{\anominalbis} = 
\anominalbis_1, \ldots, \anominalbis_k$ (thus $d = k$).
The forthcoming definitions are standard and rely on elementary operations on binary encoding of natural numbers  with $n$ bits 
(again, the least significant
bit is represented by the truth value of $\avarprop_0$):
\begin{itemize}
\itemsep 0 cm    
\item $\succk{k}{\anominal_1, \ldots, \anominal_k}{\anominalbis_1, \ldots, \anominalbis_k}$
is defined as
\[
\bigvee_{i=0}^{n-1} 
  \left(
  \underbrace{
     \att{\bar{\anominal}}{} 
  \left(
  (\neg \avarprop_i \wedge \bigwedge_{j=0}^{i-1} \avarprop_j)
 \right)
  }_{\text{look for the first zero bit}}
    \wedge 
     \underbrace{
        \att{\bar{\anominalbis}}{} 
    \left(
    \bigwedge_{j=0}^{i-1} \neg \avarprop_j \wedge \avarprop_i
    \right)
     }_{\text{reset previous bits, set } \avarprop_i}
      \wedge 
    \underbrace{ 
     \left( 
    \bigwedge_{j=i+1}^{n-1}
      \att{\bar{\anominal}}{} \avarprop_j \Iff  \att{\bar{\anominalbis}}{} 
\avarprop_j
      \right)
      }_{\text{rewrite other bits}}
   \right)
\]
\item $\gk{k}{\anominal_1, \ldots, \anominal_k}{\anominalbis_1, \ldots, \anominalbis_k}$
is defined as
\[
\bigvee_{i=0}^{n-1} 
\left(
\underbrace{\att{\bar{\anominalbis}}{} \avarprop_i \wedge 
\att{\bar{\anominal}}{} \neg \avarprop_i \wedge}_{\text{find the first bit that differs}}
     \underbrace{
    (\bigwedge_{j=i+1}^{n-1}
      \att{\bar{\anominal}}{} \avarprop_j \Iff \att{\bar{\anominalbis}}{} \avarprop_j)
     }_{\text{equality of more significant bits}}
\right) 
\]
\end{itemize}

For the sake of completeness, we define $\phifirst{1} \egdef \AX (\neg \val{0})$ and $\philast{1} \egdef \AX (\val{0})$. 

The lemma below states that we have properly proceeded for the
binary encoding of numbers with the variables in $\avarprop_{n-1}, \ldots,
\avarprop_{0}$ (and Lemmas~\ref{lemma:atoflengthd}--\ref{lemma:distinct-nominals} need to be used). 
 
\begin{lem}\label{lemma:kd-comparison-zero}
Let $\atreemodel$ be a tree model and $\anode$ be one of its nodes such that $\anode$ satisfies $\widehat{\bar{\anominal},\bar{\anominalbis}}$ ($\bar{\anominal}$ 
and $\bar{\anominalbis}$ are both of length $k$) and, $\anode_k$ and $\anode_k'$ are  understood as nodes of type $0$. 
\begin{description}
\item[(I)] $\atreemodel, \anode \models \succk{k}{\bar{\anominal}}{\bar{\anominalbis}}$ iff $\semnumber{\atreemodel}{\anode_k'} = 1 + \semnumber{\atreemodel}{\anode_k}$.
\item[(II)] $\atreemodel, \anode \models \gk{k}{\bar{\anominal}}{\bar{\anominalbis}}$ iff $\semnumber{\atreemodel}{\anode_k} < \semnumber{\atreemodel}{\anode_k'}$.
\item[(III)] $\atreemodel, \anode \models \eqk{k}{\bar{\anominal}}{\bar{\anominalbis}}$ iff $\semnumber{\atreemodel}{\anode_k} = \semnumber{\atreemodel}{\anode_k'}$.
\end{description}
\end{lem}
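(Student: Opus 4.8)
The plan is to peel off the local-nominal navigation using Lemma~\ref{lemma:atoflengthd}, reducing the whole statement to an elementary fact about incrementing and comparing $n$-bit binary numbers, and then to settle the three items by inspecting which index witnesses the relevant disjunction.

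First I would use the hypothesis that $\atreemodel, \anode \models \widehat{\bar{\anominal},\bar{\anominalbis}}$: by Lemma~\ref{lemma:nominal} and Lemma~\ref{lemma:at} this guarantees that the node sequences $\anode_1, \ldots, \anode_k$ (associated with $\bar{\anominal}$) and $\anode_1', \ldots, \anode_k'$ (associated with $\bar{\anominalbis}$) exist and are unique, so Lemma~\ref{lemma:atoflengthd} applies and yields, for every quantifier-free $\aformula$ over $\set{\avarprop_0, \ldots, \avarprop_{n-1}}$, that $\atreemodel, \anode \models \att{\bar{\anominal}}{}\aformula$ iff $\atreemodel, \anode_k \models \aformula$, and symmetrically with $\bar{\anominalbis}$ and $\anode_k'$. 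Since $\anode_k$ and $\anode_k'$ are understood as nodes of type $0$, write $a_i, b_i \in \set{0,1}$ for the truth values of $\avarprop_i$ at $\anode_k$ and $\anode_k'$, so that $\semnumber{\atreemodel}{\anode_k} = a := \sum_{i<n} a_i 2^i$ and $\semnumber{\atreemodel}{\anode_k'} = b := \sum_{i<n} b_i 2^i$ by the definition of the number of a type-$0$ node. Under this translation each conjunct $\att{\bar{\anominal}}{}\avarprop_i$ (resp. its negation) of the three formulae says exactly ``$a_i = 1$'' (resp. ``$a_i = 0$''), and similarly for $\bar{\anominalbis}$ with $b_i$; so the lemma becomes the claim that the three displayed disjunctions hold iff $b = a+1$, $a < b$, and $a = b$, respectively.

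The remaining content is then the standard correctness of binary increment and comparison. For (II) I would show $a < b$ iff there is $i < n$ with $a_i = 0$, $b_i = 1$ and $a_j = b_j$ for all $j > i$: the direction from right to left is immediate since $b - a = 2^i - \sum_{j<i}(a_j - b_j)2^j \geq 2^i - (2^i - 1) > 0$, and for the other direction one takes $i$ to be the largest index where $a_i \neq b_i$ and uses that all strictly higher bits agree to force $a_i < b_i$. For (I) the witnessing index is $i := \min\set{j : a_j = 0}$ when $a \neq 2^n-1$; one checks that $b = a+1$ holds exactly when $a_j = 1$ for $j < i$, $b_i = 1$, $b_j = 0$ for $j < i$, and $a_j = b_j$ for $j > i$ --- precisely the shape of the displayed disjunction --- again by computing $b - a$; and when $a = 2^n-1$ the disjunction is false, which matches the fact that $a+1 = 2^n$ is out of range and no equation $\semnumber{\atreemodel}{\anode_k'} = 1 + \semnumber{\atreemodel}{\anode_k}$ can hold with both numbers in $\interval{0}{2^n-1}$. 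Finally, (III) follows from (II) and the defining equivalence of $\eqk{k}{\bar{\anominal}}{\bar{\anominalbis}}$ as $\neg\bigl(\gk{k}{\bar{\anominal}}{\bar{\anominalbis}}\bigr) \wedge \neg\bigl(\gk{k}{\bar{\anominalbis}}{\bar{\anominal}}\bigr)$, together with the trichotomy of $<$ on $\Nat$.

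I do not expect a genuine obstacle here: the only point requiring a little care is the bookkeeping with empty conjunctions (the conjunction $\bigwedge_{j=0}^{i-1}$ when the witnessing index is $i = 0$) and keeping the bit ranges aligned, but this does not affect the arithmetic. The essential work --- translating through Lemma~\ref{lemma:atoflengthd} and then verifying the increment and comparison identities --- is routine once the dictionary between the $\att{\bar{\anominal}}{}\avarprop_i$-conjuncts and the bits $a_i$ is in place.
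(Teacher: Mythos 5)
Your proposal is correct and follows essentially the same route as the paper's proof: reduce via Lemma~\ref{lemma:atoflengthd} to the truth values of $\avarprop_0,\ldots,\avarprop_{n-1}$ at $\anode_k$ and $\anode_k'$, verify the standard binary increment/comparison identities for (I) and (II), and obtain (III) from (II) via the definition of $\eqk{k}{\bar{\anominal}}{\bar{\anominalbis}}$ and trichotomy. The paper's own argument is terser (it only sketches (I) and declares (II) similar), so your write-up merely supplies the routine arithmetic details it leaves implicit, including the harmless edge case $\semnumber{\atreemodel}{\anode_k}=2^n-1$.
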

\begin{proof}
By careful inspection of the presented formulae, \cf Appendix~\ref{appendix:proof-of:lemma:kd-comparison-zero}.
\end{proof}

We conclude by presenting the main lemma that gathers all established formulae.

\begin{lem}\label{lemma:typeone}
Let $\atreemodel$ be a tree model and let $\anode$ be any of its nodes. 
The following hold:
\begin{description}
\item[(I)] $\atreemodel, \anode \models \phitype{1}$ iff $\anode$ is of type $1$,
\item[(II)] Assuming $\anode$ satisfies $\phitype{1}$, we have $\atreemodel, \anode \models \phifirst{1}$ iff $\semnumber{\atreemodel}{\anode} = 0$.
\item[(III)] Assuming $\anode$ satisfies $\phitype{1}$, we have $\atreemodel, \anode \models \philast{1}$ iff $\semnumber{\atreemodel}{\anode} = \tow(2,n)-1$.
\end{description}
\end{lem}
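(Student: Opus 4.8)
The plan is to prove the three items by unfolding the definitions of $\phitype{1}$, $\phifirst{1}$ and $\philast{1}$ and appealing to the correctness lemmas already available: Lemmas~\ref{lemma:nominal},~\ref{lemma:at},~\ref{lemma:atoflengthd} and~\ref{lemma:distinct-nominals} for the local-nominal abbreviations $\bindd{\anominal}{1}$, $\att{\anominal}{1}$ and $\distinctbindd{\anominal,\anominalbis}{1}$, and Lemma~\ref{lemma:kd-comparison-zero} (in the case $k=d=1$) for the arithmetical subformulae $\eqk{1}{\anominal}{\anominalbis}$, $\gk{1}{\anominal}{\anominalbis}$ and $\succk{1}{\anominal}{\anominalbis}$. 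Throughout, one keeps in mind that a node of type $0$ always carries a number in $\interval{0}{2^n-1}$, determined by the truth values of $\avarprop_{n-1}, \ldots, \avarprop_0$, and that the number of a node of type $1$ is by definition the $2^n$-bit integer whose $i$-th bit is the truth value of $\val{0}$ at the unique child of number $i$.

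For item~(I), recall $\phitype{1} = \AX(\phitype{0}) \wedge \EX(\phifirst{0}) \wedge \phiunique{1} \wedge \phipopulate{1}$ with $\phitype{0} = \top$, so the conjunct $\AX(\phitype{0})$ is vacuous and every child is trivially of type $0$. For the direction from ``type $1$'' to $\phitype{1}$: if $\anode$ is of type $1$ then it has exactly $2^n$ children, each of type $0$, whose numbers are pairwise distinct and therefore enumerate $\interval{0}{2^n-1}$; hence $\EX(\phifirst{0})$ holds since some child has number $0$, $\phiunique{1}$ holds by Lemma~\ref{lemma:distinct-nominals} (to name two distinct children by $\anominal,\anominalbis$) together with Lemma~\ref{lemma:kd-comparison-zero}(III) (to evaluate $\neg(\eqk{1}{\anominal}{\anominalbis})$), and $\phipopulate{1}$ holds because for every child of number $m < 2^n-1$ there is a child of number $m{+}1$, using Lemma~\ref{lemma:kd-comparison-zero}(I) for $\succk{1}{\anominal}{\anominalbis}$. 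Conversely, from $\EX(\phifirst{0})$ one obtains a child with number $0$; then a short induction on $m$ via $\phipopulate{1}$ yields a child with number $m$ for every $m \in \interval{0}{2^n-1}$, while $\phiunique{1}$ forbids any number from being realised twice; since a type-$0$ node cannot have a number outside $\interval{0}{2^n-1}$, the node $\anode$ has exactly $2^n$ children, each of type $0$, so $\anode$ is of type $1$.

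For items~(II) and~(III), assume $\anode$ satisfies $\phitype{1}$; by~(I) it is of type $1$, so $\semnumber{\atreemodel}{\anode} \in \interval{0}{\tow(2,n)-1}$ is the $2^n$-bit number whose bit $i$ is the truth value of $\val{0}$ at the child of number $i$. Since $\phifirst{1} = \AX(\neg \val{0})$, it holds at $\anode$ iff $\val{0}$ is false at every child, i.e.\ iff every bit of $\semnumber{\atreemodel}{\anode}$ equals $0$, i.e.\ iff $\semnumber{\atreemodel}{\anode} = 0$; this proves~(II). Dually, $\philast{1} = \AX(\val{0})$ holds at $\anode$ iff $\val{0}$ is true at every child, i.e.\ iff all $2^n$ bits equal $1$, i.e.\ iff $\semnumber{\atreemodel}{\anode} = 2^{2^n}-1 = \tow(2,n)-1$; this proves~(III).

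I do not expect a genuine obstacle: this is the base case of the inductive construction and all the real work is carried out by the lemmas cited above. The only point requiring a little care is the converse direction of~(I), where one must extract from $\EX(\phifirst{0})$, $\phipopulate{1}$ and $\phiunique{1}$ the precise combinatorial statement that the children's numbers form a bijection onto $\interval{0}{2^n-1}$ --- which is exactly what ``$\anode$ is of type $1$'' means --- via the short induction together with the observation that a type-$0$ node's number always lies in that interval.
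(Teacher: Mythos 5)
Your proof is correct and follows essentially the same route as the paper's: both reduce ``$\anode$ is of type $1$'' to the three combinatorial conditions (a child numbered $0$ exists, successors of non-maximal numbers exist, numbers are pairwise distinct), match these against the conjuncts $\EX(\phifirst{0})$, $\phipopulate{1}$ and $\phiunique{1}$ via Lemmas~\ref{lemma:nominal}--\ref{lemma:distinct-nominals} and Lemma~\ref{lemma:kd-comparison-zero}, and read off (II) and (III) directly from $\phifirst{1}=\AX(\neg\val{0})$ and $\philast{1}=\AX(\val{0})$. Your explicit induction on $m$ in the converse of (I), together with the observation that type-$0$ numbers live in $\interval{0}{2^n-1}$, is exactly the bijection argument the paper leaves implicit.
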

\begin{proof}
The properties (II)--(III) follow immediately from the way we encode numbers.
Check Appendix~\ref{appendix:proof-of:lemma:typeone} for a more detailed explanation.
\end{proof}

\subsection{Formulae with arbitrary \texorpdfstring{$N \geq 2$}{N >= 2}}

Let us consider the arbitrary case $N \geq 2$. 
Below, we define the formulae  $\phitype{N}$, $\phifirst{N}$ and~$\philast{N}$
as well as $\gk{k}{\anominal_1, \ldots, \anominal_d}{\anominalbis_1, \ldots, \anominalbis_d}$,
and $\succk{k}{\anominal_1, \ldots, \anominal_d}{\anominalbis_1, \ldots, \anominalbis_d}$ with~$k{-}d = N{-}1$.
We also consider the formula $\eqk{k}{\anominal_1, \ldots, \anominal_d}{\anominalbis_1, \ldots, \anominalbis_d}$, that is defined as follows:
\[\neg (\gk{k}{\anominal_1, \ldots, \anominal_d}{\anominalbis_1, \ldots, \anominalbis_d}) \wedge 
\neg (\gk{k}{\anominalbis_1, \ldots, \anominalbis_d}{\anominal_1, \ldots, \anominal_d}).\]
We assume that for all $k < N$, the formulae $\phitype{k}$, $\philast{k}$ and $\phifirst{k}$
are already defined and for  $k-d \leq N-2$,
the formulae $\gk{k}{\bar{\anominal}}{\bar{\anominalbis}}$,  $\eqk{k}{\bar{\anominal}}{\bar{\anominalbis}}$ 
and  $\succk{k}{\bar{\anominal}}{\bar{\anominalbis}}$ are already defined too ($\bar{\anominal}$ and 
$\bar{\anominalbis}$ are of length $d$). This can be understood as an implicit induction hypothesis
when proving the correctness of the formulae built for $N$. 

As for the case $N=1$, the formula $\phitype{N}$ follows the general schema: 
it states that there is a child with number equal to zero, 
if a child has number $m < \tow(N,n)-1$, then there is a child with number equal to $m+1$, 
and two distinct children have distinct numbers in $\interval{0}{\tow(N,n)-1}$.
Of course, all the children are enforced to be of type $N-1$. 
We present the claimed formula $\phitype{N}$ below.
\[
\phitype{N} \deff \AX(\phitype{N-1}) \wedge
\EX(\phifirst{N-1}) \wedge \phiunique{N} \wedge \phipopulate{N}.
\]

Again, it remains to specify what $\phiunique{N}$ and $\phipopulate{N}$ are. 
In order to define $\phiunique{N}$, we simply  state that there are no
two distinct children (of type $N-1$) with the same number:
\[
\phiunique{N} \deff 
\forall{\anominal,\anominalbis} \ \distinctbindd{\anominal,\anominalbis}{1} \rightarrow \neg  (\eqk{N}{\anominal}{\anominalbis}).
\]
Again, the formula $\distinctbindd{\anominal,\anominalbis}{1}$ allows us to select two distinct children (of type $N-1$).
The formula $\phipopulate{N}$ below states that for each  child $\anode$ (of type $N-1$) that is not the last one,  
there is also a child $\anode'$ (of type $N-1$ too) such that $\semnumber{\atreemodel}{\anode'} = \semnumber{\atreemodel}{\anode} +1$. 
Here it is: 
\[
\phipopulate{N} \egdef 
\forall{\anominal} \ (\bindd{\anominal}{1} \wedge \att{\anominal}{1}(\neg \philast{N-1})) \rightarrow
\exists{\anominalbis} \ \bindd{\anominalbis}{1} \wedge \succk{N}{\anominal}{\anominalbis}.
\]
It remains to define $\eqk{N}{\anominal}{\anominalbis}$ and $\succk{N}{\anominal}{\anominalbis}$
used respectively in $\phiunique{N}$ and in~$\phipopulate{N}$. This time, this requires 
much lengthier developments, apart from using the properties of the formulae constructed for $N-1$ and for smaller values
(implicit induction hypothesis). 
Below, we treat the more general situation with $k-d = N-1$, and 
$\eqk{N}{\anominal}{\anominalbis}$ and $\succk{N}{\anominal}{\anominalbis}$ are just particular instances with $k=N$ and $d = 1$. 
Thus, let $\bar{\anominal} = \anominal_1, \ldots, \anominal_d$ and~$\bar{\anominalbis} = 
\anominalbis_1, \ldots, \anominalbis_d$.

For defining $\succk{k}{\anominal_1, \ldots, \anominal_k}{\anominalbis_1, \ldots, \anominalbis_k}$ 
(see Section~\ref{section-Nzeroone}), we have compared the respective truth values of 
the propositional variables $\avarprop_{n-1}, \ldots, \avarprop_{0}$ 
for the node $\anode_k$ (the interpretation of  $\anominal_k$) and for the node $\anode'_k$ 
(the interpretation of $\anominalbis_k$). 
The same principle applied for defining~$\eqk{k}{\anominal_1, \ldots, \anominal_k}{\anominalbis_1, \ldots, \anominalbis_k}$.
Typically, $\succk{k}{\anominal_1, \ldots, \anominal_k}{\anominalbis_1, \ldots, \anominalbis_k}$ holds iff
there is an index~$i \in \interval{0}{n-1}$, such that
\begin{itemize}
\itemsep 0 cm 
\item for every $j \in \interval{i+1}{n-1}$, $\anode_k$ and $\anode_k'$ agree on $\avarprop_j$,
\item $\anode_k$ does not satisfy $\avarprop_i$ and  $\anode_k'$  satisfies $\avarprop_i$,
\item for $j \in \interval{0}{i-1}$, $\anode_k$  satisfies $\avarprop_j$, $\anode_k'$  does not satisfy $\avarprop_j$. 
\end{itemize}
Thus, we needed to define a partition $\set{\interval{i+1}{n-1}, \set{i}, \interval{0}{i-1}}$ 
of $\interval{0}{n-1}$ (understood as the set of bit numbers to encode a value in $\interval{0}{2^n-1}$).
The same   principle  applies when the bit numbers are among $\interval{0}{\tow(k{-}d,n) - 1}$ to encode a value in  
$\interval{0}{\tow(k{-}d{+}1,n) - 1}$. This needs to be done concisely as we cannot go through all the $\tow(k{-}d,n)$ bit numbers 
 because the whole reduction has to be of elementary complexity. 

Now, when attempting to define  $\succk{k}{\anominal_1, \ldots, \anominal_d}{\anominalbis_1, \ldots, \anominalbis_d}$,
the nodes $\anode_d$ and $\anode_d'$ are of type $k-d > 0$ with $\tow(k-d,n)$ children each. The truth values of $\val{}$ on their 
respective children determine precisely the numbers $\semnumber{}{\anode_d}$ and $\semnumber{}{\anode_d'}$.
Below, we describe what $\anode_d$'s children look like.
\begin{center}
    \begin{tikzpicture}
      
      \begin{scope}[scale=.9,line cap=round]
      

        \draw (1,0) node[minirond, jaune] (AA) {};
        \draw (-2,-2) node[minirond, jaune] (AB1) {};
        \draw (-1,-2) node[minirond, bleu] (AB2) {};
        \draw (0,-2) node[minirond, vert] (AB3) {};
        \draw (0.75,-2) node[text=black] (AC1) {};
        \draw (1.5,-2) node[text=black] (AC2) {$\ldots$};
        \draw (2.25,-2) node[text=black] (AC3) {};
        \draw (3,-2) node[minirond, rouge] (AB4) {};
        \draw (4,-2) node[minirond, vert] (AB5) {};
        \draw (5,-2) node[minirond, bleu] (AB6) {};
        
        \draw (-2,-2.5) node[text=black] (AD1) {\small{0}};
        \draw (-2,-3) node[text=black] (Text1) {\small{\textit{val}}};

        \draw (-1,-2.5) node[text=black] (AD2) {\small{1}};
        \draw (-1,-3) node[text=black] (Text1) {\small{$\neg$\textit{val}}};

        \draw (0,-2.5) node[text=black] (AD3) {\small{2}};
        \draw (0,-3) node[text=black] (Text1) {\small{$\neg$\textit{val}}};

        \draw (3,-2.5) node[text=black] (AD4) {};
        \draw (3,-3) node[text=black] (Text1) {\small{$\ldots$}};

        \draw (4,-2.5) node[text=black] (AD5) {};        
        \draw (4,-3) node[text=black] (Text1) {\small{\textit{val}}};

        \draw (5.5,-2.5) node[text=black] (AD6) {\small{$\tow(k{-}d,n){-}1$}};
        \draw (5.5,-3) node[text=black] (Text1) {\small{\textit{val}}};

        \draw (-2.7,-1.52) node[text=black] (AE1) {\small{\text{type} $k{-}d{-}1$}};
        \draw (5.5,-1.52) node[text=black] (AE2) {\small{\text{type} $k{-}d{-}1$}};

        \foreach \i/\j in {AA/AB1, AA/AB2, AA/AB3, AA/AB4, AA/AB5, AA/AB6, AA/AC1, AA/AC3}
                 {\draw[-latex'] (\i) -- (\j);}

      \end{scope}
      \path[use as bounding box]  (0,0);

    \end{tikzpicture}  
  \end{center}

Let $\anodebis_{\tow(k-d,n)-1}, \ldots, \anodebis_0$ be the children of $\anode_d$ such that $\semnumber{}{\anodebis_j} = j$ for all $j$.
Similarly, let $\anodebis_{\tow(k-d,n)-1}', \ldots, \anodebis_0'$ be the children of $\anode_k'$ such that
$\semnumber{}{\anodebis_j'} = j$ for all $j$. Hence, $\succk{k}{\anominal_1, \ldots, \anominal_d}{\anominalbis_1, \ldots, \anominalbis_d}$ holds iff there is a position $i \in \interval{0}{\tow(k-d,n)-1}$ satisfying
\begin{itemize}
\item for  $j \in \interval{i+1}{\tow(k-d,n)-1}$, $\anodebis_j$ and $\anodebis_j'$ agree on $\val{}$,
\item $\anodebis_i$ does not satisfy $\val{}$ and  $\anodebis_i'$  satisfies $\val{}$,
\item for every  $j \in \interval{0}{i-1}$, $\anodebis_j$  satisfies $\val{}$ and $\anodebis_j'$  does not satisfy $\val{}$. 
\end{itemize}
We have to define a partition $\set{\anodebis_{i+1}, \ldots, \anodebis_{\tow(k-d,n)-1}}, \set{\anodebis_i}, \set{\anodebis_0, \ldots, \anodebis_{i-1}}$ of  {\small $\set{\anodebis_{\tow(k-d,n)-1}, \ldots, \anodebis_0}$} (and similarly for the children 
of $\anode_d'$). To do so, we employ the existential quantification
on the fresh propositional variables $\mathsf{l}$ (left), $\mathsf{s}$ (selected bit), $\mathsf{r}$ (right) such that 
\begin{enumerate}
\itemsep 0 cm 
\item[(a)] for every child of $\anode_d$ (resp. $\anode_d'$), exactly one propositional
variable among $\set{\mathsf{l},\mathsf{s},\mathsf{r}}$ holds true,
\item[(b)] exactly one child of $\anode_d$ (resp. $\anode_d'$) satisfies $\mathsf{s}$,
\item[(c)] if $\anode$ is a child of $\anode_d$  satisfying $\mathsf{l}$ (resp. $\mathsf{s}$)
and $\anode'$ is child of $\anode_d$  satisfying $\mathsf{s}$ (resp. $\mathsf{r}$), then $\semnumber{\atreemodel}{\anode} < 
\semnumber{\atreemodel}{\anode'}$. The same condition holds with $\anode_d'$. 
\end{enumerate}
Below, we illustrate how the propositional variables 
$\mathsf{l}$, $\mathsf{s}$ and $\mathsf{r}$ are distributed. 

 \begin{center}
    \begin{tikzpicture}
      
      \begin{scope}[scale=.9,line cap=round]

        \draw (5,1) node[minirond, vert] (Root) {};


        \draw (1,0) node[minirond, jaune] (AA) {};
        \draw (-2,-2) node[minirond, jaune] (AB1) {};
        \draw (-1,-2) node[minirond, bleu] (AB2) {};

        \draw (0,-2) node[text=black] (AC1) {$\ldots$};
        
        \draw (1.5,-2) node[minirond, vert]  (AC2) {};
        \draw (1.4,-2) node[]  (AC2l) {};
        \draw (1.6,-2) node[]  (AC2r) {};

        \draw (3,-2) node[text=black] (AC3) {$\ldots$};

        \draw (4,-2) node[minirond, jaune] (AB5) {};
        \draw (5,-2) node[minirond, bleu] (AB6) {};
        
        \draw (-2,-2.3) node[text=black] (AD1) {$\small{\mathsf{l}}$};
        \draw (-1,-2.3) node[text=black] (AD2) {$\small{\mathsf{l}}$};
        \draw (1.5,-2.3) node[text=black] (AD3) {$\small{\mathsf{s}}$};
        \draw (4,-2.3) node[text=black] (AD4) {$\small{\mathsf{r}}$};
        \draw (5,-2.3) node[text=black] (AD5) {$\small{\mathsf{r}}$};

        \foreach \i/\j in {AA/AB1, AA/AB2, AA/AB3, AA/AB4, AA/AB5, AA/AB6, AA/AC2}
                 {\draw[-latex'] (\i) -- (\j);}

        \scoped[on background layer] \filldraw [blue!10, line width=2.1em, line join=round,] (AB1.center) -- (AC1.center) -- cycle;
        \scoped[on background layer] \filldraw [vert!10, line width=2.1em, line join=round,] (AC2l.center) -- (AC2r.center) -- cycle;
        \scoped[on background layer] \filldraw [rouge!10, line width=2.1em, line join=round,] (AC3.center) -- (AB6.center) -- cycle;


        \draw (9,0) node[minirond, jaune] (BB) {};
        \draw (6,-2) node[minirond, jaune] (BB1) {};
        \draw (7,-2) node[minirond, bleu] (BB2) {};

        \draw (8,-2) node[text=black] (BC1) {$\ldots$};
        \draw (9.5,-2) node[minirond, vert]  (BC2) {};
        \draw (9.4,-2) node[]  (BC2l) {};
        \draw (9.6,-2) node[]  (BC2r) {};
        \draw (11,-2) node[text=black] (BC3) {$\ldots$};

        \draw (12,-2) node[minirond, jaune] (BB5) {};
        \draw (13,-2) node[minirond, bleu] (BB6) {};
        
        \draw (6,-2.3) node[text=black] (BD1) {$\small{\mathsf{l}}$};
        \draw (7,-2.3) node[text=black] (BD2) {$\small{\mathsf{l}}$};
        \draw (9.5,-2.3) node[text=black] (BD3) {$\small{\mathsf{s}}$};
        \draw (12,-2.3) node[text=black] (BD4) {$\small{\mathsf{r}}$};
        \draw (13,-2.3) node[text=black] (BD5) {$\small{\mathsf{r}}$};
        
        \foreach \i/\j in {BB/BB1, BB/BB2, BB/BB3, BB/BB4, BB/BB5, BB/BB6, BB/BC2}
                 {\draw[-latex'] (\i) -- (\j);}

        \draw[-latex'] (Root) -- (AA);
        \draw[-latex'] (Root) -- (BB);

        \scoped[on background layer] \filldraw [blue!10, line width=2.1em, line join=round,] (BB1.center) -- (BC1.center) -- cycle;
        \scoped[on background layer] \filldraw [vert!10, line width=2.1em, line join=round,] (BC2l.center) -- (BC2r.center) -- cycle;
        \scoped[on background layer] \filldraw [rouge!10, line width=2.1em, line join=round,] (BC3.center) -- (BB6.center) -- cycle;

      \end{scope}
      \path[use as bounding box]  (0,0);

    \end{tikzpicture}  
  \end{center}

Additional arithmetical constraints are needed to relate the partition
of $\bar{\anominal}$ with the partition of~$\bar{\anominalbis}$ (see below the details)
but in a way, it is independent of the partition itself. 
For instance, the unique child of $\anode_d$ satisfying $\mathsf{s}$ and the  unique child of $\anode_d'$ satisfying $\mathsf{s}$
should have the same (bit) number. 
Nevertheless, it is clear that we need, at least,
to be able 
to state in \tQCTLEX{} the existence of a partition satisfying the conditions (a), (b) and (c). 
In the sequel, such partitions are called \defstyle{$\mathsf{l}\mathsf{s}\mathsf{r}$-partitions}. 
The forthcoming formula $\lsr{k}{\bar{\anominal}}$ does the job for $\anode_d$ (then use $\lsr{k}{\bar{\anominalbis}}$ for $\anode_d'$). 

Take $\bar{\anominal} = \anominal_1, \ldots, \anominal_d$.
In the context of the definition of $\lsr{k}{\bar{\anominal}}$, 
we allow the limit case $d=0$, with empty sequence $\varepsilon$, assuming that $\att{\varepsilon}{} \aformulabis \egdef \aformulabis$
and $\widehat{\varepsilon,\varepsilon} \egdef \top$. 
Below $0 \leq d < k$ and the formula~$\lsr{k}{\bar{\anominal}}$  is defined as the conjunction $\lsrone{k}{\bar{\anominal}} \wedge 
\lsrtwo{k}{\bar{\anominal}} \wedge \lsrthree{k}{\bar{\anominal}}$ and 
is  interpreted on a node~$\anode_0$ of type $k$ satisfying
$\widehat{\bar{\anominal},\bar{\anominal}}$, and therefore this satisfaction is witnessed by the branch~$\anode_0, \ldots, \anode_d$
(notations for developments below). 

First, $\lsrtwo{k}{\bar{\anominal}} \egdef \att{\bar{\anominal}}{}(\EXOne(\mathsf{s}))$, 
with 
$\EXOne(\aformulabis)$ defined as  $\EX \aformulabis \wedge \neg \exists \ \avarprop \ 
(\EX(\aformulabis  \wedge \avarprop) \wedge  \EX(\aformulabis  \wedge \neg \avarprop))$ for a fresh $\avarprop$.
Note that the formula $\lsrtwo{k}{\bar{\anominal}}$ simply states that there is a unique child of $\anode_d$ satisfying $\mathsf{s}$.
Next, let $\lsrone{k}{\bar{\anominal}}$ be defined below, stating that for every child of $\anode_d$, exactly one 
propositional variable among $\set{\mathsf{l},\mathsf{s},\mathsf{r}}$ holds true:
\[
\lsrone{k}{\bar{\anominal}} \egdef 
 \att{\bar{\anominal}}{} 
\left( \AX((\mathsf{s} \vee \mathsf{l} \vee \mathsf{r})  \wedge 
 \neg (\mathsf{s} \wedge \mathsf{l}) \wedge 
\neg (\mathsf{s} \wedge \mathsf{r}) \wedge \neg (\mathsf{l} \wedge \mathsf{r}))
\right).
\]
Finally, $\lsrthree{k}{\bar{\anominal}}$ is defined as follows.
\[
\att{\bar{\anominal}}{} 
( 
\forall{w} \forall{w'} \ \distinctbindd{w,w'}{1} \wedge ((\att{w}{1}(\mathsf{s}) \wedge 
\att{w'}{1}(\mathsf{r})) \vee 
(\att{w}{1}(\mathsf{l}) \wedge \att{w'}{1}(\mathsf{s}))) \rightarrow 
\gk{k-d}{w'}{w}
).
\]
The formula $\lsrthree{k}{\bar{\anominal}}$ states if $\anode$ is a child of $\anode_d$ satisfying $\mathsf{l}$ (resp. $\mathsf{s}$)
and $\anode'$ is another child of $\anode_d$ satisfying $\mathsf{s}$ (resp. $\mathsf{r}$), then $\semnumber{\atreemodel}{\anode} < 
\semnumber{\atreemodel}{\anode'}$. The nodes $\anode$ and $\anode'$ are obviously of type $k-d-1$ and their respective
numbers belong to $\interval{0}{\tow(k-d,n)-1}$. It is important to observe that $\gk{k-d}{w'}{w}$ is well-defined recursively as soon as
$k-d \leq N-2$. 

\begin{lem} 
\label{lemma:lsr-partition}
Let $\atreemodel$ be a tree model, $\anode_0$ be a node of type $k \geq 0$, and 
 $\bar{\anominal}$ be a (possibly empty) sequence of nominals $\anominal_1, \ldots, \anominal_d$ for some $d \in \interval{0}{k-1}$
such that  $k-d \leq N-1$, $\atreemodel, \anode_0 \models \widehat{\bar{\anominal},\bar{\anominal}}$, and its witness branch is $\anode_0, \ldots, \anode_d$.
Then $\atreemodel, \anode_0 \models \lsr{k}{\bar{\anominal}}$ iff the conditions below hold:
\begin{enumerate}
\itemsep 0 cm 
\item[(a)] For every child of $\anode_d$, exactly one propositional
variable among $\set{\mathsf{l},\mathsf{s},\mathsf{r}}$ holds true.
\item[(b)] Exactly one child of $\anode_d$ satisfies $\mathsf{s}$. 
\item[(c)] If $\anode$ is a child of $\anode_d$ satisfying $\mathsf{l}$ (resp. $\mathsf{s}$)
and $\anode'$ is a child of $\anode_d$ satisfying $\mathsf{s}$ (resp. $\mathsf{r}$), then $\semnumber{\atreemodel}{\anode} < 
\semnumber{\atreemodel}{\anode'}$.
\end{enumerate}
\end{lem}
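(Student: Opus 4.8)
The plan is to peel the formula $\lsr{k}{\bar{\anominal}} = \lsrone{k}{\bar{\anominal}} \wedge \lsrtwo{k}{\bar{\anominal}} \wedge \lsrthree{k}{\bar{\anominal}}$ apart and match each conjunct, evaluated at $\anode_0$, with one of the conditions (a), (b), (c). Each conjunct has the shape $\att{\bar{\anominal}}{}\aformulabis$, so the first move in every case is to apply Lemma~\ref{lemma:atoflengthd}: its nominal precondition is precisely the $\bindd{\anominal_i}{1}$-conjuncts inside the assumed $\widehat{\bar{\anominal},\bar{\anominal}}$, so $\atreemodel,\anode_0 \models \att{\bar{\anominal}}{}\aformulabis$ reduces to $\atreemodel,\anode_d \models \aformulabis$, where $\anode_0,\dots,\anode_d$ is the given witness branch. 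I would note up front that since $d \leq k-1$ the node $\anode_d$ has type $k-d \geq 1$, hence has (at least $\tow(k-d,n)$) children, and these children are exactly the nodes referred to in (a)--(c); the limit case $d=0$ is harmless, as then $\bar{\anominal}=\varepsilon$, $\att{\varepsilon}{}\aformulabis = \aformulabis$, $\widehat{\varepsilon,\varepsilon}=\top$ and $\anode_d=\anode_0$, so Lemma~\ref{lemma:atoflengthd} degenerates to the identity.

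For $\lsrone{k}{\bar{\anominal}}$ I would simply read the semantics of $\AX$ at $\anode_d$: the formula $\AX\big((\mathsf{s}\vee\mathsf{l}\vee\mathsf{r})\wedge\neg(\mathsf{s}\wedge\mathsf{l})\wedge\neg(\mathsf{s}\wedge\mathsf{r})\wedge\neg(\mathsf{l}\wedge\mathsf{r})\big)$ holds iff every child of $\anode_d$ satisfies exactly one of $\mathsf{l},\mathsf{s},\mathsf{r}$, which is (a). For $\lsrtwo{k}{\bar{\anominal}}$ I would unfold $\EXOne(\mathsf{s}) = \EX\mathsf{s} \wedge \neg\exists\avarprop(\EX(\mathsf{s}\wedge\avarprop)\wedge\EX(\mathsf{s}\wedge\neg\avarprop))$: the first conjunct gives ``at least one $\mathsf{s}$-child'' and the second says no fresh variable $\avarprop$ can split the $\mathsf{s}$-children of $\anode_d$ into two non-empty classes, i.e.\ ``at most one $\mathsf{s}$-child''. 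This is the same counting-by-propositional-quantification argument already used to prove Lemma~\ref{lemma:nominal}, and it yields (b).

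The slightly more involved conjunct is $\lsrthree{k}{\bar{\anominal}}$. After the reduction I must evaluate, at $\anode_d$, the formula $\forall w\forall w'\ \big(\distinctbindd{w,w'}{1} \wedge ((\att{w}{1}\mathsf{s}\wedge\att{w'}{1}\mathsf{r})\vee(\att{w}{1}\mathsf{l}\wedge\att{w'}{1}\mathsf{s}))\big)\rightarrow \gk{k-d}{w'}{w}$. Using the semantics of $\forall$ together with Lemma~\ref{lemma:distinct-nominals}, the valuations making the antecedent true are exactly those in which $w,w'$ are depth-$1$ nominals naming two distinct children of $\anode_d$; for each such pair Lemma~\ref{lemma:at} turns $\att{w}{1}\mathsf{s}$, $\att{w'}{1}\mathsf{r}$, etc.\ into statements about whether those children carry $\mathsf{l}$, $\mathsf{s}$, $\mathsf{r}$, and the previously-established correctness of $\gk{\cdot}{\cdot}{\cdot}$ turns $\gk{k-d}{w'}{w}$ into the number comparison between the two children. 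The key point making this step legitimate is the bound $k-d \leq N-1$: after the two depth-$1$ nominals the comparison is carried out on nodes of type $(k-d)-1 \leq N-2$, so $\gk{k-d}{\cdot}{\cdot}$ is an already-defined formula whose correctness I may invoke (Lemma~\ref{lemma:kd-comparison-zero} in the base case, the analogous statement for smaller $N$ otherwise). Matching the two disjuncts of the antecedent with the two ``(resp.)'' cases of (c) gives the equivalence for $\lsrthree$, and conjoining the three equivalences proves the lemma. The main obstacle I expect is purely bookkeeping in this last case --- keeping track of which physical node each layered nominal abbreviation denotes, and checking that the recursion on $N$ is well-founded --- rather than anything conceptually hard; everything else is a direct unfolding of the semantics.
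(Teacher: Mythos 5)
Your proof is correct and follows essentially the same route as the paper's: reduce each conjunct $\lsrone{k}{\bar{\anominal}}$, $\lsrtwo{k}{\bar{\anominal}}$, $\lsrthree{k}{\bar{\anominal}}$ to a statement at $\anode_d$ via Lemma~\ref{lemma:atoflengthd}, match them respectively with (a), (b), (c), and justify the use of $\gk{k-d}{w'}{w}$ by the well-foundedness of the recursion ensured by $k-d \leq N-1$. The only difference is cosmetic (you also spell out the $d=0$ degenerate case), so nothing further is needed.
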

\begin{proof}
By careful inspection of the presented formulae, \cf Appendix~\ref{appendix:proof-of:lemma:lsr-partition}.
\end{proof}

We come back to the question of defining formulae expressing number comparisons. 
The formula $\succk{k}{\bar{\anominal}}{\bar{\anominalbis}}$ (remember $k-d=N-1$) is defined as the expression
\[
\exists \  \mathsf{l}, \mathsf{s}, \mathsf{r} \ 
\lsr{k}{\bar{\anominal}} \wedge \lsr{k}{\bar{\anominalbis}} \wedge
\aformula_{{\tiny \rm left}}(k) \wedge \aformula_{{\tiny \rm select}}(k) \wedge \aformula_{{\tiny \rm right}}(k).
\]

The conjunction $\aformula_{{\tiny \rm left}}(k) \wedge \aformula_{{\tiny \rm select}}(k) \wedge \aformula_{{\tiny \rm right}}(k)$ takes care of the arithmetical constraints. 
The formula $\aformula_{{\tiny \rm select}}(k)$ states that the $\anode_d$'s unique child  satisfying $\mathsf{s}$ (whose
number is the pivot bit $i$) does not satisfy $\val{}$, and the $\anode_d'$'s unique child  satisfying $\mathsf{s}$ (whose number is also the pivot bit $i$) satisfies $\val{}$:
\[
\aformula_{{\tiny \rm select}}(k)  \egdef 
\att{\bar{\anominal}}{} \left( \AX(\mathsf{s} \rightarrow \neg \val{k-d-1}) \right) \wedge 
\att{\bar{\anominalbis}}{} \left( \AX(\mathsf{s} \rightarrow  \val{k-d-1}) \right).
\]
The formula $\aformula_{{\tiny \rm right}}(k)$ states that for all the children of $\anode_d$ satisfying $\mathsf{r}$ (and therefore with bit number strictly smaller than $i$), the bit value is 1, and  for all the children of $\anode_d'$ satisfying
$\mathsf{r}$ (and therefore with bit number strictly smaller than $i$), the bit value is $0$. 
\[
\aformula_{{\tiny \rm right}}(k)  \egdef \att{\bar{\anominal}}{} \left( \AX(\mathsf{r} \rightarrow  \val{k-d-1}) \right) \wedge 
\att{\bar{\anominalbis}}{} \left( \AX(\mathsf{r} \rightarrow  \neg \val{k-d-1}) \right).
\]
The formula $\aformula_{{\tiny \rm left}}(k)$ states that the children of $\anode_d$ satisfying $\mathsf{l}$ induce a set of
bit numbers equal to the set of bit numbers induced by the children of $\anode_d'$ satisfying  $\mathsf{l}$. 
This entails also that the unique respective children of $\anode_d$ and $\anode_d'$ satisfying $\mathsf{s}$ have the same (bit) number. 
Moreover, we require that children with the same bit number satisfying $\mathsf{l}$ (taken from $\anode_d$ and from $\anode_d'$) have the same bit value witnessed by the truth value of $\val{}$. 
The formula $\aformula_{{\tiny \rm left}}(k)$ is equal to $\aformula_{{\tiny \rm left}}^{\bar{\anominal}, \bar{\anominalbis}}(k)
\wedge \aformula_{{\tiny \rm left}}^{\bar{\anominalbis}, \bar{\anominal}}(k)$  with $\aformula_{{\tiny \rm left}}^{\bar{\anominal}, \bar{\anominalbis}}(k)$ 
defined below:
\[
\aformula_{{\tiny \rm left}}^{\bar{\anominal}, \bar{\anominalbis}}(k) \egdef
\forall{w} \; \att{\bar{\anominal}}{}(\bindd{w}{1} \wedge \att{w}{1} (\mathsf{l})) \rightarrow
\]
\[
\big(\exists{w'} \; 
\att{\bar{\anominalbis}}{}(\bindd{w'}{1} \wedge \att{w'}{1} (\mathsf{l})) \wedge 
\eqk{k}{\bar{\anominal},w}{\bar{\anominalbis},w'} \wedge (\att{\bar{\anominal},w}{} \val{k-d-1} \leftrightarrow
\att{\bar{\anominalbis},w'}{} \val{k-d-1})\big).
\]
Note that $\eqk{k}{\bar{\anominal},w}{\bar{\anominalbis},w'}$ is well-defined as $k-(d+1) \leq N-2$. 
Below, we define the formula $\gk{k}{\bar{\anominal}}{\bar{\anominalbis}}$ with $k-d=N-1$, $\bar{\anominal} = \anominal_1, \ldots,\anominal_d$ and $\bar{\anominalbis} = \anominalbis_1, \ldots,\anominalbis_d$.
Based on previous developments and on standard arithmetical properties of numbers encoded in binary with $k-d$ bits, we define the formula $\gk{k}{\bar{\anominal}}{\bar{\anominalbis}}$ as the expression 
\[
\gk{k}{\bar{\anominal}}{\bar{\anominalbis}} \egdef 
\exists \ \mathsf{s}, \mathsf{l}, \mathsf{r} \ 
\lsr{k}{\bar{\anominal}} \wedge \lsr{k}{\bar{\anominalbis}} \wedge
\aformula_{{\tiny \rm left}}(k) \wedge \aformula_{{\tiny \rm select}}(k).
\]

As previously, the formula $\eqk{k}{\bar{\anominal}}{\bar{\anominalbis}}$  is defined as follows:
\[
\eqk{k}{\bar{\anominal}}{\bar{\anominalbis}}
\egdef
\neg (\gk{k}{\bar{\anominal}}{\bar{\anominalbis}})
\wedge
\neg (\gk{k}{\bar{\anominalbis}}{\bar{\anominal}}).
\]

\begin{lem} 
\label{lemma:kd-comparison}
Let $\atreemodel$ be a tree model, $\anode$ be a node satisfying $\widehat{\bar{\anominal},\bar{\anominalbis}}$ and, $\anode_{d}$, $\anode'_{d}$ are of type $k - d$. 
\begin{description}
\item[(I)] We have $\atreemodel, \anode \models \succk{k}{\bar{\anominal}}{\bar{\anominalbis}}$ iff $\semnumber{\atreemodel}{\anode_d'} = 1+ \semnumber{\atreemodel}{\anode_d}$.
\item[(II)] We have $\atreemodel, \anode \models \gk{k}{\bar{\anominal}}{\bar{\anominalbis}}$ iff $\semnumber{\atreemodel}{\anode_d} < \semnumber{\atreemodel}{\anode_d'}$.
\item[(III)] We have $\atreemodel, \anode \models \eqk{k}{\bar{\anominal}}{\bar{\anominalbis}}$ iff $\semnumber{\atreemodel}{\anode_d} = \semnumber{\atreemodel}{\anode_d'}$.
\end{description}
\end{lem}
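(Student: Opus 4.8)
The plan is to prove the three equivalences simultaneously by induction on $N$, where $k-d = N-1$ and $N \geq 2$; the base case $N=1$ (that is, $k=d$) is exactly Lemma~\ref{lemma:kd-comparison-zero}. In the inductive step I rely on the ``implicit induction hypothesis'' described before the lemma: correctness of $\phitype{k'}$, $\phifirst{k'}$, $\philast{k'}$ for every $k'<N$, and of $\succk{k'}{\bar{\anominal}}{\bar{\anominalbis}}$, $\gk{k'}{\bar{\anominal}}{\bar{\anominalbis}}$, $\eqk{k'}{\bar{\anominal}}{\bar{\anominalbis}}$ whenever the relevant difference is at most $N-2$. The first thing to check is that every recursive call inside the formulas built for $N$ is of this kind: $\lsrthree{k}{\bar{\anominal}}$ calls $\gk{k-d}{w'}{w}$, whose parameters are $k-d$ and $1$, and $\aformula_{\mathrm{left}}(k)$ calls $\eqk{k}{\bar{\anominal},w}{\bar{\anominalbis},w'}$, whose parameters are $k$ and $d{+}1$, so in both cases the difference is $N-2$, as required. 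I also note once and for all that, since $\atreemodel, \anode \models \widehat{\bar{\anominal},\bar{\anominalbis}}$, Lemmas~\ref{lemma:at} and~\ref{lemma:atoflengthd} give the existence and uniqueness of the two branches $\anode = \anode_0, \anode_1, \ldots, \anode_d$ and $\anode_0, \anode_1', \ldots, \anode_d'$, and allow me to rewrite each $\att{\bar{\anominal}}{}\aformulabis$ as ``$\atreemodel, \anode_d \models \aformulabis$'' and each $\att{\bar{\anominalbis}}{}\aformulabis$ as ``$\atreemodel, \anode_d' \models \aformulabis$''.

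Statement (III) is immediate from (II): unfolding $\eqk{k}{\bar{\anominal}}{\bar{\anominalbis}} \egdef \neg(\gk{k}{\bar{\anominal}}{\bar{\anominalbis}}) \wedge \neg(\gk{k}{\bar{\anominalbis}}{\bar{\anominal}})$ and applying (II) twice turns $\atreemodel, \anode \models \eqk{k}{\bar{\anominal}}{\bar{\anominalbis}}$ into $\neg(\semnumber{\atreemodel}{\anode_d} < \semnumber{\atreemodel}{\anode_d'}) \wedge \neg(\semnumber{\atreemodel}{\anode_d'} < \semnumber{\atreemodel}{\anode_d})$, which is equivalent to $\semnumber{\atreemodel}{\anode_d} = \semnumber{\atreemodel}{\anode_d'}$ because these are elements of the linearly ordered set $\interval{0}{\tow(k-d+1,n)-1}$. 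So the real work is (I) and (II), which follow the same pattern; I would write out (I) in full and then observe that (II) is a simplification. Here $\anode_d$ and $\anode_d'$ have type $k-d = N-1 \geq 1$, so each has $\tow(k-d,n)$ children whose $\semnumber{}{\cdot}$-values --- read off the variable $\val{k-d-1}$ on their own children, using the already-established correctness of $\phitype{k-d-1}$ and companions --- span all of $\interval{0}{\tow(k-d,n)-1}$; write $\semnumber{}{\anode_d}$ and $\semnumber{}{\anode_d'}$ for the numbers they thereby encode in binary.

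For the ``only if'' direction of (I), fix a witnessing valuation of $\mathsf{l},\mathsf{s},\mathsf{r}$. By Lemma~\ref{lemma:lsr-partition} (applicable since $k-d \leq N-1$), $\lsr{k}{\bar{\anominal}}$ forces the labelling of the children of $\anode_d$ to be governed by a single pivot index $i$: the unique $\mathsf{s}$-child has number $i$, the $\mathsf{r}$-children are exactly those numbered $<i$, and the $\mathsf{l}$-children exactly those numbered $>i$; likewise $\lsr{k}{\bar{\anominalbis}}$ yields a pivot $i'$ for $\anode_d'$. Then $\aformula_{\mathrm{select}}(k)$ forces bit $i$ of $\semnumber{}{\anode_d}$ to be $0$ and bit $i'$ of $\semnumber{}{\anode_d'}$ to be $1$; $\aformula_{\mathrm{right}}(k)$ forces bits $0,\ldots,i{-}1$ of $\semnumber{}{\anode_d}$ all $1$ and bits $0,\ldots,i'{-}1$ of $\semnumber{}{\anode_d'}$ all $0$; and $\aformula_{\mathrm{left}}(k)$, invoking the induction hypothesis for its embedded $\eqk{k}{\bar{\anominal},w}{\bar{\anominalbis},w'}$, forces the sets of numbers of the $\mathsf{l}$-children to coincide (hence $i=i'$) and their $\val{k-d-1}$-values to agree position by position (hence $\semnumber{}{\anode_d}$ and $\semnumber{}{\anode_d'}$ agree on all bits $>i$). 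These are precisely the equations of binary increment, so $\semnumber{}{\anode_d'} = 1 + \semnumber{}{\anode_d}$. Conversely, given $\semnumber{}{\anode_d'} = 1 + \semnumber{}{\anode_d}$, I take $i$ to be the least bit on which the numbers differ, put $\mathsf{s}$ on the children numbered $i$ of both $\anode_d$ and $\anode_d'$, $\mathsf{r}$ on those numbered below $i$, $\mathsf{l}$ on those numbered above $i$, and verify with Lemma~\ref{lemma:lsr-partition} and the induction hypothesis that this valuation satisfies $\lsr{k}{\bar{\anominal}} \wedge \lsr{k}{\bar{\anominalbis}} \wedge \aformula_{\mathrm{left}}(k) \wedge \aformula_{\mathrm{select}}(k) \wedge \aformula_{\mathrm{right}}(k)$, so $\atreemodel, \anode \models \succk{k}{\bar{\anominal}}{\bar{\anominalbis}}$. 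For (II) the only change is that $\aformula_{\mathrm{right}}(k)$ is dropped, leaving the bits below the pivot unconstrained; the remaining conjuncts then assert exactly that there is a common index $i=i'$ carrying a $1$ in $\semnumber{}{\anode_d'}$ and a $0$ in $\semnumber{}{\anode_d}$ with all higher bits equal, which is equivalent to $\semnumber{}{\anode_d} < \semnumber{}{\anode_d'}$ (in the backward direction one takes $i$ to be the most significant differing bit).

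The main obstacle I anticipate is precisely the bookkeeping of the last paragraph: being scrupulous about which nodes carry which type, so that ``number'' is well defined and the recursive comparison subformulas are genuinely invoked at strictly smaller parameters; confirming that $\lsr{k}{\bar{\anominal}}$ pins down the partition up to the single free parameter $i$ (no child numbered below $i$ may carry $\mathsf{l}$ or $\mathsf{s}$, and none numbered above $i$ may carry $\mathsf{r}$ or $\mathsf{s}$); and translating faithfully between the ``equal sets of bit-numbers plus pointwise-equal bit-values'' reading of $\aformula_{\mathrm{left}}(k)$ and ordinary carry arithmetic on $\tow(k-d,n)$-bit numbers. Each step individually is routine, but carrying all of them out simultaneously and consistently is the delicate part; the fully expanded verification belongs in the appendix.
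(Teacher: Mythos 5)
Your proof follows essentially the same route as the paper's: both reduce (I) to the standard pivot-based characterisation of binary increment, use Lemma~\ref{lemma:lsr-partition} to show the $\mathsf{l}\mathsf{s}\mathsf{r}$-partition realises the pivot, discharge the three zones via $\aformula_{\rm left}(k)$, $\aformula_{\rm select}(k)$ and $\aformula_{\rm right}(k)$ with the induction hypothesis applied to the embedded $\eqk{k}{\bar{\anominal},w}{\bar{\anominalbis},w'}$ and $\gk{k-d}{w'}{w}$ at parameter difference $N-2$, and derive (III) from (II). One slip in your converse direction of (I): the pivot must be the \emph{most} significant bit on which $\semnumber{\atreemodel}{\anode_d}$ and $\semnumber{\atreemodel}{\anode_d'}$ differ (equivalently, the least bit where $\semnumber{\atreemodel}{\anode_d}$ carries a $0$), not the least differing bit, which is $0$ whenever $\semnumber{\atreemodel}{\anode_d}$ is odd and would then violate $\aformula_{\rm select}(k)$ --- you state the correct choice for (II), so this is evidently just a slip rather than a gap.
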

\begin{proof}
By careful inspection of the presented formulae, \cf Appendix~\ref{appendix:proof-of:lemma:kd-comparison}.
\end{proof}
Mainly, this allows us to prove Lemma~\ref{lemma:typeN}(I).
\begin{lem}\label{lemma:typeN}
Let $N \geq 2$, $\atreemodel$ be a tree model and $\anode$ be one of its nodes.
\begin{description}
\item[(I)] $\atreemodel, \anode \models \phitype{N}$ iff $\anode$ is of type $N$,
\item[(II)] Assuming that $\anode$ satisfies $\phitype{N}$, we have $\atreemodel, \anode \models \phifirst{N}$ iff $\semnumber{\atreemodel}{\anode} = 0$. 
\item[(III)] Assuming that $\anode$ satisfies $\phitype{N}$, we have $\atreemodel, \anode \models \philast{N}$ iff $\semnumber{\atreemodel}{\anode} = \tow(N+1,n)-1$.
\end{description}
\end{lem}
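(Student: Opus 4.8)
\emph{Proof plan.} The plan is to prove (I)--(III) by induction on $N$, interleaved with Lemma~\ref{lemma:kd-comparison}: when treating the step at $N$ we may assume that (I)--(III) already hold for every $m<N$ (the cases $m=0$ and $m=1$ being settled in Section~\ref{section-Nzeroone} and Lemma~\ref{lemma:typeone}) and that Lemma~\ref{lemma:kd-comparison} holds for all comparison formulae $\succk{k}{\bar{\anominal}}{\bar{\anominalbis}}$, $\gk{k}{\bar{\anominal}}{\bar{\anominalbis}}$ and $\eqk{k}{\bar{\anominal}}{\bar{\anominalbis}}$ with $k-d\leq N-1$. Following the case $N=1$, we take $\phifirst{N}\egdef\AX(\neg\val{N-1})$ and $\philast{N}\egdef\AX(\val{N-1})$.

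For \textbf{(I)} I would treat separately the four conjuncts of $\phitype{N}=\AX(\phitype{N-1})\wedge\EX(\phifirst{N-1})\wedge\phiunique{N}\wedge\phipopulate{N}$. Assume first that $\anode$ is of type $N$, so it has exactly $\tow(N,n)$ children, all of type $N-1$, whose numbers run through $\interval{0}{\tow(N,n)-1}$. Then $\AX(\phitype{N-1})$ holds by the induction hypothesis (I) at $N-1$; $\EX(\phifirst{N-1})$ holds because the child of number $0$ satisfies $\phifirst{N-1}$ by the induction hypothesis (II) at $N-1$; $\phiunique{N}$ holds by Lemma~\ref{lemma:distinct-nominals} together with Lemma~\ref{lemma:kd-comparison}(III) (applied with $d=1$, $k=N$), since distinct children carry distinct numbers; and $\phipopulate{N}$ holds by Lemma~\ref{lemma:kd-comparison}(I), using the induction hypothesis (III) at $N-1$ to see that $\att{\anominal}{1}(\neg\philast{N-1})$ selects exactly the children whose number is $<\tow(N,n)-1$, each of which admits a child carrying the successor number. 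Conversely, assume $\atreemodel,\anode\models\phitype{N}$. From $\AX(\phitype{N-1})$ and the induction hypothesis (I), every child of $\anode$ is of type $N-1$, hence has a number in $\interval{0}{\tow(N,n)-1}$; from $\phiunique{N}$ and Lemma~\ref{lemma:kd-comparison}(III), distinct children have distinct numbers, so $\anode$ has at most $\tow(N,n)$ children; from $\EX(\phifirst{N-1})$ and the induction hypothesis (II) some child has number $0$; and from $\phipopulate{N}$, Lemma~\ref{lemma:kd-comparison}(I) and the induction hypothesis (III), whenever a child has number $m<\tow(N,n)-1$ some child has number $m+1$, so an inner induction on $m$ shows that every $m\in\interval{0}{\tow(N,n)-1}$ is realized. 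Injectivity combined with this surjectivity forces $\anode$ to have exactly $\tow(N,n)$ children whose numbers exhaust $\interval{0}{\tow(N,n)-1}$, \ie $\anode$ is of type $N$.

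For \textbf{(II)} and \textbf{(III)}, assume $\anode$ satisfies $\phitype{N}$; by (I) it is of type $N$, so indexing its children by their numbers gives children $\anodebis_0,\dots,\anodebis_{\tow(N,n)-1}$ with $\semnumber{\atreemodel}{\anodebis_j}=j$, and by the definition of $\semnumber{\atreemodel}{\anode}$ its $j$-th bit equals the truth value of $\val{N-1}$ at $\anodebis_j$. Hence $\semnumber{\atreemodel}{\anode}=0$ iff no child satisfies $\val{N-1}$, \ie iff $\atreemodel,\anode\models\phifirst{N}$; and $\semnumber{\atreemodel}{\anode}=\tow(N+1,n)-1=2^{\tow(N,n)}-1$ iff every child satisfies $\val{N-1}$, \ie iff $\atreemodel,\anode\models\philast{N}$.

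The step I expect to be the main obstacle is the bookkeeping of the mutual induction: Lemma~\ref{lemma:typeN} at level $N$ invokes Lemma~\ref{lemma:kd-comparison} at $k-d=N-1$, which in turn rests on the structural description of type-$(N-1)$ nodes and on comparison formulae at strictly lower levels, so one must fix the induction measure (proving Lemma~\ref{lemma:kd-comparison} at level $N-1$ before Lemma~\ref{lemma:typeN} at level $N$) and check that no conjunct above ever refers to a subformula of level $\geq N$. Beyond this, the only non-routine point inside the argument is the converse direction of (I), where one must rule out both stray extra children and gaps in the set of realized numbers; this is exactly the combined purpose of $\phiunique{N}$ (injectivity, via Lemma~\ref{lemma:kd-comparison}(III)) and $\phipopulate{N}$ (successor-closure, via Lemma~\ref{lemma:kd-comparison}(I) and the correctness of $\philast{N-1}$).
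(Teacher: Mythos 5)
Your proposal is correct and follows essentially the same route as the paper's proof: decompose $\phitype{N}$ into its four conjuncts, match each with the corresponding structural condition (all children of type $N-1$, a child numbered $0$, injectivity via $\phiunique{N}$ and Lemma~\ref{lemma:kd-comparison}(III), successor-closure via $\phipopulate{N}$ and Lemma~\ref{lemma:kd-comparison}(I)), all within the mutual induction on $N$ interleaved with Lemma~\ref{lemma:kd-comparison}. You are somewhat more explicit than the paper on the converse direction of (I) (the inner induction on $m$ combined with injectivity) and on (II)--(III), which the paper dismisses as routine, but the argument is the same.
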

\begin{proof}
Similar to the proof of Lemma~\ref{lemma:typeone}, see Appendix~\ref{appendix:proof-of:lemma:typeN} for more details.
\end{proof}

Consequently, for all $k \geq 0$, $\phitype{k}$, $\phifirst{k}$ and $\philast{k}$ 
characterise exactly the discussed properties, and similarly for the formulae
of form $\succk{k}{\bar{\anominal}}{\bar{\anominalbis}}$, 
$\gk{k}{\bar{\anominal}}{\bar{\anominalbis}}$,
$\eqk{k}{\bar{\anominal}}{\bar{\anominalbis}}$
where $\bar{\anominal}$ and $\bar{\anominalbis}$ are of length $d$ in $\interval{1}{k}$ and $k \geq 1$. 
It is natural to wonder what is the size of $\phitype{k}$, using a reasonably succinct encoding for formulae.
As the definition of $\phitype{k}$ requires the subformulae $\phitype{k-1}$,  
$\eqk{k}{\bar{\anominal}}{\bar{\anominalbis}}$ and $\succk{k}{\bar{\anominal}}{\bar{\anominalbis}}$
(the other subformulae  are of constant size), and the formula $\phitype{1}$ is quadratic in $n$, one can show that $\phitype{k}$ is of size $2^{\mathcal{O}(k+n)}$.
This is sufficient for our purposes.

\subsection{Uniform reduction leading to \tower-hardness}\label{section-uniform-reduction}
Let $(\cP, c)$, where $\cP = (\cT, \cH, \cV)$ and $c = t_0,t_1, \ldots, t_{n-1}$ 
be an instance of  $\mathtt{Tiling}_k$ known to be $\kNExpTime$-complete 
(see also~\cite[Chapter 11]{Demri&Goranko&Lange16}). We reduce the existence of a tiling $\tau: \interval{0}{\tow(k,n)-1} \times
\interval{0}{\tow(k,n)-1} \rightarrow \cT$, respecting the initial condition
and the horizontal and vertical matching conditions, 
to the satisfiability of a formula $\aformula$ in \tQCTLEX{}. 

To encode the grid  $\interval{0}{\tow(k,n)-1} \times \interval{0}{\tow(k,n)-1}$, we consider a root node $\aroot$ of type $k+1$, and we distinguish $\tow(k,n)$ children among all of its $\tow(k+1,n)$ children. 
Each child of $\aroot$ has itself exactly $\tow(k,n)$ children as it is a node of type $k$.
In order to identify the $\tow(k,n)$ first children of $\aroot$, we use an $\mathsf{l} \mathsf{s} \mathsf{r}$-partition
so that the unique child satisfying $\mathsf{s}$ has precisely the number $\tow(k,n)$. 
This guarantees that exactly the children of $\aroot$ whose numbers are in $\interval{0}{\tow(k,n)-1}$
satisfy $\mathsf{r}$.  So, the $\mathsf{l} \mathsf{s} \mathsf{r}$-partition is used in a new context.

Below, we define the new formula $\anothereqk{k}$  that expresses that 
a node of type $k$ has number $\tow(k,n)$ ($k \geq 1$).
We recall that a node of type $k$ takes its values in $\interval{0}{\tow(k+1,n)-1}$.
Let us provide an inductive definition for $\anothereqk{k}$ with the base case  $k = 1$. 
\begin{itemize}
\item $\anothereqk{0} \; \egdef \neg \avarprop_{n-1} \wedge \cdots \wedge \neg \avarprop_0$.
\item $\anothereqk{1} \egdef \AX (\val{0} \leftrightarrow \aset_n = n)$, where $\aset_n = n$ is an abbreviation for the formula
stating that the truth values for $\avarprop_{n-1}, \ldots, \avarprop_0$ encode $n$ in binary.
\item 
For $k \geq 2$,  $(\anothereqk{k}) \egdef \AX (\val{k-1} \leftrightarrow (\anothereqk{k-1}))$.
\end{itemize}

\begin{lem}\label{lemma:nb-eq-towkn}
Assume that $\atreemodel, \anode \models \phitype{k}$. Then
$\semnumber{\atreemodel}{\anode} = \tow(k,n)$ iff $\atreemodel, \anode \models (\anothereqk{k})$.  
\end{lem}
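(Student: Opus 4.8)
The plan is to prove Lemma~\ref{lemma:nb-eq-towkn} by induction on $k$, following exactly the inductive structure of the definition of $\anothereqk{k}$. Throughout I assume $\atreemodel, \anode \models \phitype{k}$, so that by Lemma~\ref{lemma:typeN}(I) (and Lemma~\ref{lemma:typeone}(I) for $k=1$) the node $\anode$ genuinely has exactly $\tow(k,n)$ children, each of type $k-1$, and the truth values of $\val{k-1}$ on these children encode, bit by bit, the number $\semnumber{\atreemodel}{\anode} \in \interval{0}{\tow(k+1,n)-1}$; the child whose own number is $i$ carries the $i$-th bit of $\semnumber{\atreemodel}{\anode}$ in its $\val{k-1}$-value.

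For the base cases, when $k=1$: the node $\anode$ has $2^n$ children, each of type $0$, and the number of such a child is read from $\avarprop_{n-1}, \ldots, \avarprop_0$. By definition $\anothereqk{1} \egdef \AX(\val{0} \leftrightarrow \aset_n = n)$, which forces: the unique child whose $\avarprop$-encoding equals $n$ satisfies $\val{0}$, and every other child does not. That is precisely the statement that the only bit set in $\semnumber{\atreemodel}{\anode}$ is the $n$-th one, i.e.\ $\semnumber{\atreemodel}{\anode} = 2^n = \tow(1,n)$. (I would also remark that $\anothereqk{0}$ correctly expresses $\semnumber{\atreemodel}{\anode}=n$ for a type-$0$ node, although the lemma as stated only concerns $k \geq 1$; this fact is needed in the inductive step.) Correctness of the abbreviation ``$\aset_n = n$'' is the routine binary-encoding observation used already in Section~\ref{section-Nzeroone}.

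For the inductive step $k \geq 2$, assume the claim for $k-1$, i.e.\ for any node $\anodebis$ with $\atreemodel, \anodebis \models \phitype{k-1}$ we have $\semnumber{\atreemodel}{\anodebis} = \tow(k-1,n)$ iff $\atreemodel, \anodebis \models \anothereqk{k-1}$. Now $\anothereqk{k} \egdef \AX(\val{k-1} \leftrightarrow \anothereqk{k-1})$ holds at $\anode$ iff, for every child $\anodebis$ of $\anode$, we have $\atreemodel, \anodebis \models \val{k-1}$ iff $\atreemodel, \anodebis \models \anothereqk{k-1}$. Since all children of $\anode$ are of type $k-1$ (they satisfy $\phitype{k-1}$ because $\anode$ satisfies $\phitype{k}$), the induction hypothesis rewrites the right-hand side as ``$\semnumber{\atreemodel}{\anodebis} = \tow(k-1,n)$''. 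Hence $\anothereqk{k}$ holds at $\anode$ iff the unique child of $\anode$ with number $\tow(k-1,n)$ satisfies $\val{k-1}$ and all other children falsify $\val{k-1}$ --- using here that $\anode$, being of type $k$, has children whose numbers span exactly $\interval{0}{\tow(k,n)-1}$ and are pairwise distinct, so exactly one child has number $\tow(k-1,n)$. By the encoding convention this says precisely that the only bit set in $\semnumber{\atreemodel}{\anode}$ is the bit at position $\tow(k-1,n)$, i.e.\ $\semnumber{\atreemodel}{\anode} = 2^{\tow(k-1,n)} = \tow(k,n)$, as required.

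The main obstacle is bookkeeping rather than mathematical depth: one must be careful to invoke the already-established semantics of $\phitype{k}$ (Lemma~\ref{lemma:typeN}(I), Lemma~\ref{lemma:typeone}(I)) to know that the children of $\anode$ form a complete, duplicate-free enumeration of $\interval{0}{\tow(k,n)-1}$, so that ``the bit at position $j$ of $\semnumber{\atreemodel}{\anode}$'' is well defined via the $\val{k-1}$-value of the unique child numbered $j$; and one must keep straight which variable ($\val{k-1}$ at level $k$, $\avarprop_i$ at level $0$) carries the bits at each level. Once the semantics of the number encoding is spelled out, the equivalence is immediate from the shape of $\anothereqk{k}$, and I would delegate the fully detailed case analysis to the appendix in keeping with the paper's style.
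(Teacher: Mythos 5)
Your proof is correct and follows essentially the same route as the paper's: induction on $k$ with base case $k=1$ handled via the binary encoding on the $\avarprop_i$'s, and the inductive step unfolding $\AX(\val{k-1}\leftrightarrow\anothereqk{k-1})$ against the induction hypothesis applied to the type-$(k-1)$ children, using that their numbers enumerate $\interval{0}{\tow(k,n)-1}$ without repetition. One small caveat: your parenthetical claim that $\anothereqk{0}$ expresses $\semnumber{\atreemodel}{\anode}=n$ and is needed in the induction is inaccurate — as defined it asserts all bits are zero, and it is never invoked in the recursion since the base case $k=1$ uses $\aset_n=n$ directly — but this does not affect your main argument.
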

\begin{proof}
By careful inspection of the semantics, \cf Appendix~\ref{appendix:proof-of:lemma:nb-eq-towkn}.
\end{proof}

Let $\aformula_\cP$ be the formula built from the instance $(\cP = (\cT, \cH, \cV),c)$ as follows:
\[
\phitype{k+1} \wedge \exists \ \mathsf{l}, \mathsf{s}, \mathsf{r} \ 
\lsr{k+1}{\varepsilon} \wedge 
\EX(\mathsf{s} \wedge (\anothereqk{k})) \wedge 
\aformula_{\rm cov} \wedge
\aformula_{\rm init}   \wedge
\aformula_{\cH} \wedge
\aformula_{\cV}. 
\]
Definitions and explanations for $\aformula_{\rm cov}$, $\aformula_{\rm init}$, $\aformula_{\cH}$
and $\aformula_{\cV}$ follow but observe that an $\mathsf{l}\mathsf{s}\mathsf{r}$-partition
is performed for a node of type $k+1$ and exactly $\tow(k,n)$ children satisfy $\mathsf{r}$ thanks
to the satisfaction of the subformula $\EX(\mathsf{s} \wedge (\anothereqk{k}))$. 
The formula $\aformula_{\rm cov}$ states that every position in $\interval{0}{\tow(k,n)-1} \times \interval{0}{\tow(k,n)-1}$ has a unique tile:
\[
\forall \ \anominal, \anominalbis \ 
\bindd{\anominal}{1} \wedge \att{\anominal}{1} \mathsf{r} \wedge 
\att{\anominal}{1} \bindd{\anominalbis}{1} 
\rightarrow
\att{\anominal, \anominalbis}{} (\bigvee_{t \in \cT} t \wedge \bigwedge_{t \neq t' \in \cT} \neg (t \wedge t')). 
\]
The tile types in $\cT$ are understood as propositional variables. 
In order to access the root node $\aroot$ to a node  encoding a position of the grid,
one needs first to access a child $\anode$ of $\aroot$ satisfying $\mathsf{r}$ (and this is done 
with the help of the local nominal $\anominal$) and then to access any child $\anode'$ of $\anode$
(done  with  the local nominal $\anominalbis$). Then, to reason propositionally on $\anode'$, it is sufficient
to consider subformulae of the form $\att{\anominal, \anominalbis}{}  \aformulabis$. This principle
is applied to all the formulae below. 
The formula $\aformula_{\cH}$ defined below encodes the horizontal matching constraints.
\[
\forall \ \anominal, \anominal', \anominalbis, \anominalbis' \ 
(\bindd{\anominal}{1} \wedge \att{\anominal}{1} \mathsf{r} \wedge 
\bindd{\anominal'}{1} \wedge \att{\anominal'}{1} \mathsf{r} \wedge 
\att{\anominal}{1} \bindd{\anominalbis}{1} \wedge \att{\anominal'}{1} \bindd{\anominalbis'}{1}
\wedge 
\]
\[
\succk{k+1}{\anominal}{\anominal'} \wedge
  \eqk{k+1}{\anominal,\anominalbis}{\anominal,\anominalbis'})
\rightarrow 
\bigvee_{\pair{t}{t'} \in \mathcal{H}} \att{\anominal,\anominalbis}{} t \wedge \att{\anominal',\anominalbis'}{} t'
\]
Similarly, the following formula $\aformula_{\cV}$ encodes the vertical matching constraints.
\[
\forall \ \anominal, \anominal', \anominalbis, \anominalbis' \ 
\bindd{\anominal}{1} \wedge \att{\anominal}{1} \mathsf{r} \wedge 
\bindd{\anominal'}{1} \wedge \att{\anominal'}{1} \mathsf{r} \wedge 
\att{\anominal}{1} \bindd{\anominalbis}{1} \wedge \att{\anominal'}{1} \bindd{\anominalbis'}{1}
\wedge
\]
\[
\eqk{k+1}{\anominal}{\anominal'} \wedge  \succk{k+1}{\anominal,\anominalbis}{\anominal,\anominalbis'}
\rightarrow 
\bigvee_{\pair{t}{t'} \in \mathcal{V}} \att{\anominal,\anominalbis}{} t \wedge \att{\anominal',\anominalbis'}{} t'
\]

It remains to express the initial conditions. It is sufficient to identify the $n$ first children
of the first child of $\aroot$ (identified by the satisfaction of $\phifirst{k}$). 
For example, to express that the $j$th child of the first child of $\aroot$ (say $\anode$ is this first child
of $\aroot$) satisfies $t_j$, 
perform an $\mathsf{l} \mathsf{s} \mathsf{r}$-partition on $\anode$, 
enforce that the unique child satisfying $\mathsf{s}$ also satisfies $t_j$
and express that there are exactly $j-1$ children of  $\anode$  satisfying $\mathsf{r}$. 
This is a condition from graded modal logic that is easy to express. 
Let $\EX_{=i} \ \aformulabis$ be the formula below stating that exactly $i \geq 1$ children satisfy
$\aformulabis$:
\[
\exists \ \avarpropbis_1, \ldots, \avarpropbis_i \
\distinctbindd{\avarpropbis_1, \ldots, \avarpropbis_i}{1} \wedge
\AX ((\avarpropbis_1 \vee \cdots \vee \avarpropbis_i) \leftrightarrow \aformulabis),
\]
where $\avarpropbis_1, \ldots, \avarpropbis_i$ are fresh 
propositional 
variables. 
By convention $\EX_{=0} \ \aformulabis$ is defined as $\AX \neg \aformulabis$. 
The formula $\aformula_{\rm init}$ is defined as 
\[
\forall \ \anominal \  ( \bindd{\anominal}{1} \wedge \att{\anominal}{1} (\phifirst{k}))
\rightarrow \att{\anominal}{1} (
\bigwedge_{i \in \interval{0}{n-1}} \exists \ \mathsf{l}, \mathsf{s}, \mathsf{r} \ 
\lsr{k}{\varepsilon} \wedge \EX_{=i} \ \mathsf{r} \wedge \EX(\mathsf{s} \wedge t_i)
).
\]
The correctness of the reduction is stated below.

\begin{lem} 
\label{lemma:correctness-tiling-to-ex}
$\cP = (\cT, \cH, \cV)$, $c = t_0,t_1, \ldots, t_{n-1}$  is a positive instance
of $\mathtt{Tiling}_k$
iff $\aformula_\cP$ is satisfiable in \tQCTLEX{}.  
\end{lem}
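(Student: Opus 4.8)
The plan is to prove the two implications separately, in both cases exploiting the fact that, once a root node $\aroot$ of type $k+1$ is fixed together with an $\mathsf{l}\mathsf{s}\mathsf{r}$-partition such that exactly $\tow(k,n)$ of its children satisfy $\mathsf{r}$, these distinguished children (each of type $k$, with $\tow(k,n)$ children of their own) provide precisely a copy of the grid $\interval{0}{\tow(k,n)-1}\times\interval{0}{\tow(k,n)-1}$: the first coordinate is the number $\semnumber{\atreemodel}{\anode}$ of the $\mathsf{r}$-child $\anode$, and the second coordinate is the number $\semnumber{\atreemodel}{\anode'}$ of a child $\anode'$ of $\anode$. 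All the semantic bookkeeping needed to justify this is already available: Lemma~\ref{lemma:typeN} (and Lemma~\ref{lemma:typeone} for the small cases) guarantees that $\phitype{k+1}$ forces the branching/numbering structure; Lemma~\ref{lemma:lsr-partition} guarantees that $\lsr{k+1}{\varepsilon}$ realises a genuine $\mathsf{l}\mathsf{s}\mathsf{r}$-partition; Lemma~\ref{lemma:nb-eq-towkn} together with the conjunct $\EX(\mathsf{s}\wedge(\anothereqk{k}))$ pins the unique $\mathsf{s}$-child to number exactly $\tow(k,n)$, so by the ordering clause of the partition the $\mathsf{r}$-children are exactly those with numbers in $\interval{0}{\tow(k,n)-1}$; and Lemma~\ref{lemma:kd-comparison} gives the meaning of $\eqk{k+1}{\cdot}{\cdot}$ and $\succk{k+1}{\cdot}{\cdot}$ on such children. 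The local-nominal abbreviations $\bindd{\cdot}{1}$, $\att{\cdot}{}{\cdot}$ behave as in Lemmas~\ref{lemma:nominal}--\ref{lemma:atoflengthd}, so $\att{\anominal,\anominalbis}{}\aformulabis$ indeed reasons propositionally at the grid node reached via the $\mathsf{r}$-child $\anominal$ and then its child $\anominalbis$.

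For the direction "positive instance $\Rightarrow$ $\aformula_\cP$ satisfiable", I would start from a tiling $\tau:\interval{0}{\tow(k,n)-1}^2\to\cT$ witnessing the instance and build a tree model directly. Take the canonical balanced tree of type $k+1$ rooted at $\aroot$ (its existence as an infinite finite-branching tree model is what the type-$k+1$ construction describes, completed by the naive completion as in the earlier sections), label the children of $\aroot$ by $\mathsf{r}$ exactly when their number is $< \tow(k,n)$, by $\mathsf{s}$ for the single child of number $\tow(k,n)$, and by $\mathsf{l}$ otherwise; then for each $\mathsf{r}$-child $\anode$ with number $x$ and each child $\anode'$ of $\anode$ with number $y$, set $\tau(x,y)$ as the unique tile-proposition true at $\anode'$. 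One then checks, using the lemmas above, that $\aformula_{\rm cov}$ holds (each grid node carries exactly one tile, by construction), $\aformula_{\cH}$ and $\aformula_{\cV}$ hold (because $\succk{k+1}{\anominal}{\anominal'}$ with $\eqk{k+1}{\anominal,\anominalbis}{\anominal,\anominalbis'}$ selects exactly horizontally adjacent grid positions, and symmetrically for the vertical case, and $\tau$ respects $\cH,\cV$), and $\aformula_{\rm init}$ holds (by re-running an $\mathsf{l}\mathsf{s}\mathsf{r}$-partition on the first child of $\aroot$ — the one satisfying $\phifirst{k}$ — so that the $i$-th child is singled out by having exactly $i$ many $\mathsf{r}$-siblings, and setting $t_i$ there). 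Since $\tau$ satisfies \ref{tiling:init}, \ref{tiling:hori}, \ref{tiling:verti}, every conjunct of $\aformula_\cP$ is satisfied at $\aroot$.

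For the converse, suppose $\atreemodel,\aroot\models\aformula_\cP$. By Lemma~\ref{lemma:typeN} the conjunct $\phitype{k+1}$ forces $\aroot$ to be of type $k+1$, hence to have exactly $\tow(k+1,n)$ children, each of type $k$ with exactly $\tow(k,n)$ children whose numbers exhaust $\interval{0}{\tow(k,n)-1}$. By Lemma~\ref{lemma:lsr-partition} and Lemma~\ref{lemma:nb-eq-towkn}, the $\mathsf{r}$-children of $\aroot$ are precisely those with number $<\tow(k,n)$. Define $\tau(x,y)$ to be the tile $t$ such that, reaching the $\mathsf{r}$-child $\anode$ of number $x$ and then its child $\anode'$ of number $y$, we have $\atreemodel,\anode'\models t$; $\aformula_{\rm cov}$ guarantees this is well-defined and single-valued. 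Then $\aformula_{\cH}$ together with Lemma~\ref{lemma:kd-comparison} yields \ref{tiling:hori} (choosing $\anominal,\anominal'$ as the $\mathsf{r}$-children of numbers $x$ and $x+1$, and $\anominalbis,\anominalbis'$ as their children of equal number $y$), $\aformula_{\cV}$ yields \ref{tiling:verti} symmetrically, and $\aformula_{\rm init}$ yields \ref{tiling:init}. Hence $\tau$ is a solution, so $(\cP,c)$ is a positive instance. The reduction is computable in time $2^{\mathcal{O}(k+n)}$ — dominated by the size of $\phitype{k+1}$, which was argued above to be $2^{\mathcal{O}(k+n)}$ — which is elementary in $k$ and $n$, so the family of reductions is uniform in $k$ and $\tower$-hardness of \satproblem{\tQCTLEX{}} follows from $\kNExpTime$-hardness of $\mathtt{Tiling}_k$.

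The main obstacle is not any single implication but the sheer amount of careful bookkeeping needed to verify that the $\mathsf{l}\mathsf{s}\mathsf{r}$-partition on the root of type $k+1$ interacts correctly with the (recursively defined) comparison formulae $\succk{k+1}{\cdot}{\cdot}$ and $\eqk{k+1}{\cdot}{\cdot}$ when these are applied to the grid children and to children-of-children; in particular one must be careful that $\succk{k+1}{\anominal}{\anominal'}$ is only being asked of nodes of type $k$, $\eqk{k+1}{\anominal,\anominalbis}{\anominal,\anominalbis'}$ only of nodes of type $k-1$, and that all recursive calls stay within the $k-d\le N-1$ regime under which the formulae were shown correct. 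This is exactly the kind of routine-but-lengthy verification that I would delegate to the appendix, invoking Lemmas~\ref{lemma:typeone}--\ref{lemma:nb-eq-towkn} at each step.
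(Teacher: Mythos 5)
Your proposal is correct and follows essentially the same route as the paper: one direction builds the canonical type-$(k+1)$ tree from a tiling $\tau$, fixes the $\mathsf{l}\mathsf{s}\mathsf{r}$-labelling so that the $\mathsf{s}$-child has number $\tow(k,n)$ and the $\mathsf{r}$-children are exactly the grid rows, and verifies each conjunct via Lemmas~\ref{lemma:typeN}, \ref{lemma:lsr-partition}, \ref{lemma:nb-eq-towkn} and~\ref{lemma:kd-comparison}; the other direction reads $\tau$ off the unique tile proposition at each grid node and uses the same lemmas to check \ref{tiling:init}, \ref{tiling:hori} and \ref{tiling:verti}. Your added remarks on the type regime of the recursive comparison formulae and on the $2^{\mathcal{O}(k+n)}$ size of the reduction match the paper's own discussion.
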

\begin{proof}
Given an instance $\cP = (\cT, \cH, \cV)$, $c = t_0,t_1, \ldots, t_{n-1}$ 
 of the tiling problem $\mathtt{Tiling}_k$, let  $\tau: \interval{0}{\tow(k,n)-1} \times
\interval{0}{\tow(k,n)-1} \rightarrow \cT$ be a tiling respecting all the constraints \ref{tiling:init}, \ref{tiling:hori} and \ref{tiling:verti}. 
Below, we build a tree model $\atreemodel = \triple{V}{E}{l}$ such that
$\atreemodel, \aroot \models \aformula_{\cP}$ where $\aroot$ is the root node of $\atreemodel$. 
Let $V$ be the following subset of $\Nat^*$ (set of finite sequences over $\Nat$):
\begin{itemize}
\item $\aroot$ is the empty string and it belongs to $V$,
\item For all $j \in \interval{1}{k+1}$, 
      $\interval{0}{\tow(k+1,n)-1} \times \cdots \times \interval{0}{\tow(j,n)-1} \subseteq V$.
\item $\interval{0}{\tow(k+1,n)-1} \times \cdots \times \interval{0}{\tow(1,n)-1} \times 0^+ \subseteq V$.
\end{itemize}
The binary relation $E$ is simply defined as: $\anode E \anode'$ $\equivdef$ $\anode$ is a prefix of $\anode'$, and $\anode \cdot \alpha
= \anode'$ for some $\alpha \in \Nat$. So, $\pair{V}{E}$ is a finite-branching tree such that all the maximal branches are infinite. 
The labelling map $l$ is defined in  a way so that $\atreemodel, \aroot \models \phitype{k+1}$. 
For instance, any node $m_{k+1}, \ldots, m_{2} \in V \cap \Nat^{k-1}$ has $2^n$ children, and their numbers should span
over $\interval{0}{2^n-1}$. This is easy to realise by setting properly the truth values for $\avarprop_{n-1}, \ldots, \avarprop_0$. 
Similarly, any node $m_{k+1}, \ldots, m_{3} \in V \cap \Nat^{k-2}$ has $\tow(2,n)$ children, and their numbers should span
over $\interval{0}{\tow(2,n)-1}$. Again, this is easy to realise by setting properly the truth values of $\val{}$
on the nodes in $\interval{0}{\tow(k+1,n)-1} \times \cdots \times \interval{0}{\tow(1,n)-1}$.
So, it remains to take care of the propositional variables dedicated to the tile types.
\begin{itemize}
\item For all $\pair{i}{j} \in \interval{0}{\tow(k,n)-1} \times \interval{0}{\tow(k,n)-1}$, 
     $l(\pair{i}{j}) \cap \cT$ is equal to $\set{\tau(i,j)}$ by definition. 
     In particular, this means that for all $\pair{i}{j} \in \interval{0}{\tow(k,n)-1} \times \interval{0}{\tow(k,n)-1}$, 
     there is exactly one tile type (understood as a propositional variable) satisfied by $\pair{i}{j}$.  
\item For all $\anode \in V \setminus \interval{0}{\tow(k,n)-1} \times \interval{0}{\tow(k,n)-1}$, 
    the value of the set $l(\anode) \cap \cT$ is irrelevant. 
\end{itemize}

It remains to check that
\[
\atreemodel, \aroot \models \phitype{k+1} \wedge \exists \ \mathsf{l}, \textit{s}, \mathsf{r} \ 
\lsr{k+1}{\varepsilon} \wedge 
\EX(\mathsf{s}\wedge \anothereqk{k}) \wedge 
\aformula_{\rm cov} \wedge
\aformula_{\rm init}   \wedge
\aformula_{\cH} \wedge
\aformula_{\cV},
\]
which is routine and done in Appendix~\ref{appendix:lemma:correctness-tiling-to-ex}.
\end{proof}

We are ready to conclude the main theorem of this paper.
\begin{thm} 
\label{theorem:tQCTLEX-tower}
\satproblem{\tQCTLEX{}} is \tower-complete.
\end{thm}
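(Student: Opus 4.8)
The plan is to combine the two directions established in the preceding sections into a single completeness statement. For the upper bound, \satproblem{\tQCTLEX{}} is a fragment of \satproblem{\tQCTL{}}, so by Proposition~(II) it lies in \tower. The only thing to observe is that restricting the temporal operators to $\EX$ (and its dual $\AX$) does not change the logspace-computable translation into $\tQCTL{}$, so \tower-membership is immediate.

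For the lower bound I would invoke Lemma~\ref{lemma:correctness-tiling-to-ex}: for each $k$, the map $(\cP,c) \mapsto \aformula_\cP$ is a reduction from $\mathtt{Tiling}_k$ to \satproblem{\tQCTLEX{}}. The key point is uniformity in $k$: the formula $\aformula_\cP$ is built from $\phitype{k+1}$, the $\mathsf{l}\mathsf{s}\mathsf{r}$-partition machinery, $\anothereqk{k}$, and the tiling constraints $\aformula_{\rm cov}, \aformula_{\rm init}, \aformula_{\cH}, \aformula_{\cV}$. By the size estimate recorded after Lemma~\ref{lemma:typeN}, $\phitype{k+1}$ has size $2^{\mathcal{O}(k+n)}$, and all the remaining components are of size polynomial (indeed quadratic) in $k+n$; hence $\aformula_\cP$ has size $2^{\mathcal{O}(k+n)}$ and is computable within that time bound. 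Since this bound is elementary in $k$ and in the size $n$ of the instance, the family of reductions is uniform in the sense of~\cite[Section 3.1]{Schmitz16}. As $\mathtt{Tiling}_k$ is $\kNExpTime$-complete~\cite{vanEmdeBoas97} for every $k$, we conclude that \satproblem{\tQCTLEX{}} is $\kNExpTime$-hard for every $k$ via these uniform reductions, and therefore \tower-hard.

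The main obstacle, which is already discharged by the earlier lemmas, is the correctness and the elementary size bound of the grid-encoding construction: the whole subtlety of the \tower-hardness proof is concentrated in showing that a node can be forced to have $\tow(k,n)$ children by a formula of size exponential in $k+n$, which is exactly the content of Lemmas~\ref{lemma:typeone}--\ref{lemma:kd-comparison} and Lemma~\ref{lemma:typeN}, together with the size remark following Lemma~\ref{lemma:typeN}. Given those, the present theorem is just the assembly of the two bounds, and there is nothing further to verify beyond checking that the reduction time is dominated by the formula size $2^{\mathcal{O}(k+n)}$, which is elementary.
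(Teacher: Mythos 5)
Your proposal is correct and follows exactly the paper's route: the \tower upper bound is inherited from \satproblem{\tQCTL{}}~\cite{LaroussinieM14}, and the lower bound is the uniform (in $k$) reduction from $\mathtt{Tiling}_k$ given by Lemma~\ref{lemma:correctness-tiling-to-ex}, with the size bound $2^{\mathcal{O}(k+n)}$ on $\aformula_\cP$ guaranteeing that the reduction is elementary in the sense of~\cite{Schmitz16}. One small inaccuracy: the components $\aformula_{\cH}$, $\aformula_{\cV}$ and $\aformula_{\rm init}$ are not polynomial in $k+n$, since they contain occurrences of $\succk{k+1}{\cdot}{\cdot}$, $\eqk{k+1}{\cdot}{\cdot}$ and $\lsr{k}{\cdot}$, which are themselves of size $2^{\mathcal{O}(k+n)}$; this does not affect the conclusion, as the total size of $\aformula_\cP$ remains $2^{\mathcal{O}(k+n)}$ and hence elementary.
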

Theorem~\ref{theorem:tQCTLEX-tower} significantly 
improves  the \tower lower bound from~\cite[Cor. 5.6]{LaroussinieM14} 
by considering as only temporal operator, the (local) modality $\EX$. 
\tower-hardness can be also obtained with arbitrary countable trees.
In Section~\ref{section-collection} below, we show that
this entails more \tower-hardness results for other fragments of \tQCTLEX{} and for modal logics
with propositional quantification under appropriate tree semantics.
\section{A harvest of \tower-complete modal and temporal logics}\label{section-collection}

We capitalise on the \tower-hardness of the satisfiability problem for \tQCTLEX{}, 
by showing \tower-hardness of other fragments of \tQCTL{} that involve only $\EF$ or its strict
variant $\EX \EF$ (Section~\ref{section-qctlef}). 
\tower-hardness is obtained by reduction from \satproblem{\tQCTLEX{}} by introducing propositional variables that enforce layers from the root in the tree model and therefore this allows us to simulate $\EX$. 
In Section~\ref{section-modal-logics}, we consider well-known modal logics that are complete 
for classes of tree-like Kripke structures, and we show that their extension with propositional
quantification for such classes of tree-like Kripke structures 
is decidable in \tower, but more importantly \tower-hard. 
Some of such classes involve finite trees and therefore, we also take the opportunity to
study \ftQCTLEX{} and \ftQCTLEFplus{} that happen, for instance, to be closely related
to the  modal logics \QKt and \QGLt, respectively. 

\subsection{The satisfiability problems for \texorpdfstring{\tQCTLEF{}}{tQCTLEF} and \texorpdfstring{\tQCTLEFplus{}}{tQCTLEFplus} are \tower-hard!}\label{section-qctlef}

The fragment \tQCTLEF{} of \tQCTL{} is defined according to the following grammar
\[
\aformula ::= \avarprop \ | \ \neg \aformula \ 
| \  \aformula \wedge \aformula \ 
| \ \EF \aformula \ 
|  \ \ \exists{\avarprop} \; \aformula.
\]
We recall the standard semantics for $\EF$-formulae:
$\atreemodel, \anode \models \EF \aformula$ $\equivdef$
there is $j \geq 0$ such that $\anode E^j \anode'$ and  $\atreemodel, \anode' \models \aformula$, and 
as usual, $\AG \aformula \egdef \neg \EF \neg \aformula$. 

In order to show that \satproblem{\tQCTLEF{}} is \tower-hard, 
we design a logarithmic-space many-one reduction from \satproblem{\tQCTLEX{}}.
A more sophisticated analysis is also possible to establish \tower-hardness for even smaller fragments, see
the recent work~\cite{Mansutti20bis}.
   
Let $\aformula$ be a formula in \tQCTLEX{} with modal depth $\md{\aformula} = k$. 
Without loss of generality, we assume that $\aformula$ may contain occurrences 
of $\EX$ and no occurrences of $\AX$. Let us define 
the formula $\aformula' = \textit{trans}(k,\aformula) \wedge {\rm shape}(k)$
in \tQCTLEF{},  where the formula ${\rm shape}(k)$ enforces a discipline for layers (explained below) 
and $ \textit{trans}(k,\aformula)$ admits a recursive definition, by relativising the occurrences of $\EX$.
We consider the set of propositional variables 
$\asetbis_{k} = \set{\layer{-1}, \layer{0}, \ldots, \layer{k}}$
with the intended meaning that a node satisfying $\layer{i}$ is of ``layer $i$'', the root node being of layer~$k$.
Indeed, there is a need for such propositional variables, as unlike with 
the formulae in \tQCTLEX{}, we have to enforce that moving with $\EF$ leads
to a lower layer. 

The formula ${\rm shape}(k)$ is defined as the conjunction of the following formulae.
\begin{itemize}
\itemsep 0 cm 
    \item Every node satisfies exactly one propositional variable from $\asetbis_k$ (layer unicity):
         \[
         \AG \left( (\layer{-1} \vee \layer{0} \vee \cdots \vee \layer{k} ) 
         \wedge \bigwedge_{-1 \leq i\neq j \leq k} \neg (\layer{i} \wedge \layer{j}) \right) 
         \] 
    \item When a node satisfies $\layer{i}$ with $-1 \leq i \leq k$, none of its 
        descendants satisfies $\layer{j}$ with $j > i$ (monotonicity of layer numbers):
        \[
        \bigwedge_{-1 \leq i \leq k} \AG(\layer{i} \rightarrow 
        \AG (\layer{-1} \vee \layer{0} \vee \cdots \vee \layer{i}))
        \]
    \item When a node satisfies $\layer{i}$ with $0 \leq i \leq k$,  there is a descendant
    satisfying $\layer{i-1}$ (weak progress):
    \[
    \bigwedge_{0 \leq i \leq k} \AG(\layer{i} \rightarrow \EF \ \layer{i-1})
    \]

    \item When a node satisfies $\layer{i}$ with $0 \leq i \leq k$, it has no (strict) descendant satisfying $\layer{i}$ (no stuttering). This type of constraints does not apply to $\layer{-1}$.
        \[
        \bigwedge_{0 \leq i \leq k}
        \AG(\layer{i} \rightarrow \neg \exists{\avarprop} \ (\avarprop  \wedge \EF(\layer{i} \wedge \neg \avarprop)))\]
         
    \item The root node is at layer $k$: $\layer{k}$. 
\end{itemize}

\noindent A tree model $\atreemodel = \triple{V}{E}{l}$ for \tQCTL{} with root $\aroot$
is \defstyle{$k$-layered} iff the conditions below~hold:
\begin{enumerate}
\itemsep 0 cm 
\item[(a)] For every node $\anode \in V$, $\card{l(\anode) \cap \asetbis_{k}} = 1$. 
\item[(b)] For all $\anode \in V$ such that $\layer{j} \in l(\anode)$ for some $j \in \interval{-1}{k}$,
           \begin{itemize}
           \item if $j \geq 0$, then there is $\anode'$ such that $\anode E \anode'$ and $\layer{j-1} \in l(\anode')$ and,
           \item for all $\anode'$ such that $\anode E^+ \anode'$ and $\layer{j'} \in l(\anode')$, we have $j' \leq j$. 
           \end{itemize}
\item[(c)] For all $j \in \interval{0}{k}$, there are no distinct nodes $\anode$  and $\anode'$
           such that $\anode E^+ \anode'$ and $\layer{j} \in l(\anode) \cap l(\anode')$. 
\item[(d)] $\layer{k} \in l(\aroot)$, where $\aroot$ is the root of the tree model.
\end{enumerate}

This means that the only propositional variable from $\asetbis_k$ 
satisfied by a node reachable in $j \in \interval{1}{k}$ steps from $\aroot$ is 
$\layer{m}$ for some $m \leq k-j$, and the only propositional variable from $\asetbis_k$ 
satisfied by a node reachable in strictly more than $k$ steps
from $\aroot$  is $\layer{-1}$. Moreover, once $\layer{-1}$ holds true, 
it holds for all its descendants.
Actually, the formula ${\rm shape}(k)$ characterises $k$-layered structures. 

\begin{lem}\label{lemma:k-layered}
Let $\atreemodel = \triple{V}{E}{l}$ be a tree model for \tQCTL{} with the root node $\aroot$.
We have $\atreemodel, \aroot \models {\rm shape}(k)$ holds if and only if $\atreemodel$ is $k$-layered.
\end{lem}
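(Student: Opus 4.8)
The plan is to prove the equivalence by two implications, each handled by matching the five conjuncts of ${\rm shape}(k)$ against conditions (a)--(d) of the definition of a $k$-layered tree. The proof is a routine semantic unfolding, so I will organise it so that each conjunct is clearly paired with the property it enforces; the only mildly delicate point is the \emph{no stuttering} conjunct, which is phrased using a propositional quantifier rather than directly, so that is where I expect to spend the most care.

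First I would prove the right-to-left direction. Assume $\atreemodel$ is $k$-layered, i.e. (a)--(d) hold, and check that $\atreemodel,\aroot\models{\rm shape}(k)$. Conjunct~1 (layer unicity) is immediate from (a): a node satisfies at least one $\layer{i}$ and, since the set has size exactly one, no two of them. Conjunct~5 ($\layer{k}$ at the root) is exactly (d). Conjunct~2 (monotonicity) is the second bullet of (b): if $\anode$ has layer $j$ and $\anode E^{*}\anode'$ with layer $j'$, then either $\anode'=\anode$ (so $j'=j\leq j$) or $\anode E^{+}\anode'$, whence $j'\leq j$, so $\anode'$ satisfies one of $\layer{-1},\dots,\layer{i}$ with $i=j$. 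Conjunct~3 (weak progress) is the first bullet of (b): a node of layer $j\geq 0$ has an $E$-successor, hence an $E$-descendant, of layer $j-1$, so $\EF\,\layer{j-1}$ holds there. Conjunct~4 (no stuttering) is (c): suppose $\anode$ has layer $j\in\interval{0}{k}$; we must show $\atreemodel,\anode\not\models\exists\avarprop\,(\avarprop\wedge\EF(\layer{j}\wedge\neg\avarprop))$. For any $X$-variant $\atreemodel'$ of $\atreemodel$ with $X=\AP\setminus\{\avarprop\}$ such that $\atreemodel',\anode\models\avarprop$, take any $\anode'$ with $\anode E^{+}\anode'$; by (c), $\layer{j}\notin l(\anode')$, hence $\layer{j}\notin l_{\atreemodel'}(\anode')$ (the layer variables lie in $X$ and are unchanged), so $\atreemodel',\anode'\not\models\layer{j}\wedge\neg\avarprop$; since $\anode$ itself satisfies $\avarprop$ under $\atreemodel'$, the case $\anode'=\anode$ also fails; therefore $\atreemodel',\anode\not\models\EF(\layer{j}\wedge\neg\avarprop)$, and as $\atreemodel'$ was arbitrary the existential fails.

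Then I would prove the left-to-right direction. Assume $\atreemodel,\aroot\models{\rm shape}(k)$. Conjunct~1 gives, using $\AG$ and the child-relation reachability, that every node satisfies exactly one $\layer{i}$, i.e. (a); conjunct~5 gives (d). For (c): fix $j\in\interval{0}{k}$ and suppose toward a contradiction that there are distinct $\anode,\anode'$ with $\anode E^{+}\anode'$ and $\layer{j}\in l(\anode)\cap l(\anode')$. Define the $\{\avarprop\}^{c}$-variant $\atreemodel'$ obtained from $\atreemodel$ by setting $\avarprop$ true exactly at $\anode$ (where $\avarprop$ is the quantified variable, fresh); then $\atreemodel',\anode\models\avarprop$, and $\atreemodel',\anode'\models\layer{j}\wedge\neg\avarprop$ since $\anode'\neq\anode$, so $\atreemodel',\anode\models\EF(\layer{j}\wedge\neg\avarprop)$, contradicting conjunct~4 at $\anode$ (which is reachable from $\aroot$, so $\AG$ applies). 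Hence (c). For (b): the second bullet follows from conjunct~2 applied at $\anode$ together with (a) (exactly one layer variable holds at each descendant, and it must be among $\layer{-1},\dots,\layer{j}$), using that $E^{*}$-reachability decomposes as $\anode'=\anode$ or $\anode E^{+}\anode'$ and noting $E^{+}\subseteq(\text{child-reachability of conjunct }2)$; the first bullet follows from conjunct~3: a node of layer $j\geq 0$ satisfies $\EF\,\layer{j-1}$, so some $E$-descendant $\anode''$ has layer $j-1$; on the path from $\anode$ to $\anode''$ the layer numbers are weakly decreasing (conjunct~2) and strictly less than $j$ at every proper descendant unless... actually the direct $E$-successor claim needs a short argument: by conjunct~2 every proper descendant has layer $\leq j$, and by (c)/no-stuttering no proper descendant has layer $j$, so every proper descendant of $\anode$ has layer $\leq j-1$; combined with $\EF\,\layer{j-1}$ and weak progress iterated, the successor of $\anode$ on the witnessing path must itself have layer exactly $j-1$ (it cannot have a smaller layer without skipping $j-1$, which would contradict weak progress at that successor producing only layers $<j-1$ below it while $\layer{j-1}$ must still be reachable)---here I would give the clean inductive argument: by downward induction on $j$, using conjuncts~2,3,4 and (a), one shows every node of layer $j\geq 0$ has an $E$-successor of layer $j-1$. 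The hard part will be packaging this last little interplay of conjuncts~2, 3 and 4 cleanly; everything else is bookkeeping. This completes both directions and hence the lemma.

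\begin{proof}
See the discussion above; the argument is a direct verification against the definition of $k$-layered tree, treating each of the five conjuncts of ${\rm shape}(k)$ in turn, with the propositional-quantifier conjunct handled by exhibiting, in one direction, a witnessing variant assigning $\avarprop$ to the offending node, and in the other direction, arguing that every variant keeps the layer variables fixed.
\end{proof}
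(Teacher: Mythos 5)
Your proof is correct and follows essentially the same route as the paper's: a conjunct-by-conjunct match against conditions (a)--(d), with the propositional-quantifier conjunct handled in both directions by reasoning over $\{\avarprop\}$-variants (excluding a witness when (c) holds, and constructing one that marks only the offending ancestor when it fails). The one step you flag as delicate---the first bullet of (b)---closes exactly as you suspect and needs no induction: the child on the path witnessing $\EF\,\layer{j-1}$ has layer at most $j-1$ by monotonicity at $\anode$ together with (c), and at least $j-1$ by monotonicity applied at that child, so it has layer exactly $j-1$.
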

\begin{proof}
By careful inspection of the semantics, \cf Appendix~\ref{appendix:proof-of:lemma:k-layered}.
\end{proof}

\noindent To define $\textit{trans}(k,\aformula)$, we define 
 inductively 
$\textit{trans}(i,\aformulabis)$ where
$\aformulabis$ is a subformula of $\aformula$ and $\md{\aformulabis} \leq i$.
\begin{itemize}
\itemsep 0 cm 
\item $ \textit{trans}(i, \avarprop) \egdef \avarprop$ for all propositional variables $\avarprop$, 
\item $ \textit{trans}$ is homomorphic for Boolean connectives and  $ \textit{trans}(i, \exists \ \avarprop \ \aformulabis) \egdef \exists \ \avarprop \ \textit{trans}(i, \aformulabis)$,
\item $ \textit{trans}(i, \EX \aformulabis) \egdef \EF \left(\layer{i{-}1} \wedge  \textit{trans}(i{-}1, \aformulabis)\right)$.
\end{itemize}
Note that $ \textit{trans}(k,\aformula)$ has no occurrence of $\layer{-1}$ 
since $\md{\aformula}=k$ and that translating an $\EX$-formula decreases 
the index of the layer by exactly one. 
The correctness of the reduction can be now stated as follows. 

\begin{lem}
\label{lemma:correctness-tQCTLEF}
A formula $\aformula$ is satisfiable for \tQCTLEX{} iff
$\textit{trans}(k,\aformula) \wedge {\rm shape}(k)$
is satisfiable for \tQCTLEF{}. 
\end{lem}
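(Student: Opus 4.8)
The plan is to establish both implications through a single inductive claim: for every subformula $\aformulabis$ of $\aformula$ with $\md{\aformulabis}\leq i$ and every node $\anode$ carrying $\layer{i}$, the formulas $\aformulabis$ and $\textit{trans}(i,\aformulabis)$ have the same truth value at $\anode$ (in the two models that will be related). The work is driven by one structural fact: \emph{in a $k$-layered tree model, if $\anode$ carries $\layer{i}$ with $1\leq i\leq k$, then every strict descendant of $\anode$ carrying $\layer{i-1}$ is already a child of $\anode$.} This follows from clauses (b) and (c) of $k$-layeredness — along any path below $\anode$ the layer index strictly decreases while it remains nonnegative, so the first node below $\anode$ labelled $\layer{i-1}$ is reached in one step. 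We also use repeatedly that $\layer{-1},\dots,\layer{k}$ are fresh, hence distinct from every variable of $\aformula$ (so we may assume $\aformula$ does not mention them), and that therefore any $\avarprop$-variant (in the $\approx_{\AP\setminus\{\avarprop\}}$ sense) of a $k$-layered model with $\avarprop\neq\layer{j}$ is again $k$-layered.

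\textbf{From \tQCTLEX{} to \tQCTLEF{}.} Assume $\atreemodel,\aroot\models\aformula$ in \tQCTLEX{}, with $\aformula$ using $\EX$ only (as stated, this is WLOG since $\AX$ is definable). Let $\atreemodel'$ be $\atreemodel$ enriched by the labelling that puts $\layer{k-d}$ on every node at depth $d\leq k$ and $\layer{-1}$ on every deeper node. Since $\atreemodel$ is total with infinite maximal branches, a routine check against the definition of $k$-layered models shows $\atreemodel'$ is $k$-layered, so $\atreemodel',\aroot\models{\rm shape}(k)$ by Lemma~\ref{lemma:k-layered}. In $\atreemodel'$ \emph{every} child of a layer-$i$ node is a layer-$(i-1)$ node, so $\EF(\layer{i-1}\wedge\cdot)$ at a layer-$i$ node collapses to ``some child satisfies $\cdot$''. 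With this, the inductive claim above is proved by induction on $\aformulabis$: the atomic and Boolean cases are immediate; the $\EX$-case uses the collapse together with the induction hypothesis at layer $i-1$; the $\exists\,\avarprop$-case uses the bijection between $\avarprop$-variants of $\atreemodel'$ and $\avarprop$-variants of $\atreemodel$ (add/forget the depth-determined layer labelling). Instantiating at $\aroot$, $i=k$, gives $\atreemodel',\aroot\models\textit{trans}(k,\aformula)$.

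\textbf{From \tQCTLEF{} to \tQCTLEX{}.} Conversely, let $\atreemodel',\aroot\models\textit{trans}(k,\aformula)\wedge{\rm shape}(k)$; by Lemma~\ref{lemma:k-layered}, $\atreemodel'$ is $k$-layered. Here $\atreemodel'$ itself need not satisfy $\aformula$ (a layer-$i$ node may have children at layers $<i-1$ that $\textit{trans}$ never inspects), so we \emph{prune}: let $\atreemodel$ be the substructure of $\atreemodel'$ obtained by keeping, from a node carrying $\layer{i}$ with $i\geq1$, only its children carrying $\layer{i-1}$, and keeping all children of nodes carrying $\layer{i}$ with $i\leq 0$. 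By the weak-progress clause of $k$-layeredness and totality of $\atreemodel'$, every retained node still has at least one child, so $\atreemodel$ is a total, finitely branching tree with infinite maximal branches — a legitimate \tQCTLEX{} model — and its labelling agrees with that of $\atreemodel'$. One proves, as before but with $\atreemodel$ and $\atreemodel'$ in the roles of the two models, that for every layer-$i$ node $\anode$ of $\atreemodel$ and every subformula $\aformulabis$ of $\aformula$ with $\md{\aformulabis}\leq i$, $\atreemodel,\anode\models\aformulabis$ iff $\atreemodel',\anode\models\textit{trans}(i,\aformulabis)$; the $\EX$-case now invokes the structural fact to identify the layer-$(i-1)$ descendants of a layer-$i$ node of $\atreemodel'$ with exactly the children kept in $\atreemodel$, and the $\exists\,\avarprop$-case uses that pruning commutes with $\avarprop$-variation and that every $\avarprop$-variant of $\atreemodel$ lifts to a ($k$-layered) $\avarprop$-variant of $\atreemodel'$. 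Taking $\anode=\aroot$, $i=k$, gives $\atreemodel,\aroot\models\aformula$.

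The main obstacle is precisely the backward direction: one must resist reusing $\atreemodel'$ directly and instead prune it, then verify that the pruned structure is still a total tree with infinite branches and that both the $\EX$-clause and the $\exists$-clause of the induction survive the passage between $\atreemodel$ and $\atreemodel'$ — the $\exists$-clause in particular relies on the surjection, induced by pruning, from $k$-layered $\avarprop$-variants of $\atreemodel'$ onto $\avarprop$-variants of $\atreemodel$. Everything else is a routine inspection of the semantics, deferred to the appendix.
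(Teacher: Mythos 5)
Your proposal is correct and follows essentially the same route as the paper: the forward direction enriches the \tQCTLEX{} model with the depth-determined layer labelling and the backward direction prunes the $k$-layered \tQCTLEF{} model to keep only the layer-$(i{-}1)$ children of each layer-$i$ node, with the same inductive equivalence between $\aformulabis$ and $\textit{trans}(i,\aformulabis)$ and the same key observation that in a $k$-layered tree a layer-$(i{-}1)$ descendant of a layer-$i$ node must be a child. No gaps.
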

\begin{proof}
By induction. Consult Appendix~\ref{appendix:proof:of:lemma:correctness-tQCTLEF}.
\end{proof}

Hence we conclude yet another important result. 

\begin{thm}\label{theorem:tQCTLEF}
The satisfiability problem for \tQCTLEF{} is \tower-complete.
\end{thm}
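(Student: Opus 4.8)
The plan is to establish the two matching bounds and then glue them, since almost all of the work has already been done in the preceding sections.

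\emph{Upper bound.} Every \tQCTLEF{} formula is syntactically a \tQCTL{} formula (the operators $\EF$ and $\AG$ are abbreviations that already belong to the full language), so the identity map is a trivial logarithmic-space reduction from \satproblem{\tQCTLEF{}} to \satproblem{\tQCTL{}}. Since \satproblem{\tQCTL{}} is in \tower~\cite{LaroussinieM14} (alternatively, one may route through the logspace reduction into \satproblem{\tQCTL{}} mentioned in Section~\ref{section-preliminaries}), membership of \satproblem{\tQCTLEF{}} in \tower follows.

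\emph{Lower bound.} Here I would invoke the reduction built just above. Given a \tQCTLEX{} formula $\aformula$ with $\md{\aformula}=k$ — and we may first push negations inward so that $\AX$ does not occur — the reduction outputs $\aformula' = \textit{trans}(k,\aformula) \wedge {\rm shape}(k)$. Two things must be checked. First, that this map is logarithmic-space computable: ${\rm shape}(k)$ is a fixed pattern of conjunctions over the layer variables $\layer{-1},\dots,\layer{k}$, of size polynomial in $k$ and hence in $\length{\aformula}$; and $\textit{trans}(\cdot,\cdot)$ is a purely structural recursion that is homomorphic on Boolean connectives and quantifiers and rewrites each $\EX\aformulabis$ occurring at level $i$ into $\EF(\layer{i-1}\wedge\textit{trans}(i-1,\aformulabis))$, so $\aformula'$ can be produced by a logspace transducer. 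Second, correctness, which is precisely Lemma~\ref{lemma:correctness-tQCTLEF}: $\aformula$ is \tQCTLEX{}-satisfiable iff $\aformula'$ is \tQCTLEF{}-satisfiable.

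Combining these: \satproblem{\tQCTLEX{}} is \tower-hard by Theorem~\ref{theorem:tQCTLEX-tower}, and \tower is closed under elementary (in particular logarithmic-space) reductions, so \satproblem{\tQCTLEF{}} is \tower-hard; together with the upper bound this yields \tower-completeness. The genuinely delicate step is not in this theorem but inside Lemma~\ref{lemma:correctness-tQCTLEF}, and specifically in its ``$\Leftarrow$'' direction: from a $k$-layered model of $\textit{trans}(k,\aformula)$ one extracts a model of $\aformula$ by contracting each maximal chain between two consecutive layers into a single $\EX$-edge, and one must verify — using the no-stuttering, weak-progress and monotonicity clauses of ${\rm shape}(k)$ characterised in Lemma~\ref{lemma:k-layered} — that this contraction is well-defined, that it matches the $\EF$-to-$\EX$ rewriting layer by layer, and that it preserves the truth value of every subformula by a structural induction. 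Since that lemma is already available, the present theorem follows immediately.
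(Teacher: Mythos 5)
Your proposal is correct and follows essentially the same route as the paper: the \tower upper bound is inherited from \satproblem{\tQCTL{}}~\cite{LaroussinieM14}, and the lower bound combines Theorem~\ref{theorem:tQCTLEX-tower} with the logspace reduction $\aformula \mapsto \textit{trans}(k,\aformula) \wedge {\rm shape}(k)$ whose correctness is Lemma~\ref{lemma:correctness-tQCTLEF}. Your remarks on logspace computability and on where the real work sits (inside Lemma~\ref{lemma:correctness-tQCTLEF}, whose converse direction uses the $k$-layered structure to force $\EF$-witnesses to be children) are accurate and consistent with the paper.
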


The \tower upper bound  is established for 
the full logic \tQCTL{} in~\cite{LaroussinieM14} 
and in particular for~\tQCTLEF{}. Theorem~\ref{theorem:tQCTLEF} also admits a variant in which we only allow
to move to proper descendants. It amounts to replacing $\EF$ by $\EX \EF$ (treated here as a single modality) in  $\tQCTLEF{}$, leading to
the variant  \tQCTLEFplus{}, with formulae obtained~from
\[
\aformula ::= \avarprop \ | \ \neg \aformula \ 
| \  \aformula \wedge \aformula \ 
| \ \EX \EF \aformula \ 
|  \ \ \exists{\avarprop} \; \aformula
\]
As usual, we  write $\AX \AG \; \aformulabis$ as an abbreviation of $\neg \EX \EF \neg \aformula$.
\begin{thm}\label{theorem:tQCTLEFplus}
The satisfiability problem for \tQCTLEFplus{} is \tower-complete.
\end{thm}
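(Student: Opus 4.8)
The \tower upper bound is inherited from the full logic. Since $\EX\EF\aformula$ is definable in \QCTL as $\EX\E(\top\U\aformula)$, the logic \tQCTLEFplus{} is a syntactic fragment of \tQCTL{} interpreted over the very same class of tree models (finite-branching trees all of whose maximal branches are infinite). Hence \satproblem{\tQCTLEFplus{}} reduces in logarithmic space to \satproblem{\tQCTL{}}, which lies in \tower by~\cite{LaroussinieM14}. So the interesting part is \tower-hardness, and the plan is to obtain it by adapting almost verbatim the logarithmic-space reduction from \satproblem{\tQCTLEX{}} developed in Section~\ref{section-qctlef}.

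The only place where that reduction uses the reflexivity of $\EF$ (equivalently, where $\AG$ is allowed to quantify over the current node) is in the conjuncts of the formula ${\rm shape}(k)$; the translation $\textit{trans}(\cdot,\cdot)$ is harmless, since a strict-successor step $\EX$ in the source is simulated by a strict-descendant step. Concretely, I would let ${\rm shape}^+(k)$ be the \tQCTLEFplus{}-formula obtained from ${\rm shape}(k)$ by repeatedly rewriting, innermost first, every occurrence of $\EF\aformulabis$ as $\aformulabis \vee \EX\EF\aformulabis$ (equivalently every $\AG\aformulabis$ as $\aformulabis \wedge \AX\AG\aformulabis$); since $\EF$ is nested only to constant depth in ${\rm shape}(k)$, this blows up the size only by a constant factor, and on every tree model (indeed every Kripke structure) ${\rm shape}^+(k)$ is logically equivalent to ${\rm shape}(k)$. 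I would then let $\textit{trans}^+$ be defined exactly as $\textit{trans}$, except that $\textit{trans}^+(i,\EX\aformulabis) \egdef \EX\EF\big(\layer{i-1} \wedge \textit{trans}^+(i-1,\aformulabis)\big)$. The reduction maps a formula $\aformula$ of modal depth $k$ (without loss of generality containing no occurrence of $\AX$) to $\textit{trans}^+(k,\aformula) \wedge {\rm shape}^+(k)$, and is clearly computable in logarithmic space and uniform in $k$.

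For correctness I would first re-establish the analogue of Lemma~\ref{lemma:k-layered}: $\atreemodel,\aroot \models {\rm shape}^+(k)$ iff $\atreemodel$ is $k$-layered, which is immediate from the logical equivalence of ${\rm shape}^+(k)$ and ${\rm shape}(k)$. Then I would reprove the analogue of Lemma~\ref{lemma:correctness-tQCTLEF}: in a $k$-layered tree, any node at layer $i-1$ that is a descendant of a node $\anode$ at layer $i$ is a \emph{strict} descendant of $\anode$ (layer unicity forbids $\anode$ itself from being at layer $i-1$), so $\EX\EF(\layer{i-1}\wedge\cdot)$ and $\EF(\layer{i-1}\wedge\cdot)$ coincide at $\anode$; consequently the induction on subformulae from the original proof goes through unchanged, yielding that $\aformula$ is satisfiable for \tQCTLEX{} iff $\textit{trans}^+(k,\aformula)\wedge{\rm shape}^+(k)$ is satisfiable for \tQCTLEFplus{}. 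Combining this with \tower-hardness of \satproblem{\tQCTLEX{}} (Theorem~\ref{theorem:tQCTLEX-tower}) and the uniformity of the construction, \tower-hardness of \satproblem{\tQCTLEFplus{}} follows, and with the upper bound above we conclude \tower-completeness.

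I expect the main obstacle to be bookkeeping rather than mathematics. One must check that rewriting $\EF$ into $\aformulabis \vee \EX\EF\aformulabis$ inside the nested conjuncts of ${\rm shape}(k)$ — in particular the ``no stuttering'' conjunct, which existentially quantifies over a fresh propositional variable $p$ — indeed produces an equivalent formula, and that the choice of $p$ true exactly at the evaluated node still rules out a layer-$i$ \emph{strict} descendant (so that the irreflexive reading of $\AX\AG$ suffices there). One must also keep the fresh variables introduced in ${\rm shape}^+(k)$ disjoint from those of $\aformula$ and from $\asetbis_k$. All of this is routine, and I would relegate the detailed verification to the appendix, in parallel with Appendix~\ref{appendix:proof:of:lemma:correctness-tQCTLEF}.
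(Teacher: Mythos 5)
Your proof is correct, but it takes a genuinely different route from the paper's. The paper obtains \tower-hardness by reducing from \satproblem{\tQCTLEF{}} (Theorem~\ref{theorem:tQCTLEF}): it observes that $\EF \aformulabis$ is logically equivalent to $\aformulabis \vee \EX\EF\aformulabis$ and applies this rewriting to the \emph{whole} input formula, accepting a possibly exponential blow-up (the $\EF$-nesting depth of a general \tQCTLEF{} formula is unbounded), which is harmless because \tower-hardness is defined with respect to elementary reductions. You instead go back one step and redo the reduction of Section~\ref{section-qctlef} from \satproblem{\tQCTLEX{}} (Theorem~\ref{theorem:tQCTLEX-tower}) with $\EX\EF$ in place of $\EF$, exploiting two facts: the rewriting blows up ${\rm shape}(k)$ only by a constant factor since its $\EF$-nesting depth is $2$, and in a $k$-layered tree a node satisfying $\layer{i-1}$ that is reachable from a node satisfying $\layer{i}$ is automatically a \emph{strict} descendant (layer unicity forbids the node itself from being at layer $i-1$), so $\EF(\layer{i-1}\wedge\cdot)$ and $\EX\EF(\layer{i-1}\wedge\cdot)$ coincide exactly where the translation is evaluated. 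What your approach buys is a logarithmic-space reduction in place of an exponential one; what it costs is re-verifying the analogues of Lemma~\ref{lemma:k-layered} and Lemma~\ref{lemma:correctness-tQCTLEF}, which, as you note, is routine. Both arguments are sound, both share the same upper-bound argument, and both ultimately rest on the identity $\EF\aformulabis \leftrightarrow \aformulabis \vee \EX\EF\aformulabis$.
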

As above, the \tower upper bound for \satproblem{\tQCTLEFplus{}} is inherited from \satproblem{\tQCTL{}}~\cite{LaroussinieM14}. 
Note that in \tQCTL{}, the formula $\EF \avarprop$ is logically equivalent to $\avarprop \vee \EX \EF \avarprop$. Thus, we get a (possibly exponential, which is sufficient for us) reduction from \satproblem{\tQCTLEF{}} to \satproblem{\tQCTLEFplus{}}, 
whence the \tower-hardness of~\satproblem{\tQCTLEFplus{}}.  Recall that \tower-hardness is defined with respect
to elementary reductions and therefore an exponential-time reduction is fine to establish \tower-hardness,
see e.g.~\cite{Schmitz16} and Section~\ref{sec:complexity}.  

\subsection{Modal logics with propositional quantification on trees}
\label{section-modal-logics}

Numerous well-known  modal logics are complete (a.k.a. determined) for classes of tree-like structures. 
A modal logic $\alogic$ (defined from the Hilbert-style system $\mathcal{H}\alogic$)
is complete for a class of Kripke structures $\mathcal{C}$ iff
the theoremhood of $\aformula$ in $\mathcal{H}\alogic$ is equivalent
to the validity of $\aformula$ in $\mathcal{C}$ (i.e. for all $\kripkeK \in \mathcal{C}$,
for all $w$, we have $\kripkeK, w \models \aformula$). 
For instance, the  (propositional) modal logic \K is complete for the class of finite 
trees~\cite{Segerberg71,blackburn_rijke_venema_2001}. 
It is worth noting that a given modal logic can be complete for different classes of Kripke models
(\eg \K is complete for the class of all the Kripke models, but also complete for the class of finite Kripke models)
and their extension to propositional quantification may lead to distinct logics.
Typically, \K with propositional
quantification under the structure semantics is undecidable~\cite{Fine70} whereas it is shown below to be 
\tower-complete under the finite tree semantics. 

Below, for the propositional modal logics  $\alogic$ in \K,  \KD, \GL, \Kfour and \Sfour we define
an extension $\mathsf{Q}\alogic^t$ with propositional quantification under a class of tree-like models
that is complete  for the logic $\alogic$. 

In order to avoid too many notations, the modalities for each logic $\mathsf{Q}\alogic^t$
are $\EX$ and $\AX$ (instead of the more standard modal operators $\Diamond$ and $\Box$) 
and therefore $\mathsf{Q}\alogic^t$ formulae are built from the grammar below:
$
\aformula ::= 
\avarprop \mid  \neg \aformula \mid \aformula \wedge \aformula \mid
\EX \aformula \mid \AX \aformula 
\mid \exists{\avarprop} \  \aformula
$.
\begin{itemize}
\item The propositional modal logic \K is complete for the class of finite trees and we define \QKt as the modal logic
with propositional quantification over the class of finite trees. 
\item The propositional modal logic \KD (\K with seriality, a.k.a. totality) is known to be complete for the class
of finite-branching trees for which all the maximal branches are infinite.
Indeed, \KD is complete for the class of finite total Kripke models. 
Therefore by using the unfolding
construction, completeness applies also for 
 the class
of finite-branching trees for which all the maximal branches are infinite, \ie  the models for \tQCTL{}. 
Let \QKDt be the modal logic
with propositional quantification over the class 
of finite-branching trees for which all the maximal branches are infinite.
The satisfiability problem for \QKDt  is exactly the problem for \tQCTLEX{}.  
\item The modal logic \GL is known to be complete for the class of finite transitive trees
(\GL is complete with respect to finite irreflexive transitive Kripke models~\cite{Smorynski85}), \ie the class
of Kripke structures $\triple{V}{E^+}{l}$ such that $\triple{V}{E}{l}$ is a finite tree model, see \eg~\cite{blackburn_rijke_venema_2001}. 
Let \QGLt be the modal logic with propositional quantification over the
class of finite transitive trees, which is precisely \ftQCTLEFplus{}.
It is worth noting that adding propositional quantification to \GL is studied in~\cite{Artemov&Beklemishev93},
where a fragment is shown to be decidable by translation into the weak monadic second-order logic of one successor 
WS1S~\cite{Buchi60}.  
\item The modal logic \Kfour is complete for the class of Kripke structures
$\triple{V}{E^+}{l}$ such that $\triple{V}{E}{l}$ is a finite-branching tree 
model (some branches may be infinite, some others not).
Let \QKfourt be the modal logic with propositional quantification over the
class of finite-branching trees. 
\item The modal logic \Sfour is complete for the class of finite Kripke structures such that
the accessibility relation is reflexive and transitive, and therefore complete
for the class of structures
$\triple{V}{E^*}{l}$ such that $\triple{V}{E}{l}$ is a finite-branching tree 
model in which all the  branches are infinite (by unfolding). 
Let \QSfourt be the modal logic with propositional quantification over the
class of finite-branching trees in which all the  branches are infinite (precisely 
the class of models for \tQCTL{}). The satisfiability problem for \QSfourt  happens to be exactly the problem 
for \tQCTLEF{}, modulo the fact that $\EX$ in \QSfourt corresponds to $\EF$ in \tQCTLEF{}.
\end{itemize}

In this section, we show that the satisfiability problem for
the logics \QKt, \QKDt, \QGLt, \QKfourt and \QSfourt
and whose models are tree-like Kripke structures is \tower-complete.
For instance, \QKt corresponds to the modal logic \K interpreted on finite trees
with propositional quantification, which is precisely \ftQCTLEX{}, \ie \ftQCTL{}
restricted to the~$\EX$ operator.

\begin{thm} 
\label{theorem:K}
\satproblem{\QKt} is \tower-complete.
\end{thm}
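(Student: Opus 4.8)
The plan is to obtain both bounds by reducing to results already proved in the paper, exploiting the identification $\QKt = \ftQCTLEX{}$ noted above. For membership in \tower I would reduce \satproblem{\ftQCTLEX{}} to \satproblem{\tQCTL{}}, whose \tower upper bound is already known; for \tower-hardness I would reduce \satproblem{\tQCTLEX{}} (shown \tower-hard in Theorem~\ref{theorem:tQCTLEX-tower}) to \satproblem{\ftQCTLEX{}}. The only real content of either reduction is the passage between the \emph{finite-tree} semantics of \QKt and the \emph{total} (infinite-path) tree semantics of \tQCTL{}, which I would handle uniformly by relativising the modalities to a fresh propositional variable $d$ marking ``padding'' nodes: let $t$ be the translation that is homomorphic on Booleans and on $\exists$, and sends $\EX\aformulabis$ to $\EX(\neg d \wedge t(\aformulabis))$ and $\AX\aformulabis$ to $\AX(\neg d \rightarrow t(\aformulabis))$.

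For the upper bound, given $\aformula \in \ftQCTLEX{}$ with $\md{\aformula}=k$, set $\aformula^{\sharp} \deff \neg d \wedge t(\aformula) \wedge \AG(d \rightarrow \AX d) \wedge \AF d$, a \tQCTL{}-formula computable in logarithmic space. A finite-tree model of $\aformula$ becomes a \tQCTL{}-model of $\aformula^{\sharp}$ by grafting an infinite $d$-labelled chain below each leaf (totality is restored while $t$, which only ever inspects $\neg d$-successors, still sees exactly the original finite tree); conversely, in any \tQCTL{}-model of $\aformula^{\sharp}$ the conjunct $\AG(d \rightarrow \AX d) \wedge \AF d$ forces the maximal $\neg d$-subtree hanging from the root to be finite (it is finitely branching and every branch eventually enters the absorbing set $\{d\}$, so K\"onig's Lemma applies), and on that finite subtree $t(\aformula)$ reduces to $\aformula$. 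Since \satproblem{\tQCTL{}} is in \tower, so is \satproblem{\QKt}. This is precisely the logarithmic-space reduction $\satproblem{\ftQCTL{}} \to \satproblem{\tQCTL{}}$ alluded to earlier, specialised to the $\EX$-fragment.

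For the lower bound I would run the mirror construction. Given $\aformula \in \tQCTLEX{}$ with $\md{\aformula}=k$, put $\aformula^{\flat} \deff \neg d \wedge t(\aformula) \wedge \mathrm{Full}_k$, where $\mathrm{Full}_0 \deff \top$ and $\mathrm{Full}_{i+1} \deff \EX(\neg d) \wedge \AX(\neg d \rightarrow \mathrm{Full}_i)$; this is a \ftQCTLEX{}-formula of linear size. If $\atreemodel,\aroot \models \aformula$ under the tree semantics, then the restriction of $\atreemodel$ to nodes of depth at most $k$, with $d$ made false everywhere, is a finite-tree model of $\aformula^{\flat}$: totality of $\atreemodel$ gives $\mathrm{Full}_k$, and an induction on subformulae (using that $d$ does not occur in $\aformula$, and that $\approx_{\AP \setminus \{p\}}$-variants leave the $d$-labelling untouched) shows $t$ preserves truth at every node within the modal horizon. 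Conversely, from a finite-tree model of $\aformula^{\flat}$ take the maximal $\neg d$-subtree hanging from $\aroot$; there $\mathrm{Full}_k$ guarantees that every node of depth $<k$ has a $\neg d$-child, so all its leaves lie at depth $\geq k$, outside the horizon of $\aformula$; grafting an infinite chain below each leaf restores totality, and the same preservation lemma yields $\atreemodel,\aroot \models \aformula$. Together with Theorem~\ref{theorem:tQCTLEX-tower} this gives \tower-hardness of \satproblem{\QKt}. (Alternatively one may observe that the formula $\aformula_{\cP}$ built in Lemma~\ref{lemma:correctness-tiling-to-ex} already lies in \ftQCTLEX{}, and that below depth $\md{\aformula_{\cP}}$ its model is a balanced finite tree padded only by trivial chains; pruning those chains gives a finite tree on which $\aformula_{\cP}$ still holds under the finite-tree semantics, so the $\mathtt{Tiling}_k$ reduction transfers verbatim.)

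The main obstacle is the careful bookkeeping in the preservation lemma for $t$: one must prove, simultaneously for all $\approx_{\AP\setminus\{p\}}$-variants, that for every subformula $\aformulabis$ with $\md{\aformulabis} \leq j$ and every node $\anode$ at distance at most $k-j$ from the root, the relativised formula $t(\aformulabis)$ holds in the padded (resp.\ pruned) model at $\anode$ iff $\aformulabis$ holds in the original one. The delicate cases are the quantifier step, where the freshness of $d$ and the invariance of the $\neg d$-skeleton under relabelling of $p$ are both needed, and the verification that no padding node is ever reached within the modal horizon; all the remaining ingredients (membership of \satproblem{\tQCTL{}} in \tower, the K\"onig argument, and the shapes of $\mathrm{Full}_k$ and of $\AG(d \rightarrow \AX d) \wedge \AF d$) are routine, and both reductions are plainly elementary, which is all that is required for \tower-hardness.
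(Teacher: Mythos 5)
Your proposal is correct, and the upper bound is essentially the paper's own argument: the paper relativises $\EX$ to a fresh variable $\textit{in}$ and conjoins $\textit{in} \wedge \AF\,\neg\textit{in} \wedge \AG(\neg\textit{in} \rightarrow \AG\,\neg\textit{in})$, which is your $\aformula^{\sharp}$ up to the substitution $d \equiv \neg\textit{in}$; the implicit finiteness argument (finitely branching tree, every branch eventually absorbed into the padding set, K\"onig) is the same.

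For the lower bound you diverge slightly. The paper does not give a generic reduction from $\satproblem{\tQCTLEX{}}$ to $\satproblem{\ftQCTLEX{}}$; instead it re-runs the $\mathtt{Tiling}_k$ encoding of Section~\ref{section-tower-hardness} directly under the finite-tree semantics, changing only $\phitype{0}$ from $\top$ to $\neg\EX\top$ so that type-$0$ nodes become leaves and the witness model is the finite balanced tree of depth $k{+}1$. Your primary route --- truncation at depth $\md{\aformula}$ together with the totality-within-the-horizon formula $\mathrm{Full}_k$ and a relativised translation --- is a legitimate alternative: it buys a single reusable logspace reduction $\satproblem{\tQCTLEX{}} \to \satproblem{\ftQCTLEX{}}$ rather than an inspection of one particular hardness formula, at the cost of having to prove the depth-indexed preservation lemma (including the quantifier case over all $\approx_{\AP\setminus\{p\}}$-variants, which is exactly the bookkeeping the paper already carries out for Lemma~\ref{lemma:correctness-tQCTLEF}, so nothing new fails there). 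Two minor remarks: the $\neg d$-relativisation in the lower-bound direction is dispensable --- in the finite-tree semantics you could simply conjoin an unrelativised $\mathrm{Full}_k$ built from $\EX\top$ and $\AX$ and leave $\aformula$ untouched, since the shallow-model property (Lemma~\ref{lemma:completion}) already confines the relevant part of the model to depth $\leq k$; and your parenthetical alternative (observing that $\aformula_{\cP}$ survives pruning of the trivial $0^{+}$ chains) is in substance what the paper does, packaged as the $\phitype{0} := \neg\EX\top$ modification. No gap.
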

\begin{proof}
The satisfiability problem in Theorem~\ref{theorem:K} is exactly \satproblem{\ftQCTLEX{}}. An easy translation can be found in Appendix~\ref{appendix:proof-of:theorem:K}.
\end{proof}

As a corollary of Theorem~\ref{theorem:aexppol-hardness}, 
for all $N \geq 2$, the satisfiability problem for $\QKt_{\leq N}$ 
is \aexppol-complete too where $\QKt_{\leq N}$ is interpreted on finite trees whose 
branching degree is at most $N$. The \aexppol lower bound can be obtained using the same
reduction as for \tQCTLEX{, \leq N} whereas the \aexppol upper bound uses also the same arguments
as for \tQCTLEX{, \leq N}.

We have seen that \QKDt is actually defined
as \tQCTLEX{} (\KD is characterised by total and finite Kripke structures whose
unfoldings generate finite-branching trees in which all the branches are infinite).
Consequently:

\begin{thm} 
\label{theorem:QKDt}
\satproblem{\QKDt} is \tower-complete. 
\end{thm}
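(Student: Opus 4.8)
The plan is to observe that \satproblem{\QKDt} is, up to a trivial syntactic identification, the very same decision problem as \satproblem{\tQCTLEX{}}, so that \tower-completeness follows immediately from Theorem~\ref{theorem:tQCTLEX-tower}.

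First I would recall why the two logics coincide at the level of models. By definition, \QKDt is the modal logic with propositional quantification whose admissible structures are the finite-branching trees in which every maximal branch is infinite; this is exactly the class of tree models over which \tQCTL{}, and hence \tQCTLEX{}, is interpreted. The point to check here is that this is indeed the right class for \KD: the modal logic \KD is complete for the class of finite total Kripke structures, and unfolding a finite total Kripke structure from any world produces precisely a finite-branching tree all of whose maximal branches are infinite, as already noted in the paragraph preceding the statement. So the admissible models for \QKDt are exactly the models of \tQCTL{}.

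Next I would match the syntax. A \QKDt formula is built from $\avarprop$, $\neg \aformula$, $\aformula \wedge \aformula$, $\EX \aformula$, $\AX \aformula$ and $\exists \avarprop \, \aformula$. Since $\AX \aformula$ is, by the convention adopted in the paper, an abbreviation for $\neg \EX \neg \aformula$, every \QKDt formula rewrites in logarithmic space into a \tQCTLEX{} formula, and conversely every \tQCTLEX{} formula is already a \QKDt formula; this rewriting preserves satisfaction at every node of every structure, as the satisfaction clauses for $\avarprop$, $\neg$, $\wedge$, $\EX$ and $\exists \avarprop$ are literally the same in the two settings. Hence $\atreemodel, \aroot \models \aformula$ under the \QKDt semantics iff $\atreemodel, \aroot \models \aformula$ under the \tQCTLEX{} semantics, and therefore \satproblem{\QKDt} and \satproblem{\tQCTLEX{}} are the same problem modulo this trivial translation.

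Finally, invoking Theorem~\ref{theorem:tQCTLEX-tower}, \satproblem{\tQCTLEX{}} is \tower-complete, and consequently so is \satproblem{\QKDt}. There is essentially no obstacle in this argument: the only step that warrants a line of care is verifying that the class of tree-like Kripke structures chosen for \QKDt agrees with the \tQCTL{} tree semantics, which is settled by the unfolding argument above; everything else is a purely syntactic identification of the two satisfiability problems.
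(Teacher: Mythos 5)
Your proposal is correct and is essentially the paper's own argument: the paper likewise defines \QKDt so that \satproblem{\QKDt} coincides with \satproblem{\tQCTLEX{}} (via the unfolding of finite total Kripke structures into finite-branching trees with infinite maximal branches, and the identification of $\Diamond$ with $\EX$), and then derives \tower-completeness directly from Theorem~\ref{theorem:tQCTLEX-tower}. Your extra care in checking that $\AX$ is just the dual abbreviation and that the satisfaction clauses match is a welcome but routine elaboration of the same identification.
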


Recall that the modal logic \GL is known to be complete for the class of finite transitive trees.
The logic \QGLt extends it with propositional quantification. We obtain the following:

\begin{thm}\label{theorem:GL}
\satproblem{\QGLt} is \tower-complete.
\end{thm}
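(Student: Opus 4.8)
The plan is to show that \satproblem{\QGLt} coincides, up to a logarithmic-space reduction in each direction, with \satproblem{\ftQCTLEFplus{}}, which is the finite-tree variant of \tQCTLEFplus{}. The \tower upper bound then follows because \satproblem{\ftQCTLEFplus{}} reduces in logarithmic space into \satproblem{\tQCTL{}} (exactly as the excerpt already records for \ftQCTL{} and \gtQCTL{}), and \satproblem{\tQCTL{}} is in \tower by~\cite{LaroussinieM14} via Rabin's Theorem. For the lower bound, the strategy is to adapt the reduction of Section~\ref{section-qctlef} so that it produces a formula satisfiable over \emph{finite} transitive trees; since \GL is complete for the class of finite transitive trees, validity in \QGLt is the dual of satisfiability over that class, and \tower-hardness of \satproblem{\tQCTLEX{}} (Theorem~\ref{theorem:tQCTLEX-tower}) is propagated through.

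First I would make the correspondence between \QGLt and \ftQCTLEFplus{} precise. The logic \GL is complete for the class of Kripke structures $\triple{V}{E^+}{l}$ where $\triple{V}{E}{l}$ is a finite tree model; equivalently, \GL-satisfiability is satisfiability of a modal formula over a finite, transitive, irreflexive, well-founded tree order. Interpreting $\EX$ as the diamond of such a transitive relation means $\EX\aformula$ holds at $\anode$ iff some \emph{strict} descendant of $\anode$ (in the underlying finite tree) satisfies $\aformula$ --- this is exactly the semantics of $\EX\EF$ on the finite tree $\triple{V}{E}{l}$. Hence, a \QGLt formula $\aformula$ over modalities $\EX,\AX$ is satisfiable iff the \ftQCTLEFplus{} formula obtained by replacing each $\EX$ with $\EX\EF$ (and dually each $\AX$ with $\AX\AG$) is satisfiable over a finite tree model. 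This gives both directions of a logarithmic-space reduction between \satproblem{\QGLt} and \satproblem{\ftQCTLEFplus{}}.

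Next, for the lower bound I would revisit the reduction of Section~\ref{section-qctlef} from \satproblem{\tQCTLEX{}} into \tQCTLEFplus{}: given a \tQCTLEX{} formula $\aformula$ with $\md{\aformula}=k$, it produces $\textit{trans}(k,\aformula)\wedge{\rm shape}(k)$ whose $k$-layered models encode a model of $\aformula$. Two adjustments are needed to land in the \emph{finite}-tree setting. One replaces the background $\tQCTLEX{}$ input by its \ftQCTLEX{}-equivalent --- here Theorem~\ref{theorem:K} (i.e.\ \tower-hardness of \satproblem{\ftQCTLEX{}}) is the clean starting point, since its models are already finite trees. The other modifies ${\rm shape}(k)$ so that it additionally forces the layer-$(-1)$ subtrees to be finite; but in the finite-tree semantics every model is finite by definition, so no extra clause is required beyond verifying that the $k$-layered construction of Lemma~\ref{lemma:k-layered}, when started from a finite $\atreemodel$, still yields a finite witness, which it does because only finitely many layer-padding nodes are inserted. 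Running this through the \QGLt$\leftrightarrow$\ftQCTLEFplus{} correspondence above, and noting all reductions are elementary (in fact at most exponential, which suffices for \tower-hardness per Section~\ref{sec:complexity}), gives \tower-hardness of \satproblem{\QGLt}.

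The main obstacle I expect is bookkeeping around the two semantic translations composing cleanly: one must check that the ``relativise $\EX$ to a lower layer'' trick of $\textit{trans}$ is compatible with subsequently reading $\EX$ as $\EX\EF$ over the transitive collapse --- in particular that the layer-monotonicity and no-stuttering clauses of ${\rm shape}(k)$ still pin down the intended layered shape once reachability is taken to be transitive, and that the finiteness/well-foundedness demanded by \GL does not clash with the ``weak progress'' clause forcing a descendant at each lower layer (it does not, because $k$ is fixed and the chain of layers has length $k{+}2$). Everything else is routine: the upper bound is immediate from \satproblem{\tQCTL{}}$\in\tower$ plus the logspace reduction already sketched in the excerpt for \gtQCTL{}, and correctness of the lower-bound reduction is a straightforward adaptation of Lemma~\ref{lemma:correctness-tQCTLEF}.
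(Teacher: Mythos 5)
Your proposal follows essentially the same route as the paper: establish that $\satproblem{\QGLt}$ and $\satproblem{\ftQCTLEFplus{}}$ coincide under the rewriting $\EX \mapsto \EX\EF$, obtain the \tower upper bound by a relativised translation into $\satproblem{\tQCTL{}}$, and obtain \tower-hardness by adapting the layered reduction of Section~\ref{section-qctlef} so that it starts from $\ftQCTLEX{}$ (Theorem~\ref{theorem:K}) and targets finite trees. The only cosmetic difference is that the paper's finite-tree version of ${\rm shape}(k)$ drops $\layer{-1}$ and explicitly forces layer-$0$ nodes to be leaves, a detail your sketch subsumes under the "bookkeeping" you flag.
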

\begin{proof}
The logic can be shown to be precisely \ftQCTLEFplus{} when $\EX$ is translated into
$\EX\EF$, whence we get a \tower upper bound.
For the \tower-hardness proof, it is very similar to the one for \tQCTLEF{} (actually, it is a bit simpler).
Consult Appendix~\ref{appendix:proof-of:theorem:GL}.
\end{proof}

\QKfourt is defined as the modal logic with propositional quantification over the
class of finite-branching trees. Our next theorem is as follows.
\begin{thm}\label{theorem:Kfour}
\satproblem{\QKfourt} is \tower-complete.
\end{thm}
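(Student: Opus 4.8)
The plan is to establish \tower-membership and \tower-hardness separately, in both cases by reducing \QKfourt to logics already analysed in the paper.

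For the \textbf{upper bound}, I would first observe that \QKfourt coincides with the $\EX\EF$-fragment \gtQCTLEFplus{} of \gtQCTL{}. Indeed, a Kripke structure of the form $\triple{V}{E^+}{l}$ with $\triple{V}{E}{l}$ a finite-branching tree satisfies $\EX\aformula$ at a node $\anode$ precisely when some proper $E$-descendant of $\anode$ satisfies $\aformula$, i.e.\ precisely when $\triple{V}{E}{l}, \anode \models \EX\EF\aformula$; dually, the \QKfourt-modality $\AX$ corresponds to $\AX\AG$. Pushing these equivalences through a formula (a logarithmic-space rewriting) shows $\satproblem{\QKfourt} = \satproblem{\gtQCTLEFplus{}}$. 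As \gtQCTLEFplus{} is a fragment of \gtQCTL{}, whose satisfiability problem reduces in logarithmic space to \satproblem{\tQCTL{}} (Section~\ref{section-preliminaries}), and \satproblem{\tQCTL{}} is in \tower{}~\cite{LaroussinieM14}, we obtain $\satproblem{\QKfourt} \in \tower{}$.

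For the \textbf{lower bound}, I would adapt the layered reduction from \satproblem{\tQCTLEX{}} used for \tQCTLEF{} and for \GL (Theorems~\ref{theorem:tQCTLEF} and~\ref{theorem:GL}). Given a \tQCTLEX{}-formula $\aformula$ with $\md{\aformula}=k$ and (w.l.o.g.)\ no occurrence of $\AX$, I would build a \QKfourt-formula $\aformula' = \textit{trans}(k,\aformula) \wedge {\rm shape}'(k)$, where ${\rm shape}'(k)$ is the analogue of the formula ${\rm shape}(k)$ of Section~\ref{section-qctlef} obtained by reading each $\EF$ as the transitive modality $\EX$ of \QKfourt and each $\AG$ as $\AX$, with the layer constraints that ${\rm shape}(k)$ imposed on the root itself added as top-level conjuncts (since in \QKfourt $\AX$ speaks only about proper descendants). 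Thus ${\rm shape}'(k)$ introduces the layer variables $\asetbis_k = \set{\layer{-1},\dots,\layer{k}}$ and forces layer unicity, monotonicity, weak progress, no stuttering, and $\layer{k}$ at the root; and $\textit{trans}$ relativises modalities by $\textit{trans}(i, \EX\aformulabis) \egdef \EX(\layer{i-1}\wedge\textit{trans}(i-1,\aformulabis))$ and is homomorphic on the remaining constructs. A routine variant of Lemma~\ref{lemma:k-layered} shows that ${\rm shape}'(k)$ holds exactly at the root of the $E^+$-reduct of a finite-branching $k$-layered tree.

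The correctness statement — that $\aformula$ is \tQCTLEX{}-satisfiable iff $\aformula'$ is \QKfourt-satisfiable — would be proved as in Lemma~\ref{lemma:correctness-tQCTLEF}. From left to right, take a \tQCTLEX{}-model of $\aformula$, truncate it to depth $k$ (shallow model property, Lemma~\ref{lemma:completion}), relabel a node at depth $j$ from the root with $\layer{k-j}$, attach one fresh $\layer{-1}$-leaf below each depth-$k$ node, and pass to the transitive closure; the resulting finite (hence finite-branching) structure is a \QKfourt-model of $\aformula'$. From right to left, from a $k$-layered \QKfourt-model of $\aformula'$ read off a \tQCTLEX{}-model whose nodes at depth $k-i$ are the layer-$i$ nodes, connecting a layer-$i$ node to each of the layer-$(i-1)$ nodes it reaches (pairwise $E^+$-incomparable by no-stuttering, each with a unique layer-$i$ ancestor, so this is a genuine tree), and padding below depth $k$ with fresh infinite chains so that all maximal branches become infinite. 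Since $\textit{trans}(i,\cdot)$ probes only $i$ further modal steps and relativises every modality to the layer markers, an induction transfers truth across the two structures, including through the propositional quantifiers (global in either semantics, but with effect confined to depth $\leq k$). Uniformity in $k$ then yields \tower-hardness~\cite{Schmitz16}. The step I expect to be the main obstacle is exactly this correctness argument: matching two satisfaction relations over structures with different carrier sets in the presence of propositional quantification, and verifying that the truncation (resp.\ re-unfolding) leaves the relativised subformulae unaffected — the same technical core already handled for \tQCTLEF{} and \GL, which I would invoke rather than redo.
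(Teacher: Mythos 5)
Your proposal is correct, and the upper-bound half coincides with the paper's: both identify \satproblem{\QKfourt} with \satproblem{\gtQCTLEFplus{}} via the rewriting $\EX \mapsto \EX\EF$ and then relativise down to \satproblem{\tQCTLEFplus{}} (the paper makes the Section~\ref{section-preliminaries} reduction explicit with a marker variable $\textit{in}$ and the formula $\textit{in} \wedge \AX\AG(\neg\textit{in} \rightarrow \AX\AG\,\neg\textit{in})$). For the lower bound, however, you take a genuinely different and noticeably heavier route. The paper does not redo any layered construction: it observes that a \tQCTLEFplus{} formula $\aformula$ is satisfiable iff $\aformula \wedge \EX\EF\top \wedge \AX\AG\,\EX\EF\top$ is satisfiable in \gtQCTLEFplus{}, i.e.\ it just conjoins a seriality constraint forcing every node to have a child, and then invokes the already established \tower-hardness of \satproblem{\tQCTLEFplus{}} (Theorem~\ref{theorem:tQCTLEFplus}). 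This is a two-line transfer whose correctness is immediate. Your plan instead rebuilds the whole layered reduction from \satproblem{\tQCTLEX{}} with a transitive-modality analogue of ${\rm shape}(k)$ and $\textit{trans}$, which works but obliges you to re-prove a full variant of Lemma~\ref{lemma:correctness-tQCTLEF} in the new semantics; in particular, your tree reconstruction must (as in that lemma's proof) restrict to the least set of nodes reachable by layer-by-layer descent from the root, since layers may be skipped and such nodes must simply be discarded, and the propositional-quantifier case must again be checked against the changed carrier set. What your approach buys is self-containedness (it does not depend on Theorems~\ref{theorem:tQCTLEF} and~\ref{theorem:tQCTLEFplus}); what the paper's buys is that all the delicate layered bookkeeping is done once, for \tQCTLEF{}, and every subsequent hardness result is obtained by a near-trivial reduction.
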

\begin{proof}
Full proof is in Appendix~\ref{appendix:proof-of:theorem:Kfour}.
We first show \satproblem{\QKfourt} and \satproblem{\gtQCTLEFplus{}}
are identical modulo the rewriting of $\EX$ into $\EX \EF$. 
This yields the \tower upper bound. 
As far as \tower-hardness is concerned, for any formula $\aformula$ in 
\tQCTLEFplus{}, one can show that $\aformula$ is satisfiable for \tQCTLEFplus{}
iff $\aformula \wedge \EX \ \EF \ \top \wedge \AX \AG \ \EX  \EF \ \top$ is satisfiable
in \gtQCTLEFplus{}. 
\end{proof}

Finally, by noting that \QSfourt is equal to \tQCTLEF{} modulo that $\EX$ is rewritten into $\EF$, 
using Theorem~\ref{theorem:tQCTLEF}, we get the following complexity characterisation. 

\begin{thm}  
\label{theorem:Sfour}
\satproblem{\QSfourt} is \tower-complete.
\end{thm}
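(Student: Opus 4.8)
The plan is to reduce the satisfiability problem for \tQCTLEF{} to the satisfiability problem for \QSfourt, since \tQCTLEF{} is already known to be \tower-complete by Theorem~\ref{theorem:tQCTLEF}, and to observe that the \tower upper bound for \QSfourt is inherited from the general \tower upper bound for \tQCTL{} established in~\cite{LaroussinieM14}. The key structural fact to exploit is the completeness of \Sfour for the class of finite Kripke structures with a reflexive and transitive accessibility relation; by the standard unfolding construction, \Sfour is equally determined by the class of structures of the form $\triple{V}{E^*}{l}$ where $\triple{V}{E}{l}$ is a finite-branching tree model in which all maximal branches are infinite. This is exactly the class of \tQCTL{} models, but equipped with the reflexive-transitive closure of the child relation rather than the child relation itself.

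First I would make precise the syntactic translation: a \QSfourt formula is built from $\avarprop$, $\neg$, $\wedge$, $\EX$, $\AX$ and $\exists \avarprop$, and under the semantics just described the operator $\EX$ of \QSfourt is interpreted along $E^*$, i.e. it means ``there is a node reachable in zero or more $E$-steps''. But $\EF$ in \tQCTLEF{} has precisely this semantics: $\atreemodel, \anode \models \EF \aformula$ iff there is $j \geq 0$ with $\anode E^j \anode'$ and $\atreemodel, \anode' \models \aformula$. Hence the identity map on formulae (renaming each occurrence of $\EX$ in the \QSfourt syntax to $\EF$, and dually $\AX$ to $\AG$) is a bijection between \QSfourt formulae and \tQCTLEF{} formulae that preserves satisfaction: $\kripkeK$ (a finite reflexive-transitive Kripke model) satisfies a \QSfourt formula $\aformula$ at $w$ iff its tree unfolding, viewed with the child relation $E$ (so that $E^*$ recovers the reflexive-transitive closure), satisfies the corresponding \tQCTLEF{} formula at the root. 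Because \tQCTLEF{} already admits the equivalent formulation over finite-branching trees with infinite maximal branches (cf. the discussion following Theorem~\ref{theorem:tQCTLEF} and the model-theoretic remarks in Section~\ref{section-preliminaries}), and because propositional quantification in both logics ranges over $\aset$-variants keeping $W$ and $R$ fixed, the quantifier clause transfers verbatim. Thus $\aformula$ is satisfiable for \QSfourt iff its image is satisfiable for \tQCTLEF{}.

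The verification that the translation is faithful is a routine induction on formula structure, the only subtlety being to set up the correspondence between a finite total reflexive-transitive \Sfour-model and the \tQCTL{}-style tree model carefully, so that reachability via $E^*$ in the tree matches the accessibility relation of the \Sfour-model; this is handled exactly as in the companion results of Section~\ref{section-modal-logics} (for instance as in the proofs of Theorems~\ref{theorem:GL} and~\ref{theorem:Kfour}, where the passage between the child relation and its transitive or reflexive-transitive closure is made). The main obstacle, if any, is bookkeeping: one must check that the unfolding of a finite \Sfour-model is still \emph{finite-branching with all maximal branches infinite} so that it is a legitimate \tQCTL{} model, and conversely that a \tQCTLEF{} tree model, once equipped with $E^*$, is the accessibility structure of some \emph{finite} reflexive-transitive Kripke model — this uses the finite (regular) model property of \tQCTL{} noted in Section~\ref{section-preliminaries} via Rabin's Theorem. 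Combining the \tower-hardness half (from \tQCTLEF{} via this reduction) with the \tower upper bound (inherited through the reverse translation into \satproblem{\tQCTL{}}) yields \tower-completeness of \satproblem{\QSfourt}. Full details are deferred to the appendix.
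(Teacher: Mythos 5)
Your proposal is correct and follows exactly the paper's route: the paper likewise observes that \QSfourt coincides with \tQCTLEF{} once $\EX$ is rewritten as $\EF$ (using that \Sfour is determined by the structures $\triple{V}{E^*}{l}$ with $\triple{V}{E}{l}$ a \tQCTL{} tree model), and then invokes Theorem~\ref{theorem:tQCTLEF} for hardness and the \tQCTL{} upper bound from~\cite{LaroussinieM14}. You merely spell out the unfolding/faithfulness bookkeeping that the paper leaves implicit.
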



\section{Conclusion}\label{section-conclusion}

In the paper, we have developed a relatively simple proof method to show that the satisfiability problems for 
\tQCTLEX{}, \tQCTLEF{} and \tQCTLEFplus{} are \tower-complete, see also similar methods in~\cite{Stockmeyer74,Prattetal19}. 
Our contribution is to establish \tower-hardness, which could be also
shown for several modal logics with propositional quantification 
whose respective classes of models are tree-like structures.  
Moreover, in the case of fixed degree, we have shown that for all $N \geq 2$, 
the satisfiability problem for the variant $\tQCTLEX{,\leq N}$ is \aexppol-complete. 
Whereas \aexppol-hardness is established by reducing the alternating multi-tiling problem 
recently introduced in~\cite{Bozzellietal17}, the \tower-hardness of \satproblem{\tQCTLEX{}} 
is essentially based on the fact that one can enforce concisely that a node has a number 
of children equal to some tower of exponentials.

Section~\ref{section-collection} deals with the \tower-completeness of 
\satproblem{\ftQCTLEX{}} and \satproblem{\ftQCTLEFplus{}}, as well as \tower-completeness for 
the well-known modal logics \K, \KD, \GL, \Kfour and \Sfour 
extended with propositional quantification 
but with adequate classes of tree-like structures. 
Though the \tower upper bound  for decision problems on trees should not come as a real surprise, 
all our \tower-hardness results significantly improve the current state-of-the-art
regarding the fragments of \tQCTL{} and for the above-mentioned modal logics. 
In particular, our proof technique for \tower-hardness of 
\satproblem{\tQCTLEX{}} (and therefore for \QKt on finite trees)
is simple enough so that it could be further reused or adapted, see \eg 
a recent refinement of the proof in~\cite{Bednarczyketal20}. 

This work can be continued in several directions.
For instance, \tower-hardness of \satproblem{\tQCTLEF{}} is recently refined 
in~\cite{Mansutti20,Mansutti20bis} by establishing that already 
\tQCTLEF{}  restricted to formulae of modal/tem\-po\-ral depth two is also
\tower-hard. 
Among the several directions, one of them would be to characterise the expressiveness 
of \tQCTLEX{} or \tQCTLEF{} along the lines of~\cite{DavidLM16} or~\cite{Kuusisto15}, see 
also~\cite{Aucher&vanBenthem&Grossi18}.
More generally, we believe that standard modal logics with propositional quantification, but under the 
tree semantics, deserve to be much better understood.

\section*{Acknowledgements}
We would like to thank the reviewers for the numerous suggestions that help
us to improve the quality of the document.
Bartosz Bednarczyk was supported by the Polish Ministry of Science and 
Higher Education programme ``Diamentowy Grant'' no. DI2017 006447.
St\'ephane Demri is supported by the Centre National de la Recherche
Scientifique (CNRS). Furthermore, we would like to thank Raul Fervari 
(University of Cordoba) and Alessio Mansutti (University of Oxford) 
for many insightful discussions about modal logics with propositional
quantification, modal separation logics and related topics. 

\bibliographystyle{alphaurl}
\bibliography{bibliography}

\clearpage
\appendix


\section{Proofs  from Section~\ref{section-bounded-degree}}

\subsection{Proof of Lemma~\ref{lemma:nominal}}\label{appendix:proof-of:lemma:nominal}

\begin{proof} First, suppose that $\anominal$ is a nominal for the depth $k \geq 0$ from $\anode$. 
By definition, this means that there is $\anode' \in V$ satisfying $\anode E^{k} \anode'$ such that $\atreemodel, \anode' \models \anominal$  
and for all $\anode'' \neq \anode'$ satisfying $\anode E^{k} \anode''$ we have $\atreemodel, \anode'' \not \models \anominal$. 
Obviously, we have $\atreemodel, \anode \models \EX^k \anominal$. 
\emph{Ad absurdum}, suppose that 
$\atreemodel, \anode \models \exists \avarprop \ (\EX^k(\anominal \wedge \avarprop) \wedge 
\EX^k(\anominal \wedge \neg \avarprop))$ with $\avarprop$ being a fresh propositional variable.
So, there is a $(\PVAR \setminus \set{\avarprop})$-variant $\atreemodel'$ of~$\atreemodel$ (\ie $\atreemodel \approx_{\AP \setminus \set{\avarprop}} \atreemodel'$) 
such that $\atreemodel', \anode \models \EX^k(\anominal \wedge \avarprop) \wedge \EX^k(\anominal \wedge \neg \avarprop)$. 
Therefore, there is $\anode'$ such that~$\anode E^k \anode'$ and $\atreemodel', \anode' \models \anominal \wedge \avarprop$.
Similarly, there is $\anode''$ such that $\anode E^k \anode''$ and $\atreemodel', \anode'' \models \anominal \wedge \neg \avarprop$.
Because of the constraint on the satisfaction of the propositional variable $\avarprop$, 
the nodes $\anode'$ and $\anode''$ are distinct, which leads to a contradiction.

Conversely, suppose that $\atreemodel, \anode \models \bindd{\anominal}{k}$.
As $\atreemodel, \anode \models \EX^k \anominal$, there exists $\anode'$ such that $\anode E^k \anode'$
and~$\atreemodel, \anode' \models \anominal$. The uniqueness of $\anode'$ can be concluded from
the satisfaction $\neg \exists \avarprop \ (\EX^k(\anominal \wedge \avarprop) \wedge 
\EX^k(\anominal \wedge \neg \avarprop))$ (as above), since that formula characterises exactly
the property that there are no two distinct nodes reachable in $k$ steps from $\anode$ satisfying $\anominal$. 
\end{proof}

\subsection{Proof of Lemma~\ref{lemma:at}}\label{appendix:proof-of:lemma-at}

\begin{proof} 
By assumption, there is a node $\anode' \in V$ satisfying $\anode E^{k} \anode'$ such that $\atreemodel, \anode' \models \anominal$, 
and for all~$\anode'' \neq \anode'$ satisfying~$\anode E^{k} \anode''$, we have $\atreemodel, \anode'' \not \models \anominal$. 
First, suppose that $\atreemodel, \anode \models \att{\anominal}{k} \aformula$, \ie $\atreemodel, \anode \models \EX^{k}(\anominal \wedge \aformula)$.
Hence, there exists $\anode''$ such that $\anode E^k \anode''$ and $\atreemodel, \anode'' \models \anominal \wedge \aformula$.
As $\atreemodel, \anode'' \models \anominal$, the node $\anode''$ is necessarily equal to~$\anode'$ and therefore  $\atreemodel, \anode' \models \aformula$.
For the opposite direction, suppose that $\atreemodel, \anode' \models \aformula$. 
As $\anode E^{k} \anode'$ and  $\atreemodel, \anode' \models \anominal$, by the definition of the satisfaction relation~$\models$, we conclude that 
$\atreemodel, \anode \models \EX^{k}(\anominal \wedge \aformula)$ and $\EX^{k}(\anominal \wedge \aformula)$ is equal to~$\att{\anominal}{k} \aformula$.
\end{proof}

\subsection{Proof of Lemma~\ref{lemma:atoflengthd}}\label{appendix:proof-of:lemma-atoflengthd}

\begin{proof} 
The proof is by induction on $d$. For the base case, suppose that $d=1$.
Thus, we have that~$\atreemodel, \anode_0 \models \bindd{\anominal_1}{1}$, and that $\anode_1$ is the unique
child of $\anode_0$ such that $\anode_0 E \anode_1$ and $\atreemodel, \anode_1 \models \anominal_1$. 
By~Lemma~\ref{lemma:at}, we have $\atreemodel, \anode_0 \models \att{\anominal_1}{1} \aformula$ iff
$\atreemodel, \anode_1 \models \aformula$ and $ \att{\anominal_1}{} \aformula$ is equal to  
$\att{\anominal_1}{1} \aformula$, so we are done.
For the induction step with $d \geq 2$, let us assume that 
\[
\atreemodel, \anode_0 \models \bindd{\anominal_1}{1} \wedge  \bigwedge_{i \in \interval{2}{d}}
\att{\anominal_1}{1} \att{\anominal_2}{1} \cdots \att{\anominal_{i-1}}{1} \bindd{\anominal_i}{1},
\]
and $\anode_1, \ldots, \anode_d$ is associated to $\anominal_1, \ldots \anominal_d$.
The propositions below are equivalent
\begin{itemize}
\item $\atreemodel, \anode_0 \models \att{\bar{\anominal}}{} \aformula$,
\item  $\atreemodel, \anode_0 \models  
       \att{\anominal_1}{1} \att{\anominal_2}{1} \cdots \att{\anominal_d}{1} \aformula$ (by definition of $\att{\bar{\anominal}}{}$),
\item $\atreemodel, \anode_{d-1} \models \att{\anominal_d}{1} \aformula$ (by the induction hypothesis),
\item $\atreemodel, \anode_{d} \models \aformula$ (by Lemma~\ref{lemma:at}). 
\end{itemize}
Hence, $\atreemodel, \anode_0 \models \att{\bar{\anominal}}{} \aformula$ holds iff $\atreemodel, \anode_{d} \models \aformula$ holds, finishing the proof.
\end{proof}

\subsection{Proof of Lemma~\ref{lemma:pnf}}\label{appendix:proof-of:lemma-pnf}

\begin{proof} 
By way of example, we prove that 
$\EX \ \forall \ \avarprop \ \aformulabis \leftrightarrow 
\forall \ \avarprop \ \EX \aformulabis$ is valid.
(Other formulae, \eg~$\EX \ \exists \ \avarprop \ \aformulabis \leftrightarrow 
\exists \ \avarprop \ \EX \aformulabis$ can be proved to be valid in an even simpler way). 
Before doing so, note that then the valid equivalences below
provide a rewriting system (by reading the equivalences from left to right) that pushes the propositional quantification
outside (in the usual way), leading to formulae in PNF in polynomial-time ($\mathcal{Q} \in \set{\exists,\forall}$).
$$
\EX \ \mathcal{Q} \ \avarprop \ \aformulabis \leftrightarrow 
\mathcal{Q} \ \avarprop \ \EX \aformulabis \ \ \ \
(\mathcal{Q} \ \avarprop \ \aformulabis) \wedge \aformulabis' \leftrightarrow \mathcal{Q} \ \avarprop 
\ (\aformulabis \wedge \aformulabis') 
\ \ \ \
\neg \exists  \ \avarprop \ \aformulabis \leftrightarrow \forall \ \avarprop \ \neg \aformulabis \ \ \ \ 
\neg \forall  \ \avarprop \ \aformulabis \leftrightarrow \exists \ \avarprop \ \neg \aformulabis
$$
assuming that $\avarprop$ does not occur in $\aformulabis'$ (otherwise, rename the quantified variable). 

First, assume that $\atreemodel, \anode \models \EX \ \forall \ \avarprop \ \aformulabis$ with $\atreemodel = \triple{V}{E}{l}$. 
Thus, there exists $\anode'$ such that $\anode E \anode'$ and~$\atreemodel, \anode' \models \forall \ \avarprop \ \aformulabis$. 
Hence, for all $\atreemodel' \approx_{\AP \setminus \set{\avarprop}}
\atreemodel$, we have $\atreemodel', \anode' \models \aformulabis$. As $\anode'$ remains a child of~$\anode$ for all
such variants $\atreemodel'$, we have that for 
all $\atreemodel' \approx_{\AP \setminus \set{\avarprop}}
\atreemodel$, $\anode E'  \anode'$ and  $\atreemodel', \anode' \models \aformulabis$.
Thus, for all~$\atreemodel' \approx_{\AP \setminus \set{\avarprop}} \atreemodel$ we have $\atreemodel', \anode \models \EX \aformulabis$. 
We conclude $\atreemodel, \anode \models \forall \ \avarprop \ \EX \aformulabis$. 

Conversely (and this is the place where the tree structure is essential), 
assume $\atreemodel, \anode \models \forall \ \avarprop \ \EX \aformulabis$.
It~means that for all $\atreemodel' \approx_{\AP \setminus \set{\avarprop}}
\atreemodel$, we have $\atreemodel', \anode \models \EX \aformulabis$.
\emph{Ad absurdum}, suppose~$\atreemodel, \anode \not \models \EX \ \forall \ \avarprop \ \aformulabis$.
Equivalently, we have $\atreemodel, \anode \models \AX \ \exists \ \avarprop \ \neg \aformulabis$.
Thus, for all $\anode'$ such that $\anode E \anode'$, there is 
$\atreemodel_{\anode'} \approx_{\AP \setminus \set{\avarprop}}
\atreemodel$ (say $\atreemodel_{\anode'} = \triple{V}{E}{l_{\anode'}}$)
such that $\atreemodel_{\anode'}, \anode' \not \models \aformulabis$. 
Let $\atreemodel_{\anode'}' =  \triple{V_{\anode'}}{E_{\anode'}}{l_{\anode'}'}$ be the restriction
of $\atreemodel_{\anode'}$ to the subtree whose root is $\anode'$. Obviously 
$\atreemodel_{\anode'}', \anode' \not \models \aformulabis$
too. Two distinct children $\anode_1$ and $\anode_2$ of $\anode$ in $\atreemodel$, 
lead to tree models $\atreemodel_{\anode_1}' =  \triple{V_{\anode_1}}{E_{\anode_1}}{l_{\anode_1}'}$ and
$\atreemodel_{\anode_2}' =  \triple{V_{\anode_2}}{E_{\anode'}}{l_{\anode_2}'}$  with $V_{\anode_1} \cap V_{\anode_2} = \emptyset$.
Indeed, $\atreemodel$ is a tree. Let $\atreemodel^{\star}$ be defined as 
$\triple{V}{E}{(\uplus_{\anode' \in E(\anode)} l_{\anode'}') \uplus \set{\anode \mapsto l(\anode)}}$, where 
$\uplus$ denotes the disjoint sum. 
One can show that $\atreemodel^{\star} \approx_{\AP \setminus \set{\avarprop}} \atreemodel$ and for all $\anode'$ 
such that $\anode E \anode'$, $\atreemodel^{\star}, \anode' \not \models \aformulabis$, which leads to a contradiction.
\end{proof}

\subsection{Proof of Lemma~\ref{lemma:tiling-ok}}\label{appendix:proof-of:tiling-ok}

\begin{proof}
We start with the first item of the lemma. The definition of $\tau$ is correct, since:
\begin{itemize}
    \item for each position $\pair{x}{y} \in \interval{0}{2^n-1} \times \interval{0}{2^n-1}$ there is a unique node $\anode'$ in 
     the distance $2n$ from $\anode$ encoding the position $\pair{x}{y}$ 
(follows from Corollary~\ref{corr:grid-2n-correctness}),
    \item for each $\anode'$, as defined above, there is a unique tile proposition $t^j$ such that $\atreemodel, \anode' 
\models t^j$ holds (it is a direct consequence of the satisfaction of $\aformula_{\rm cov}^j$ at $\anode$).
\end{itemize}
The satisfaction of the conditions \ref{tiling:hori} and \ref{tiling:verti} follows from the satisfaction of $\aformula_{\cH}^j  \wedge \aformula_{\cV}^j$.
Indeed, let us discuss the condition \ref{tiling:hori} only, since \ref{tiling:verti} is analogous.
Take any two consecutive positions $\pair{a}{b}$ and~$\pair{a}{b+1}$ such that $\tau(a,b) = t$ and $\tau(a,b+1) = t'$.
Then, let $\anode_t, \anode_{t'}$ be nodes at the distance $2n$ from~$\anode$, representing the positions  $\pair{a}{b}$ and $\pair{a}{b+1}$.
Set the local nominals $\anominal$ and $\anominalbis$ at $\anode_{t}$ and $\anode_{t'}$.
Then, note that the formula ${\rm HN}(\anominal,\anominalbis)$ is satisfied at $\anode$ 
by elementary operations on binary encodings of numbers. Hence, by the right-hand side of the implication in $\aformula_{\cH}^j$ we conclude that $(t,t') \in \cH$.

To show the second item, we take $\atreemodel'= \triple{V}{E}{l'}$ obtained from $\atreemodel = \triple{V}{E}{l}$
by setting $l'(\anode_{\pair{a}{b}}) = (l(\anode_{\pair{a}{b}}) \setminus \cT^{j}) \cup \{ \tau(a,b) \}$ for the unique node 
$\anode_{\pair{a}{b}}$ at the distance $2n$ from $\anode$ corresponding to $\pair{a}{b}$ in the grid. 
Otherwise $l$ and~$l'$ coincide.
Since the formula $\mathrm{grid}(2n)$ does not employ propositions from $\cT^{j}$ we conclude~$\atreemodel', \anode \models \mathrm{grid}(2n)$. By the definition of $\tau$ we know that each node at the distance $2n$ from $\anode$ is labelled with exactly one tile proposition from $\cT^{j}$ and hence, $\atreemodel', \anode \models \aformula_{\rm cov}^j$.
Checking that $\atreemodel', \anode$ satisfies $\aformula_{\cH}^j  \wedge \aformula_{\cV}^j$ is routine and follows from the fact that $\tau$ is a tiling (so it satisfies \ref{tiling:hori} and \ref{tiling:verti}).
\end{proof}

\subsection{Proof of Lemma~\ref{lemma:aexppol-hardness}}\label{appendix:proof-of:proof-lemma-aexppol-hardness}

\begin{proof}  Let $\mathcal{I}$ be the instance $n$,  
$\triple{\cT}{\cH}{\cV}$, $\cTzero$, $\cTacc$, $\cTmulti$ of $\cPmulti$.
Before showing that the reduction is correct, we need to state  preliminary properties.
Moreover, in the proof below, we repeat several formula definitions to follow more smoothly
the technical developments. 

\textbf{(GRID)} We recall that ${\rm grid}(2n)$ is defined as the formula below
\[
(\bigwedge_{i \in \interval{0}{2n-1}} \AX^i \ \EX_{=2} \top)  \ \wedge
\forall \anominal, \anominalbis \ \distinctbindd{\anominal, \anominalbis}{2n} \rightarrow
\]
\[
(\bigvee_{j \in \interval{0}{n-1}} \neg (\att{\anominal}{2n} \ahvarprop_j \leftrightarrow \att{\anominalbis}{2n} \ahvarprop_j)
\vee \neg (\att{\anominal}{2n} \avvarprop_j \leftrightarrow \att{\anominalbis}{2n} \avvarprop_j)
).
\]
Given a tree model $\atreemodel$ and a root node $\aroot$, one can show that $\atreemodel, \aroot \models {\rm grid}(2n)$ iff
the properties below hold:
\begin{enumerate}
\item[(a)] For all $j \in \interval{0}{2n-1}$, we have that $\aroot E^j \anode$ implies that $\anode$ has exactly two children.
\item[(b)] For all distinct nodes $\anode, \anode'$  such that $\aroot E^{2n} \anode$ and $\aroot E^{2n} \anode'$,
there is some propositional variable $\avarpropter$ in $\set{\ahvarprop_{n-1}, \ldots,\ahvarprop_{0},\avvarprop_{n-1}, \ldots,\avvarprop_{0}}$,
such that $\anode$ satisfies $\avarpropter$ iff $\anode'$ does not satisfy $\avarpropter$. 
\end{enumerate}
Satisfaction of (a) is essentially due to the fact that $\EX_{=2} \top$ is defined as 
\[
\exists \ \anominal_1,  \anominal_2 \
\distinctbindd{\anominal_1, \anominal_2}{1} \wedge \AX (\anominal_1 \vee  \anominal_2)
\] 
and one can check that it holds true
on nodes having exactly two children. 
For the satisfaction of (b), we need to invoke Lemma~\ref{lemma:nominal}, Lemma~\ref{lemma:at} and Lemma~\ref{lemma:distinct-nominals}.
Assuming that $\distinctbindd{\anominal, \anominalbis}{2n}$ holds, (b) is equivalent to have two distinct nodes that are the 
respective interpretations of the nominals $\anominal$ and $\anominalbis$ for the depth $2n$. 
More generally, Lemma~\ref{lemma:tiling-ok} states the main properties that are used about ${\rm grid}(2n)$.

\textbf{(TILING)} Let us state a few properties about the conjunction $\aformula_{\rm cov}^j \wedge \aformula_{\cH}^j \wedge \aformula_{\cV}^j$
assuming that $\atreemodel, \aroot \models {\rm grid}(2n)$ (and therefore the set of nodes of distance $2n$ from the
root encodes the grid $\interval{0}{2^n-1} \times \interval{0}{2^n-1}$). 
The formula $\aformula_{\rm cov}^j$ defined as 
\[
\forall \ \anominal \ \bindd{\anominal}{2n}
\rightarrow \att{\anominal}{2n} \ (\bigvee_{t \in \cT} t^j \wedge \bigwedge_{t \neq t' \in \cT} \neg (t^j \wedge t'^j)) \]
states that all the nodes at distance $2n$ from the root satisfy exactly one tile type from $\cT^j$. This is again a consequence
of Lemma~\ref{lemma:nominal} and Lemma~\ref{lemma:at}. 
The formula  $\aformula_{\cH}^j$ is defined as:
$$
\forall \ \anominal,  \anominalbis \ 
(
\bindd{\anominal}{2n} \wedge
\bindd{\anominalbis}{2n} \wedge {\rm HN}(\anominal,\anominalbis)
)
\rightarrow 
\bigvee_{\pair{t}{t'} \in \mathcal{H}} \att{\anominal}{2n} \ t^j \wedge \att{\anominalbis}{2n} \ t'^j,
$$
where ${\rm HN}(\anominal,\anominalbis)$ is the formula below:
\begin{multline*}
{\rm HN}(\anominal,\anominalbis) \egdef
\left(  
\bigwedge_{\alpha \in \interval{0}{n-1}} \att{\anominal}{2n} \avvarprop_{\alpha} \leftrightarrow \att{\anominalbis}{2n} 
\avvarprop_\alpha
\right)
\wedge 
\bigvee_{i \in \interval{0}{n-1}} \Bigg(
\att{\anominal}{2n} \neg \ahvarprop_i \wedge
\att{\anominalbis}{2n} \ahvarprop_i \; \wedge\\
\wedge \; \bigwedge_{\alpha \in \interval{0}{i-1}} \left( \att{\anominal}{2n} \ahvarprop_{\alpha} \wedge  \att{\anominalbis}{2n} \neg \ahvarprop_{\alpha} \right)
\wedge
(\bigwedge_{\alpha \in \interval{i+1}{n}} \left( \att{\anominal}{2n} \ahvarprop_{\alpha} \leftrightarrow  \att{\anominalbis}{2n} \ahvarprop_{\alpha}) \right)
\Bigg).
\end{multline*}
The formula ${\rm HN}(\anominal,\anominalbis)$ expresses that, assuming that $\anominal$ and $\anominalbis$ for the depth $2n$, 
the two nodes at distance $2n$ interpreted
respectively by $\anominal$ and $\anominalbis$ and representing respectively the positions $(\mathfrak{H},\mathfrak{V})$ and 
$(\mathfrak{H}',\mathfrak{V}')$ of the grid, 
satisfies
$\mathfrak{V} = \mathfrak{V}'$ and $\mathfrak{H}' = \mathfrak{H}+1$.
So, $\aformula_{\cH}^j$ encodes the horizontal constraints for the set of tile types $\cT^j$. 
Similarly, $\aformula_{\cV}^j$ encodes the vertical constraints for the set of tile types $\cT^j$. 
Indeed, in ${\rm VN}(\anominal,\anominalbis)$ we swap the variable $\ahvarprop_{\alpha}$ with the variable $\avvarprop_{\alpha}$
(with respect to the definition for ${\rm HN}(\anominal,\anominalbis)$),
which amount to stating that assuming that $\anominal$ and $\anominalbis$ are nominals for the depth $2n$ from $\aroot$, 
the two nodes at distance $2n$ interpreted
respectively by $\anominal$ and $\anominalbis$ and representing respectively the positions $(\mathfrak{H},\mathfrak{V})$ and $(\mathfrak{H}',
\mathfrak{V}')$ of the grid, satisfies
$\mathfrak{V}' = \mathfrak{V}+1$ and $\mathfrak{H}' = \mathfrak{H}$.
So, assuming that $\atreemodel, \aroot \models {\rm grid}(2n)$, the formula
$\aformula_{\rm cov}^j \wedge \aformula_{\cH}^j \wedge \aformula_{\cV}^j$ expresses that 
the way the tile types from $\cT^j$ holds on the nodes at distance $2n$ from the root defines
a proper tiling.

\textbf{(INIT)} Let $\aformula_{\rm init}^j$ be the formula below:
$$
\forall \ \anominal \ 
(\bindd{\anominal}{2n} \wedge \att{\anominal}{2n}(\bigwedge_{\alpha \in \interval{0}{n-1}} \neg \ahvarprop_{\alpha}))
\rightarrow
\att{\anominal}{2n} \ (\bigvee_{t \in \cTzero} t^j \wedge \bigwedge_{t \neq t' \in \cTzero} \neg (t^j \wedge t'^j)). 
$$
Note that $(\bigwedge_{\alpha \in \interval{0}{n-1}} \neg \ahvarprop_{\alpha})$ states on a node $\anode$ at distance $2n$ from the root,
that all the propositional variables $\ahvarprop_{\alpha}$ are false and therefore this is a node on the row zero of the grid. 
$\aformula_{\rm init}^j$ therefore states (using again Lemma~\ref{lemma:nominal} and Lemma~\ref{lemma:at})
that all the nodes of the row zero have a unique tile type from $\cTzero$.

\textbf{(COINCI)} Let $\aformula_{\rm coinci}^{j,j'}$ be the formula below (with $j,j' \in \Nat$):
$$
\forall \ \anominal \ 
(\bindd{\anominal}{2n} \wedge \att{\anominal}{2n}(\bigwedge_{\alpha \in \interval{0}{n-1}} \neg \ahvarprop_{\alpha}))
\rightarrow 
\att{\anominal}{2n} \ (\bigvee_{t \in \cTzero} t^j \wedge t^{j'}). 
$$
Assuming that $\atreemodel, \aroot \models \aformula_{\rm tiling}^j \wedge \aformula_{\rm init}^{j'}$, 
the formula $\aformula_{\rm coinci}^{j,j'}$ states that for every node of the row zero, the tile type from $\cT^j$ is the same as
the tile type from $\cTzero^{j'}$. The reasoning is exactly the same as for \textbf{(INIT)}. 

\textbf{(ACCEPT)} Let $\aformula_{\rm acc}^j$ be the formula below
$$
\exists \ \anominal \ 
\bindd{\anominal}{2n} \wedge \att{\anominal}{2n}((\bigwedge_{\alpha \in \interval{0}{n-1}} \ahvarprop_{\alpha})
\wedge \bigvee_{t \in \cTacc} t^j). 
$$
Note that $(\bigwedge_{\alpha \in \interval{0}{n-1}} \ahvarprop_{\alpha})$ states on a node $\anode$ at distance $2n$ from the root,
that all the propositional variables $\ahvarprop_{\alpha}$ are true and therefore this is a node on the 
(last) row $2^n-1$ of the grid. 

Assuming that $\atreemodel, \aroot \models \aformula_{\rm tiling}^j$, the formula $\aformula_{\rm acc}^j$ therefore states
that there is a node encoding a position of the grid on the last row such that the tile type is in $\cTacc^j$. 

\textbf{(MULTI)} Finally, let $\aformula_{\rm multi}^j$ be the formula below:
$$
\forall \ \anominal \ 
\bindd{\anominal}{2n} \rightarrow
\att{\anominal}{2n}(\bigvee_{\pair{t}{t'} \in \cTmulti}  t^j  \wedge t'^{j+1}).
$$
As done in \textbf{(INIT)}, $\aformula_{\rm multi}^j$ (assuming that  $\atreemodel, \aroot \models \aformula_{\rm tiling}^j \wedge \aformula_{\rm tiling}^{j+1}$)
states that for all nodes at distance $2n$ from the root, the tile type from $\cT^j$ and the tile type from $\cT^{j+1}$ are in 
the relation $\cTmulti$. 
Consequently, 
\[
\atreemodel, \aroot \models {\rm grid}(2n) \wedge (\bigwedge_{j \in \interval{1}{n}} \aformula_{\rm init}^{j}) \wedge
(\bigwedge_{j \in \interval{n+1}{2n}} \aformula_{\rm tiling}^{j} \wedge \aformula_{\rm coinci}^{j,(j-n)})
\wedge 
(\bigwedge_{j \in \interval{n+1}{2n-1}} \aformula_{\rm multi}^{j})
\wedge
\aformula_{\rm acc}^{2n}
)
\]
if and only if 
the initial condition induced from the satisfaction of $(\bigwedge_{j \in \interval{1}{n}} \aformula_{\rm init}^{j})$
(see the definition of $\cPmulti$ in Section~\ref{sec:complexity}) of the form $ (w_1, \ldots, w_n) \in (\cTzero^{2^n})^{n}$,
and the multi-tiling $(\atiling_1, \ldots, \atiling_n)$ induced by the satisfaction of 
$(\bigwedge_{j \in \interval{n+1}{2n}} \aformula_{\rm tiling}^{j})$, entails  
that  $(\atiling_1, \ldots, \atiling_n)$ is a solution and satisfies the condition~\ref{amtp:m-init}, \ref{amtp:m-tiling}, \ref{amtp:m-multi} and \ref{amtp:m-accept}. 
Indeed, satisfying $\atreemodel, \aroot \models {\rm grid}(2n) \wedge (\bigwedge_{j \in \interval{n+1}{2n}} \aformula_{\rm tiling}^{j})$ defines 
a multi-tiling $(\atiling_1, \ldots, \atiling_n)$ (and reciprocally). Similarly, 
satisfying $\atreemodel, \aroot \models  {\rm grid}(2n) \wedge (\bigwedge_{j \in \interval{1}{n}} \aformula_{\rm init}^{j})$
defines an initial condition  $ (w_1, \ldots, w_n) \in (\cTzero^{2^n})^{n}$ (and reciprocally). 
The details are omitted but it does not pose any difficulty. 
So given an initial condition $c= (w_1, \ldots, w_n) \in (\cTzero^{2^n})^{n}$, we write
$\atreemodel_{c}$ to denote a tree model such that $\atreemodel_{c}, \aroot \models 
{\rm grid}(2n) \wedge (\bigwedge_{j \in \interval{1}{n}} \aformula_{\rm init}^{j})$  and 
on the grid induced by $\atreemodel_{c}$, we have precisely the initial condition $c$.
Similarly, given a multi-tiling $M= (\atiling_1, \ldots, \atiling_n)$, 
 we write
$\atreemodel_{M}$ to denote a tree model such that $\atreemodel_{M}, \aroot \models 
{\rm grid}(2n) \wedge (\bigwedge_{j \in \interval{n+1}{2n}} \aformula_{\rm tiling}^{j})$  and 
on the grid induced by $\atreemodel_{M}$, we have precisely the multi-tiling
$(\atiling_1, \ldots, \atiling_n)$. 
More generally, given $c$ and $M$, we write $\atreemodel_{c,M}$ to denote a tree model 
such that $\atreemodel_{c,M}, \aroot \models 
{\rm grid}(2n) \wedge (\bigwedge_{j \in \interval{1}{n}} \aformula_{\rm init}^{j}) \wedge (\bigwedge_{j \in \interval{n+1}{2n}} \aformula_{\rm tiling}^{j})$
and on the grid induced by $\atreemodel_{c,M}$, we have precisely the initial condition $c$ and the multi-tiling $M$.
Reciprocally, assuming a tree model $\atreemodel$ such that  $\atreemodel, \aroot \models 
{\rm grid}(2n) \wedge (\bigwedge_{j \in \interval{1}{n}} \aformula_{\rm init}^{j})
$, we write $c_{\atreemodel} = (w_1, \ldots, w_n)$ to denote the initial condition from the grid defined
by  $\atreemodel$. Similarly, 
assuming a tree model $\atreemodel$ such that  $\atreemodel, \aroot \models 
{\rm grid}(2n) \wedge  (\bigwedge_{j \in \interval{n+1}{2n}} \aformula_{\rm tiling}^{j}) 
$, we write $M_{\atreemodel} = (\atiling_1, \ldots, \atiling_n)$ to denote the multi-tiling from the grid defined
by  $\atreemodel$.

It remains to conclude by explaining how to handle the quantifications over the tuples $(w_1, \ldots, w_n) \in (\cTzero^{2^n})^{n}$.
The instance $\mathcal{I}$ is positive iff 
\begin{itemize}
\item 
(by definition) for all $w_1 \in \cTzero^{2^n}$, there is $w_2 \in \cTzero^{2^n}$ such that $\cdots$ 
for all $w_{n-1} \in \cTzero^{2^n}$, there is $w_{n} \in \cTzero^{2^n}$ such that
there is a solution  $(\atiling_1, \ldots, \atiling_n)$ 
for $(w_1, \ldots, w_n)$, iff
\item (by the above correspondences) there is a tree model $\atreemodel^0$ such that $\atreemodel^0, \aroot \models {\rm grid}(2n)$
such that for all $\atreemodel^1$ such that ($\atreemodel^1 \approx_{\PVAR \setminus \cTzero^1} \atreemodel^0$), $\atreemodel^1, \aroot
\models  \aformula_{\rm init}^{1}$ and $w_1^{\atreemodel^1} = w_1$ (where $w_1^{\atreemodel^1}$ is the new word induced by $\atreemodel^1$)
there is $\atreemodel^2$ such that ($\atreemodel^2 \approx_{\PVAR \setminus \cTzero^2} \atreemodel^1$), 
$\atreemodel^2, \aroot
\models  \aformula_{\rm init}^{2}$ and $w_2^{\atreemodel^2} = w_2$ \ \ldots \ 
for all $\atreemodel^{n-1}$ such that ($\atreemodel^{n-1} \approx_{\PVAR \setminus \cTzero^{n-1}} \atreemodel^{n-2}$), 
$\atreemodel^{n-1}, \aroot
\models  \aformula_{\rm init}^{n-1}$ and $w_{n-1}^{\atreemodel^{n-1}} = w_{n-1}$
there is $\atreemodel^{n}$ such that ($\atreemodel^{n} \approx_{\PVAR \setminus \cTzero^n} \atreemodel^{n-1}$),  $\atreemodel^{n}, \aroot
\models  \aformula_{\rm init}^{n}$ and $w_{n}^{\atreemodel^{n}} = w_{n}$,
there is $\atreemodel^{n+1}$ such that  ($\atreemodel^{n+1} \approx_{\PVAR \setminus \set{t^j: t \in \cT, j \in \interval{n+1}{2n}}} \atreemodel^{n}$)
such that $\atreemodel^{n+1}, \aroot \models  (\bigwedge_{j \in \interval{n+1}{2n}} \aformula_{\rm tiling}^{j})$
and $M_{\atreemodel^{n+1}}$ is a solution for $c_{\atreemodel^{n+1}}$, iff
\item (by first-order reasoning) there is a tree model $\atreemodel^0$ such that $\atreemodel^0, \aroot \models {\rm grid}(2n)$
such that for all $\atreemodel^1$ such that ($\atreemodel^1 \approx_{\PVAR \setminus \cTzero^1} \atreemodel^0$)
there is $\atreemodel^2$ such that ($\atreemodel^2 \approx_{\PVAR \setminus \cTzero^2} \atreemodel^1$) \ \ldots \ 
for all $\atreemodel^{n-1}$ such that ($\atreemodel^{n-1} \approx_{\PVAR \setminus \cTzero^{n-1}} \atreemodel^{n-2}$)
there is $\atreemodel^{n}$ such that ($\atreemodel^{n} \approx_{\PVAR \setminus \cTzero^n} \atreemodel^{n-1}$),
if $\atreemodel^{n}, \aroot \models (\bigwedge_{j \in \interval{1}{n}} \aformula_{\rm init}^{j})$, then
there is $\atreemodel^{n+1}$ such that  ($\atreemodel^{n+1} \approx_{\PVAR \setminus \set{t^j: t \in \cT, j \in \interval{n+1}{2n}}} \atreemodel^{n}$)
such that $\atreemodel^{n+1}, \aroot \models  (\bigwedge_{j \in \interval{n+1}{2n}} \aformula_{\rm tiling}^{j})$
and $M_{\atreemodel^{n+1}}$ is a solution for $c_{\atreemodel^{n+1}}$, iff
\item (by the encodings of (m-init), (m-coinci), (m-multi), (m-accept))
there is a tree model $\atreemodel^0$ such that $\atreemodel^0, \aroot \models {\rm grid}(2n)$
such that for all $\atreemodel^1$ such that ($\atreemodel^1 \approx_{\PVAR \setminus \cTzero^1} \atreemodel^0$)
there is $\atreemodel^2$ such that ($\atreemodel^2 \approx_{\PVAR \setminus \cTzero^2} \atreemodel^1$) \ \ldots \ 
for all $\atreemodel^{n-1}$ such that ($\atreemodel^{n-1} \approx_{\PVAR \setminus \cTzero^{n-1}} \atreemodel^{n-2}$)
there is $\atreemodel^{n}$ such that ($\atreemodel^{n} \approx_{\PVAR \setminus \cTzero^n} \atreemodel^{n-1}$),
if $\atreemodel^{n}, \aroot \models (\bigwedge_{j \in \interval{1}{n}} \aformula_{\rm init}^{j})$, then
there is $\atreemodel^{n+1}$ such that  ($\atreemodel^{n+1} \approx_{\PVAR \setminus \set{t^j: t \in \cT, j \in \interval{n+1}{2n}}} \atreemodel^{n}$)
such that $\atreemodel^{n+1}, \aroot \models  (\bigwedge_{j \in \interval{n+1}{2n}} \aformula_{\rm tiling}^{j})$
and $\atreemodel^{n+1}, \aroot \models (\bigwedge_{j \in \interval{n+1}{2n}} \aformula_{\rm coinci}^{j,(j-n)}) \wedge (\bigwedge_{j \in \interval{n+1}{2n-1}} \aformula_{\rm multi}^{j}) \wedge \aformula_{\rm acc}^{2n}$ iff
\item (by definition of $\models$) the formula below is satisfiable:
\[
{\rm grid}(2n) \wedge 
\forall \ \cTzero^1 \ 
\exists \ \cTzero^2 \ 
\forall \ \cTzero^3 \  
\cdots 
\exists \ \cTzero^n
\bigwedge_{j \in \interval{1}{n}} \aformula_{\rm init}^{j}
\rightarrow 
\]
\[
(
\exists \ \set{t^j: t \in \cT, j \in \interval{n+1}{2n}} \ 
(\bigwedge_{j \in \interval{n+1}{2n}} \aformula_{\rm tiling}^{j} \wedge \aformula_{\rm coinci}^{j,(j-n)})
\wedge 
(\bigwedge_{j \in \interval{n+1}{2n-1}} \aformula_{\rm multi}^{j})
\wedge
\aformula_{\rm acc}^{2n}
). 
\]
\end{itemize}
\end{proof}

\section{Proofs  from Section~\ref{section-tower-hardness}}

\subsection{Proof of Lemma~\ref{lemma:kd-comparison-zero}}\label{appendix:proof-of:lemma:kd-comparison-zero}

\begin{proof}
The property (III) is a direct consequence of (II). Since the proof of (II) is similar to (I), we focus on showing (I) only.
In order to define $\succk{k}{\anominal_1, \ldots, \anominal_k}{\anominalbis_1, \ldots, \anominalbis_k}$, 
and therefore to express that $\semnumber{\atreemodel}{\anode_k'} = \semnumber{\atreemodel}{\anode_k}+1$ with
numbers computed with the propositional variables $\avarprop_{n-1}, \ldots, \avarprop_{0}$, 
we perform a standard comparison of the respective truth values of  $\avarprop_{n-1}, \ldots, \avarprop_{0}$ 
for the node $\anode_k$ (the interpretation of $\anominal_k$) and for the node $\anode'_k$ (the interpretation of $\anominalbis_k$). 
Typically, $\semnumber{\atreemodel}{\anode_k'} = \semnumber{\atreemodel}{\anode_k}+1$ iff there is $i \in \interval{0}{n-1}$, such that
\begin{itemize}
\item for every $j \in \interval{i+1}{n-1}$, $\anode_k$ and $\anode_k'$ agree on $\avarprop_j$
      ($\semnumber{\atreemodel}{\anode_k'}$ and $\semnumber{\atreemodel}{\anode_k}$ agree on their $j$th bit),
\item $\anode_k$ does not satisfy $\avarprop_i$ and  $\anode_k'$  satisfies $\avarprop_i$
      ($\semnumber{\atreemodel}{\anode_k'}$ and $\semnumber{\atreemodel}{\anode_k}$ disagree on their $i$th bit
       and the $i$th bit of $\semnumber{\atreemodel}{\anode_k}$ is equal to zero),
\item for every $j \in \interval{0}{i-1}$, $\anode_k$  satisfies $\avarprop_j$ and $\anode_k'$  does not satisfy $\avarprop_j$. 
\end{itemize}
The formula $\succk{k}{\anominal_1, \ldots, \anominal_k}{\anominalbis_1, \ldots, \anominalbis_k}$ indeed
quantifies existentially on $i$ via a generalised disjunction and the three conditions are checked by three conjuncts
in the standard manner.
\end{proof}

\subsection{Proof of Lemma~\ref{lemma:typeone}}\label{appendix:proof-of:lemma:typeone}

\begin{proof} 
We focus on proving (I), the other properties can be shown analogously.

A node $\anode$ is of type $1$ iff it has exactly $2^n$ children, say $\anode_0, \ldots, \anode_{2^n-1}$, and
for all $j \in \interval{0}{2^n-1}$, the number associated to $\anode_j$ is precisely $j$
when encoded with the  truth values of the propositional variables $\avarprop_{n-1}, \ldots, \avarprop_{0}$. 
The latter proposition amounts to having the following conditions:
\begin{itemize}[left=3mm]
\item[(a)] The node $\anode$ has a child whose number is zero.
\item[(b)] If $\anode'$ is a child of $\anode$ with number $m < 2^n-1$, then
              $\anode$ has also a child with number $m+1$. 
\item[(c)] Two distinct children of $\anode$ have distinct numbers. 
\end{itemize}
Let us recall that $\phitype{1} = \AX(\phitype{0}) \wedge \EX(\phifirst{0}) \wedge \phiunique{1} \wedge \phipopulate{1}$. 
Obviously, $\AX(\phitype{0})$ always holds and $\EX(\phifirst{0})$ expresses exactly the condition (a).
It remains to show that $\phiunique{1}$ (resp. $\phipopulate{1}$) characterises the conditions (c) (resp. the condition (b)).
The formula $\phiunique{1}$ is equal to $\forall{\anominal,\anominalbis} \ \distinctbindd{\anominal,\anominalbis}{1} \rightarrow \neg  (\eqk{1}{\anominal}{\anominalbis})
$.
By Lemma~\ref{lemma:kd-comparison-zero}(III), and by Lemma~\ref{lemma:distinct-nominals}, 
when interpreted on a node of type $1$, $\phiunique{1}$ states that for any two distinct children,
their respective numbers are different, which is precisely the condition (c). 
Finally, let us recall the definition of the formula $\phipopulate{1}$:
\[
\forall{\anominal} \ (\bindd{\anominal}{1} \wedge \att{\anominal}{1}(\neg \philast{0})) \rightarrow
\exists{\anominalbis} \ \bindd{\anominalbis}{1} \wedge \succk{1}{\anominal}{\anominalbis}.
\]
By Lemma~\ref{lemma:kd-comparison-zero}(I), by the fact that $\philast{0}$ already characterises the nodes of type $0$ whose number is~$2^n-1$.
By Lemma~\ref{lemma:nominal} and Lemma~\ref{lemma:at},  the formula $\phipopulate{1}$, states that for all children whose number~$m$ is different from $2^n-1$, there is a child with number $m+1$, 
which is precisely the condition (b). 
This ends the proof as all the formulae $\AX(\phitype{0})$, $\EX(\phifirst{0})$, $\phiunique{1}$, $\phipopulate{1}$ capture exactly the properties specified above.  
\end{proof}

\subsection{Proof of Lemma~\ref{lemma:lsr-partition}}\label{appendix:proof-of:lemma:lsr-partition}

\begin{proof} Assume that $\anode_0$ is of type $k$, $\atreemodel, \anode_0 \models \widehat{\bar{\anominal},\bar{\anominal}}$
and the witness branch is $\anode_0, \ldots, \anode_d$.

First, suppose that $\atreemodel, \anode \models \lsr{k}{\bar{\anominal}}$. 
As $\atreemodel, \anode_0 \models \att{\bar{\anominal}}{}(\EXOne(\mathsf{s}))$, by Lemma~\ref{lemma:atoflengthd},
$\atreemodel, \anode_d \models \EXOne(\mathsf{s})$. It is easy to show that $\EXOne \ \aformulabis$ holds whenever
there is a unique child satisfying $\aformulabis$. Consequently, there is a unique child of $\anode_d$ satisfying 
$\mathsf{s}$, which corresponds to the satisfaction of (b). 
As $\atreemodel, \anode_0 \models \att{\bar{\anominal}}{} 
\left( \AX((\mathsf{s} \vee \mathsf{l} \vee \mathsf{r})  \wedge 
 \neg (\mathsf{s} \wedge \mathsf{l}) \wedge 
\neg (\mathsf{s} \wedge \mathsf{r}) \wedge \neg (\mathsf{l} \wedge \mathsf{r}))
\right)$, by Lemma~\ref{lemma:atoflengthd}, 
we have $\atreemodel, \anode_d \models \AX((\mathsf{s} \vee \mathsf{l} \vee \mathsf{r})  \wedge 
 \neg (\mathsf{s} \wedge \mathsf{l}) \wedge 
\neg (\mathsf{s} \wedge \mathsf{r}) \wedge \neg (\mathsf{l} \wedge \mathsf{r}))$ and therefore
for all children $\anode'$ of $\anode_d$, we have 
 $\atreemodel, \anode' \models (\mathsf{s} \vee \mathsf{l} \vee \mathsf{r})  \wedge 
 \neg (\mathsf{s} \wedge \mathsf{l}) \wedge 
\neg (\mathsf{s} \wedge \mathsf{r}) \wedge \neg (\mathsf{l} \wedge \mathsf{r})$.
As the formula $(\mathsf{s} \vee \mathsf{l} \vee \mathsf{r})  \wedge 
 \neg (\mathsf{s} \wedge \mathsf{l}) \wedge 
\neg (\mathsf{s} \wedge \mathsf{r}) \wedge \neg (\mathsf{l} \wedge \mathsf{r})$ precisely states that exactly one 
propositional
variable among $\set{\mathsf{l},\mathsf{s},\mathsf{r}}$ holds true, we can conclude that (a) is satisfied. 
Moreover, $\atreemodel, \anode_0 \models \lsrthree{k}{\bar{\anominal}}$ and by Lemma~\ref{lemma:atoflengthd},
we have~that 
\[
\atreemodel, \anode_d \models 
\forall{w} \forall{w'} \ \distinctbindd{w,w'}{1} \wedge ((\att{w}{1}(\mathsf{s}) \wedge 
\att{w'}{1}(\mathsf{r})) \vee (\att{w}{1}(\mathsf{l}) \wedge \att{w'}{1}(\mathsf{s}))) \rightarrow 
\gk{k-d}{w'}{w}. 
\]
As $k-d-1 \leq N-2$ by assumption, the satisfaction of the formula above on $\anode_d$ straightforwardly states
that for all children $\anode$ and $\anode'$ of $\anode_d$ such that
($\anode$ satisfies $\mathsf{s}$ and $\anode'$ satisfies~$\mathsf{r}$)
or
($\anode$ satisfies $\mathsf{l}$ and $\anode'$ satisfies $\mathsf{s}$),
we have $\semnumber{\atreemodel}{\anode} < \semnumber{\atreemodel}{\anode'}$ (here we use the induction hypothesis),
which corresponds precisely to the satisfaction of (c). \\
The proof in the other direction is quite similar as there are equivalences
between the formulae  $\lsrone{k}{\bar{\anominal}}$, $\lsrtwo{k}{\bar{\anominal}}$, and $\lsrthree{k}{\bar{\anominal}}$,
and the conditions (a), (b) and (c). 
\end{proof}

\subsection{Proof of Lemma~\ref{lemma:kd-comparison}}\label{appendix:proof-of:lemma:kd-comparison}

\begin{proof} Let $\atreemodel$ be a tree model and $\anode$ be such that
$\anode$ satisfies $\widehat{\bar{\anominal},\bar{\anominalbis}}$, $k-d = N-1$  and 
$\anode_{d}$ and $\anode'_{d}$ are of type $k-d$. 
We will focus on the proof of (I) only. The proof of (II) is similar to the proof of (I) and (III) is a direct consequence of (II). Hence, we omit the details.

As $\anode_d$ and $\anode_d'$ are of type $k-d$, both have $\tow(k-d,n)$ children.
Those children are ordered, let $\anodebis_0, \ldots, \anodebis_{\tow(k-d,n)-1}$ be the children of $\anode_k$ such that
$\semnumber{}{\anodebis_j} = j$ for all $j$. Similarly, 
let $\anodebis_0', \ldots, \anodebis_{\tow(k-d,n)-1}'$ be the children of $\anode_k'$ such that
$\semnumber{}{\anodebis_j'} = j$ for all $j$. By arithmetical reasoning, we have
$\semnumber{\atreemodel}{\anode_d'} = \semnumber{\atreemodel}{\anode_d} + 1$ iff 
there is $i \in \interval{0}{\tow(k-d,n)-1}$, such that
\begin{itemize}[left=5mm]
\item[(A)] for every $j \in \interval{i+1}{\tow(k-d,n)-1}$, 
$\semnumber{\atreemodel}{\anode_d'}$ and $\semnumber{\atreemodel}{\anode_d}$ agree on the
$j$th bit, which is equivalent to $\anodebis_j$ and $\anodebis_j'$ agree on $\val{}$,
\item[(B)] the $i$th bit of $\semnumber{\atreemodel}{\anode_d}$ is equal to $0$ and 
the $i$th bit of $\semnumber{\atreemodel}{\anode_d'}$ is equal to $1$, which is equivalent to
$\anodebis_i$ does not satisfy $\val{}$ and  $\anodebis_i'$  satisfies $\val{}$,
\item[(C)] for every $j \in \interval{0}{i-1}$, 
the $j$th bit of $\semnumber{\atreemodel}{\anode_d}$ is equal to $1$
and the $j$th bit of $\semnumber{\atreemodel}{\anode_d'}$ is equal to $0$,
which is equivalent to  $\anodebis_j$  satisfies $\val{}$ and $\anodebis_j'$  does not satisfy $\val{}$. 
\end{itemize}
By using Lemma~\ref{lemma:atoflengthd}, it is easy to check 
that the condition (A) (resp. (B), (C)) is taken care by $\aformula_{{\tiny \rm left}}(k)$ (resp. $\aformula_{{\tiny \rm select}}(k)$,
$\aformula_{{\tiny \rm right}}(k)$).
This is quite immediate for $\aformula_{{\tiny \rm select}}(k)$ and $\aformula_{{\tiny \rm right}}(k)$, as no induction
hypothesis is used, in particular no comparison between numbers is performed. Concerning the satisfaction of 
$\aformula_{{\tiny \rm left}}(k)$, we have to be more careful. Let us recall the definition of the formula
$\aformula_{{\tiny \rm left}}^{\bar{\anominal}, \bar{\anominalbis}}(k)$ below:
\[
\forall{w} \; \att{\bar{\anominal}}{}(\bindd{w}{1} \wedge \att{w}{1} (\mathsf{l})) \rightarrow
\]
\[ 
\big(\exists{w'} \; 
\att{\bar{\anominalbis}}{}(\bindd{w'}{1} \wedge \att{w'}{1} (\mathsf{l})) \wedge 
\eqk{k}{\bar{\anominal},w}{\bar{\anominalbis},w'} \wedge (\att{\bar{\anominal},w}{} \val{k-d-1} \Iff 
\att{\bar{\anominalbis},w'}{} \val{k-d-1})\big).
\]
We have defined $\aformula_{{\tiny \rm left}}(k)$ as $\aformula_{{\tiny \rm left}}^{\bar{\anominal}, \bar{\anominalbis}}(k)
\wedge \aformula_{{\tiny \rm left}}^{\bar{\anominalbis}, \bar{\anominal}}(k)$. 
By Lemma~\ref{lemma:atoflengthd}, and by the induction hypothesis (as $(k - (d+1))  \leq N-2$), 
$\aformula_{{\tiny \rm left}}^{\bar{\anominal}, \bar{\anominalbis}}(k)$ enforces that for all 
children $\anode$ of $\anode_d$ satisfying $\mathsf{l}$, there is a
child $\anode'$ of $\anode'_d$ satisfying $\mathsf{l}$, such that $\anode$ and $\anode'$ agree on $\val{}$.
Moreover, $\aformula_{{\tiny \rm left}}^{\bar{\anominalbis}, \bar{\anominal}}(k)$ enforces that for all 
children $\anode$ of $\anode_d'$ satisfying $\mathsf{l}$, there is a
child $\anode'$ of $\anode_d$ satisfying $\mathsf{l}$, such that $\anode$ and $\anode'$ agree on $\val{}$.
So, the set of numbers of the children of $\anode_d$ satisfying $\mathsf{l}$ is equal to the set of 
numbers of the children of $\anode_d'$ satisfying $\mathsf{l}$. This implies also 
that this property applies  for the children satisfying $\mathsf{r}$, 
and the number of the unique child of $\anode_d$  satisfying $\mathsf{s}$ is equal
to the number of the unique child of $\anode_d'$. This is a direct consequence of the properties
of $\mathsf{l} \mathsf{s} \mathsf{r}$-partitions (see Lemma~\ref{lemma:lsr-partition}). 
Consequently, 
$\semnumber{\atreemodel}{\anode_d'} = \semnumber{\atreemodel}{\anode_d} + 1$ implies 
$\succk{k}{\bar{\anominal}}{\bar{\anominalbis}}$.
The proof for the other direction is similar, as $\aformula_{{\tiny \rm left}}(k)$ is equivalent to (A),
$\aformula_{{\tiny \rm select}}(k)$ is equivalent to (B) and $\aformula_{{\tiny \rm right}}(k)$ is equivalent to (C).
Moreover, the existential quantification over $i$ corresponds to the existential quantification leading to
an $\mathsf{l} \mathsf{s} \mathsf{r}$-partition. \qedhere
\end{proof}

\subsection{Proof of Lemma~\ref{lemma:typeN}}\label{appendix:proof-of:lemma:typeN}
\begin{proof} 
The properties (II) and (III) are  easy to verify, hence let us focus on (I).
The proof is actually very similar to the proof of Lemma~\ref{lemma:typeone}.
A node $\anode$ is of type $N$ for some $N \geq 2$, iff the conditions below hold:
(a) it has exactly $\tow(N,n)$ children, (b) all its children are of type $N-1$ and (c) $\set{\semnumber{\atreemodel}{\anode'}: \anode E \anode'} = \interval{0}{\tow(N,n)-1}$.  
The mentioned conditions can be reformulated as follows so, by induction, (a$'$)+(b$'$)+(c$'$) is equivalent to (a)+(b)+(c):
\begin{itemize}[left=4mm]
\item[(a$'$)] The node $\anode$ has a child whose number is zero.
\item[(b$'$)] If $\anode'$ is a child of $\anode$ with number $m < \tow(N,n)-1$, then
              $\anode$ has also a child with number $m+1$. 
\item[(c$'$)] Two distinct children of $\anode$ have distinct numbers. 
\end{itemize}
Let us recall below the definition of the formula $\phitype{N}$:
\[
\AX(\phitype{N-1}) \wedge
\EX(\phifirst{N-1}) \wedge \phiunique{N} \wedge \phipopulate{N}.
\]
Obviously, the formula $\AX(\phitype{N-1})$ expresses exactly the condition (b), assuming that $\phitype{N-1}$ already characterised
the nodes of type $N-1$. Similarly, the formula $\EX(\phifirst{N-1})$ expresses exactly the condition (a$'$), assuming
that $\phifirst{N-1}$ already characterises the nodes of type $N-1$ whose number is zero.
It remains to show that $\phiunique{N}$ (resp. $\phipopulate{N}$) characterises the conditions (c$'$) (resp. the condition (b$'$)).

Let us recall below the definition of the formula $\phiunique{N}$:
\[
\forall{\anominal,\anominalbis} \ \distinctbindd{\anominal,\anominalbis}{1} \rightarrow \neg  (\eqk{N}{\anominal}{\anominalbis}).
\]
By Lemma~\ref{lemma:kd-comparison}(III), and by the fact that $\distinctbindd{\anominal,\anominalbis}{1}$ 
enforces that $\anominal$ and $\anominalbis$ are interpreted by two distinct children (see Lemma~\ref{lemma:distinct-nominals}), 
when interpreted on a node of type $N$, the formula $\phiunique{N}$ states that for any two distinct children,
their respective numbers are different, which is precisely the condition (c$'$). 
Finally, let us recall the formula $\phipopulate{N}$:
\[
\forall{\anominal} \ (\bindd{\anominal}{1} \wedge \att{\anominal}{1}(\neg \philast{N-1})) \rightarrow
\exists{\anominalbis} \ \bindd{\anominalbis}{1} \wedge \succk{N}{\anominal}{\anominalbis}.
\]
By Lemma~\ref{lemma:kd-comparison}(I), by the fact that 
$\philast{N-1}$ already characterises the nodes of type $N-1$ whose number is $\tow(N,n)-1$
and 
by the properties of the subformulae $\bindd{\anominal}{1}$ and $\bindd{\anominalbis}{1}$ 
enforcing local nominals $\anominal$ and $\anominalbis$, the formula $\phipopulate{N}$,
when interpreted on a node of type $N$, states that for all children 
whose number $m$ is different from $\tow(N,n)-1$, there is a child with number $m+1$, 
which is precisely the condition (b$'$). 
This ends the proof as the formulae 
$\AX(\phitype{N-1})$,
$\EX(\phifirst{N-1})$, $\phiunique{N}$, $\phipopulate{N}$ 
capture~(a$'$)+(b$'$)+(c$'$). 
\end{proof}

\subsection{Proof of Lemma~\ref{lemma:nb-eq-towkn}}\label{appendix:proof-of:lemma:nb-eq-towkn}

\begin{proof} 
The case of $k=0$ is trival, so take for the case case $k=1$.
Assuming that $\atreemodel, \anode \models \phitype{1}$, the node $\anode$ is of type $1$, and therefore $\anode$ has exactly $\tow(1,n) = 2^n$ children. 
The number $\semnumber{\atreemodel}{\anode}$ is determined by the truth values of $\val{}$ on its children, and the children
have themselves (bit) numbers spanning all over $\interval{0}{2^n-1}$ and the numbers are encoded by the truth values of $\avarprop_{n-1}, \ldots, \avarprop_{0}$. 
Consequently, $\semnumber{\atreemodel}{\anode} = \tow(1,n) = 2^n$ iff the unique child $\anode'$ of $\anode$ such that 
$\semnumber{\atreemodel}{\anode'} = n$ satisfies $\val{}$ and all the other children do not satisfy $\val{}$. 
If the value $n$ encoded with $n$ bits
is represented by the sequence of bits $b_{n-1} b_{n-2} \cdots b_{0}$, then $\aset_n = n$ is a shortcut
for $\bigwedge_{i \in \interval{0}{n-1}} \ell_i$ where $\ell_i = \avarprop_i$ if $b_i = 1$, otherwise  $\ell_i = \neg \avarprop_i$.
Hence, we get $\semnumber{\atreemodel}{\anode} = \tow(k,n)$ iff 
$\atreemodel, \anode \models \AX (\val{0} \leftrightarrow \aset_n = n)$.

For the induction step, we reason in a similar way. 
Assume that for all $1 \leq k' < k$, if $\atreemodel', \anode' \models \phitype{k'}$, we have
$\semnumber{\atreemodel'}{\anode'} = \tow(k',n)$ iff $\atreemodel, \anode \models \anothereqk{k'}$.  
Now, assume that $\atreemodel, \anode \models \phitype{k}$,
the node $\anode$ is of type $k$, and therefore $\anode$ has exactly $\tow(k,n)$ children. The number
$\semnumber{\atreemodel}{\anode}$ is determined by the truth values of $\val{}$ on its children, and the children
have themselves (bit) numbers spanning all over $\interval{0}{\tow(k,n)-1}$. 
Consequently, 
$\semnumber{\atreemodel}{\anode} = \tow(k,n)$ iff the unique child $\anode'$ of $\anode$ such that $\semnumber{\atreemodel}{\anode'} 
= \tow(k-1,n)$ satisfies $\val{}$ and all the other children does not satisfy $\val{}$. Indeed, if only 
the $\tow(k-1,n)$th bit is equal to 1, $\semnumber{\atreemodel}{\anode}$ is equal to $2^{\tow(k-1,n)}$, which is precisely
$\tow(k,n)$. Now, checking whether a node of type $k-1$, has value $\tow(k-1,n)$ can be expressed by $\anothereqk{k-1}$
invoking the induction hypothesis. 
Putting all together, we get $\semnumber{\atreemodel}{\anode} = \tow(k,n)$ iff 
$\atreemodel, \anode \models \AX (\val{k-1} \leftrightarrow (\anothereqk{k-1}))$.
\end{proof}

\subsection{More details on the proof of Lemma~\ref{lemma:correctness-tiling-to-ex}}\label{appendix:lemma:correctness-tiling-to-ex}

\begin{proof} 
We have seen that the satisfaction of $\phitype{k+1}$ is guaranteed by the way propositional variables
hold on the nodes. Observe that for the nodes in 
$\interval{0}{\tow(k+1,n)-1} \times \cdots \times \interval{0}{\tow(1,n)-1} \times 0^+$, 
the truth values of the propositional variables is irrelevant. 
Moreover, for all $j \in \interval{1}{k+1}$, 
 for all $m_{k+1}, \ldots, m_{j} \in
      \interval{0}{\tow(k+1,n)-1} \times \cdots \times \interval{0}{\tow(j,n)-1}$,
we have $\semnumber{\atreemodel}{m_{k+1}, \ldots, m_{j}} = m_{j}$. 
In order to check the satisfaction of the existentially quantified subformula, we consider the labelling $l'$
variant of $l$ only for the propositional variables $\mathsf{l}$, $\mathsf{s}$, and $\mathsf{r}$  such that
$\mathsf{l}$ holds on $\interval{\tow(k,n)+1}{\tow(k+1,n)-1}$,
$\mathsf{s}$ holds on $\tow(k,n)$, and 
$\mathsf{r}$ holds on $\interval{0}{\tow(k,n) -1}$. 
By Lemma~\ref{lemma:lsr-partition}, 
$\triple{V}{E}{l'}, \aroot \models \lsr{k+1}{\varepsilon}$ and
by Lemma~\ref{lemma:nb-eq-towkn}, we have $\triple{V}{E}{l'}, \aroot \models \EX(\mathsf{s}\wedge \anothereqk{k})$. 
The satisfaction of the formulae $\aformula_{\rm cov}$, $\aformula_{\cH}$ and $\aformula_{\cV}$ is inherited 
from the fact that for all $\pair{i}{j} \in \interval{0}{\tow(k,n)-1} \times \interval{0}{\tow(k,n)-1}$, 
     there is exactly one tile type satisfied by $\pair{i}{j}$ and
the mapping $\tau$ satisfies horizontal and vertical matching conditions. 
Here, we use the properties of the formulae of the form $\bindd{\anominal}{1}$, 
$\att{\anominal}{1} \aformulabis$, $\eqk{k+1}{\anominal}{\anominal'}$  and 
$\succk{k+1}{\anominal,\anominalbis}{\anominal,\anominalbis'}$ (see e.g. Lemma~\ref{lemma:kd-comparison}),
apart from the fact that $\mathsf{r}$ holds exactly on the nodes in $\interval{0}{\tow(k,n) -1}$.
Concerning the satisfaction of $\aformula_{\rm init}$, first observe that $\EX_{=j} \aformulabis$ holds true
exactly when there are $j$ children of the node satisfying the formula $\aformulabis$. Hence, the formula
below
\[
\forall \ \anominal \  ( \bindd{\anominal}{1} \wedge \att{\anominal}{1} (\phifirst{k}))
\rightarrow \att{\anominal}{1} (
\bigwedge_{j \in \interval{0}{n-1}} \exists \ \mathsf{l}, \mathsf{s}, \mathsf{r} \ 
\lsr{k}{\varepsilon} \wedge \EX_{=j} \ \mathsf{r} \wedge \EX(\mathsf{s} \wedge t_j)
)
\]
states for all $\pair{i}{j} \in \set{0} \times \interval{0}{n-1}$, $t_{j}$ holds on it.
In order to access the $j$th child of $\set{0}$, an $\mathsf{l}\mathsf{s}\mathsf{r}$-partition
on the children of $\set{0}$ is performed and $\mathsf{s}$ holds true exactly on  $\pair{i}{j}$ by counting
how many children satisfies $\mathsf{r}$.
The proof for the other direction uses similar principles and is omitted herein. The main idea
is to build $\tau$ so that assuming that $\atreemodel, \aroot \models \aformula_{\cP}$,
for all $\anode, \anode'$ such that $\aroot \ E \ \anode \ E \ \anode'$, and 
$\pair{\semnumber{}{\anode}}{\semnumber{}{\anode'}} \in \interval{0}{\tow(k,n)-1} \times \interval{0}{\tow(k,n)-1}$, 
$\tau(\semnumber{}{\anode},\semnumber{}{\anode'})$ takes the value of the unique $t$ in $\cT$ satisfied on  
the node $\anode'$. 
\end{proof}

\section{Proofs  from Section~\ref{section-collection}}

\subsection{Proof of Lemma~\ref{lemma:k-layered}}\label{appendix:proof-of:lemma:k-layered}

\begin{proof} 
First, let us suppose that $\atreemodel$ is $k$-layered and we show that $\atreemodel, \aroot \models {\rm shape}(k)$. 
\begin{itemize}
\item By the condition (a), we have that: 
\[
\atreemodel, \aroot \models  \AG \left( 
    (\layer{-1} \vee \layer{0} \vee \cdots \vee \layer{k} ) \wedge 
    \bigwedge_{-1 \leq i\neq j \leq k}  \neg (\layer{i} \wedge \layer{j})
    \right)
\] 
as the formula inside $\AG$ states that there is exactly one proposition
from $\asetbis_k = \set{\layer{-1},\layer{0},\ldots,\layer{k}}$ holds.
\item By the condition (b), namely its second point, for all $i \in \interval{0}{k}$, we have
 \[
 \atreemodel, \aroot \models \bigwedge_{-1 \leq i \leq k} \AG(\layer{i} \rightarrow \AG (\layer{-1} \vee \layer{0} \vee \cdots \vee \layer{i})).
 \] 

\item By the first point of the condition (b), for all $i \in \interval{0}{k}$, we have 
$\atreemodel, \aroot \models \AG(\layer{i} \rightarrow \EF \ \layer{i-1})$.  Actually, when $\atreemodel, \anode \models \layer{i}$
the witness descendant satisfying $\layer{i-1}$ is a child of $\anode$ by (b).

\item By the condition (d), we have $\atreemodel, \aroot \models \layer{k}$.

\item By the condition (c), for every node $\anode$ satisfying $\layer{j}$ for some $j \in \interval{0}{k}$, there is
no proper descendant of $\anode$ satisfying $\layer{j}$. Observe that the formula 
$\neg \exists{\avarprop} \ (\avarprop  \wedge \EF(\layer{j} \wedge \neg \avarprop))$ holds exactly on the nodes
such that there is no  proper descendant satisfying $\layer{j}$. 
Hence \[ \atreemodel, \aroot \models \bigwedge_{0 \leq i \leq k}
\AG \left( \layer{i} \rightarrow \neg \exists{\avarprop} \ \left(\avarprop  \wedge \EF(\layer{i} \wedge \neg \avarprop) \right) \right).\] 
\end{itemize}

Conversely, suppose that $\atreemodel, \aroot \models {\rm shape}(k)$ holds.  
The satisfaction of (a), (c) and (d) holds thanks to the corresponding formulae in ${\rm shape}(k)$ (see above). 
Let us check that (c) holds true. As 
\[
\atreemodel, \aroot \models \bigwedge_{0 \leq i \leq k}
\AG(\layer{i} \rightarrow \neg \exists{\avarprop} \ (\avarprop  \wedge \EF(\layer{i} \wedge \neg \avarprop))) \]
holds and for all $i \in \interval{0}{k}$, we have
\[
\atreemodel, \aroot \models \AG(\layer{i} \rightarrow \EF \ \layer{i-1}), 
\]
on the same branch two distinct nodes cannot satisfy $\layer{i}$ for some $i \in \interval{0}{k}$. 
Moreover, the satisfaction of $\layer{i}$ implies that some (proper) descendant satisfies $\layer{i-1}$. 
Due to the monotonicity of the layer numbers and no stuttering,  
$\layer{i}$ implies that a child satisfies $\layer{i-1}$,
which corresponds to the first point of~(b). 
The second point of (b) is a consequence of the monotonicity of the layer numbers.
\end{proof}

\subsection{Proof of Lemma~\ref{lemma:correctness-tQCTLEF}}\label{appendix:proof:of:lemma:correctness-tQCTLEF}

\begin{proof}
First, let us assume that $\aformula$ ($\md{\aformula} = k \geq 0$) is satisfiable for \tQCTLEX{}, \ie $\atreemodel, \aroot \models \aformula$ holds with the tree model $\atreemodel = \triple{V}{E}{l}$.
Let $\atreemodel' = \triple{V}{E}{l'}$ be the tree model obtained from $\atreemodel$ by providing truth values for the propositional variables in $\asetbis_k$.
More precisely, for all $\anode \in V$ with $\aroot E^{j} \anode$,  we have that~$l'(\anode) \egdef (l(\anode) \setminus \asetbis_k) \cup \set{\layer{\max(-1,k-j)}}$. 
As $\atreemodel$ is a tree model, $\aroot E^j \anode$ implies that $j$ is the unique number of steps to reach $\anode$ from $\aroot$. 
Obviously,  $\atreemodel'$ is $k$-layered and hence, by Lemma~\ref{lemma:k-layered}, $\atreemodel', \aroot \models {\rm shape}(k)$ holds.
Moreover, by structural induction, one can show that 
for all $j \in \interval{0}{k}$, for all $\anode \in V$ with $\aroot E^{(k-j)} \anode$, 
and for all subformulae $\aformulabis$ of $\aformula$ of modal depth less than $j$,
$\atreemodel, \anode \models \aformulabis$ 
if and only if~$\atreemodel', \anode \models  \textit{trans}(j, \aformulabis)$. 
This leads to the satisfaction of   $\atreemodel', \aroot \models \textit{trans}(k,\aformula)$.
\begin{itemize}
\item For the base case $j = 0$, for all formulae $\aformulabis$ of modal degree 0, we have $\atreemodel, \anode \models \aformulabis$ iff  $\atreemodel', \anode \models \textit{trans}(0, \aformulabis)$ due to the fact that $\textit{trans}(0, \aformulabis) = \aformulabis$ and, that $\atreemodel$ and $\atreemodel'$ agree on the propositional variables occurring in $\aformula$.
\item For the induction step, the proof for the cases with Boolean connectives is immediate.
\item We now consider the case with propositional quantification. 
Suppose that $\atreemodel, \anode \models \exists \ \avarprop \ \aformulabis$.
Hence, there is $\atreemodel^{\star} = \triple{V}{E}{l^{\star}}$ such that $\atreemodel^{\star} \approx_{\PVAR \setminus \set{\avarprop}} \atreemodel$ and $\atreemodel^{\star}, \anode \models \aformulabis$. 
Let $\atreemodel^{\star \star} = \triple{V}{E}{l^{\star \star}}$ be the variant 
obtained from $\atreemodel^{\star}$ such that for all $\anode' \in V$ with $\aroot E^{j} \anode'$,  we have 
$l^{\star \star}(\anode') \egdef (l^{\star}(\anode') \setminus \asetbis_k) \cup \set{\layer{\max(-1,k-j)}}$. 
By the induction hypothesis, $\atreemodel^{\star \star}, \anode \models \textit{trans}(j,\aformulabis)$.
It is easy to check that $\atreemodel^{\star \star}  \approx_{\PVAR \setminus \set{\avarprop}} \atreemodel'$
and therefore  $\atreemodel', \anode \models \exists \ \avarprop \ \textit{trans}(j,\aformulabis)$.
The proof for the other direction is analogous.
\item Finally, we consider the case with $\EX$. 
First, let us suppose that $\atreemodel, \anode \models\EX \aformulabis$ with $\aroot E^{(k-j)} \anode$ and the modal depth of $\aformulabis$ is less than $j$.
Hence, there is $\anode'$ such that $\anode E \anode'$ and $\atreemodel, \anode' \models \aformulabis$.
Thus,~$\aroot E^{(k-(j-1))} \anode'$ and therefore $\anode'$ satisfies $\layer{j-1}$ in $\atreemodel'$. 
By the induction hypothesis ($\aformulabis$ is also of modal depth less than $j-1$), we conclude $\atreemodel', \anode' \models \layer{j-1} \wedge \textit{trans}(j-1,\aformulabis)$.
As $\anode'$ is also a child of $\anode$ in $\atreemodel'$ (and therefore a descendant), we obtain $\atreemodel', \anode \models \EF(\layer{j-1} \wedge \textit{trans}(j-1,\aformulabis))$.
Conversely, suppose that $\atreemodel', \anode \models \EF(\layer{j-1} \wedge \textit{trans}(j-1,\aformulabis))$.
Thus, there is a descendant $\anode'$ such that
$\anode E^{*} \anode'$ and $\atreemodel', \anode' \models \layer{j-1} \wedge \textit{trans}(j-1,\aformulabis)$.
By definition of $l'$, we have $\aroot E^{k-j+1} \anode'$  and therefore $\anode E \anode'$.
By the induction hypothesis, we obtain $\atreemodel, \anode' \models \aformulabis$ (again, 
$\aformulabis$ is also of modal depth less than $j-1$), which implies $\atreemodel, \anode \models \EX \aformulabis$.
\end{itemize}

For the other implication, we assume that $ \textit{trans}(k,\aformula) \wedge {\rm shape}(k)$ is satisfiable for \tQCTLEF{}, 
that is~$\atreemodel, \aroot \models  \textit{trans}(k,\aformula) \wedge {\rm shape}(k)$ holds
with the tree model $\atreemodel = \triple{V}{E}{l}$ and $\md{\aformula} = k \geq 0$. 
By Lemma~\ref{lemma:k-layered}, the tree model $\atreemodel$ is $k$-layered and therefore
satisfying $\layer{i}$ and jumping to a node with the help of $\EF (\layer{i-1} \wedge \ldots)$
leads to a child node (assuming that $i \in \interval{0}{k}$). 
Let $\atreemodel' = \triple{V'}{E'}{l'}$ be the tree model defined as follows:
\begin{itemize}
\item $V'$ is the least subset of $V$ satisfying the conditions below:
      \begin{itemize}
      \item $\aroot \in V'$,   
      \item if $\anode \in V'$ and $\layer{j} \in l(\anode)$ for some $j \in \interval{0}{k}$, 
      then for all $\anode' \in V$
      such that $\layer{j-1} \in l(\anode')$ and $\anode E \anode'$, then $\anode' \in V'$.
      The children of $\anode$ that do not satisfy $\layer{j-1}$ are ignored in $\atreemodel'$. 
      \end{itemize}
\item $l'$ is the restriction of $l$ to $V'$.
\item For all $\anode, \anode' \in V'$,  $\anode E' \anode'$ $\equivdef$ one the conditions below holds:
      \begin{itemize}
      \item $\layer{-1} \in l(\anode) \cap l(\anode')$ and $\anode E \anode'$.
      \item For some $j \in \interval{0}{k}$, $\anode E \anode'$, $\layer{j} \in l(\anode)$ and  $\layer{j-1} \in l(\anode')$.
      \end{itemize}
\end{itemize}
It is not difficult to check that $\atreemodel'$ is a tree model (finite-branching tree and all the maximal
branches are infinite), as $\atreemodel$ satisfies the formula below (due to the satisfaction of ${\rm shape}(k)$):
$$
\bigwedge_{i \in \interval{0}{k}} \ \AG(\layer{i} \rightarrow \EF \ \layer{i-1}).
$$

Similarly to what we did above, by structural induction, one can show that for
all $j \in \interval{0}{k}$, for all $\anode \in V'$ such that $\layer{j} \in l(\anode)$, 
and for all subformulae $\aformulabis$ of $\aformula$ of modal depth less than $j$,
we have $\atreemodel, \anode \models  \textit{trans}(j,\aformulabis)$ iff  
$\atreemodel', \anode \models \aformulabis$. 
This leads to the satisfaction of $\atreemodel', \aroot \models \aformula$
as $\atreemodel, \aroot \models \layer{k}$ holds by the satisfaction of ${\rm shape}(k)$. 
\begin{itemize}
\item For the base case $j = 0$, for all formulae $\aformulabis$ of modal degree 0, 
we have $\atreemodel, \anode \models \textit{trans}(0, \aformulabis)$  iff  $\atreemodel', \anode \models 
\aformulabis$ due to the fact that $\textit{trans}(0, \aformulabis) = \aformulabis$ and, 
$\atreemodel$ and $\atreemodel'$ agree on the propositional variables occurring in $\aformula$.
\item For the induction step, the proof for the cases with Boolean connectives and propositional quantification 
is by easy verification (\cf~the proof in the other direction). 
\item Let us treat in depth the case with $\EX \aformulabis$, with $\layer{j} \in l(\anode)$
and $\EX \aformulabis$ is of modal depth less than~$j$. Suppose that $\atreemodel, \anode \models
 \textit{trans}(j, \EX \aformulabis)$. So, this means that $\atreemodel, \anode \models 
\EF(\layer{j-1} \wedge \textit{trans}(j-1,\aformulabis))$. There is $\anode' \in V$ such that
$\anode E^{*} \anode'$ and  $\atreemodel, \anode' \models \layer{j-1} \wedge \textit{trans}(j-1,\aformulabis)$.
As $\atreemodel$ is a $k$-layered tree model, necessarily $\anode E \anode'$ (otherwise there are 
two distinct nodes on the branch from $\anode$ such that either both satisfy $\layer{j}$ or both satisfy $\layer{j-1}$,
which leads to a contradiction). By definition of $E'$, we get $\anode E' \anode'$ and by the induction
hypothesis ($\aformulabis$ is also of modal depth less than $j-1$), we get $\atreemodel', \anode' \models \aformulabis$.
Hence, we obtain $\atreemodel', \anode \models \EX \aformulabis$. 

Conversely, assume that $\atreemodel', \anode \models \EX \aformulabis$. Thus, there exists $\anode' \in V'$
such that $\anode E' \anode'$ and $\atreemodel', \anode' \models \aformulabis$.
By definition of $E'$, $\atreemodel, \anode' \models \layer{j-1}$ and hence, by the induction hypothesis,
we get $\atreemodel, \anode' \models  \layer{j-1} \wedge \textit{trans}(j-1,\aformulabis)$.
As one can check that $E' \subseteq E$ and hence, we conclude that~$\atreemodel, \anode \models  
\EF(\layer{j-1} \wedge \textit{trans}(j-1,\aformulabis))$.\qedhere
\end{itemize}
\end{proof}

\subsection{Proof of Lemma~\ref{theorem:K}}\label{appendix:proof-of:theorem:K}

\begin{proof}
As far as \tower-hardness is concerned, in order to enforce finite tree models, it is sufficient 
to consider the reduction defined for 
\tQCTLEX{} in Section~\ref{section-tower-hardness} but to modify the 
definition of the formula $\phitype{0}$ so that $\phitype{0}$ is now equal to $ \neg \EX{\top}$. 
In that way, the finite grids of the form $\interval{0}{\tow(k,n)-1} \times \interval{0}{\tow(k,n)-1}$ 
can still be encoded but with finite tree models.

In order to get the \tower upper bound, let us define a reduction to the satisfiability problem for \tQCTL{} by simply
identifying finite trees within tree models for \tQCTL{} (known to be in \tower by~\cite{LaroussinieM14}). 
Let $\aformula$ be a formula in \QKt. 
Without loss of generality, we assume that $\aformula$ may contain occurrences of $\EX$ and no occurrences of $\AX$. 
We introduce the formula $\textit{trans}(\aformula) \wedge \aformula_{\rm fin}$ in \tQCTL{}, 
where  $\aformula_{\rm fin}$ enforces that the fresh propositional variable 
$\textit{in}$ holds true only finitely on each branch
and  $\textit{trans}(\aformula)$ admits a recursive definition, 
by relativising the occurrences of $\EX$ with respect to $\textit{in}$.
Let $\aformula_{\rm fin}$  be the formula
$
\textit{in} \wedge \AF \ \neg \textit{in} \wedge \AG(\neg \textit{in} \rightarrow \AG \ \neg \textit{in})
$. The satisfiability of  $\aformula_{\rm fin}$ at the root node $\aroot$ implies that $\textit{in}$ holds
exactly on a subtree from $\aroot$ where all the branches are finite. 
It remains to define $\textit{trans}(\aformula)$:
\begin{itemize}
\itemsep 0 cm 
\item $\textit{trans}(\avarprop) \egdef \avarprop$ for all propositional variables $\avarprop$, and 
      $\textit{trans}$ is homomorphic for Boolean connectives and propositional quantification,
\item $\textit{trans}(\EX \aformulabis) \egdef \EX (\textit{in} \wedge \textit{trans}(\aformulabis))$.
\end{itemize}
We recall that the satisfaction of  $\aformula_{\rm fin}$ at the root node $\aroot$ implies $\textit{in}$ holds
exactly on a subtree from $\aroot$ where all the branches are finite.

One can show that $\aformula$ is satisfiable for \QKt iff $\textit{trans}(\aformula)
\wedge \aformula_{\rm fin}$ is satisfiable in \tQCTL{} (note that $\textit{in}$ does not need to be part
of the conjunction as the root is always part of the model).
Moreover, as the models for \tQCTL{} are finite-branching trees (see \eg~\cite[Remark 5.7]{LaroussinieM14}),
$\aformula$ is satisfiable in a finite tree model iff $\textit{trans}(\aformula)
\wedge \aformula_{\rm fin}$ is satisfiable in a finite-branching tree model, which leads to the desired 
upper bound \tower.
\end{proof}

\subsection{Proof of Lemma~\ref{theorem:GL}}\label{appendix:proof-of:theorem:GL}
\begin{proof}
First, let us show that \satproblem{\QGLt} and \satproblem{\ftQCTLEFplus{}}
are identical problems modulo the rewriting of $\EX$ into $\EX \EF$. 
Herein, $\ftQCTLEFplus{}$ is defined as \ftQCTL{}
restricted to the combined temporal operator $\EX \EF$. According to Section~\ref{section-preliminaries},
the models for \ftQCTL{} are finite trees.
In a second part of the proof, we show that \satproblem{\ftQCTLEFplus{}} is
\tower-complete. 

(I) Let $\atranslation$ be the map from \QGLt formulae into 
\ftQCTLEFplus{} formulae such that $\atranslation(\aformula)$ is defined from
$\aformula$ by replacing every occurrence of $\EX$ by $\EX\EF$. 
Similarly, the inverse map $\atranslation^{-1}$ is defined so that 
$\atranslation^{-1}(\aformulabis)$ with a \ftQCTLEFplus{} formula $\aformulabis$,
is defined from $\aformulabis$ by replacing every occurrence of $\EX\EF$  by $\EX$. 

Let us show that (a) $\aformula$ is \QGLt satisfiable iff $\atranslation(\aformula)$
is \ftQCTLEFplus{} satisfiable and (b) $\aformulabis$ is \ftQCTLEFplus{} satisfiable 
iff $\atranslation^{-1}(\aformulabis)$
is \QGLt  satisfiable. Since $\atranslation$ is bijective, it is sufficient to show (a).
Let $\kripkeK = \triple{W}{R}{l}$ be a \QGLt model and $w \in W$ such that
$\kripkeK, w \models \aformula$. As $\kripkeK$ is a \QGLt model,
$\pair{W}{R} = \pair{W}{E^+}$ for some finite tree $\pair{W}{E}$ with root $w$. 
Consequently for all $w_1, w_2 \in W$, we have 
$\pair{w_1}{w_2} \in R$ iff $\pair{w_1}{w_2}$ belongs to the transitive closure of
$E$. By structural induction, one can easily show that for all $w' \in W$ and for all subformulae
$\aformula'$ of $\aformula$, we have 
$\kripkeK, w' \models \aformula'$ iff $\triple{W}{E}{l}, w' \models \atranslation(\aformula')$. 
Consequently, $\aformula$ is \QGLt satisfiable implies 
$\atranslation(\aformula)$
is \ftQCTLEFplus{} satisfiable as $\triple{W}{E}{l}$ is a model for \ftQCTLEFplus{}
since it is a finite tree.

Conversely, let $\atreemodel = \triple{V}{E}{l}$ be a \ftQCTLEFplus{} model
(finite tree) with root $\varepsilon$ such that $\atreemodel, \varepsilon \models \atranslation(\aformula)$. 
Let $\kripkeK$ be the Kripke structure $\triple{V}{E^+}{l}$ defined from $\atreemodel$ and
by definition,  $\kripkeK$ is an  \QGLt model.
Again, by structural induction, one can easily show that for all $w' \in V$ and for all subformulae
$\aformula'$ of $\aformula$, we have 
$\triple{V}{E}{l}, w' \models \atranslation(\aformula')$
iff 
$\kripkeK, w' \models \aformula'$. 
Consequently, $\aformula$ is \QGLt satisfiable as we have assumed that 
$\atreemodel, \varepsilon \models \atranslation(\aformula)$ and therefore $\kripkeK, \varepsilon \models 
\aformula$. 

(II) Let us show that the satisfiability problem for \ftQCTLEFplus{} is \tower-complete. 
We need to take care of  both the lower bound and of the upper bound.
As far as \tower-hardness is concerned, we define a reduction from the satisfiability for \ftQCTLEX{}
(see Theorem~\ref{theorem:K} as \ftQCTLEX{} and \QKt are identical),
and the proof is very similar to the one for \tQCTLEF{} (actually, it is a bit simpler). The main steps are summarised below. 
Let $\aformula$ be a formula in  \ftQCTLEX{} of the modal depth $\md{\aformula} = k$. 
Similarly to what was done before, let us consider the set of fresh propositional variables $\asetbis_{k} = \set{\layer{0}, \ldots, \layer{k}}$
with the intended meaning that a node satisfying $\layer{i}$ is of ``layer $i$'', the root node being of layer $k$.
Let us define the formula $\textit{trans}(k,\aformula) \wedge {\rm shape}(k)$ in \tQCTLEFplus{}, 
where ${\rm shape}(k)$ is the conjunction of the following formulae:
\begin{itemize}
\itemsep 0 cm 
\item Every node satisfies exactly one propositional variable from $\asetbis_{k}$ (layer unicity) and the 
root satisfies $\layer{k}$.
     $$
     (\layer{k} \wedge \bigwedge_{i \neq k} \neg \layer{i}) \wedge
     \AX \AG ((\layer{0} \vee \cdots \vee \layer{k} ) \wedge \bigwedge_{0 \leq i\neq j \leq k} \neg (\layer{i} \wedge \layer{j})).
      $$
\item When a node satisfies $\layer{i}$ with $i \leq k$, none of its descendants satisfies some
 $\layer{j}$ with
$j > i$ (monotonicity of layer numbers).
$$
\AX \AG (\layer{0} \vee \cdots \vee \layer{k-1}) \wedge 
$$
$$
\bigwedge_{i \leq k-1} \AX \AG(\layer{i} \rightarrow \AX \AG (\layer{0} \vee \cdots \vee \layer{i-1})).
$$
\item When a node satisfies $\layer{i}$ with $1 \leq i \leq k-1$,  there is a descendant
satisfying $\layer{i-1}$ (weak progress).
$$
\EX \EF \ \layer{k-1} \wedge 
\AX \AG(\layer{i} \rightarrow \EX \EF \ \layer{i-1}).
$$
\item The nodes satisfying $\layer{0}$ have no successor: $\AX \AG(\layer{0} \rightarrow \neg \EX \EF \top)$. 
\end{itemize}
The formula  $\textit{trans}(k,\aformula)$ is defined as in the reduction from
\tQCTLEX{} to \tQCTLEFplus{} (see Section~\ref{section-qctlef}).  
One can show that $\aformula$ is satisfiable in \ftQCTLEX{} iff  $\textit{trans}(k,\aformula) \wedge {\rm shape}(k)$
is satisfiable in~\ftQCTLEFplus{}. 

To get the
\tower upper bound, let us define a reduction to the satisfiability problem for~\tQCTL{} (known
to be in \tower by~\cite{LaroussinieM14}). 
Let $\aformula$ be a formula in  \ftQCTLEFplus{}. 
We introduce the formula $\textit{trans}(\aformula) \wedge \aformula_{\rm fin}$ in \tQCTL{}, 
where $\textit{trans}(\aformula)$ is recursively defined as follows:
\begin{itemize}
\itemsep 0 cm 
\item $\textit{trans}(\avarprop) \egdef \avarprop$ for all propositional variables $\avarprop$, and 
      $\textit{trans}$ is homomorphic for Boolean connectives and propositional quantification,
\item $\textit{trans}(\EX \EF \aformulabis) \egdef \EX \EF (\textit{in} \wedge \textit{trans}(\aformulabis))$.
\end{itemize}
As in the proof of Theorem~\ref{theorem:K}, 
$\aformula_{\rm fin}$ is equal to 
$
\textit{in} \wedge \AF \ \AG \ \neg \textit{in} \wedge \AG(\neg \textit{in} \rightarrow \AG \ \neg \textit{in})
$.
One can show that $\aformula$ is satisfiable in a tree model without infinite branches iff $\textit{trans}(\aformula)
\wedge \aformula_{\rm fin}$ is satisfiable in \tQCTL{}.
Again, as the models for \tQCTL{} are finite-branching,
$\aformula$ is satisfiable in a finite tree model iff $\textit{trans}(\aformula)
\wedge \aformula_{\rm fin}$ is satisfiable in a finite-branching tree models, which leads to \tower-easiness. 
\end{proof}

\subsection{Proof of Lemma~\ref{theorem:Kfour}}\label{appendix:proof-of:theorem:Kfour}

\begin{proof} 

First, let us show that \satproblem{\QKfourt} and \satproblem{\gtQCTLEFplus{}}
are identical problems 
modulo the rewriting of $\EX$ into $\EX \EF$. 
Herein, $\gtQCTLEFplus{}$ is defined as \gtQCTL{}
restricted to the temporal operator $\EX \EF$. According to Section~\ref{section-preliminaries},
the models for \gtQCTL{} are finite-branching trees.
In a second part of the proof, we show that \satproblem{\gtQCTLEFplus{}} is
\tower-complete. 

(I) Let $\atranslation$ be the map from \QKfourt formulae into 
\gtQCTLEFplus{} formulae such that $\atranslation(\aformula)$ is defined from
$\aformula$ by replacing every occurrence of $\EX$ by $\EX\EF$. 
Similarly, the inverse map $\atranslation^{-1}$ is defined so that 
$\atranslation^{-1}(\aformulabis)$ with a \ftQCTLEFplus{} formula $\aformulabis$,
is defined from $\aformulabis$ by replacing every occurrence of $\EX\EF$  by $\EX$. 
This is similar to what is done in the proof of Theorem~\ref{theorem:GL}.

Let us show that (a) $\aformula$ is \QKfourt satisfiable iff $\atranslation(\aformula)$
is \gtQCTLEFplus{} satisfiable and (b) $\aformulabis$ is \gtQCTLEFplus{} satisfiable 
iff $\atranslation^{-1}(\aformulabis)$
is \QKfourt  satisfiable. It is sufficient to show (a).
Let $\kripkeK = \triple{W}{R}{l}$ be an \QKfourt model and $w \in W$ such that
$\kripkeK, w \models \aformula$. As $\kripkeK$ is a \QKfourt model
$\pair{W}{R} = \pair{W}{E^+}$ for some finite-branching tree $\pair{W}{E}$ with root $w$. 
Consequently for all $w_1, w_2 \in W$, we have 
$\pair{w_1}{w_2} \in R$ iff $\pair{w_1}{w_2}$ belongs to the transitive closure of
$E$. By structural induction, one can easily show that for all $w' \in W$ and for all subformulae
$\aformula'$ of $\aformula$, we have 
$\kripkeK, w' \models \aformula'$ iff $\triple{W}{E}{l}, w' \models \atranslation(\aformula')$. 
Consequently, $\aformula$ is \QKfourt satisfiable implies 
$\atranslation(\aformula)$
is \gtQCTLEFplus{} satisfiable as $\triple{W}{E}{l}$ is a model for \gtQCTLEFplus{}.

Conversely, let $\atreemodel = \triple{V}{E}{l}$ be a \gtQCTLEFplus{} model
(finite-branching tree) with root $\varepsilon$ such that $\atreemodel, \varepsilon \models \atranslation(\aformula)$. 
Let $\kripkeK$ be the Kripke structure $\triple{V}{E^+}{l}$ defined from $\atreemodel$ and
by definition,  $\kripkeK$ is an  \QKfourt model.
Again, by structural induction, one can easily show that 
  $\kripkeK, \varepsilon \models 
\aformula$.

(II) In order to establish the upper bound \tower, let us provide a reduction from the 
satisfiability for \gtQCTLEFplus{} to the satisfiability problem for \tQCTLEFplus{}. 
Let $\aformula$ be a formula in  \gtQCTLEFplus{}. 
We introduce the formula $\textit{trans}(\aformula) \wedge \aformula_{\rm fin}'$ in \tQCTLEFplus{}, 
where  $\aformula_{\rm fin}'$ enforces that the  propositional variable 
$\textit{in}$ holds false on all descendants, as soon as it does not hold on a node
and that $\textit{trans}(\aformula)$ admits a recursive definition, by relativising the occurrences of $\EX \EF $ with respect to $\textit{in}$.
Let $\aformula_{\rm fin}'$  be the formula
$
\textit{in} \wedge  \AX \AG(\neg \textit{in} \rightarrow \AX \AG \ \neg \textit{in})
$.

It remains to define $\textit{trans}(\aformula)$:
\begin{itemize}
\itemsep 0 cm 
\item $\textit{trans}(\avarprop) \egdef \avarprop$ for all propositional variables $\avarprop$, and 
      $\textit{trans}$ is homomorphic for Boolean connectives and propositional quantification,
\item $\textit{trans}(\EX \EF \aformulabis) \egdef \EX \EF (\textit{in} \wedge \textit{trans}(\aformulabis))$.
\end{itemize}
It is easy to see that $\aformula$ is satisfiable for \gtQCTLEFplus{} iff $\textit{trans}(\aformula)
\wedge \aformula_{\rm fin}'$ is satisfiable for \tQCTLEFplus{}.

As far as \tower-hardness is concerned, for any formula $\aformula$ in 
\tQCTLEFplus{}, one can show that $\aformula$ is satisfiable for \tQCTLEFplus{}
iff $\aformula \wedge \EX \ \EF \ \top \wedge \AX \AG \ \EX  \EF \ \top$ is satisfiable
in \gtQCTLEFplus{}. 
The two last conjuncts simply state that from any node, there is a child, which enforces that
all the maximal branches are infinite. 
As the satisfiability problem for \tQCTLEFplus{} is \tower-hard (Theorem~\ref{theorem:tQCTLEFplus}),
this concludes the proof. 
\end{proof}

\end{document}